\documentclass[12pt,american]{article}
\usepackage[labelsep=space]{caption}
\usepackage{verbatim}
\usepackage{geometry}
\geometry{margin=1.05in}
\usepackage{setspace}
\usepackage{rotating} 
\onehalfspacing
\usepackage{booktabs}
\usepackage{todonotes}
\setlength{\heavyrulewidth}{0.25ex}
\usepackage{mathrsfs}
\usepackage{amsthm}
\usepackage{amsmath,amsfonts,amssymb}
\usepackage{gobble}
\usepackage{zref-xr}
\usepackage[section]{placeins}
\usepackage{mathtools}
\usepackage{amsfonts}
\usepackage{amssymb}
\usepackage{cancel}\usepackage{pifont}\usepackage{amsbsy}
\usepackage{bm}
\usepackage[sort,authoryear,longnamesfirst]{natbib}
\bibliographystyle{jpe} 
\setcitestyle{citesep={,}}  
\usepackage{hyperref}
\hypersetup{colorlinks, citecolor=blue!50!black, filecolor=blue!50!black, linkcolor=red!50!black, urlcolor=blue!50!black}
\usepackage{xr}
\usepackage{rotating}
\usepackage{multirow}
\usepackage{graphicx}
\usepackage{environ}         \usepackage{etoolbox}        \usepackage{epstopdf}
\usepackage{epsfig}
\usepackage{url,hyperref}
\usepackage{amsfonts}
\usepackage{enumerate}
\usepackage{appendix}
\usepackage{multirow}
\usepackage{multicol}
\usepackage{fmtcount}
\usepackage{bbm}
\usepackage{indentfirst}
\usepackage{booktabs,threeparttable}
\usepackage{paralist}
\setlength{\parindent}{2em}
\setlength{\tabcolsep}{3pt}

\usepackage{mathrsfs}
\usepackage{pdfpages}
\usepackage{adjustbox}
\usepackage{subfig}
\usepackage{float}
\usepackage{siunitx}
\usepackage{tikz}
\usepackage{dcolumn}
\newcolumntype{d}{D{.}{.}{-1}}

\usepackage{algorithm}
\usepackage{algpseudocode}
\usepackage{float} 
\makeatletter

\makeatother

\usetikzlibrary{shapes,arrows,arrows.meta,trees,snakes,positioning,fit,shadows,shapes.geometric,decorations.markings}
\usepackage{array}
\usepackage{mathtools}
\newcolumntype{L}[1]{>{\raggedright\let\newline\\\arraybackslash\hspace{0pt}}m{#1}}
\newcolumntype{C}[1]{>{\centering\let\newline\\\arraybackslash\hspace{0pt}}m{#1}}
\newcolumntype{R}[1]{>{\raggedleft\let\newline\\\arraybackslash\hspace{0pt}}m{#1}}
\hyphenation{cons-truct}
\hyphenation{o-pe-rate}
\hyphenation{cha-rac-te-ris-tics}
\hyphenation{di-ffe-ren-tly}
\hyphenation{stra-ti-fi-cation}
\hyphenation{o-ther-wise}
\hyphenation{a-ssign-ments}

\makeatletter
\newcommand{\moverlay}{\mathpalette\mov@rlay}
\newcommand{\mov@rlay}[2]{\leavevmode\vtop{\baselineskip\z@skip{}
\lineskiplimit-\maxdimen\ialign{\hfil#1##\hfil\cr#2\cr\cr}}}
\makeatother

\newcommand{\extratext}[1]{\relax}

\DeclareMathAlphabet{\mathitbf}{OML}{cmm}{b}{it}
\interfootnotelinepenalty=100000

\DeclareMathOperator*{\argmin}{arg\,min}

\usepackage{etoolbox}

\makeatletter
\patchcmd{\abstract}{\quotation}{\quotation\noindent\hspace{-\parindent}\hspace{-1cm}}{}{}
\patchcmd{\endabstract}{\endquotation}{\endquotation}{}{}
\makeatother
\theoremstyle{plain}
\newtheorem{theorem}{Theorem}\newtheorem{proposition}{Proposition}\newtheorem{example}{Example}

\newtheorem{assumption}{Assumption}
\makeatletter
\newtheorem*{assumptionUC*}{Assumption UC}
\newenvironment{assumptionUC}[1][]{\begin{assumptionUC*}[#1]\phantomsection \def\@currentlabel{UC}}{\end{assumptionUC*}}
\makeatother

\newtheorem{lemma}{Lemma}

\newtheorem{corollary}{Corollary}\theoremstyle{remark}
\newtheorem{remark}{Remark}

\makeatletter
\renewenvironment{proof}[1][\proofname] {\par\pushQED{\qed}\normalfont\topsep6\p@\@plus6\p@\relax\trivlist\item[\hskip\labelsep\bfseries#1\@addpunct{.}]\ignorespaces}{\popQED\endtrivlist\@endpefalse}
\makeatother
 
\usepackage{apptools}
\usepackage{chngcntr}   \AtAppendix{\counterwithin{lemma}{section}
\renewcommand{\thelemma}{\thesection.\arabic{lemma}}}
\numberwithin{equation}{section}
\renewcommand{\theequation}{\thesection.\arabic{equation}}

\usepackage{xifthen}
\newcounter{change}
\setcounter{change}{0}
\renewcommand{\thechange}{\arabic{change}} 

   {\par }

\usepackage{titlesec}
\titleformat{\section}
  {\normalfont\Large\bfseries}{\thesection}{1em}{}
\titleformat{\subsection}
  {\normalfont\large\bfseries}{\thesubsection}{1em}{}
\titleformat{\subsubsection}
  {\normalfont\normalsize\bfseries}{\thesubsubsection}{1em}{}

\begin{document}

\title{\hspace*{-0.5cm}Low-Rank Estimation of Nonlinear Panel Data Models\thanks{I am deeply grateful to my advisors, Yuichi Kitamura, Donald Andrews, and Xiaohong Chen, for their invaluable support and guidance. I also thank Timothy Christensen, Max Cytrynbaum, Kengo Kato, Ed Vytlacil, and participants of the Econometrics Prospectus Workshops at Yale for helpful comments. All errors are my own.}
}
\author{
Kan Yao\thanks{Corresponding author. Email: \texttt{kan.yao@yale.edu}.} \\ 
 \\[1em]
}

\date{\today}
\maketitle

\begin{abstract}

\normalsize
\noindent This paper investigates nonlinear panel models with interactive fixed effects and introduces a general framework for parameter estimation under potentially nonconvex objective functions. We propose a computationally feasible two-step estimation procedure. In the first step, nuclear-norm regularization (NNR) is used to obtain preliminary estimators of the coefficients of interest, factors, and factor loadings. The second step involves an iterative procedure for post-NNR inference, improving the convergence rate of the coefficient estimator. We establish the asymptotic properties of both the preliminary and iterative estimators. We also study the determination of the number of factors. Monte Carlo simulations demonstrate the effectiveness of the proposed methods in determining the number of factors and estimating the model parameters. In our empirical application, we apply the proposed approach to study the cross-market arbitrage behavior of U.S. nonfinancial firms.

\vspace{0.5cm}
\noindent \textbf{Keywords:} panel data; interactive fixed effects; factor models; nuclear-norm regularization; discrete choice models

	\bigskip
\end{abstract}
\newpage \section{Introduction}
\onehalfspacing
Unobserved heterogeneity is of broad interest in both reduced-form
and structural work in economics and other social sciences. Empirical data often include individual units sampled over time from diverse backgrounds, with factors unobserved by econometricians.
Accommodating this heterogeneity in a flexible, yet parsimonious manner is challenging but essential in practice.

Panel data, which involve observations on individual units over time,
provide a valuable framework to model latent structures within low-dimensional manifolds. 
This is often achieved by incorporating unobserved individual and
time effects into the model, controlling for unobserved covariates that remain either time or cross-sectionally invariant. A fixed effects approach imposes no distributional assumptions on these unobserved effects, allowing them to be arbitrarily related to observed covariates. A notable example is two-way fixed effects
estimation for difference-in-differences, which has become a leading method in applied economics. 

The additive structure, however, assumes that the influence of unobserved factors on individual units remains constant over time, failing to capture more complex dynamics. Alternatively, fixed effects can interact multiplicatively, giving rise to interactive fixed effects (IFEs) or factor structures. This multiplicative form provides a more flexible representation of heterogeneity, as it allows common time-varying shocks (factors) to affect cross-sectional units with individual-specific sensitivities (factor loadings). For example, they can account for aggregate shocks with heterogeneous impacts on agents in macroeconomic models or capture multidimensional individual heterogeneity with time-varying effects in microeconomic models. This flexibility motivated the discussion of interactive effects
in the econometrics literature. \citet*{bai_panel_2009} and \citet*{moon_linear_2015,moon_dynamic_2017} study the linear regression model, treating individual
and time effects as nuisance parameters estimated by least squares. \citet*{chen_nonlinear_2021} and \citet*{wang_maximum_2022} extend the least squares estimator to nonlinear panel data models with convex log-likelihood functions.

However, many important and widely used economic models, such as random coefficients choice models, do not have globally convex objective functions. The problem becomes even more challenging when time-varying unobserved heterogeneity is represented through interactive fixed effects. As a result, panel models with interactive fixed effects and potentially nonconvex objective functions remain largely unexplored. Existing approaches in the literature are not directly applicable because they rely on closed-form solutions or global convexity.

To address this gap, the present paper develops a general framework for parameter estimation in panel models with interactive fixed effects under potentially nonconvex objective functions. 
The framework therefore accommodates a broad class of economically important nonlinear models whose objective functions are not globally convex with respect to the parameters of interest, such as random coefficients choice models.
The proposed estimation procedure consists of two steps. In the first step, we use nuclear-norm regularization (NNR) to obtain preliminary estimators of the coefficients of interest, factors,
and factor loadings. In the second step, we use the first-step NNR estimator as an initial value for a localized post-regularization iterative procedure, which refines the estimator for the parameters of interest.

In this context, we make three contributions. First, we develop a first-step nuclear-norm-regularized estimator for nonlinear panel models with interactive fixed effects. By regularizing the latent interactive component through its nuclear norm, the method avoids requiring the researcher to specify the number of factors in advance. We establish consistency and convergence rates for the preliminary estimator and show that the number of factors can be estimated consistently from the singular values of the NNR estimator. In the case of convex objective functions, the first-step estimator is obtained from a convex optimization problem, thereby avoiding the multiple local minima associated with rank-constrained estimation. 

Second, we establish new post-regularization inference results. Although the first-step NNR estimator is consistent, it suffers from shrinkage bias and generally converges too slowly for standard inference. To address this problem, we develop a second-step iterative procedure that starts from the NNR estimator and removes the regularization bias introduced in the first step. We show that, when initialized by the NNR estimator, this iterative procedure satisfies a contraction mapping property and reaches the usual parametric rate after a logarithmic number of iterations. We then derive the asymptotic distribution of the resulting estimator, characterize the incidental parameter bias terms, and provide bias corrections for post-regularization inference.

Establishing these convergence and distributional results is technically challenging because it requires controlling both the asymptotic bias and the stochastic error of the iterative estimator at each iteration. The proposed second-step procedure is conceptually related to the iterative refinements in \citet*{moon_nuclear_2019} and \citet*{hong_profile_2023}. However, unlike those papers, which study linear models and exploit closed-form updates, our setting involves nonlinear objective functions for which no closed-form update is available. Moreover, although our procedure allows objective functions that need not be globally convex, in the special case of single-index models with convex link functions, we show that our second-step estimator attains the same asymptotic distribution as the estimator studied by \citet*{chen_nonlinear_2021}. 

Third, the paper provides numerical algorithms for implementing the proposed estimators. For single-index panel models with convex loss functions, we compute the first-step estimator using a modified Alternating Direction Method of Multipliers (ADMM) algorithm. For random coefficients panel models with interactive fixed effects, we develop a new majorization-minimization (MM) algorithm that constructs a surrogate objective. It separates the update of the distribution of the random coefficients from that of the interactive fixed effects, while incorporating the nuclear-norm penalty. 
Monte Carlo simulations indicate that these first-step algorithms have good finite-sample performance and that the associated rank estimator recovers the true number of latent factors. The simulations also show that the iterative procedure improves estimation accuracy relative to the first-step estimator.

We also provide an empirical illustration of the proposed procedure by revisiting \citet{ma_nonfinancial_2019}. Specifically, we study the joint debt-equity financing decisions of U.S. nonfinancial firms using multiple firm-level financial datasets. The estimates produce economically sensible outcomes that align with financial intuition. For example, firms are more likely to issue equity and retire debt when the cost of debt is high and the cost of equity is low, consistent with the view that firms may act as cross-market arbitrageurs in their own securities. In addition, by allowing for both interactive fixed effects and random coefficients, the model suggests the presence of heterogeneity in sensitivities to market valuation measures.

This paper relates to three branches of the literature. First, it adds to the extensive literature on panel data models with IFEs. \citet*{bai_panel_2009}
and \citet*{moon_linear_2015,moon_dynamic_2017} propose estimators
based on quasi-maximum likelihood estimation and principal component analysis. Much previous work has focused on linear models, while nonlinear models have recently attracted attention.
\citet*{chen_nonlinear_2021} extend
\citet*{fernandez-val_individual_2016}'s results to nonlinear panel data models with IFEs. \citet*{chen_quantile_2021} study quantile factor models, and \citet*{gao_binary_2023-1,ando_bayesian_2022} study binary panel choice models. Relatedly, \citet*{boneva_discrete-choice_2017} and \citet*{chen_common_2025} extend the common correlated effects (CCE) framework of \citet*{pesaran_estimation_2006} to nonlinear settings.
We refer to \citet*{fernandez-val_fixed_2018} for a recent survey. We complement and extend the literature by generalizing previous results to a broad range of M-estimators.

Second, our work contributes to the burgeoning literature on nuclear norm penalization, a technique that has gained popularity in estimating low-rank matrices in statistics and econometrics. This approach has been explored in various works such as \citet*{agarwal_noisy_2012,agarwal_causal_2021,athey_matrix_2021,belloni_high_2023,beyhum_square-root_2019,candes_exact_2009,chernozhukov_inference_2019-1,chernozhukov_inference_2023,fan_generalized_2017,feng_nuclear_2023,hong_profile_2023,huang_low-rank_2018,miao_high-dimensional_2022,ma_detecting_2021,negahban_estimation_2011,negahban_unified_2012}, among others. Most of these works focus on establishing estimation error bounds, specifically in the Frobenius norm, for the NNR estimators.
The convergence rate of our estimator is consistent with the existing results of the mean and quantile regressions, indicating that NNR can be successfully extended to more complex models without compromising performance.

More recent studies have advanced statistical inference for NNR-based estimators. For instance, linear models have been studied by \citet*{armstrong_robust_2023, chernozhukov_inference_2023, miao_high-dimensional_2022, moon_nuclear_2019, hong_profile_2023}, among others. Two main debiasing strategies have emerged to address shrinkage bias. The first, inspired by the debiased-Lasso framework in the statistics literature, typically requires relatively few iterations. \citet*{chernozhukov_inference_2023} combine a rotation-based debiasing step with sample splitting, while \citet*{choi_inference_2024} show that sample splitting is not essential. \citet*{armstrong_robust_2023} apply minimax linear estimation theory to construct robust debiased estimators. 

The second line of research, to which our paper contributes, employs iterative bias-correction procedures, as exemplified by \citet*{hong_profile_2023}, \citet*{miao_high-dimensional_2022}, and \citet*{moon_nuclear_2019}. Specifically, \citet*{ChenMiaoSu2025} study the asymptotic distribution of factor and loading estimators in logistic panel models without covariates. In a closely related and independently developed paper, \citet*{Zeleneev_tractable_2026} study estimation of single-index panel models with interactive fixed effects. In particular, they provide concrete convex optimization procedures for both steps and establish their global convergence guarantees under the convex link function assumption. Our paper, on the other hand, accommodates nonconvex model specifications, including the important class of random coefficients models, and addresses distinct theoretical and computational challenges. 
 
Lastly, this paper is also related to the literature on estimating the parameters of high-dimensional models using $\ell_1$ regularization. Rather than listing all relevant papers, we refer the interested reader to the comprehensive textbook by \citet*{Hastie_lasso_2015} and the recent survey by \citet*{belloni_high-dimensional_2018}, and instead focus on a few key references. Specifically, \citet*{chetverikov_selecting_2025} develop a method called bootstrapping after cross-validation for selecting the penalty parameter in $\ell_1$-penalized M-estimators in high dimensions. \citet*{stadler_l1-penalization_2010} derive an oracle inequality for the $\ell_1$-penalized approach to estimating mixture regression models, while \citet*{beyhum_high-dimensional_2024} study high-dimensional nonconvex Lasso-type M-estimators and establish their convergence rates. However, none of these studies consider panel data settings or nuclear norm regularization.

The remainder of the paper is organized as follows. Section~\ref{sec:model} introduces the class of IFE models. Section~\ref{sec:est} proposes a general M-estimation framework and presents the main estimation procedure. In Section~\ref{sec:main}, we examine the asymptotic properties of the estimators, including the consistency and rate of convergence of the initial estimator and the rank estimator, as well as the convergence and asymptotic distribution of the second-step estimator. Section~\ref{sec:MC} outlines the implementation algorithms and provides Monte Carlo results. Section~\ref{sec:empirical} contains the empirical application. Finally, Section~\ref{sec:conc} concludes. All proofs and additional results are provided in the Appendix.

\paragraph*{Notation.}

For a natural
number $m\in\mathbb{N}$, we introduce the notation $[m]=\{1,\ldots,m\}$. For a vector $v=\left(v_{1},\ldots,v_{p}\right)^{\prime}\in\mathbb{R}^{p}$,
we define its $\ell_{1}$-norm as $\|v\|_{1}=\sum_{j=1}^{p}\left|v_{j}\right|$,
and its $\ell_{2}$-norm norm as $\|v\|=\sqrt{\sum_{j=1}^{p}v_{j}^{2}}$.
For a matrix $A=\left(a_{it}\right)_{i,t}\in\mathbb{R}^{N\times T}$,
we define its entry-wise $\ell_{1}$-norm as $\|A\|_{1}=\sum_{i=1}^{N}\sum_{t=1}^{T}\left|a_{it}\right|$,
its Frobenius norm as $\|A\|_{F}=\sqrt{\sum_{i=1}^{N}\sum_{t=1}^{T}a_{it}^{2}}$,
its infinity norm as $\|A\|_{\infty}=\max\left\{ \left|a_{it}\right|:i\in[N],t\in[T]\right\} $,
its nuclear norm as $\|A\|_{*}=\operatorname{trace}\left(\sqrt{A^{\prime}A}\right)$,
its spectral norm as $\|A\|=\sup_{x:\|x\|=1}\sqrt{x^{\prime}A^{\prime}Ax}$, and its rank by $\operatorname{rank}(A)$. Moreover, for a real matrix, we use $\sigma_s(\cdot)$ to denote its $s$-th largest singular value. For a real and symmetric matrix, we use $\mu_s(\cdot), \mu_{\max }(\cdot)$ and $\mu_{\min }(\cdot)$ to denote its $s$-th largest eigenvalue, the largest and smallest eigenvalues, respectively.

For a real number $a$, let $\operatorname{sgn}(a)=1$ if $a \geq 0$ and $\operatorname{sgn}(a)=-1$ if $a<0$. For a square matrix $A$ whose $j$-th diagonal element is denoted as $A_{j j}$, define $\operatorname{sgn}(A)$ as a diagonal matrix whose $j$-th diagonal element is equal to $\operatorname{sgn}\left(A_{j j}\right)$.
We also define the operations $\vee$ and $\wedge$ as $a\vee b=\max\{a,b\}$ and $a\wedge b=\min\{a,b\}$. Finally, for sequences $\left\{ a_{m}\right\} _{m=1}^{\infty}$
and $\left\{ b_{m}\right\} _{m=1}^{\infty}$ we use $a_{m}\lesssim b_{m}$ as the shorthand for the inequality $a_{m}\leq\overline{c}b_{m}$ for some finite positive $\overline{c}$ independent of $a_{m}$ and $b_m$ for sufficiently large $m$. $a_{m}\asymp b_{m}$ means that $a_{m}\lesssim b_{m}$ and $b_{m}\lesssim a_{m}$.
 \section{Model\label{sec:model}}
Let the data observations be denoted as $\left\{ \left(Y_{it},X_{it}\right):i\in\left[N\right],t\in\left[T\right]\right\} $,
where $Y_{it}\in\mathcal{Y}\subset\mathbb{R}$ is the outcome variable
and $X_{it}\in\mathcal{X}\subset\mathbb{R}^{d_x}$ is a vector of exogenous
covariates for a fixed $d_x$. The indices $i$ and $t$ denote individuals
and time periods, respectively. In matrix
notation, let $Y$ and $X_{j}$ be $N\times T$ matrices representing
the outcome and the $j$-th covariate for $j=1,...,d_x$. Let $X=\left(X_{j}\right)_{j\in\left[d_x\right]}$ collect all covariates. The support of $\left(Y_{it},X_{it}\right)$ is given by $\mathcal{Y}\times\mathcal{X}$.

We assume that for each individual $i$ and time $t$, the outcome $Y_{it}$ is allowed to depend on both the observed covariates $X_{it}$ and the latent interactive effects given by individual-specific
loadings $\lambda_{i}\in\mathbb{R}^{r}$ and time factors $f_{t}\in\mathbb{R}^{r}$, where $r$ is a fixed rank. The effects $\lambda_{i}$ and $f_t$, although unobserved by the econometrician, may confound the effect of $X_{it}$ on $Y_{it}$. Following the fixed effects approach, we treat the realizations of $\left\{\lambda_i\right\}_{i=1}^N$ and $\left\{f_t\right\}_{t=1}^T$ as unrestricted parameters to be estimated.

Formally, we consider a class of nonlinear panel data models in which the true values of the common parameter vector $\theta \in \mathbb{R}^p$ and the $N\times r$ and $T\times r$ matrices of fixed effects, $\Lambda=\left(\lambda_{1},...,\lambda_{N}\right)^{\prime}$ and $F=\left(f_{1},...,f_T\right)^{\prime}$, are defined by the solution to the population optimization problem,\footnote{The model is identified only up to a normalization; see Remark~\ref{remark:normlization}. A formal discussion of the identification assumptions is provided in Assumption~\ref{assu:id-2}.}
\begin{align}
\left(\theta_{0},\Lambda_{0},F_{0}\right)=&\argmin_{\theta\in\Theta,\Lambda\in\Phi_{\lambda}^{N},F\in\Phi_{f}^{T}}\mathbb{E}\left[\frac{1}{NT}\sum_{i=1}^{N}\sum_{t=1}^{T}\ell\left(W_{it};\theta,\lambda_{i}^{\prime}f_{t}\right)\right]
\label{ref:Model}
\end{align}
where $\ell$ is a known loss function, which may be nonconvex with respect to $\theta$ and $\lambda_i'f_t$, and where $W_{it} = (Y_{it}, X_{it})$. 
The parameter spaces satisfy \(\Theta\subset\mathbb R^p\), 
\(\Phi_\lambda\subset\mathbb R^r\), and \(\Phi_f\subset\mathbb R^r\) such that the matrix parameters belong to the product spaces
\(\Lambda\in\Phi_\lambda^N\) and \(F\in\Phi_f^T\).
The expectation $\mathbb{E}[\cdot]$ is taken with respect to the distribution of the data, conditional on the fixed realizations of unobserved individual and time effects for all $N$ and $T$.
The model is also referred to as a factor model with factor loadings $\lambda_{i}$
and common factors $f_{t}$, and we will use the terms ``factor''
and ``interactive fixed effect'' synonymously. The conventional
additive structure is a special case of the factor structure with
$r=2$, $\lambda_{i}=\left(\lambda_{i1},1\right)^{\prime}$, and $f_{t}=\left(1,f_{1t}\right)^{\prime}$. 

Let $\pi_{0,it} = \lambda_{0i}^{\prime} f_{0t}$ denote the true value of $\pi_{it}$ that generates the data, where $\pi_{0,it} \in \Phi \subset \mathbb{R}$ for each $i$ and $t$. The matrix $\Pi_0 \in \Phi^{N \times T}$ collects all the fixed effects. Similarly, the matrix $\Pi = (\pi_{it})$ is treated as a parameter to be estimated. Thus, we can rewrite Model \eqref{ref:Model} as
\begin{align}
\left(\theta_{0},\Pi_{0}\right)= & \argmin_{\theta\in\Theta,\Pi\in\Phi^{N\times T}}\mathbb{E}\left[\frac{1}{NT}\sum_{i=1}^{N}\sum_{t=1}^{T}\ell\left(W_{it};\theta,\pi_{it}\right)\right]\label{ref:obj}\\
 & \qquad\text{s.t. }\pi_{it}=\lambda_{i}^{\prime}f_{t}=\sum_{s=1}^{r}\lambda_{is}f_{ts}\nonumber 
\end{align}

\begin{remark}\label{remark:normlization}
Similar to linear factor models, the factor
loadings $\Lambda_0$ and common factors $F_0$ in \eqref{ref:Model} cannot be separately identified without imposing normalization. The loss function is invariant under the transformation $\lambda_{i} \mapsto A^{\prime}\lambda_{i}$ and $f_{t} \mapsto A^{-1}f_{t}$ for any non-singular $r \times r$ matrix $A$. On the other hand, $\Pi_0$ in \eqref{ref:obj} can be uniquely identified, as it is invariant to the normalization used to eliminate this indeterminacy. For further discussion on normalization in the context of linear factor models, see \citet*{robertson_iv_2015}.
\end{remark}

\subsection{Examples}
Some examples of models that fall within this framework and the scope
of our methodology are as follows. \begin{example}[Linear Panel]
\label{ref:ex1} The linear panel model is generated according to
\begin{equation}
y_{it}=X_{it}^{\prime}\theta_{0}+\pi_{0,it}+\varepsilon_{it}\label{ref:linear}
\end{equation}
where $\varepsilon_{it}$ is the idiosyncratic error term with $\mathbb{E}\left[\varepsilon_{it}\mid X_{it}\right]=0$.
We can estimate $\theta$ and $\pi_{it}$ using $\ell\left(W_{it};\theta,\pi_{it}\right)=\frac{1}{2}\left(y_{it}-X_{it}^{\prime}\theta-\pi_{it}\right)^{2}$ and $W_{it}=(y_{it},X_{it})$.
\end{example} \begin{example}[Linear Quantile Panel] \label{ref:ex2}Consider
the linear model in (\ref{ref:linear}) with the quantile restriction:
$\mathrm{P}\left(\varepsilon_{it}\leq0\mid X_{it};\theta_{0},\pi_{0,it}\right)=\tau$.
We estimate $\theta$ and $\pi_{it}$ by using $\ell\left(y_{it},X_{it}^{\prime}\theta+\pi_{it}\right)=\rho_{\tau}\left(y_{it}-X_{it}^{\prime}\theta-\pi_{it}\right)$,
where $\rho_{\tau}\left(t\right)=t\left(\tau-K\left(-t/h\right)\right)$ represents a smoothed version of the usual check function. Here, $\tau\in\left(0,1\right)$ denotes a specified quantile
of interest, $K$ is a CDF-type kernel function, and $h$ is the bandwidth parameter.

\end{example} \begin{example}[Binary Choice Panel] \label{ref:ex3}
The binary choice panel model is characterized as 
\begin{equation}
y_{it}=\mathbb{I}\left\{ X_{it}^{\prime}\theta_{0}+\pi_{0,it}-\varepsilon_{it}\geq0\right\} \label{ref:binary}
\end{equation}
where $X_{it}$ and $\varepsilon_{it}$ are independent and $\varepsilon_{it}$
is distributed according to a known cumulative distribution function
$F$. Since $\mathrm{P}\left(y_{it}=1\mid z_{it}\right)=F\left(z_{it}\right)$,
we define the objective function as $-\ell\left(W_{it};\theta,\pi_{it}\right)=y_{it}\log F\left(X_{it}^{\prime}\theta+\pi_{it}\right)+\left(1-y_{it}\right)\log\left[1-F\left(X_{it}^{\prime}\theta+\pi_{it}\right)\right]$, where $W_{it}=(y_{it},X_{it})$.
\end{example} 
\begin{example}[Random Coefficients Logit Panel]\label{ref:ex4}
Consider a binary choice model such that 
\[
y_{it}=\mathbb{I}\left\{ X_{it}^{\prime}\beta_{0,it}+\pi_{0,it}-\varepsilon_{it}\geq0\right\} 
\]
where $\varepsilon_{it}$ is a random variable with the Type I extreme
value distribution, independent across individuals $i$ over time $t$, and $\beta_{0,it}$
is a $p$-dimensional vector of random coefficients, assumed to be i.i.d. across individuals
over time.\footnote{We could relax the i.i.d. assumption by imposing a parametric model upon $\beta_{0,it}$ over time $t$, e.g., an AR(1) process, but we do not pursue this approach
here.} Specifically, we assume $\beta_{0,it}\sim \mathcal{N}\left(\bar{\beta}_{0},\Sigma_{0}\right)$, where $\bar{\beta}_{0}$ is a $p$-dimensional mean vector, and $\Sigma_{0}$ is a positive-definite $p\times p$ covariance matrix. The theoretical results presented in this paper also hold for other distributions of $\beta_{0,it}$. Let $f\left(\beta\right)$ denote the probability density function of the random coefficient $\beta_{it}$ evaluated at $\beta$. We define the parameter set $\theta=\left(\bar{\beta},\Sigma\right)$. The objective function is given by $$\ell\left(W_{it};\theta,\pi_{it}\right)=-\log\left[\int L_{it}(\beta,\pi_{it})f\left(\beta\right)d\beta\right]$$
where the integrand is given by $L_{it}(\beta,\pi_{it})=p_{it}(\beta,\pi_{it})^{y_{it}}(1-p_{it}(\beta,\pi_{it}))^{1-y_{it}}$, with $p_{it}\left(\beta,\pi_{it}\right)=\frac{\exp\left(X_{it}^{\prime}\beta+\pi_{it}\right)}{1+\exp\left(X_{it}^{\prime}\beta+\pi_{it}\right)}$, and $W_{it}=(y_{it},X_{it})$.
\end{example} 
 \section{Estimation\label{sec:est}}
In this subsection, we describe our two-step estimation procedure for the model: the initial estimator is based on nuclear norm regularization (NNR). It is followed by a second step iterative estimator. As we will show in Sections~\ref{subsec:consistency} and~\ref{subsec:rate}, under a set of regularity conditions, the first step estimation provides consistent estimators for $\theta_0$, $\Lambda_0$ and $F_0$. The second step uses the estimators from the first step as the initial values, and iteratively updates estimators to enhance their properties. The asymptotic normality of the iterative estimator is established in Section~\ref{sec:inf}.

\subsection{First Step: Nuclear Norm Regularization}
We define the loss function as 
\begin{equation}\label{ref:Lnt_Def}
\mathcal{L}_{NT}\left(\theta,\Pi\right)\coloneqq\frac{1}{NT}\sum_{i=1}^{N}\sum_{t=1}^{T}\ell\left(W_{it};\theta,\pi_{it}\right)
\end{equation}

Motivated by the literature on nuclear norm regularization, we propose estimating $\left(\theta_{0},\Pi_{0}\right)$ by minimizing the following penalized criterion function:
\begin{equation}
(\hat{\theta},\hat{\Pi})=\argmin_{\theta\in\Theta,\Pi\in\Phi^{N\times T}}\mathcal{L}_{NT}\left(\theta,\Pi\right)+\nu\|\Pi\|_{*}\label{ref:Estimation1}
\end{equation}
where $\nu$ is a tuning parameter,\footnote{The tuning parameter $\nu$ depends on $N$ and $T$, but we omit the subscript for simplicity.} and $\|\cdot\|_{*}$ denotes the nuclear
norm, which regularizes the singular values of the matrix $\Pi$. The estimation
problem at hand is inherently high-dimensional, as it involves estimating a total of $p+r\left(N+T\right)$ parameters, where the number of parameters grows linearly with $N$ and $T$. Specifically, $p$ represents the number of parameters associated with $\theta$, and $r(N + T)$ corresponds to those related to the low-rank matrix $\Pi$.

Our estimator can be viewed as a generalization of Lasso, with a key difference being that we impose sparsity not directly on individual parameters but on the singular values of the matrix $\Pi$. This regularization ensures that the number of non-zero singular values of $\Pi$ is relatively sparse compared to $N$ and $T$. Since we assume that the rank $r$ of the matrix $\Pi_0$ is fixed and much smaller than both $N$ and $T$, the nuclear norm regularization reduces the complexity by encouraging a low-rank structure in $\Pi$, similar to how Lasso enforces sparsity in individual coefficients.

Similar to Lasso, \eqref{ref:Estimation1} can be seen as a convex relaxation of the following problem:
\begin{align}
 & \min_{\theta\in\Theta,\Pi\in\Phi^{N\times T}}\mathcal{L}_{NT}\left(\theta,\Pi\right)\nonumber \\
 & \quad\quad\textrm{s.t.}\quad\text{rank}\left(\Pi\right)\leq r\label{ref:Model1}
\end{align}
where the rank $r$ is predetermined, and $\theta$, $\lambda_{i}$
and $f_{t}$ are jointly estimated, as studied by \citet*{bai_panel_2009} and \citet*{chen_nonlinear_2021},
among others. While the formulation in \eqref{ref:Model1} seems appealing,
it presents two challenges. 

First, the rank constraint makes \eqref{ref:Model1} a nonconvex optimization problem, even when the loss function is convex. 
In contrast, our proposed estimator, defined in \eqref{ref:Estimation1}, avoids directly penalizing or constraining the rank of the estimated interactive fixed effects matrix. Instead, it seeks a $\hat{\Pi}$ with a small nuclear norm, serving as a convex relaxation of \eqref{ref:Model1}. Just as $\ell_1$ minimization is the tightest convex relaxation of the combinatorial $\ell_0$ minimization problem, nuclear norm minimization offers the tightest convex relaxation of the NP-hard rank minimization problem \citep*{candes_matrix_2010}.

Second, when the loss function $\mathcal{L}_{NT}$ is nonconvex, as in random coefficients logit panel models, there are no guarantees of convergence for the estimators defined in \eqref{ref:Model1}. Global convexity plays a crucial role in the literature on interactive fixed effects. For instance, \citet*{chen_nonlinear_2021} rely on global convexity to establish the consistency of $\theta$ and to bound the remainder terms in their stochastic expansions. In contrast, our penalized estimator accommodates a broader class of important economic models without global convexity.

Note that the NNR-based initial estimation does not require the number of factors $r$ to be known beforehand. When $r$ is unknown, we propose estimating $r$ using singular value thresholding (SVT) as follows:
\begin{equation}
 \hat{r}=\sum_{s=1}^{N\wedge T}\mathbb{I}\left\{ \sigma_{s}\left(\hat{\Pi}\right)\geq\sqrt{NT}\left(\nu\left\Vert \hat{\Pi}\right\Vert \right)^{1/2}\right\} 
\label{ref:r}   
\end{equation}
where $\sigma_{s}\left(\hat{\Pi}\right)$ denotes the $s$-th largest
singular value of $\hat{\Pi}$. 
Alternatively, $r$ can be estimated using other methods, such as the panel information criteria (IC) and panel criteria (PC) methods of \citet*{bai_determining_2002}, or the eigenvalue ratio (ER) and growth ratio (GR) methods of \citet*{ahn_eigenvalue_2013}, among others. These methods remain valid as long as $\hat{\theta}$ is consistent. In Section~\ref{subsec:rate}, we first establish the consistency of $\hat{\theta}$ and $\hat{\Pi}$, then we prove that $\hat{r}=r$ with probability approaching one. Consequently, for the second-step estimation, we assume $r$ is known.

\subsection{Second Step: Iterative Estimation}
It is widely recognized that NNR estimators
are subject to shrinkage bias, which complicates statistical inference. Theorem~\ref{thm:main-3} establishes the convergence rate of the initial estimator, which, although consistent, is slower than the $\sqrt{NT}$ rate. To address this issue, we propose a post nuclear-norm-regularization estimator to improve the convergence rate.

For simplicity, we focus primarily on maximum likelihood estimation. We assume that the outcome is generated by
\[
Y_{it}\mid X_{it};\theta, \lambda_{i},f_{t}\sim g\left(\cdot|X_{it};\theta,\lambda_{i},f_{t}\right)
\]
where $g\left(\cdot\right)$ is a known probability density function
with respect to some dominating measure. The log-likelihood function is then given by
\[
\ell\left(W_{it};\theta, \lambda_{i}^{\prime}f_{t}\right)=-\log g\left(Y_{it}\mid X_{it};\theta,\lambda_{i},f_{t}\right)
\]

Since the common factors $F$ and factor loadings $\Lambda$ cannot be separately identified without imposing normalization (see Remark~\ref{remark:normlization}), we impose the same normalization on both $F$ and $\Lambda$ as in \citet*{bai_panel_2009}, without loss of generality. Specifically, we normalize such that $F^{\prime}F/T=\mathbb{I}_{r}$ and
$\Lambda^{\prime}\Lambda/N$ is diagonal with non-increasing diagonal elements. It is important to note that the ultimate asymptotic results
remain unchanged regardless of the specific normalization chosen for
$F$ and $\Lambda$.

To refine the initial estimators, we employ a localized iterative procedure. Starting with the initial estimators $\left(\hat{\theta},\hat{\Lambda},\hat{F}\right)$, the procedure iteratively updates these estimators by searching within a local region of radius $d_{NT}=c\log\left(N\land T\right)\gamma_{NT}$, where $c$ is a positive constant, and $\gamma_{NT}$ represents the convergence rate of the first-step estimator, as specified in Corollary~\ref{cor:main-r} below. 
The procedure is defined as follows: 
\begin{enumerate}
    \renewcommand{\labelenumi}{\textbf{Step \arabic{enumi}.}}
    \item For $m=0$, set $\left(\hat{\theta}^{\left(0\right)},\hat{\Lambda}^{\left(0\right)},\hat{F}^{\left(0\right)}\right)=\left(\hat{\theta},\hat{\Lambda},\hat{F}\right)$,
    the preliminary consistent estimator of $\left(\theta_{0},\Lambda_{0},F_{0}\right)$.

    \item Given $\hat{\theta}^{\left(m\right)}$, update the
    estimators of $\left\{ \Lambda_{0},F_{0}\right\} $ to $\left\{ \hat{\Lambda}^{(m+1)},\hat{F}^{(m+1)}\right\} $
    by solving
    \[
    \left\{ \hat{\Lambda}^{(m+1)},\hat{F}^{(m+1)}\right\} \in\underset{\Lambda\in\mathcal{B}\left(\hat{\Lambda}^{(m)},\sqrt{N}d_{NT}\right),F\in\mathcal{B}\left(\hat{F}^{(m)},\sqrt{T}d_{NT}\right)}{\operatorname{argmin}}\mathcal{L}_{NT}\left(\hat{\theta}^{(m)};\Lambda,F\right)
    \]

    \item Given $\left\{ \hat{\Lambda}^{(m+1)},\hat{F}^{(m+1)}\right\} $,
    update the estimator of $\theta_{0}$ to $\hat{\theta}^{(m+1)}$ according
    to 
    \[
\hat{\theta}^{(m+1)}=\underset{ \theta\in\mathcal{B}\left(\hat{\theta}^{(m)},d_{NT}\right)}{\operatorname{argmin}}\mathcal{L}_{NT}\left(\theta;\hat{\Lambda}^{(m+1)},\hat{F}^{(m+1)}\right)
    \]

    \item Iterate Steps 2-3 until a convergence criterion is met.
\end{enumerate} \section{Theoretical Results\label{sec:main}}
In this section, we examine the asymptotic properties of both the initial and iterative estimators under an asymptotic framework where $N$ and $T$ tend to infinity jointly. We first study the consistency and convergence rate of the initial regularized estimator. We then turn to the second-step iterative estimator and derive its convergence properties and asymptotic distribution.

Recall the definition of $\mathcal{L}_{NT}$ in \eqref{ref:Lnt_Def}. We further define
\[
\mathcal{\bar{L}}\left(\theta,\Pi\right)=\mathbb{E}\left[\mathcal{L}_{NT}\left(\theta,\Pi\right)\right],\quad\mathcal{\tilde{L}}_{NT}\left(\theta,\Pi\right)=\mathcal{L}_{NT}\left(\theta,\Pi\right)-\mathbb{E}\left[\mathcal{L}_{NT}\left(\theta,\Pi\right)\right]
\]
and the excess risk function as 
\[
\mathcal{E}\left(\theta,\Pi\right)=\mathcal{\bar{L}}\left(\theta,\Pi\right)-\mathcal{\bar{L}}\left(\theta_{0},\Pi_{0}\right)
\]
Let $\underline{c}$ and $\overline{c}$ denote generic positive constants
that may vary in their occurrences. 
\begin{assumption}[Sampling and Covariates] \label{assu:data} (i) The data sequences $\left\{ \left(W_{it}\right)\right\} _{i\in[N],t\in[T]}$
obey the model (\ref{ref:Model});\\
(ii) The data on units $\left(W_{it}\right)_{t\in\left[T\right]}$
are independent across $i$, and for each $i$, the sequence $\left(W_{it}\right)_{t\in\left[T\right]}$
is stationary and exponentially $\beta$-mixing with respect to $t$, conditional on $\Pi$. Specifically,
there exist constants $C>0$ and $\mu>0$ such that $\forall q\geqslant1$,
$$\sup_{1\leqslant i\leqslant N}\gamma_{i}(q;W)\leqslant C\exp(-\mu q)$$
where $\gamma_{i}(q;W)\coloneqq\frac{1}{2}\sup_{\bar{t} \geq 1}\sup\sum_{l=1}^{L}\sum_{l'=1}^{L'}\left|\mathrm{P}\left(A_{l}\cap B_{l^{\prime}}\right)-P\left(A_{l}\right)P\left(B_{l^{\prime}}\right)\right|$ with the second supremum over all finite partitions $\left\{A_l\right\}$ in the $\sigma$-field generated by $(\left\{W_{i t}\right\}_{t \leq \bar{t}})$ and $\left\{B_{l^{\prime}}\right\} $ in the $\sigma$-field generated by $(\left\{W_{i t}\right\}_{t \geq \bar{t}+q})$;
\\
 (iii) As $\left(N,T\right)\rightarrow\infty,N/T\rightarrow\kappa^{2},0<\kappa<\infty$;\\
(iv) The covariate $X_{it}$ has bounded support; \\
(v) The rank of $\Pi_0$ is fixed, denoted as $r$.
\end{assumption} 

\begin{assumption}[Lipschitz continuity] \label{assu:Lip} 
Suppose that $\ell\left(w;\theta,\pi\right)$ is Lipschitz continuous in $\left(\theta,\pi\right)$ in the sense that $$\left|\ell\left(w;\theta_{1},\pi_{1}\right)-\ell\left(w;\theta_{2},\pi_{2}\right)\right|\leqslant L\left(w\right)\left[\left\Vert \theta_{1}-\theta_{2}\right\Vert +\left|\pi_{1}-\pi_{2}\right|\right]$$
for a measurable function $L\left(\cdot\right)$ with $\max_{i\in[N],t\in[T]}\mathbb{E}\left[L\left(W_{it}\right){}^{4}\right]<\infty$, where the data sequence $\left\{ \left(W_{it}\right)\right\} _{i\in[N],t\in[T]}$ satisfies Assumption~\ref{assu:data}. 
\end{assumption} 
Assumption \ref{assu:data} imposes basic regularity on the data generating process. Assumption \ref{assu:data}(ii) limits inter-temporal
dependence within the data. Specifically, $\mu$ controls the strength
of this temporal dependence. We adopt the assumption of exponentially
beta-mixing for simplicity, but similar results can be achieved with
polynomially beta-mixing data. If the observations $\left\{ \left(W_{it}\right)\right\} _{i\in[N],t\in[T]}$
are i.i.d. across $i$ and over $t$, our theoretical results will
hold without imposing Assumption \ref{assu:data}(ii). Assumption
\ref{assu:data}(iii) imposes constraints on the relative rates at which $N$ and $T$ tend to infinity.\footnote{For the first-step estimator, it is sufficient that $T^{-1}\log N = o(1)$, which requires only that $T$ increase sufficiently fast relative to $\log N$. This weaker condition guarantees the convergence rate of the first-step estimates and is implied by the joint asymptotic regime $N/T \to \kappa^{2}$ adopted here.}
It is the standard assumption in the large-$T$ panel data literature \citep*{hahn_asymptotically_2002,bai_panel_2009,chen_nonlinear_2021}.
Assumption \ref{assu:data}(iv) is included for convenience, but could
be relaxed to allow covariates with sufficiently light tails. For
instance, if the covariates are standard Gaussian, $c_x$ can grow like
$\sqrt{\log(pNT)}$ with probability approaching one. This avenue is not
pursued here to maintain a clear and concise exposition. Additionally, we focus only on the case of fixed $r$; the analysis of approximately low-rank $\Pi_0$, analogous to approximate sparsity in the Lasso literature, is beyond the scope of this paper. 

Assumption~\ref{assu:Lip} imposes a Lipschitz continuity condition on the loss function. In the specific case where the loss function is globally Lipschitz with a constant independent of $y$ or $x$, this assumption holds trivially. 

\subsection{Consistency}\label{subsec:consistency}
We show that the first-step regularized estimator is consistent, which is the starting point for establishing its rate of convergence. This section states the high-level conditions required for consistency, while Proposition~\ref{prop:ep} and Appendix~\ref{apx:rsc} present primitive conditions.
\begin{assumption}[Identification] \label{assu:id-2}(i) $\theta_{0}$
lies in the interior of $\Theta$, and for each $i$ and $t$, $\pi_{0,it}$
lies in the interior of $\Phi$, where $\Theta$ and $\Phi$ are compact
convex subsets of $\mathbb{R}^{p}$ and $\mathbb{R}$, respectively. Furthermore, $\Phi_\lambda$ and $\Phi_f$ are also compact
convex subsets of $\mathbb{R}^r$;\\
(ii) For
all $\eta>0$, there exists an $\varepsilon>0$ such that for all $N\geq1$
and $T\geq1$,
\[
 \underset{\substack{\left(\theta,\Pi\right)\in\Theta\times\Phi^{N\times T}\\
\left\Vert \theta-\theta_{0}\right\Vert ^{2}+\frac{1}{NT}\left\Vert \Pi-\Pi_{0}\right\Vert _{F}^{2}\geq\eta^{2}
}
}{\inf}\mathcal{E}\left(\theta,\Pi\right)\geq\varepsilon
\]
\end{assumption} 
This is an identification assumption that imposes restrictions on the shape of the excess risk function. When $N$ and $T$ are fixed, the condition is satisfied if the loss function is continuous and $(\theta_{0},\Pi_{0})$
is its unique minimizer. In the context of panel data, this condition needs to be satisfied uniformly in $N$ and $T$. 

Thanks to the penalty term in \eqref{ref:Estimation1}, we can show that with probability approaching one, $\left(\hat{\theta},\hat{\Pi}\right)$ belongs to a restricted set
\[
\mathcal{B}=\left\{ \left(\theta,\Pi\right)\in\Theta\times\Phi^{N\times T}:\left\Vert \Pi\right\Vert _{*}\leq c_{\mathcal{L}}\nu^{-1}+\left\Vert \Pi_{0}\right\Vert _{*}\right\} 
\]
where $c_{\mathcal{L}}=\mathcal{\bar{L}}\left(\theta_{0},\Pi_{0}\right)+1$. It is formally established in the Appendix.
This result is important because it allows the analysis to restrict attention to a smaller set $\mathcal{B}$ within the parameter space.
\begin{assumptionUC} \label{assu:obj-2}$\sup_{\left(\theta,\Pi\right)\in\mathcal{B}}\left|\mathcal{L}_{NT}\left(\theta,\Pi\right)-\mathcal{\bar{L}}\left(\theta,\Pi\right)\right|=o_{p}\left(1\right)$
\end{assumptionUC} 
Assumption~\ref{assu:obj-2} assumes uniform convergence. In the low-dimensional context, a similar condition is usually required on a compact set which does not depend on $N$ and $T$. The main difference here is that the radius of $\mathcal{B}$ grows with $N$ and $T$. We provide sufficient conditions for Assumption~\ref{assu:obj-2} to hold in Proposition~\ref{prop:ep}.

\begin{lemma}[Consistency] \label{lem:consistency-2}Under Assumptions
\ref{assu:id-2} and \ref{assu:obj-2}, if $\nu\left\Vert \Pi_{0}\right\Vert _{*}=o\left(1\right)$,
then we have $\left\Vert \hat{\theta}-\theta_{0}\right\Vert ^{2}+\frac{1}{NT}\left\Vert \hat{\Pi}-\Pi_{0}\right\Vert _{F}^{2}=o_{p}\left(1\right)$.
\end{lemma}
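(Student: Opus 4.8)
The plan is to run the classical two‑move consistency argument for a penalized M‑estimator: first convert the optimality of $(\hat\theta,\hat\Pi)$ into an upper bound on the excess risk $\mathcal{E}(\hat\theta,\hat\Pi)$, and then invert the identification condition in Assumption~\ref{assu:id-2}(ii) to transfer smallness of $\mathcal{E}$ to smallness of $\|\hat\theta-\theta_0\|^2+\tfrac{1}{NT}\|\hat\Pi-\Pi_0\|_F^2$. For the first move, since $(\hat\theta,\hat\Pi)$ minimizes the penalized criterion in \eqref{ref:Estimation1} and $(\theta_0,\Pi_0)$ is feasible there ($\theta_0\in\Theta$ and each $\pi_{0,it}$ is interior to $\Phi$ by Assumption~\ref{assu:id-2}(i)), evaluating that criterion at $(\theta_0,\Pi_0)$ gives the basic inequality $\mathcal{L}_{NT}(\hat\theta,\hat\Pi)+\nu\|\hat\Pi\|_*\le\mathcal{L}_{NT}(\theta_0,\Pi_0)+\nu\|\Pi_0\|_*$, hence $\mathcal{L}_{NT}(\hat\theta,\hat\Pi)-\mathcal{L}_{NT}(\theta_0,\Pi_0)\le\nu(\|\Pi_0\|_*-\|\hat\Pi\|_*)\le\nu\|\Pi_0\|_*=o(1)$ by hypothesis.

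Next I would localize to the restricted set $\mathcal{B}$. As established in the Appendix, $(\hat\theta,\hat\Pi)\in\mathcal{B}$ with probability approaching one, while $(\theta_0,\Pi_0)\in\mathcal{B}$ holds automatically since $c_{\mathcal{L}}\nu^{-1}\ge0$; thus on the event $\{(\hat\theta,\hat\Pi)\in\mathcal{B}\}$ Assumption~\ref{assu:obj-2} applies at both points. Decomposing
\[
\mathcal{E}(\hat\theta,\hat\Pi)=\big[\bar{\mathcal{L}}(\hat\theta,\hat\Pi)-\mathcal{L}_{NT}(\hat\theta,\hat\Pi)\big]+\big[\mathcal{L}_{NT}(\hat\theta,\hat\Pi)-\mathcal{L}_{NT}(\theta_0,\Pi_0)\big]+\big[\mathcal{L}_{NT}(\theta_0,\Pi_0)-\bar{\mathcal{L}}(\theta_0,\Pi_0)\big],
\]
the first and third brackets are bounded in absolute value by $\sup_{(\theta,\Pi)\in\mathcal{B}}|\mathcal{L}_{NT}(\theta,\Pi)-\bar{\mathcal{L}}(\theta,\Pi)|=o_p(1)$ by Assumption~\ref{assu:obj-2}, and the middle bracket is at most $o(1)$ by the basic inequality. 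Hence on $\{(\hat\theta,\hat\Pi)\in\mathcal{B}\}$ one has $\mathcal{E}(\hat\theta,\hat\Pi)\le o_p(1)+o(1)+o_p(1)$, and combining this with $\Prob((\hat\theta,\hat\Pi)\notin\mathcal{B})\to0$ yields $\Prob(\mathcal{E}(\hat\theta,\hat\Pi)\ge\varepsilon)\to0$ for every $\varepsilon>0$.

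To finish, fix $\eta>0$ and take the corresponding $\varepsilon>0$ from Assumption~\ref{assu:id-2}(ii), which is uniform in $N$ and $T$. On the event $\{\|\hat\theta-\theta_0\|^2+\tfrac{1}{NT}\|\hat\Pi-\Pi_0\|_F^2\ge\eta^2\}$ that assumption forces $\mathcal{E}(\hat\theta,\hat\Pi)\ge\varepsilon$, so
\[
\Prob\!\left(\|\hat\theta-\theta_0\|^2+\tfrac{1}{NT}\|\hat\Pi-\Pi_0\|_F^2\ge\eta^2\right)\le\Prob\big(\mathcal{E}(\hat\theta,\hat\Pi)\ge\varepsilon\big)\longrightarrow0 .
\]
Since $\eta>0$ is arbitrary, $\|\hat\theta-\theta_0\|^2+\tfrac{1}{NT}\|\hat\Pi-\Pi_0\|_F^2=o_p(1)$, which is the claim.

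Within the scope of this lemma each step is bookkeeping; the two substantive inputs are imported rather than proved here, and that is where the real difficulty lies: the localization fact $(\hat\theta,\hat\Pi)\in\mathcal{B}$ w.p.a.1, and the uniform convergence of Assumption~\ref{assu:obj-2} over the ball $\mathcal{B}$ whose nuclear‑norm radius $c_{\mathcal{L}}\nu^{-1}$ diverges. Controlling the empirical process $\sup_{(\theta,\Pi)\in\mathcal{B}}|\mathcal{\tilde{L}}_{NT}(\theta,\Pi)|$ over this growing set — where the Lipschitz structure of Assumption~\ref{assu:Lip} and the $\beta$‑mixing condition of Assumption~\ref{assu:data}(ii) come in — is precisely what Proposition~\ref{prop:ep} is designed to deliver.
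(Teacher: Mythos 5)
Your proof is correct and follows essentially the same route as the paper's: the basic inequality from optimality of $(\hat\theta,\hat\Pi)$, localization to the restricted set $\mathcal{B}$ via the appendix lemma, the uniform convergence of Assumption~\ref{assu:obj-2} to bound the excess risk (your three-bracket decomposition is exactly the paper's chain $0\le\bar{\mathcal{L}}(\hat\theta,\hat\Pi)-\bar{\mathcal{L}}(\theta_0,\Pi_0)\le2\sup_{\mathcal{B}}|\mathcal{L}_{NT}-\bar{\mathcal{L}}|+\nu\|\Pi_0\|_*$), and finally the identification condition to convert small excess risk into small parameter distance. Your closing remarks correctly identify the two imported ingredients as the genuinely substantive ones.
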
 
The result depends on the condition that $\nu\left\Vert \Pi_{0}\right\Vert _{*}=o(1)$,
which implies that the added penalty term has a negligible effect
on the objective function when evaluated at the true values. This condition on $\nu$ is satisfied in the subsequent theorems and corollaries.

\subsection{Rate of Convergence}\label{subsec:rate}
Similar to other high-dimensional settings such as those in \citet{negahban_unified_2012}, we impose an invertibility condition involving another restricted set, denoted as $\mathcal{A}$, which we describe next and use to derive our main results.
Before stating the next condition, we first introduce some notation. 

Let $\Pi_{0}=UDV^{\prime}$ represent the singular value decomposition
(SVD) of $\Pi_{0}$, where $U$ and $V$ are the matrices of singular vectors corresponding to all singular values. In particular, let $U_{0}\in\mathbb{R}^{N\times r}$ and $V_{0}\in\mathbb{R}^{T\times r}$
denote the columns of $U$ and $V$ associated with the non-zero singular
values. 
Let $P_U=U_0U_0'$ and $P_V=V_0V_0'$. For an $N\times T$ matrix $\Delta$, we define the operators
\begin{equation}
\mathcal{P}\left(\Delta\right)\coloneqq P_U\Delta+\Delta P_V-P_U\Delta P_V,\qquad\mathcal{M}\left(\Delta\right)\coloneqq\Delta-\mathcal{P}\left(\Delta\right)\label{eq:PM-def}
\end{equation}
The operator $P(\cdot)$ is the projection onto the tangent space associated with the
row and column spaces of $\Pi_0$, while $M(\cdot)$ is the projection onto its orthogonal
complement.

Next, we define the restricted set as 
\begin{equation}
\mathcal{A}=\left\{ \left(\delta,\Delta\right)\in\mathbb{R}^{p}\times\mathbb{R}^{N\times T},\text{s.t.}\|\Delta\|_{*}-4\|\mathcal{P}\left(\Delta\right)\|_{*}-\sqrt{NT}\left\Vert \delta\right\Vert \leq0\right\} \label{ref:cone}
\end{equation}

It can be shown that $\left(\hat{\theta}-\theta_{0},\hat{\Pi}-\Pi_{0}\right)$ lies in this ``cone'' under certain conditions. 
Namely, the set contains matrices $\Pi$ that are close to $\Pi_{0}$,
in the sense that the part that cannot be explained by $\lambda_{0i}$
and $f_{0t}$ is small in terms of nuclear norm. 

\begin{assumption}[Restricted
Strong Convexity] \label{assu:rsc} For all $\left(\delta,\Delta\right)\in\mathcal{A}$, there exists a universal constant $c_{RSC}>0$ such that 
\[
\mathcal{E}\left(\theta_{0}+\delta,\Pi_{0}+\Delta\right)\geq c_{RSC}\left\Vert \delta\right\Vert ^{2}+\frac{c_{RSC}}{NT}\left\Vert \Delta\right\Vert _{F}^{2}
\]

\end{assumption} 
Assumption \ref{assu:rsc} is based on Restricted Strong Convexity (RSC), which relaxes the definition of strong
convexity by only needing strong convexity in certain directions or over a subset of the ambient
space \citep*{negahban_unified_2012,wainwright_high-dimensional_2019}. 
This assumption represents a version of a widely used condition in the matrix completion literature, although verifying it typically requires imposing additional structure on the parameter space. See also the discussions in \citet*{moon_nuclear_2019} and \citet*{Zeleneev_tractable_2026}, following their Assumptions 1 and A.3, respectively. 
More discussion of Assumption~\ref{assu:rsc} is provided in Appendix~\ref{apx:rsc}.

Finally, to control the empirical risk associated with the estimation errors, we define the set 
\[
\mathcal{V}=\left\{ (\theta,\Pi)\in\mathcal{B}:\left\Vert \delta\right\Vert ^{2}+\frac{1}{NT}\left\Vert \Delta\right\Vert _{F}^{2}\leq c_{l}\right\}.
\]
We also introduce the norm $\rho(\cdot,\cdot)$, defined as
\[
\rho(\delta,\Delta)=\left[\left\Vert \delta\right\Vert ^{2}+\frac{1}{NT}\left\Vert \Delta\right\Vert _{*}^{2}\right]^{1/2}.
\]
According to Lemma~\ref{lem:consistency-2}, it suffices to analyze the convergence rate of the empirical risk within the restricted set $\mathcal{V}$.

\begin{proposition}
\label{prop:ep}Under Assumptions \ref{assu:data} and \ref{assu:Lip},
and with $\nu=c_{\nu}\frac{\psi_{NT}c_{\varepsilon,NT}}{\sqrt{NT}}$
for some constant $c_{\nu}\geq2$, 
there exist positive sequences $\left\{ \psi_{NT}\right\} $, and $\left\{ c_{\varepsilon,NT}\right\}$ such that 
\[
\lim_{\left(N,T\right)\rightarrow\infty}\mathrm{P}\left(\sup_{\left(\theta,\Pi\right)\in\mathcal{V}}\frac{\left|\mathcal{\tilde{L}}_{NT}\left(\theta,\Pi\right)-\mathcal{\tilde{L}}_{NT}\left(\theta_{0},\Pi_{0}\right)\right|}{\rho\left(\theta-\theta_{0},\Pi-\Pi_{0}\right)\vee c{}_{\varepsilon,NT}}\leq\psi_{NT}c_{\varepsilon,NT}\right)=1
\]
where $c_{T}=\lceil2\mu^{-1}\log\left(NT\right)\rceil$, $d_{T}=\lfloor T/\left(2c_{T}\right)\rfloor$,
$c_{\varepsilon,NT}=\frac{\sqrt{c_{T}}}{\sqrt{N\land d_{T}}}$ and
$\psi_{NT}/\left(C_{L}\log\left(c_{T}+1\right)\right)\rightarrow\infty$. Moreover, Assumption~\ref{assu:obj-2} holds.
\end{proposition}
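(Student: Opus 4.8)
The plan is to control the localized empirical process on the left-hand side by combining (i) a blocking/coupling reduction of the $\beta$-mixing array to an independent one, (ii) symmetrization together with a contraction inequality that linearizes the Lipschitz loss, (iii) an operator-norm concentration bound for the resulting ``score'' matrix under only four moments, and (iv) a peeling argument that turns pointwise bounds into the uniform-in-$\mathcal{V}$ statement; the ``moreover'' assertion then falls out of the same duality bound applied on the larger set $\mathcal{B}$.

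First I would decouple the time dependence: for each $i$, cut $[T]$ into consecutive blocks of length $c_T=\lceil 2\mu^{-1}\log(NT)\rceil$, separate these into even and odd families of size $\asymp d_T=\lfloor T/(2c_T)\rfloor$, and apply Berbee's coupling lemma within each family. Exponential $\beta$-mixing (Assumption~\ref{assu:data}(ii)) gives $\gamma_i(c_T;W)\le Ce^{-\mu c_T}\le C(NT)^{-2}$, so the event that some coupling fails has probability $O(Nd_T(NT)^{-2})=o(1)$; on its complement the block sums form an array that is independent across the $N\times d_T$ unit-by-block index, each block sum aggregating $c_T$ Lipschitz increments (the origin of the $\sqrt{c_T}$ in $c_{\varepsilon,NT}$). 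Writing $g_{it}(\theta,\Pi)=\ell(W_{it};\theta,\pi_{it})-\ell(W_{it};\theta_0,\pi_{0,it})$ and applying symmetrization, it then suffices to bound $\mathbb{E}_\sigma\sup_{\mathcal{V}}(NT)^{-1}|\sum_{i,t}\sigma_{it}g_{it}(\theta,\Pi)|$. Since $(\theta,\pi)\mapsto\ell(W_{it};\theta,\pi)$ is $L(W_{it})$-Lipschitz (Assumption~\ref{assu:Lip}), a vector-valued contraction inequality dominates this by the sum of a $\theta$-term and the $\Pi$-term $(NT)^{-1}|\langle\mathcal{E},\Pi-\Pi_0\rangle|$ with $\mathcal{E}_{it}=\sigma_{it}L(W_{it})$ (block-aggregated). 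The $\theta$-term is a $p$-dimensional Rademacher process whose chaining bound over the compact set $\Theta$ is $o_p(c_{\varepsilon,NT})$ per unit $\|\theta-\theta_0\|$, while for the $\Pi$-term nuclear/spectral duality gives
\[
\frac{1}{NT}\,|\langle\mathcal{E},\Pi-\Pi_0\rangle|\ \le\ \frac{\|\mathcal{E}\|_{\mathrm{op}}}{NT}\,\|\Pi-\Pi_0\|_*\ \le\ \frac{\|\mathcal{E}\|_{\mathrm{op}}}{\sqrt{NT}}\,\rho(\theta-\theta_0,\Pi-\Pi_0),
\]
since $\rho$ dominates $(NT)^{-1/2}\|\cdot\|_*$ in its second argument.

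It then remains to show $\|\mathcal{E}\|_{\mathrm{op}}/\sqrt{NT}\lesssim\psi_{NT}c_{\varepsilon,NT}$ w.p.a.1. As Assumption~\ref{assu:Lip} supplies only $\max_{i,t}\mathbb{E}[L(W_{it})^4]<\infty$, I would truncate $L(W_{it})$ at a level $\tau_{NT}$ with $NT\,\mathbb{E}[L^4]/\tau_{NT}^4=o(1)$, so that every entry is bounded by $\tau_{NT}$ w.p.a.1; on the truncated, independent, block-aggregated array a matrix concentration argument (Seginer's row/column bound, or matrix Bernstein after truncation, together with a $\log(c_T+1)$ factor from the within-block aggregation) yields $\|\mathcal{E}\|_{\mathrm{op}}\lesssim\sqrt{c_T}\,(\sqrt N+\sqrt{d_T})\,C_L\log(c_T+1)$ w.p.a.1. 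Dividing by $\sqrt{NT}$ and using $(\sqrt N+\sqrt{d_T})(N\wedge d_T)^{1/2}\asymp\sqrt{Nd_T}$ gives $\|\mathcal{E}\|_{\mathrm{op}}/\sqrt{NT}\lesssim\sqrt{c_T}(N\wedge d_T)^{-1/2}C_L\log(c_T+1)=c_{\varepsilon,NT}\,C_L\log(c_T+1)$, which is $\ll\psi_{NT}c_{\varepsilon,NT}$ by the hypothesis $\psi_{NT}/(C_L\log(c_T+1))\to\infty$. Collecting the $\theta$- and $\Pi$-parts, for any fixed radius $R$ the supremum of $|\tilde{\mathcal{L}}_{NT}(\theta,\Pi)-\tilde{\mathcal{L}}_{NT}(\theta_0,\Pi_0)|$ over $\{\rho(\theta-\theta_0,\Pi-\Pi_0)\le R\}\cap\mathcal{V}$ is $\lesssim\psi_{NT}c_{\varepsilon,NT}\,(R\vee c_{\varepsilon,NT})$ w.p.a.1.

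Finally I would peel: slice $\mathcal{V}$ into shells $S_k=\{2^{k-1}c_{\varepsilon,NT}<\rho\le 2^{k}c_{\varepsilon,NT}\}$, $k\ge1$, together with $\{\rho\le c_{\varepsilon,NT}\}$; on $S_k$ the denominator $\rho\vee c_{\varepsilon,NT}$ is of order $2^kc_{\varepsilon,NT}$, so the preceding bound with $R=2^kc_{\varepsilon,NT}$ applies, and a union bound over the $O(\log(\operatorname{diam}\mathcal{V}/c_{\varepsilon,NT}))$ shells — the extra logarithm again absorbed by $\psi_{NT}$ — yields the displayed probability statement. For the ``moreover'' claim, on $\mathcal{B}$ we have $\|\Pi\|_*\le c_{\mathcal{L}}\nu^{-1}+\|\Pi_0\|_*$ with $\nu\asymp\psi_{NT}c_{\varepsilon,NT}/\sqrt{NT}$, so the same duality bound gives $(NT)^{-1}\|\mathcal{E}\|_{\mathrm{op}}\|\Pi\|_*=O_p\big(C_L\log(c_T+1)/(\sqrt{c_T}\,\psi_{NT})\big)+o_p(1)=o_p(1)$, which with the single-point law of large numbers $\tilde{\mathcal{L}}_{NT}(\theta_0,\Pi_0)=o_p(1)$ gives Assumption~\ref{assu:obj-2}. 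The main obstacle is step (iii): bounding $\|\mathcal{E}\|_{\mathrm{op}}$ for a matrix whose entries are sums of $c_T$ weakly dependent terms with only four finite moments. This is what forces the two-layer treatment — Berbee coupling to restore independence across blocks, then truncation at $\tau_{NT}$ to tame the heavy tails — and it is the interplay of the block length $c_T$, the number of blocks $d_T$, and the $\log(c_T+1)$ concentration factor that pins down the precise form of $c_{\varepsilon,NT}$ and the divergence rate required of $\psi_{NT}$; getting these constants to line up is the delicate bookkeeping.
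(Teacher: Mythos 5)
Your architecture matches the paper's almost step for step: Berbee coupling with block length $c_T$ and $2d_T$ alternating blocks, symmetrization, reduction of the Lipschitz loss to a linear (Rademacher-weighted) process split into a $\theta$-part and a $\Pi$-part, nuclear/spectral duality to pull out $\|\Pi-\Pi_0\|_*$, an operator-norm bound on the score matrix built from $\varepsilon_{il}L(W^*_{it})$, and a shell-peeling plus union bound to convert the fixed-radius bound into the ratio-type uniform statement (this is exactly Lemma~\ref{lem:emp} followed by the rings $\mathcal{C}_k$ in the paper's proof of Proposition~\ref{prop:ep}). Your treatment of the ``moreover'' clause via the nuclear-norm bound on $\mathcal{B}$ is also a sensible completion of a step the paper leaves largely implicit. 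The one place you diverge is the operator-norm concentration under only four moments: the paper bounds the moment generating function of the $\Pi$-part directly, invoking Chatterjee's (2015) expectation bound for the spectral norm together with Vu's (2007) concentration inequality, which yields $\|\mathcal{E}\|\lesssim\sqrt{c_T}\,(\sqrt{N}\vee\sqrt{d_T})\,C_L$ with no extraneous logarithms; the within-block union bound over $m\in[c_T]$ is what produces the $\log(c_T+1)$ that $\psi_{NT}$ must dominate. Of your two proposed substitutes, the Seginer-type row/column bound reproduces this rate, but truncation at $\tau_{NT}\gg(NT)^{1/4}$ followed by matrix Bernstein would leave a range term of order $\tau_{NT}\log(N+d_T)$, i.e.\ an extra $\sqrt{\log(NT)}$ relative to the target $\sqrt{c_T}(\sqrt{N}\vee\sqrt{d_T})$, and that factor cannot be absorbed by a $\psi_{NT}$ that is only required to grow faster than $\log(c_T+1)\asymp\log\log(NT)$. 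So if you write this up, commit to the Seginer/Lata{\l}a-type (or Chatterjee--Vu) route rather than matrix Bernstein; otherwise the argument is sound and coincides with the paper's.
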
 
The proof of Proposition~\ref{prop:ep} is based on the empirical process theory and does not rely on the differentiability of the loss function. It derives sufficient conditions under which Assumption~\ref{assu:obj-2} holds and provides a stochastic bound that will be used for the subsequent theorem.
Specifically, we have $c_{\varepsilon,NT}=O\left(\sqrt{\log\left(NT\right)/\left(N\wedge T\right)}\right)$, since, under our exponential $\beta$-mixing setting, the dependence length satisfies $c_T=O(\log (N T))$. The scaling sequence $\psi_{N T}$ is allowed to diverge slowly, at a rate exceeding $O(\log \left(\log (N T)\right))$, so that the probabilistic bound in Proposition~\ref{prop:ep} holds uniformly over the local parameter space $\mathcal{V}$.

We now present our first main result for estimating $\left(\theta_{0},\Pi_{0}\right)$.
\begin{theorem}[Convergence Rate] \label{thm:main-3}
Under Assumptions~\ref{assu:data}-\ref{assu:rsc}, let $\nu=c_{\nu}\frac{\psi_{NT}c_{\varepsilon,NT}}{\sqrt{NT}}$
for some constant $c_{\nu}\geq2$. With probability approaching one, we have
\[\left\Vert \hat{\theta}-\theta_{0}\right\Vert ^{2}+\frac{1}{NT}\left\Vert \hat{\Pi}-\Pi_{0}\right\Vert _{F}^{2}\lesssim \frac{\psi_{NT}^{2}rc_{T}}{N\land d_{T}}\]
where $c_{\varepsilon,NT}$ and $\psi_{NT}$ are specified in Proposition
\ref{prop:ep}. \end{theorem}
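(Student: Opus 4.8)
\textbf{Proof proposal for Theorem~\ref{thm:main-3}.}

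The plan is to follow the standard ``basic inequality $+$ restricted cone $+$ restricted strong convexity'' template from the nuclear-norm literature (e.g.\ \citet{negahban_unified_2012}), adapted to the nonconvex, panel-data setting. Write $\hat\delta=\hat\theta-\theta_0$ and $\hat\Delta=\hat\Pi-\Pi_0$. First I would start from the optimality of $(\hat\theta,\hat\Pi)$ in \eqref{ref:Estimation1}, which gives the basic inequality
\[
\mathcal{L}_{NT}(\hat\theta,\hat\Pi)-\mathcal{L}_{NT}(\theta_0,\Pi_0)\;\le\;\nu\bigl(\|\Pi_0\|_*-\|\hat\Pi\|_*\bigr).
\]
Splitting $\mathcal{L}_{NT}=\bar{\mathcal L}+\tilde{\mathcal L}_{NT}$, the left side becomes $\mathcal{E}(\hat\theta,\hat\Pi)+\bigl[\tilde{\mathcal L}_{NT}(\hat\theta,\hat\Pi)-\tilde{\mathcal L}_{NT}(\theta_0,\Pi_0)\bigr]$. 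By Lemma~\ref{lem:consistency-2} (whose hypotheses are in force since, as noted after it, the choice of $\nu$ here makes $\nu\|\Pi_0\|_*=o(1)$, using $\|\Pi_0\|_*\le\sqrt{r}\|\Pi_0\|_F\lesssim\sqrt{rNT}$ and $\nu\asymp\psi_{NT}c_{\varepsilon,NT}/\sqrt{NT}$), we may restrict attention to $(\hat\theta,\hat\Pi)\in\mathcal V$, so the empirical-process fluctuation term is controlled by Proposition~\ref{prop:ep}: on the event of probability $\to1$ there,
\[
\bigl|\tilde{\mathcal L}_{NT}(\hat\theta,\hat\Pi)-\tilde{\mathcal L}_{NT}(\theta_0,\Pi_0)\bigr|\;\le\;\psi_{NT}c_{\varepsilon,NT}\,\bigl[\rho(\hat\delta,\hat\Delta)\vee c_{\varepsilon,NT}\bigr].
\]

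Next I would establish that $(\hat\delta,\hat\Delta)\in\mathcal A$, the cone in \eqref{ref:cone}. Using the decomposability of the nuclear norm (for any $\Delta$, $\|\Pi_0+\mathcal{M}(\Delta)\|_*=\|\Pi_0\|_*+\|\mathcal{M}(\Delta)\|_*$) together with the triangle inequality, $\|\Pi_0\|_*-\|\hat\Pi\|_*\le\|\mathcal P(\hat\Delta)\|_*-\|\mathcal M(\hat\Delta)\|_*$. Feeding this into the basic inequality, bounding the empirical term via Proposition~\ref{prop:ep} and absorbing the $\|\hat\delta\|$ and $\|\hat\Delta\|_*/\sqrt{NT}$ pieces of $\rho$ into the $\nu$-scale (this is where the constant $c_\nu\ge2$ is used: the fluctuation slope $\psi_{NT}c_{\varepsilon,NT}$ is at most $\nu\sqrt{NT}/c_\nu\le\nu\sqrt{NT}/2$), and using that $\mathcal{E}\ge0$, one obtains an inequality of the form $\|\mathcal M(\hat\Delta)\|_*\le 3\|\mathcal P(\hat\Delta)\|_*+\sqrt{NT}\|\hat\delta\|+(\text{small})$, which after a standard tidying of constants yields $\|\hat\Delta\|_*\le4\|\mathcal P(\hat\Delta)\|_*+\sqrt{NT}\|\hat\delta\|$, i.e.\ membership in $\mathcal A$. (Here I also use $\|\hat\Delta\|_*\le\|\mathcal P(\hat\Delta)\|_*+\|\mathcal M(\hat\Delta)\|_*$.)

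With $(\hat\delta,\hat\Delta)\in\mathcal A$ in hand, I invoke Assumption~\ref{assu:rsc} to lower-bound the excess risk: $\mathcal E(\hat\theta,\hat\Pi)\ge c_{RSC}\bigl(\|\hat\delta\|^2+\tfrac1{NT}\|\hat\Delta\|_F^2\bigr)$. Combining the basic inequality, the fluctuation bound, and the RSC bound gives
\[
c_{RSC}\Bigl(\|\hat\delta\|^2+\tfrac1{NT}\|\hat\Delta\|_F^2\Bigr)\;\le\;\psi_{NT}c_{\varepsilon,NT}\bigl[\rho(\hat\delta,\hat\Delta)\vee c_{\varepsilon,NT}\bigr]+\nu\bigl(\|\mathcal P(\hat\Delta)\|_*-\|\mathcal M(\hat\Delta)\|_*\bigr)_+.
\]
The final algebraic step is to convert $\rho(\hat\delta,\hat\Delta)=\bigl(\|\hat\delta\|^2+\tfrac1{NT}\|\hat\Delta\|_*^2\bigr)^{1/2}$ and $\|\mathcal P(\hat\Delta)\|_*$ into the Frobenius-norm quantity on the left. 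The key geometric fact is that $\mathcal P(\hat\Delta)$ has rank at most $2r$, so $\|\mathcal P(\hat\Delta)\|_*\le\sqrt{2r}\,\|\mathcal P(\hat\Delta)\|_F\le\sqrt{2r}\,\|\hat\Delta\|_F$; combined with the cone inequality, $\|\hat\Delta\|_*\le 4\sqrt{2r}\|\hat\Delta\|_F+\sqrt{NT}\|\hat\delta\|$, whence $\tfrac1{NT}\|\hat\Delta\|_*^2\lesssim \tfrac{r}{NT}\|\hat\Delta\|_F^2+\|\hat\delta\|^2$ and therefore $\rho(\hat\delta,\hat\Delta)\lesssim\sqrt r\bigl(\|\hat\delta\|^2+\tfrac1{NT}\|\hat\Delta\|_F^2\bigr)^{1/2}$. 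Writing $Q\coloneqq\|\hat\delta\|^2+\tfrac1{NT}\|\hat\Delta\|_F^2$, the master inequality becomes $c_{RSC}\,Q\lesssim \psi_{NT}c_{\varepsilon,NT}\,\sqrt r\,\sqrt Q+\nu\sqrt{2r}\,\sqrt{NT}\cdot\tfrac{1}{\sqrt{NT}}\|\hat\Delta\|_F$ (the last term again bounded by $\nu\sqrt{2r}\sqrt{NT}\sqrt Q$), plus the additive $\psi_{NT}c_{\varepsilon,NT}^2$ term. Since $\nu\sqrt{NT}\asymp\psi_{NT}c_{\varepsilon,NT}$, both linear-in-$\sqrt Q$ terms are of order $\psi_{NT}c_{\varepsilon,NT}\sqrt r$, so solving the quadratic $c_{RSC}Q\lesssim\psi_{NT}c_{\varepsilon,NT}\sqrt r\sqrt Q+\psi_{NT}c_{\varepsilon,NT}^2$ gives $Q\lesssim \psi_{NT}^2 c_{\varepsilon,NT}^2\, r$. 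Plugging in $c_{\varepsilon,NT}^2=c_T/(N\wedge d_T)$ from Proposition~\ref{prop:ep} yields exactly $Q\lesssim \psi_{NT}^2 r c_T/(N\wedge d_T)$, the claimed bound.

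The main obstacle is the interplay between the three restricted sets. The empirical-process control in Proposition~\ref{prop:ep} is only valid on $\mathcal V$, the cone argument needs the basic inequality, and RSC only holds on $\mathcal A$; one must verify these are mutually compatible—in particular that consistency (Lemma~\ref{lem:consistency-2}) places the estimator in $\mathcal V$ before the rate argument sharpens it, and that the cone membership uses exactly the fluctuation bound that $\mathcal V$-localization provides. The delicate bookkeeping is ensuring the empirical slope $\psi_{NT}c_{\varepsilon,NT}$ is dominated by $\tfrac12\nu\sqrt{NT}$ (hence the requirement $c_\nu\ge2$), so that the $\rho$-fluctuation term can be absorbed on both sides without inflating constants—this is what lets the nonconvexity of $\ell$ be harmless, since we never use a gradient expansion, only the envelope/Lipschitz control already baked into Proposition~\ref{prop:ep}.
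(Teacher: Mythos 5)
Your proposal is correct and follows essentially the same route as the paper: basic inequality from optimality, decomposability of the nuclear norm to establish membership of $(\hat\delta,\hat\Delta)$ in the cone $\mathcal{A}$, the empirical-process bound of Proposition~\ref{prop:ep} localized to $\mathcal{V}$ via Lemma~\ref{lem:consistency-2}, restricted strong convexity, and the rank bound $\|\mathcal{P}(\hat\Delta)\|_*\le\sqrt{2r}\,\|\hat\Delta\|_F$, with the choice $c_\nu\ge2$ playing exactly the absorbing role you describe. The only cosmetic difference is that the paper organizes the argument as an event decomposition ($P_1+P_2+P_3$) showing the error cannot sit on a sphere of radius $\gamma_{NT}$, whereas you solve the resulting quadratic inequality in $Q$ directly; the ingredients and the final rate are identical.
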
 

Theorem~\ref{thm:main-3} establishes the convergence rates of the estimation errors for $\hat{\theta}$ and $\hat{\Pi}$ in the $\ell_{2}$ norm. Here, $\theta$ is a low-dimensional parameter vector, while $\Pi$ is a high-dimensional matrix with $NT$ elements; hence, the Frobenius norm of $\Pi$ is normalized by $1/\sqrt{NT}$ to ensure comparability. For simplicity, consider the case of i.i.d. panel data across $i$ over $t$. When the regularization parameter is set to $\nu \asymp \frac{\sqrt{N}\lor\sqrt{T}}{NT}$, the convergence rate of our estimator under the Euclidean norm is of the order of $\sqrt{1 / (N \land T)}$, with all other factors ignored. These results are consistent with previous work on penalized mean and quantile regression models for panel data \citep*{athey_matrix_2021, feng_nuclear_2023, moon_nuclear_2019, belloni_high_2023}, while the present paper develops a unified M-estimation framework that extends these results to a broader class of models. Finally, note that Theorem~\ref{thm:main-3} does not require differentiability of the loss function; however, subsequent asymptotic normality results rely on smoothness conditions to establish limiting distributions.

Next, we impose an assumption on the common factors and factor loadings.

\begin{assumption}[Strong factors] \label{assu:factor}
Let $\frac{1}{N} \sum_{i=1}^N \lambda_{0i} \lambda_{0i}^{\prime} \xrightarrow{p} \Sigma_\Lambda > 0$ and
$\frac{1}{T} \sum_{t=1}^T f_{0t} f_{0t}^{\prime} \xrightarrow{p} \Sigma_F > $
for some positive definite matrices \(\Sigma_\Lambda, \Sigma_F \in \mathbb{R}^{r \times r}\). Furthermore, the matrix $\Pi_0$ has $r$ distinct nonzero singular values.
\end{assumption}

Assumption~\ref{assu:factor} formalizes the strong factor condition commonly imposed in the literature; see, for example, \citet*{bai_panel_2009} and \citet*{chen_nonlinear_2021}. The requirement that the singular values $\sigma_1, \ldots, \sigma_r$ are distinct is analogous to Assumption G in \citet*{bai_inferential_2003}, and closely related to Assumption 4.2 in \citet*{chernozhukov_inference_2023}. 
In particular, Theorem~\ref{thm:main-3} does not depend on Assumption~\ref{assu:factor}.\footnote{\citet*{armstrong_robust_2023} use NNR and
study robust inference for weak factors in the linear panel
model.}

The following corollary establishes the consistency of $\hat{r}$ and the mean squared convergence rates of $\hat{\Lambda}$ and $\hat{F}$.
\begin{corollary} \label{cor:main-r}
Suppose that the conditions
in Theorem \ref{thm:main-3} and Assumption \ref{assu:factor} hold. Without loss of generality,
we impose the normalization that $F^{\prime}F/T=\mathbb{I}_{r}$ and
$\Lambda^{\prime}\Lambda/N$ is diagonal with nonincreasing diagonal
elements. Then, \\
(i) $P\left(\hat{r}=r\right)\rightarrow1$
as $N,T\rightarrow\infty$; \\
(ii) $\frac{1}{\sqrt{N}}\left\Vert \hat{\Lambda}-\Lambda_{0}\hat{S}\right\Vert _{F}=O_{p}\left(\gamma_{NT}\right)$,
and $\frac{1}{\sqrt{T}}\left\Vert \hat{F}-F_{0}\hat{S}\right\Vert _{F}=O_{p}\left(\gamma_{NT}\right)$,
where $\hat{S}=\operatorname{sgn}\left(\hat{F}^{\prime} F_0 \right)$ and $\gamma_{NT}=\psi_{NT}c_{\varepsilon,NT}$, with $\psi_{NT}$ and $c_{\varepsilon,NT}$ specified in Proposition \ref{prop:ep}.\end{corollary}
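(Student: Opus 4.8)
The plan is to treat the two claims separately, reducing each to singular-value and singular-subspace perturbation bounds that are fed by the Frobenius-norm rate for $\hat\Pi-\Pi_0$ from Theorem~\ref{thm:main-3}. Since $r$ is fixed and $\psi_{NT}^2 r c_T/(N\wedge d_T)=r\gamma_{NT}^2$, that theorem delivers $\tfrac{1}{\sqrt{NT}}\|\hat\Pi-\Pi_0\|_F=O_p(\gamma_{NT})$, equivalently $\|\hat\Pi-\Pi_0\|_F=O_p(\sqrt{NT}\,\gamma_{NT})$, and a fortiori $\|\hat\Pi-\Pi_0\|=O_p(\sqrt{NT}\,\gamma_{NT})$; throughout $\gamma_{NT}=o(1)$.

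For part (i), I would first pin down the scale of the nonzero singular values of $\Pi_0$. Writing $\Pi_0=\Lambda_0F_0'$, the nonzero eigenvalues of $\Pi_0\Pi_0'$ coincide with those of $(\Lambda_0'\Lambda_0)(F_0'F_0)$, so by Assumption~\ref{assu:factor} (with $\tfrac1N\Lambda_0'\Lambda_0\to\Sigma_\Lambda>0$ and $\tfrac1T F_0'F_0\to\Sigma_F>0$) we get $\sigma_s(\Pi_0)\asymp\sqrt{NT}$ for $s\le r$, while $\sigma_s(\Pi_0)=0$ for $s>r$ because $\operatorname{rank}(\Pi_0)=r$. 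Weyl's inequality then gives $|\sigma_s(\hat\Pi)-\sigma_s(\Pi_0)|\le\|\hat\Pi-\Pi_0\|=O_p(\sqrt{NT}\,\gamma_{NT})$. The SVT threshold is $NT\nu=c_\nu\sqrt{NT}\,\gamma_{NT}$, which, since $\gamma_{NT}=o(1)$, lies strictly between the two scales: for $s\le r$, $\sigma_s(\hat\Pi)\ge\underline c\sqrt{NT}-O_p(\sqrt{NT}\,\gamma_{NT})>NT\nu$ w.p.a.1, and for $s>r$, $\sigma_s(\hat\Pi)\le O_p(\sqrt{NT}\,\gamma_{NT})<NT\nu$ w.p.a.1 once $c_\nu$ is taken large enough (or with a sharper operator-norm bound on $\hat\Pi-\Pi_0$). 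Hence $\hat r=r$ w.p.a.1.

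For part (ii), the idea is to transfer the Frobenius bound on $\hat\Pi-\Pi_0$ to the factor and loading estimates via the normalization. Under $F'F/T=\mathbb{I}_r$ with $\Lambda'\Lambda/N$ diagonal and, by Assumption~\ref{assu:factor}, with distinct nonzero singular values, $(\Lambda_0,F_0)$ is determined by the compact SVD $\Pi_0=U_0D_0V_0'$ up to column sign flips: $F_0=\sqrt T\,V_0S_F$ and $\Lambda_0=\tfrac1{\sqrt T}U_0D_0S_F$ for a $\pm1$ diagonal matrix $S_F$; likewise $(\hat\Lambda,\hat F)$ is built from the top-$r$ SVD of $\hat\Pi$. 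Let $\hat\Pi_r$ be the best rank-$r$ approximation of $\hat\Pi$; then $\|\hat\Pi_r-\Pi_0\|_F\le2\|\hat\Pi-\Pi_0\|_F=O_p(\sqrt{NT}\,\gamma_{NT})$. Because the singular values of $\Pi_0$ are of order $\sqrt{NT}$, distinct, and separated from zero, Wedin's $\sin\Theta$ theorem (or Davis--Kahan) bounds the principal-angle distance between the leading-$r$ singular subspaces of $\hat\Pi$ and $\Pi_0$ by $\|\hat\Pi-\Pi_0\|_F/\sigma_r(\Pi_0)=O_p(\gamma_{NT})$, and distinctness reduces the residual rotation to the sign matrix, which is consistently recovered by $\hat S=\operatorname{sgn}(\hat F'F_0)$. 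Combining this with Weyl's bound on the singular values and the scalings above yields $\tfrac1{\sqrt N}\|\hat\Lambda-\Lambda_0\hat S\|_F=O_p(\gamma_{NT})$ and $\tfrac1{\sqrt T}\|\hat F-F_0\hat S\|_F=O_p(\gamma_{NT})$. An equivalent, more computational route post-multiplies the identity $\hat\Lambda\hat F'-\Lambda_0F_0'=\hat\Pi_r-\Pi_0$ by $\hat F$ (resp.\ $F_0$) and uses $\hat F'\hat F/T=\mathbb{I}_r$, $F_0'F_0/T=\mathbb{I}_r$ and the diagonal normalization to isolate $\hat\Lambda-\Lambda_0\hat S$ up to terms of the same order.

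The main obstacle is part (ii). One must handle that $\hat\Pi$ need not be exactly rank $r$ (dispatched by $\|\hat\Pi_r-\hat\Pi\|_F\le\|\hat\Pi-\Pi_0\|_F$); verify that the gaps among the nonzero singular values are genuinely of order $\sqrt{NT}$---so that perturbation theory controls individual singular vectors, not merely the top-$r$ block---which follows from strong factors plus distinctness of the eigenvalues of the limit of $\Pi_0\Pi_0'/(NT)$; and track the data-dependent sign alignment $\hat S$ so that the same $\hat S$ works for both $\hat\Lambda$ and $\hat F$, which is forced by the factorization $\hat\Pi_r=\hat\Lambda\hat F'$. Part (i) is comparatively routine once the $\sqrt{NT}$ scale of the nonzero singular values of $\Pi_0$ is established.
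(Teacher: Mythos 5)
Your part (i) is essentially the paper's argument: Weyl's inequality plus the scale separation between the $\sqrt{NT}$ order of the $r$ true nonzero singular values (from Assumption~\ref{assu:factor}) and the $O_p(\sqrt{NT}\,\gamma_{NT})$ perturbation, with the threshold $NT\nu\asymp\sqrt{NT}\,\gamma_{NT}$ sitting in between. You are in fact more candid than the paper about the fact that the spurious singular values and the threshold are of the \emph{same} order, so that the comparison hinges on the constant $c_\nu$ (or a sharper operator-norm bound); the paper glosses over this by writing the spurious singular values as $o_P(NT\nu)$.

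For part (ii) you take a genuinely different route. The paper follows Lemma~2 of \citet*{chen_quantile_2021}: it writes $\|F-F_0U\|_F/\sqrt{T}\le\|M_FF_0\|_F/\sqrt{T}+\|F'F_0/T-U\|_F$, bounds the projection term by a trace computation on $(\Lambda F'-\Lambda_0F_0')M_F=\Lambda_0F_0'M_F$, and then controls the rotation $R_T=F'F_0/T$ via the near-commutation identity $(\Lambda_0'\Lambda_0/N+D_{NT})R_T'=R_T'(\Lambda'\Lambda/N)$ with $\|D_{NT}\|_F\lesssim\gamma_{NT}$, invoking Bauer--Fike and eigenvector perturbation to show $R_T$ is within $O_p(\gamma_{NT})$ of a sign matrix. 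You instead identify $(\Lambda_0,F_0)$ and $(\hat\Lambda,\hat F)$ with the (scaled) compact SVDs of $\Pi_0$ and of the rank-$r$ truncation $\hat\Pi_r$, and apply Wedin's $\sin\Theta$ theorem together with the distinctness of the singular values to pin down individual singular vectors up to sign. Both routes rest on the same two inputs -- the $\sqrt{NT}$ scale/gap of the singular values and the Frobenius rate from Theorem~\ref{thm:main-3} -- and both need the gap condition to hold uniformly, which requires reading Assumption~\ref{assu:factor} as giving eigenvalue gaps of order $NT$ for $\Pi_0\Pi_0'$ (the paper asserts this too, with what appears to be a typo $\sigma_{s-1}^2-\sigma_s^2\ge c\sqrt{NT}$). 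Your approach buys a shorter, off-the-shelf argument and handles explicitly the fact that $\hat\Pi$ need not be exactly rank $r$; the paper's algebraic route avoids singular-subspace perturbation theorems and generalizes more readily to settings where $\hat\Lambda\hat F'$ is not literally an SVD truncation of $\hat\Pi$. Your identification of the common sign matrix via the factorization $\hat\Pi_r=\hat\Lambda\hat F'$ and $\hat S=\operatorname{sgn}(\hat F'F_0)$ matches the paper's conclusion. I see no gap in the proposed route, only that the Wedin step would need to be written out at the level of individual singular vectors rather than subspaces to justify the sign-matrix (as opposed to orthogonal-matrix) alignment.
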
 
Nuclear-norm regularization does not require prior knowledge or specification of the number of factors $r$. Corollary \ref{cor:main-r} indicates that $\hat{r}$ is a consistent estimator of $r$. Thus, in what follows, we assume that the number of factors has been correctly selected. 
Without loss of generality, we further assume that $\hat{\mathrm{S}}=\mathbb{I}_r$ to simplify the notation.

 \subsection{Asymptotic Normality}\label{sec:inf}

Here, we use the shorthand notation $\ell_{it}\left(\theta,\pi\right)=\ell\left(W_{it};\theta, \pi\right)$
for convenience, where $\theta\in\Theta$ and $\pi\in\Phi$. Additionally, we omit the function arguments when
they are evaluated at the true parameter values $\left(\theta_{0},\pi_{0,it}\right)$,
e.g., $\ell_{it}=\ell\left(W_{it};\theta_0, \pi_{0,it}\right)$.

To study the asymptotic behavior of the iterative estimator $\hat{\theta}^{\left(m+1\right)}$, it is necessary to introduce additional quantities that characterize how the parameter $\theta$ interacts with the incidental components $\pi_{it}$. 

Let $\Xi_{it}$ denote a $p$-dimensional
vector defined by the following population weighted least squares
projection for each component of $\mathbb{E}\left[\partial_{\theta_{k}\pi}\ell_{it}\right]$
onto the space spanned by the incidental parameters, under a metric
given by $\mathbb{E}\left[\partial_{\pi^{2}}\ell_{it}\right]$. Specifically,
$\Xi_{it,k}=\lambda_{i,k}^{*\prime}f_{0t}+\lambda_{0i}^{\prime}f_{t,k}^{*}$,
where $\left(\lambda_{i,k}^{*},f_{t,k}^{*}\right)$ is defined as
the solution to the following optimization problem,
\[
\left(\lambda_{i,k}^{*},f_{t,k}^{*}\right)\in\underset{\lambda_{i,k},f_{t,k}}{\operatorname{argmin}}\sum_{i,t}\mathbb{E}\left[\partial_{\pi^{2}}\ell_{it}\right]\left(\frac{\mathbb{E}\left[\partial_{\theta_{k}\pi}\ell_{it}\right]}{\mathbb{E}\left[\partial_{\pi^{2}}\ell_{it}\right]}-\lambda_{i,k}^{\prime}f_{0t}-\lambda_{0i}^{\prime}f_{t,k}\right)^{2}
\]
In addition, we define the operator $D_{\theta\pi^{q}}\ell_{it}=\partial_{\theta\pi^{q}}\ell_{it}-\Xi_{it}\partial_{\pi^{q+1}}\ell_{it}$ for $q=0,1,2$. Intuitively, these operators remove the component of $\partial_{\theta\pi^q}\ell_{it}$
that can be explained by the individual and time fixed effects.
Furthermore, define a $p \times p$ matrix
$$
\bar{W}_{NT}:=\frac{1}{NT}\sum_{i=1}^N \sum_{t=1}^T \mathbb{E}\left[\partial_{\theta \theta} \ell_{i t}-\partial_{\pi^2} \ell_{i t} \Xi_{i t} \Xi_{i t}^{\prime}\right]
$$
which will be used to characterize the information matrix for $\theta$.

We now introduce the regularity conditions that are required for the asymptotic results.
\begin{assumption} \label{assu:inf}
Let $\mathcal{B}_{\epsilon}^{0} \subset \mathbb{R}^{p+1}$ denote a bounded set that contains a $\epsilon$-neighborhood of $(\theta_0, \pi_{0,it})$ for all $i \in [N]$ and $t \in [T]$, uniformly over $N$ and $T$. \\
 (i) The function $\left(\theta,\pi\right)\mapsto\ell_{it}(\theta,\pi)$
is four times continuously differentiable over $\mathcal{B}_{\epsilon}^{0}$.
The partial derivatives of $\ell_{it}(\theta,\pi)$ with respect to
the elements of $\left(\theta,\pi\right)$ up to the fourth order
are bounded in absolute value over $\left(\theta,\pi\right)\in\mathcal{B}_{\epsilon}^{0}$
by a function $M\left(W_{it}\right)>0$ such that $\max_{i\in[N],t\in[T]}\mathbb{E}\left[\left|M\left(W_{it}\right)\right|^{8+\iota}\right]$
is uniformly bounded for some $\iota>0$ over all $N$ and $T$; \\
(ii) There exist positive constants $b_{\min}$ and $b_{\max}$ such
that for all $\left(\theta,\pi\right)\in\mathcal{B}_{\epsilon}^{0}$,
$b_{\min}\leq\mathbb{E}\left[\partial_{\pi^{2}}\ell_{it}\left(\theta,\pi\right)\right]\leq b_{\max}$;\\
(iii) The matrix $\bar{W}_{NT}$ is uniformly positive definite, i.e., $\inf _{N, T} \sigma_{\min }\left(\bar{W}_{N T}\right) \geq c>0$.
\end{assumption} 
Assumption~\ref{assu:inf} strengthens the conditions
used for Theorem~\ref{thm:main-3}. 

Assumption~\ref{assu:inf}(i) and (ii) require the loss
function to exhibit sufficient smoothness and local strong convexity. Assumption~\ref{assu:inf}(iii) is a generalized noncollinearity condition, similar to those commonly imposed in the factor literature, such as Assumption~A in \citet*{bai_panel_2009} and Assumption~1(vi) in \citet*{chen_nonlinear_2021}. In the linear case, Assumption~\ref{assu:inf}(iii) simplifies to requiring $ \sum_{i=1}^N \sum_{t=1}^T \mathbb{E}\left[\left(X_{i t}-\Xi_{i t}\right)\left(X_{i t}-\Xi_{i t}\right)^{\prime}\right]$ to be positive definite. Intuitively, this condition ensures that the covariates exhibit sufficient variation across individuals and over time, thus guaranteeing the identification of $\theta$.
\begin{theorem} \label{thm:inf-conv}
Suppose Assumptions~\ref{assu:data}-\ref{assu:inf} hold. Then

(i) For $m=0,1,2,....$, we have 
\[
\begin{aligned}\hat{\theta}^{\left(m+1\right)}-\theta_{0} & =C^{\left(0\right)}\left(\hat{\theta}^{\left(m\right)}-\theta_{0}\right)+ C^{\left(1\right)}\\
 & +o_{p}\left(\left\Vert \hat{\theta}^{\left(m\right)}-\theta_{0}\right\Vert \right)+o_{p}\left(\left(NT\right)^{-1/2}\right)
\end{aligned}
\]
where $C^{\left(0\right)}\stackrel{p}{\longrightarrow}\bar{C}^{\left(0\right)}$
with $\|\bar{C}^{\left(0\right)}\|\in[0,1)$, and $C^{\left(1\right)}=O_{p}\left(\left(NT\right)^{-1/2}\right)$;

(ii) For any $m\geq-\left(\frac{1}{2}\log\left(NT\right)+\log(\gamma_{NT})\right)/\log\left(\|\bar{C}^{\left(0\right)}\|\right)-1$
with $\bar{C}^{(0)}\neq0$, $$\hat{\theta}^{(m+1)}-\theta_{0}=O_{p}\left((NT)^{-1/2}\right)$$
where $\gamma_{NT}=\psi_{NT}c_{\varepsilon,NT}$, with $\psi_{NT}$ and $c_{\varepsilon,NT}$ specified in Proposition \ref{prop:ep}.\end{theorem}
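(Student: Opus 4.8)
The plan is to treat one pass of Steps~2--3 as a single map $\hat\theta^{(m)}\mapsto\hat\theta^{(m+1)}$, linearize it about $\theta_0$ to obtain the recursion in part~(i), and then iterate it for part~(ii). A preliminary induction on $m$ shows that no localization constraint binds: the base case comes from Theorem~\ref{thm:main-3} and Corollary~\ref{cor:main-r}, which give $\|\hat\theta^{(0)}-\theta_0\|=O_p(\gamma_{NT})$ together with $O_p(\gamma_{NT})$ factor/loading rates, and because $d_{NT}=c\log(N\wedge T)\gamma_{NT}\gg\gamma_{NT}$ an iterate sitting well inside its ball is mapped to one again sitting well inside. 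Hence for every $m$ the unconstrained first-order conditions
\begin{gather*}
\frac{1}{NT}\sum_{i,t}\partial_\theta\ell_{it}\bigl(\hat\theta^{(m+1)},\hat\pi^{(m+1)}_{it}\bigr)=0,\\
(\hat\Lambda^{(m+1)},\hat F^{(m+1)})=\argmin_{\operatorname{rank}(\Pi)\le r}\mathcal{L}_{NT}\bigl(\hat\theta^{(m)};\Pi\bigr)
\end{gather*}
are available, where $\hat\pi^{(m+1)}_{it}=(\hat\Lambda^{(m+1)}\hat F^{(m+1)\prime})_{it}$. This localization is what replaces the global-convexity step used in the linear/convex literature.

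\textbf{Part (i).} Let $\hat\pi_{it}(\theta)$ denote the profiled rank-$r$ minimizer of $\mathcal L_{NT}(\theta;\cdot)$. Taylor-expanding its first-order condition along the tangent directions of the rank-$r$ manifold at $\Pi_0$ and invoking Assumptions~\ref{assu:factor} and~\ref{assu:inf}(i)--(ii) gives, uniformly over $\theta$ in the ball, $\hat\pi_{it}(\theta)-\pi_{0,it}=\hat\Delta^{0}_{it}-\Xi_{it}'(\theta-\theta_0)+R_{it}(\theta)$, where $\hat\Delta^{0}_{it}:=\hat\pi_{it}(\theta_0)-\pi_{0,it}$ does not depend on $m$, both $\max_{i,t}|\hat\Delta^{0}_{it}|$ and $\max_{i,t}|R_{it}(\theta)|$ are $o_p(1)$ (the latter $o_p(\|\theta-\theta_0\|)$), and the coefficient $-\Xi_{it}$ is exactly the weighted least-squares projection defining $\Xi_{it}$. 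Evaluating at $\theta=\hat\theta^{(m)}$, substituting into a second-order expansion of the $\theta$-first-order condition about $(\theta_0,\pi_{0,it})$, and replacing sample Hessians by expectations (the error times $\hat\theta^{(m)}-\theta_0$ being $o_p(\|\hat\theta^{(m)}-\theta_0\|)$), the cross term $\frac1{NT}\sum_{i,t}\partial_{\theta\pi}\ell_{it}(\hat\pi^{(m+1)}_{it}-\pi_{0,it})$ contributes $-\bigl(\frac1{NT}\sum_{i,t}\mathbb E[\partial_{\theta\pi}\ell_{it}]\Xi_{it}'\bigr)(\hat\theta^{(m)}-\theta_0)$. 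Since each $\Xi_{it,k}=\lambda^{*\prime}_{i,k}f_{0t}+\lambda_{0i}'f^{*}_{t,k}$ is itself a tangent vector at $\Pi_0$, the normal equations of the projection give $\frac1{NT}\sum_{i,t}\mathbb E[\partial_{\theta\pi}\ell_{it}]\Xi_{it}'=\frac1{NT}\sum_{i,t}\mathbb E[\partial_{\pi^2}\ell_{it}]\Xi_{it}\Xi_{it}'=:B_{NT}$, so with $A_{NT}:=\frac1{NT}\sum_{i,t}\mathbb E[\partial_{\theta\theta}\ell_{it}]$ one collects
\begin{align*}
A_{NT}(\hat\theta^{(m+1)}-\theta_0)&=B_{NT}(\hat\theta^{(m)}-\theta_0)-\frac1{NT}\sum_{i,t}\partial_\theta\ell_{it}-(\hat\Delta^{0}_{it}\text{-terms})\\
&\quad+o_p(\|\hat\theta^{(m)}-\theta_0\|)+o_p((NT)^{-1/2}),
\end{align*}
i.e.\ $C^{(0)}=A_{NT}^{-1}B_{NT}+o_p(1)$ and $C^{(1)}=-A_{NT}^{-1}\bigl[\frac1{NT}\sum_{i,t}\partial_\theta\ell_{it}+(\hat\Delta^{0}_{it}\text{-terms})\bigr]$. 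Reorganizing the $\hat\Delta^{0}_{it}$-terms with the same identity and the split $\partial_{\theta\pi^q}\ell_{it}=\Xi_{it}\partial_{\pi^{q+1}}\ell_{it}+D_{\theta\pi^q}\ell_{it}$ reduces $C^{(1)}$ to $-A_{NT}^{-1}$ times the debiased profile score $\frac1{NT}\sum_{i,t}D_{\theta\pi^0}\ell_{it}$ plus the familiar $O(1/N+1/T)$ incidental bias; the score is $O_p((NT)^{-1/2})$ by a $\beta$-mixing central limit theorem and, as $N\asymp T$, so is the bias, giving $C^{(1)}=O_p((NT)^{-1/2})$. Finally $\bar C^{(0)}=\bar A^{-1}\bar B$ with $\bar A=\bar W+\bar B$, $\bar W:=\plim\bar W_{NT}\succeq cI\succ0$ by Assumption~\ref{assu:inf}(iii) and $\bar B:=\plim B_{NT}\succeq0$ since $\mathbb E[\partial_{\pi^2}\ell_{it}]\ge b_{\min}>0$; hence $\bar C^{(0)}$ is similar to $\bar A^{-1/2}\bar B\bar A^{-1/2}=I-\bar A^{-1/2}\bar W\bar A^{-1/2}\preceq(1-c/\|\bar A\|)I\prec I$, a symmetric positive-semidefinite matrix with spectrum in $[0,1)$, so $\|\bar C^{(0)}\|\in[0,1)$ (the iteration is a strict contraction in the $\bar A$-induced metric).

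\textbf{Part (ii).} Fix $\rho$ with $\|\bar C^{(0)}\|<\rho<1$; by part~(i), w.p.a.1 $\|C^{(0)}\|\le\rho$, so the recursion reads $\|\hat\theta^{(m+1)}-\theta_0\|\le\rho\|\hat\theta^{(m)}-\theta_0\|+\|C^{(1)}\|+o_p((NT)^{-1/2})$ uniformly in $m$. Unrolling $m+1$ times,
\begin{align*}
\hat\theta^{(m+1)}-\theta_0&=(C^{(0)})^{m+1}(\hat\theta^{(0)}-\theta_0)\\
&\quad+\sum_{j=0}^{m}(C^{(0)})^{j}C^{(1)}+(\text{remainder}),
\end{align*}
where $\sum_{j\ge0}\|C^{(0)}\|^{j}\le(1-\rho)^{-1}=O_p(1)$ makes the middle term $O_p(\|C^{(1)}\|)=O_p((NT)^{-1/2})$ and absorbs the accumulated remainder. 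For the first term, $\|\hat\theta^{(0)}-\theta_0\|=O_p(\gamma_{NT})$, and using the polynomial bound $\|C^{(0)}-\bar C^{(0)}\|=O_p(\gamma_{NT})$ that part~(i) in fact yields together with $\gamma_{NT}\log(NT)\to0$ gives $\|C^{(0)}\|^{m+1}=\|\bar C^{(0)}\|^{m+1}(1+o_p(1))$ over the range $m+1=O(\log(NT))$. Since $\|\bar C^{(0)}\|^{m+1}=\exp\bigl((m+1)\log\|\bar C^{(0)}\|\bigr)\le(NT)^{-1/2}\gamma_{NT}^{-1}$ is precisely the hypothesis $m\ge-\bigl(\tfrac12\log(NT)+\log\gamma_{NT}\bigr)/\log\|\bar C^{(0)}\|-1$ rearranged, the first term is $O_p((NT)^{-1/2})$, and therefore $\hat\theta^{(m+1)}-\theta_0=O_p((NT)^{-1/2})$.

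\textbf{Main obstacle.} The hard part is the uniform incidental-parameter expansion in part~(i). Because $\ell$ may be non-convex and $\Pi$ carries $NT$ entries, $\hat\pi_{it}(\theta)$ can be analyzed only on a shrinking neighborhood; one must bound $\max_{i,t}|\hat\Delta^{0}_{it}|$ and $\max_{i,t}|R_{it}(\theta)|$ uniformly in $\theta$, verify that the radius-$\sqrt N d_{NT}$ and radius-$\sqrt T d_{NT}$ constraints never bind, and---decisively for part~(ii)---obtain \emph{polynomial} (order-$\gamma_{NT}$) rather than merely $o_p(1)$ control of $C^{(0)}-\bar C^{(0)}$ and of the per-iteration remainders, so that they survive the $O(\log(NT))$ iterations. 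This draws on restricted strong convexity and the identification machinery behind Theorem~\ref{thm:main-3}, the higher-order smoothness and $(8+\iota)$-moment conditions of Assumption~\ref{assu:inf}, and the $\beta$-mixing maximal inequalities used in Proposition~\ref{prop:ep}, all combined with an induction that keeps every bound uniform in the iteration index $m$.
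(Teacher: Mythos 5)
Your proposal is correct and follows the same overall skeleton as the paper: linearize the one-pass map $\hat\theta^{(m)}\mapsto\hat\theta^{(m+1)}$ via a second-order expansion of the concentrated first-order condition, identify the contraction coefficient as (in the paper's notation) $\bar W^{-1}U^{(0)}$ with $U^{(0)}=\frac{1}{NT}\sum_{i,t}\mathbb{E}[\partial_{\pi^2}\ell_{it}\Xi_{it}\Xi_{it}']$, split the constant term into the debiased score plus the $O(1/N+1/T)$ incidental bias, and unroll the recursion geometrically for part (ii). The execution differs in three places. First, you profile $\hat\pi_{it}(\theta)$ directly on the rank-$r$ manifold and read off the $-\Xi_{it}$ coefficient from the normal equations of the weighted projection, whereas the paper works in the $\phi=(\Lambda,F)$ vectorization, invokes the Fern\'andez-Val--Weidner expansion of $\hat\phi(\theta)$, and recovers $\Xi_{it}$ through an algebraic identity for $\bar{\mathcal H}^{-1}$ (its Lemmas on the block-diagonal structure of the incidental-parameter Hessian); the two are equivalent, but the paper's route is what actually delivers the uniform remainder bounds you correctly flag as the main obstacle. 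Second, your argument that $\|\bar C^{(0)}\|\in[0,1)$ via the similarity $\bar A^{-1}\bar B\sim\bar A^{-1/2}\bar B\bar A^{-1/2}=I-\bar A^{-1/2}\bar W\bar A^{-1/2}$ is cleaner and more self-contained than the paper's combination of Bartlett identities and a matrix fact from Bernstein; it needs only $\bar B\succeq0$ and Assumption~\ref{assu:inf}(iii), both of which you verify. Third, for part (ii) you explicitly insist on polynomial (order-$\gamma_{NT}$) control of $C^{(0)}-\bar C^{(0)}$ and of the per-iteration remainders so that they survive $O(\log(NT))$ compositions; the paper is terser here, simply writing $[\bar W^{-1}U^{(0)}+o_p(1)]^{m+1}$, so your treatment is if anything more careful on the one genuinely delicate point of the iteration argument.
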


Theorem \ref{thm:inf-conv} examines the numerical convergence properties
of $\hat{\theta}^{\left(m+1\right)}$. Specifically, Theorem \ref{thm:inf-conv}(i)
guarantees the convergence of the iterative procedure, while
Theorem \ref{thm:inf-conv}(ii) indicates that after $O\left(\log\left(NT\right)\right)$
iterations, the iterative estimator $\hat{\theta}^{\left(m+1\right)}$
achieves the desired convergence rate for inference. It converges
to the true value $\theta_{0}$ much faster than $\gamma_{NT}$, which is the rate at which the initial estimator $\hat{\theta}$ converges in probability.
The upper bound of our contraction parameter, specifically the spectral norm of $\bar{C}^{(0)}$, is crucial in the iterative procedure. The closer $\|\bar{C}^{\left(0\right)}\|$
is to $0$, the faster the contraction and the numerical convergence of
the iterative procedure. Conversely, if $\|\bar{C}^{\left(0\right)}\|$
is close to $1$, the numerical convergence of $\hat{\theta}^{\left(m+1\right)}$
is slow. 

The following theorem establishes the asymptotic distribution of the second-step estimator $\hat{\theta}^{\left(m+1\right)}$.

\begin{theorem} \label{thm:inf-dist}Suppose Assumptions~\ref{assu:data}-\ref{assu:inf} hold, and the following limits exist, 
\[
\begin{aligned}\bar{W}_{\infty} & \coloneqq\lim_{N,T\rightarrow\infty}\frac{1}{NT}\sum_{i=1}^{N}\sum_{t=1}^{T}\mathbb{E}\left[\partial_{\theta\theta}\ell_{it}-\partial_{\pi^{2}}\ell_{it}\Xi_{it}\Xi_{it}^{\prime}\right]\\
\bar{B}_{\infty} & \coloneqq-\lim_{N,T\rightarrow\infty}\frac{1}{N}\sum_{i=1}^{N}\sum_{t=1}^{T}\left\{ \left[\sum_{\tau=t}^{T}f_{0t}^{\prime}H_{\left(\lambda\lambda\right)i}^{-1}f_{0\tau}\mathbb{E}\left[\partial_{\pi}\ell_{it}D_{\theta\pi}\ell_{i\tau}\right]\right]+\frac{1}{2}f_{0t}^{\prime}H_{\left(\lambda\lambda\right)i}^{-1}f_{0t}\mathbb{E}\left[D_{\theta\pi^{2}}\ell_{it}\right]\right\} \\
\bar{D}_{\infty} & \coloneqq-\lim_{N,T\rightarrow\infty}\frac{1}{T}\sum_{i=1}^{N}\sum_{t=1}^{T}\lambda_{0i}^{\prime}H_{\left(ff\right)t}^{-1}\lambda_{0i}\mathbb{E}\left[\partial_{\pi}\ell_{it}D_{\theta\pi}\ell_{it}+\frac{1}{2}D_{\theta\pi^{2}}\ell_{it}\right]
\end{aligned}
\]
where $H_{\left(\lambda\lambda\right)i}=\sum_{t=1}^{T}\mathbb{E}\left[\partial_{\pi^{2}}\ell_{it}\right]f_{0t}f_{0t}^{\prime}$, $H_{\left(ff\right)t}=\sum_{i=1}^{N}\mathbb{E}\left[\partial_{\pi^{2}}\ell_{it}\right]\lambda_{0i}\lambda_{0i}^{\prime}$, and $\bar{W}_{\infty}>0$. Then, for any $m\geq-\frac{1}{2}\log(NT)/\log\left(\|\bar{C}^{\left(0\right)}\|\right)-1$,
it follows that
\begin{equation}
\sqrt{NT}\left(\hat{\theta}^{\left(m+1\right)}-\theta_{0}-\frac{1}{T}\bar{W}_{\infty}^{-1}\bar{B}_{\infty}-\frac{1}{N}\bar{W}_{\infty}^{-1}\bar{D}_{\infty}\right)\stackrel{d}{\longrightarrow}\mathcal{N}\left(0,\bar{W}^{-1}_{\infty}\right)
\end{equation}
\end{theorem}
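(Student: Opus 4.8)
The plan is to combine the numerical-convergence result of Theorem~\ref{thm:inf-conv} with an exact characterization of the limit point of the iteration. Once $m$ is large enough that the geometric factor $\|\bar C^{(0)}\|^{m+1}$ has been driven below $(NT)^{-1/2}$, Theorem~\ref{thm:inf-conv}(i) gives $\hat\theta^{(m+1)}-\theta_0 = (I-\bar C^{(0)})^{-1}C^{(1)} + o_p((NT)^{-1/2})$, so the whole problem reduces to obtaining an asymptotically linear (influence-function) representation of $C^{(1)}$. The first step, therefore, is to take the first-order condition of Step~3 of the iteration, $\partial_\theta\mathcal{L}_{NT}(\hat\theta^{(m+1)};\hat\Lambda^{(m+1)},\hat F^{(m+1)})=0$, and expand it around $(\theta_0,\Lambda_0,F_0)$. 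Because $\hat\Lambda^{(m+1)},\hat F^{(m+1)}$ themselves solve the Step~2 profiling problem, I would substitute their own first-order conditions and carry out a joint stochastic expansion in $(\hat\theta^{(m)}-\theta_0,\hat\Lambda^{(m+1)}-\Lambda_0,\hat F^{(m+1)}-F_0)$ up to second order. The operators $D_{\theta\pi^q}\ell_{it}=\partial_{\theta\pi^q}\ell_{it}-\Xi_{it}\partial_{\pi^{q+1}}\ell_{it}$ are precisely what appears after one projects out the incidental-parameter directions via the weighted least-squares projection defining $\Xi_{it}$; so the algebra should collapse the cross-derivative terms into expressions built from $D_{\theta\pi}\ell_{it}$ and $D_{\theta\pi^2}\ell_{it}$, with the matrices $H_{(\lambda\lambda)i}$ and $H_{(ff)t}$ arising as the Hessians of the incidental-parameter subproblems.

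The second step is to identify the three pieces of the limit. The leading (mean-zero) term is $\frac{1}{\sqrt{NT}}\sum_{i,t}D_\theta\ell_{it}$ (the projected score), which under Assumption~\ref{assu:data}(ii) and a CLT for $\beta$-mixing triangular arrays — together with the moment bounds in Assumption~\ref{assu:inf}(i) — is asymptotically $\mathcal{N}(0,\bar W_\infty)$ after normalization by $\bar W_{NT}$; here $\bar W_{NT}\to\bar W_\infty>0$ by Assumption~\ref{assu:inf}(iii). The remaining terms are the deterministic incidental-parameter biases: the $O(1/T)$ piece $\bar W_\infty^{-1}\bar B_\infty$ comes from the quadratic-in-$(\hat F - F_0)$ contribution, whose expectation involves $\sum_{\tau\ge t}f_{0t}'H_{(\lambda\lambda)i}^{-1}f_{0\tau}\,\mathbb E[\partial_\pi\ell_{it}D_{\theta\pi}\ell_{i\tau}]$ (the long-run-covariance structure is what produces the $\sum_{\tau=t}^{T}$) plus the Hessian-curvature term $\tfrac12 f_{0t}'H_{(\lambda\lambda)i}^{-1}f_{0t}\,\mathbb E[D_{\theta\pi^2}\ell_{it}]$; symmetrically, the $O(1/N)$ piece $\bar W_\infty^{-1}\bar D_\infty$ comes from the quadratic-in-$(\hat\Lambda-\Lambda_0)$ contribution and, because of independence across $i$, carries no serial-correlation sum. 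These are the exact analogues of the Bai (2009) / Fernández-Val–Weidner / Chen–Fernández-Val–Weidner bias formulas, specialized to the projected score, so I would model the bookkeeping on those references.

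The third step is to control all remainders uniformly along the iteration. Here I would invoke Corollary~\ref{cor:main-r} for the rate $\gamma_{NT}$ of $(\hat\Lambda^{(0)},\hat F^{(0)})$, show (as in Theorem~\ref{thm:inf-conv}) that the Step~2 updates inherit a comparable rate within the shrinking neighborhoods $\mathcal B(\cdot,\sqrt N d_{NT}),\mathcal B(\cdot,\sqrt T d_{NT})$, and use Assumption~\ref{assu:inf}(i)'s fourth-derivative bounds to show that the third-order Taylor remainder is $o_p((NT)^{-1/2})$ once $\gamma_{NT}^3\sqrt{NT}=o(1)$ (which holds under Assumption~\ref{assu:data}(iii)). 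The lower bound on $m$ in the statement is exactly what makes the propagated initial error $\|\bar C^{(0)}\|^{m+1}(\hat\theta^{(0)}-\theta_0)=O_p(\|\bar C^{(0)}\|^{m+1}\gamma_{NT})$ negligible relative to $(NT)^{-1/2}$. Finally, Slutsky's theorem combines the normalized score CLT with the deterministic bias terms to yield the stated limit. The main obstacle, I expect, is the uniform control of the remainder terms across iterations without a closed form for the incidental-parameter updates — in the linear models of Moon–Weidner and Hong–Ferl the Step~2 problem is solved by SVD, whereas here one must show that the nonlinear profiled estimators $(\hat\Lambda^{(m+1)},\hat F^{(m+1)})$ admit a sufficiently accurate stochastic expansion at every $m$, and that the error does not accumulate; handling this, together with verifying that the long-run-variance sums in $\bar B_\infty$ converge under the $\beta$-mixing condition, is where the real work lies.
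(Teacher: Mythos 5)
Your proposal follows essentially the same route as the paper: reduce via the contraction of Theorem~\ref{thm:inf-conv} to the geometric-series limit $(\bar W - U^{(0)})^{-1}\sqrt{NT}\,(U^{(1)}+U^{(2)})$, obtain the stochastic expansion of the profiled score with the projected derivatives $D_{\theta\pi^q}\ell_{it}$ and the block Hessians $H_{(\lambda\lambda)i},H_{(ff)t}$, apply a CLT plus the Bartlett identity to get $\sqrt{NT}\,U^{(1)}\Rightarrow\mathcal N(0,\bar W_\infty)$, and extract the $1/T$ and $1/N$ bias terms (including the $\sum_{\tau\ge t}$ serial-correlation sum in $\bar B_\infty$ and its absence in $\bar D_\infty$) from the quadratic incidental-parameter contributions. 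This matches the paper's Lemmas~\ref{lemma:inf-exp-theta}--\ref{lem:inf-bias} and the concluding argument of Theorem~\ref{thm:inf-dist}, including your correct identification that the hard part is the FVW-style uniform control of remainders without closed-form incidental-parameter updates.
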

Theorem~\ref{thm:inf-dist} indicates that $\hat{\theta}^{\left(m+1\right)}$ contains two asymptotic bias terms associated with $\frac{1}{T}\bar{B}_{\infty}$ and $\frac{1}{N}\bar{D}_{\infty}$, respectively. For convex objective functions, the estimator is asymptotically equivalent to the one obtained from \eqref{ref:Model1}, as studied by \citet*{chen_nonlinear_2021}.

\subsection{Bias Correction}\label{sec:ABC}
This section describes methods for removing the asymptotic bias of the second‐step estimator. We first outline the analytical bias‐correction procedure and then discuss alternative approaches, including the split‐panel jackknife and the bootstrap.

The analytical correction is constructed using sample analogs of the expressions in Theorem~\ref{thm:inf-dist}, replacing the true values of $(\theta,\pi)$ with their second‐step estimates. Both analytical bias correction and variance estimation require consistent estimators of the quantities $\bar{B}{\infty}$, $\bar{D}{\infty}$, and $\bar{W}_{\infty}$ defined in Theorem~\ref{thm:inf-dist}. Let $\hat{B}$, $\hat{D}$, and $\hat{W}$ denote the corresponding sample analogs, obtained by substituting sample averages for expectations and replacing the true parameters with their second‐step estimates. For example,
$$\hat{W}=\frac{1}{NT}\sum_{i=1}^{N}\sum_{t=1}^{T}\partial_{\theta\theta}\hat{\ell}_{it}-\partial_{\pi^{2}}\hat{\ell}_{it}\hat{\Xi}_{it}\hat{\Xi}_{it}^{\prime}$$
where $\partial_{\theta\theta}\hat{\ell}_{it}=\partial_{\theta\theta}\ell_{it}\left(W_{it};\hat{\theta},\hat{\lambda_{i}}^{\prime}\hat{f_{t}}\right)$, $\partial_{\pi^{2}}\hat{\ell}_{it}=\partial_{\pi^{2}}\ell_{it}\left(W_{it};\hat{\theta},\hat{\lambda_{i}}^{\prime}\hat{f_{t}}\right)$, and $\hat{\Xi}_{it}$ is a $p$-dimensional vector with elements $\hat{\Xi}_{it,k}=\lambda_{i,k}^{\#\prime}\hat{f}_{t}+\hat{\lambda}_{i}^{\prime}f_{t,k}^{\#}$ where the pair $\left(\lambda_{i,k}^{\#},f_{t,k}^{\#}\right)$ is defined by
$$\left(\lambda_{i,k}^{\#},f_{t,k}^{\#}\right)\in\underset{\lambda_{i,k},f_{t,k}}{\operatorname{argmin}}\sum_{i,t}\left(\partial_{\pi^{2}}\hat{\ell}_{it}\right)\left(\frac{\partial_{\theta_{k}\pi}\hat{\ell}_{it}}{\partial_{\pi^{2}}\hat{\ell}_{it}}-\lambda_{i,k}^{\prime}\hat{f}_{t}-\hat{\lambda}_{i}^{\prime}f_{t,k}\right)^{2}$$
Once these sample analogs are constructed, the analytical bias correction of $\hat{\theta}$ reads
$$\hat{\theta}_{ABC}=\hat{\theta}^{\left(m+1\right)}-\frac{1}{T}\hat{W}^{-1}\hat{B}-\frac{1}{N}\hat{W}^{-1}\hat{D}$$

\begin{theorem}\label{thm:inf-ABC} 
Suppose that the assumptions in Theorem~\ref{thm:inf-dist} hold. Then $\hat{W}\stackrel{p}{\longrightarrow}\bar{W}_{\infty}$ and $$\sqrt{NT}\left(\hat{\theta}_{ABC}-\theta_{0}\right)\stackrel{d}{\longrightarrow}\mathcal{N}\left(0,\bar{W}_{\infty}^{-1}\right)$$
\end{theorem}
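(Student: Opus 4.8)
The plan is to deduce Theorem~\ref{thm:inf-ABC} from Theorem~\ref{thm:inf-dist} by showing that the estimated bias terms $\tfrac{1}{T}\hat W^{-1}\hat B$ and $\tfrac{1}{N}\hat W^{-1}\hat D$ consistently estimate $\tfrac{1}{T}\bar W_\infty^{-1}\bar B_\infty$ and $\tfrac{1}{N}\bar W_\infty^{-1}\bar D_\infty$ up to an error that is $o_p((NT)^{-1/2})$ after multiplication by $\sqrt{NT}$, i.e.\ each of $\hat B - \bar B_\infty$ and $\hat D - \bar D_\infty$ is $o_p(1)$ while $\hat W \xrightarrow{p} \bar W_\infty$. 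Given these, writing
\[
\sqrt{NT}\bigl(\hat\theta_{ABC}-\theta_0\bigr)
= \sqrt{NT}\Bigl(\hat\theta^{(m+1)}-\theta_0-\tfrac{1}{T}\bar W_\infty^{-1}\bar B_\infty-\tfrac{1}{N}\bar W_\infty^{-1}\bar D_\infty\Bigr)
+ \sqrt{NT}\Bigl(\tfrac{1}{T}(\bar W_\infty^{-1}\bar B_\infty-\hat W^{-1}\hat B)+\tfrac{1}{N}(\bar W_\infty^{-1}\bar D_\infty-\hat W^{-1}\hat D)\Bigr),
\]
the first term converges to $\mathcal{N}(0,\bar W_\infty^{-1})$ by Theorem~\ref{thm:inf-dist}, and the second is $o_p(1)$ because $\sqrt{NT}/T = \sqrt{N/T}\to\kappa$ and $\sqrt{NT}/N = \sqrt{T/N}\to\kappa^{-1}$ are $O(1)$ by Assumption~\ref{assu:data}(iii), each multiplying an $o_p(1)$ factor. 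Slutsky then gives the claim.

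The consistency arguments proceed in three steps. First, I would establish that the plug-in derivatives converge: by Corollary~\ref{cor:main-r} (with $\hat S = \mathbb{I}_r$ as assumed) together with Theorem~\ref{thm:inf-conv}(ii), we have $\|\hat\theta^{(m+1)}-\theta_0\| = O_p((NT)^{-1/2})$ and $\tfrac{1}{\sqrt N}\|\hat\Lambda-\Lambda_0\|_F, \tfrac{1}{\sqrt T}\|\hat F - F_0\|_F = O_p(\gamma_{NT})$, so the estimated indices $\hat\pi_{it}=\hat\lambda_i'\hat f_t$ are close to $\pi_{0,it}$ on average; combined with the uniform $(8+\iota)$-moment bounds on the fourth derivatives in Assumption~\ref{assu:inf}(i) and a Taylor expansion, each of $\partial_{\theta\theta}\hat\ell_{it}$, $\partial_{\pi^2}\hat\ell_{it}$, $\partial_{\theta_k\pi}\hat\ell_{it}$ differs from its population-parameter counterpart by a term controlled by $\|\hat\theta-\theta_0\|$ and $|\hat\pi_{it}-\pi_{0,it}|$. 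Second, I would show that the sample auxiliary projections $(\lambda_{i,k}^\#,f_{t,k}^\#)$ converge to $(\lambda_{i,k}^*,f_{t,k}^*)$, hence $\hat\Xi_{it}\to\Xi_{it}$ (after appropriate normalization): this is a perturbed weighted least-squares problem whose normal equations involve $\tfrac{1}{NT}\sum\partial_{\pi^2}\hat\ell_{it}\,\hat f_t\hat f_t'$ etc., and the limiting problem is nondegenerate by Assumption~\ref{assu:factor} (strong factors) and Assumption~\ref{assu:inf}(ii) ($b_{\min}\le\mathbb E[\partial_{\pi^2}\ell_{it}]\le b_{\max}$), so the solution map is continuous at the limit. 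Third, I would combine these with a law of large numbers for the $\beta$-mixing array (Assumption~\ref{assu:data}(ii)) to conclude $\hat W\xrightarrow{p}\bar W_\infty$, and similarly for $\hat B$ and $\hat D$, where the latter two also require replacing the infeasible partial sums $\sum_{\tau=t}^T f_{0t}'H_{(\lambda\lambda)i}^{-1}f_{0\tau}\mathbb E[\cdot]$ by their sample analogs and controlling the accumulated error across the $O(T)$ (resp.\ $O(N)$) inner terms.

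The main obstacle I anticipate is the third step for $\hat B$ and $\hat D$: unlike $\hat W$, these bias terms are long-run-variance-type objects involving sums $\sum_{\tau=t}^T$ of serial-covariance terms $\mathbb E[\partial_\pi\ell_{it}\,D_{\theta\pi}\ell_{i\tau}]$ weighted by $f_{0t}'H_{(\lambda\lambda)i}^{-1}f_{0\tau}$, and their feasible analogs require both (a) a consistent estimator of the Hessian-type matrices $H_{(\lambda\lambda)i}$ and $H_{(ff)t}$ built from the estimated factors and loadings, and (b) either a HAC-type truncated estimator of the serial covariances or, under the $\beta$-mixing decay in Assumption~\ref{assu:data}(ii), a direct argument that the tail sums are negligible. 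Propagating the $O_p(\gamma_{NT})$ estimation error of $(\hat\Lambda,\hat F)$ through these $i$- or $t$-indexed matrix inverses $H_{(\lambda\lambda)i}^{-1}$ (each a $T$-term or $N$-term sum) without blowing up the accumulated error is the delicate part; it relies on the uniform invertibility guaranteed by Assumptions~\ref{assu:factor} and~\ref{assu:inf}(ii)--(iii), the uniform moment bounds of Assumption~\ref{assu:inf}(i), and the exponential mixing decay, and should ultimately follow the same template used to prove Theorem~\ref{thm:inf-dist} itself. The convergence $\hat W\xrightarrow{p}\bar W_\infty$ is comparatively routine once the plug-in derivative and $\hat\Xi_{it}$ convergence is in hand.
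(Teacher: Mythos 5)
The paper does not actually contain a proof of Theorem~\ref{thm:inf-ABC}: Appendix~C proves only Theorems~\ref{thm:inf-conv} and~\ref{thm:inf-dist}, and the analytical bias-correction result is stated without argument. So there is no paper proof to compare against; what can be said is that your architecture --- decompose $\sqrt{NT}(\hat\theta_{ABC}-\theta_0)$ into the centered term handled by Theorem~\ref{thm:inf-dist} plus $\sqrt{NT}\bigl(\tfrac{1}{T}(\bar W_\infty^{-1}\bar B_\infty-\hat W^{-1}\hat B)+\tfrac{1}{N}(\bar W_\infty^{-1}\bar D_\infty-\hat W^{-1}\hat D)\bigr)$, note $\sqrt{NT}/T\to\kappa$ and $\sqrt{NT}/N\to\kappa^{-1}$ under Assumption~\ref{assu:data}(iii), and reduce everything to consistency of the plug-in estimators --- is exactly the standard route taken in \citet*{fernandez-val_individual_2016} and \citet*{chen_nonlinear_2021}, which this theorem is clearly meant to inherit. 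Your treatment of $\hat W$ (Taylor expansion of the derivatives around $(\theta_0,\pi_{0,it})$ using the moment bounds of Assumption~\ref{assu:inf}(i), convergence of the sample projection $\hat\Xi_{it}$ via nondegeneracy of the weighted least-squares problem, then a mixing LLN) is the right argument and is indeed the routine part.

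The genuine gap is the one you flag yourself but do not close: consistency of $\hat B$. The population object $\bar B_\infty$ contains the one-sided long-run covariance $\sum_{\tau=t}^{T}f_{0t}'H_{(\lambda\lambda)i}^{-1}f_{0\tau}\,\mathbb{E}[\partial_\pi\ell_{it}D_{\theta\pi}\ell_{i\tau}]$, and the naive sample analog that replaces $\mathbb{E}[\partial_\pi\ell_{it}D_{\theta\pi}\ell_{i\tau}]$ by a single realization $\partial_\pi\hat\ell_{it}D_{\theta\pi}\hat\ell_{i\tau}$ summed over \emph{all} $\tau\ge t$ is generally \emph{not} consistent: the variance of the untruncated sum does not vanish. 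The standard fix (used by \citet*{fernandez-val_individual_2016} for their spectral bias term) introduces a truncation lag $L=L_{NT}\to\infty$ with $L/T\to 0$, and consistency then requires both the mixing decay of Assumption~\ref{assu:data}(ii) to kill the tail $\tau-t>L$ and a uniform-in-$i$ control of the estimation error propagated through $\hat H_{(\lambda\lambda)i}^{-1}$. Saying the argument "should ultimately follow the same template used to prove Theorem~\ref{thm:inf-dist}" does not substitute for this: Theorem~\ref{thm:inf-dist} never estimates these objects, it only characterizes their probability limits. Until you either introduce a bandwidth into the definition of $\hat B$ (which the paper's own Section~\ref{sec:ABC} also omits) or give a direct variance bound for the untruncated estimator under exponential $\beta$-mixing, the claim $\hat B-\bar B_\infty=o_p(1)$ is asserted rather than proved, and with it the main conclusion.
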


Although the analytical correction offers a convenient closed‐form adjustment, it requires estimating high‐order derivatives and moment matrices that may be sensitive to model specification and sample size. In practice, alternative approaches—such as the split‐panel jackknife and the bootstrap—can remove the leading bias terms without explicit analytical derivations.

Following \citet*{dhaene_split-panel_2015} and \citet*{chen_nonlinear_2021}, bias correction can also be implemented using the split‐panel jackknife method. Let $\hat{\theta}_{N / 2, T}^1$ and $\hat{\theta}_{N / 2, T}^2$ be the second-step estimators based on the subsamples $\{(i, t): i=1, \ldots, N / 2 ; t=1, \ldots, T\}$ 
and $\{(i, t): i=N / 2+1, \ldots, N ; t=1, \ldots, T\}$, respectively. 
Similarly, let $\hat{\theta}_{N, T / 2}^1$ and $\hat{\theta}_{N, T / 2}^2$ be the estimators obtained from the subsamples $\{(i, t): i=1, \ldots, N ; t=1, \ldots, T / 2\}$ 
and 
$\{(i, t): i=1, \ldots, N ; t=T / 2+1, \ldots, T\}$, respectively. The jackknife bias‐corrected estimator is
\[
\hat{\theta}_{\text{JBC}} = 3 \hat{\theta} 
- \frac{1}{2}\left(\hat{\theta}_{N / 2, T}^1+\hat{\theta}_{N / 2, T}^2\right)
- \frac{1}{2}\left(\hat{\theta}_{N, T / 2}^1+\hat{\theta}_{N, T / 2}^2\right).
\]
Under suitable homogeneity and stationarity conditions ensuring that the asymptotic biases of all estimators converge to the same limit,
\[
\sqrt{N T}\left(\hat{\theta}_{\text{JBC}}-\theta_0\right)
\xrightarrow{d} 
\mathcal{N}\left(0,\bar{W}_{\infty}^{-1}
\right)
\]
Other bias‐correction methods include the leave‐one‐out jackknife of \citet*{hughes_estimating_2022} and the bootstrap procedures of \citet*{higgins_bootstrap_2024} and \citet*{sun_k-step_2010}. However, these approaches are not directly applicable to models with interactive fixed effects. Extending them, particularly adapting the bootstrap of \citet*{higgins_bootstrap_2024} to account for time-specific or interactive effects, remains an interesting avenue for future research. \section{Monte Carlo Simulations}\label{sec:MC}
In this section, we assess the finite sample performance of our proposed approach using Monte Carlo simulations.  Section~\ref{subsec:MC_imp} outlines the main steps of the first-step estimation algorithm. Section~\ref{subsec:nu} discusses the choice of the tuning parameter $\nu$. Section~\ref{subsec:MC} presents the simulation designs and results.

\subsection{Implementation}\label{subsec:MC_imp}
This section outlines the computational procedures for the proposed estimation algorithms. We first present the single-index panel model and its implementation; the binary logit specification in Example \ref{ref:ex3} is treated as a special case. We then describe the random coefficient logit model in Example \ref{ref:ex4}. Additional details and formal descriptions can be found in Section~\ref{sec:alg}.

\paragraph{Single-Index Panel Model}
Algorithm~\ref{alg:logit} summarizes the first-stage estimation procedure for the class of single-index panel models, which includes the binary logit specification in Example~\ref{ref:ex3} as a special case. 
We introduce a slack variable $Z_{\Pi}$ to separate the low-rank interactive effects from the linear index component. This reformulation allows the optimization problem in \eqref{ref:Estimation1} to be expressed in an equivalent and computationally convenient form:
\begin{align}
 & \min_{\theta\in\mathbb{R}^{p},\Pi\in\mathbb{R}^{N\times T}}\frac{1}{NT}\sum_{i=1}^{N}\sum_{t=1}^{T}\ell\left(Y_{it},V_{it}\right)+\nu\|\Pi\|_{*}\nonumber \\
 & \qquad\text{s.t. }V=\sum_{j=1}^{p}X_{j}\theta_{j}+Z_{\Pi},\quad Z_{\Pi}-\Pi=0\label{eq:est-alg-main}
\end{align}
To solve the minimization problem (\ref{eq:est-alg-main}), we use a modified Alternating Direction Method of Multipliers (ADMM) algorithm, which relies on the following
augmented Lagrangian
\begin{align*}
\mathscr{L}\left(\theta,\Pi,V,Z_{\Pi},U_{p},U_{v}\right)	&=\frac{1}{NT}\sum_{i=1}^{N}\sum_{t=1}^{T}\ell\left(Y_{it},V_{it}\right)+\nu\|\Pi\|_{*} \\
&+\frac{\eta}{2NT}\left\Vert V-\sum_{j=1}^{p}X_{j}\theta_{j}-Z_\Pi+U_{p}\right\Vert _{F}^{2}+\frac{\eta}{2NT}\left\Vert Z_\Pi-\Pi+U_{v}\right\Vert _{F}^{2}
\end{align*}
Here, $U_{p}$ and $U_{v}$ are the scaled dual variables corresponding to the linear constraints $V=\sum_{j=1}^{p}X_{j}\theta_{j}+Z_{\Pi}$
and $Z_{\Pi}-\Pi=0$, and $\eta>0$ is the penalty parameter for constraint
violations. Due to the separability of the parameters in $\mathscr{L}$,
the ADMM algorithm proceeds by iteratively minimizing the augmented
Lagrangian in blocks with respect to the original variables, in this
case $\left(V,\theta,\Pi\right)$ and $Z_{\Pi}$, and then updating
the dual variables $U_{p}$ and $U_{v}$. 

\begin{algorithm}[h]
\caption{ALM Algorithm: Single-Index Panel Model}
\label{alg:logit}
\begin{algorithmic}[1]

\State \textbf{Input}: Observed data $\left(Y_{it},X_{it}\right)$, and initialization for $V^{(0)}$, $\theta^{(0)}$, $\Pi^{(0)}$, $Z_{\Pi}^{(0)}$, $U_{p}^{(0)}$, and $U_{v}^{(0)}$, parameters $\nu$, $\eta$.
\State \textbf{Initialize}: $k = 0$
\While{not converged}
    \State \textbf{Update parameters at iteration $k+1$:} 

    \State $V_{it}^{(k+1)} \gets V_{it}^{(k)} - h(Y_{it},V_{it}^{(k)}) - \eta (V_{it}^{(k)} - X_{it}^{\prime} \theta^{(k)} - Z_{\Pi, it}^{(k)} + U_{p, it}^{(k)})$
    \State where $h(y,v)=\nabla_v \ell(y,v)$
    
    \State $\theta^{(k+1)} \gets \left( \sum_{i=1}^{N} \sum_{t=1}^{T} X_{it} X_{it}^{\prime} \right)^{-1} \left( \sum_{i=1}^{N} \sum_{t=1}^{T} X_{it} A_{it}^{(k)}  \right)$
    \State where $A^{\left(k\right)}=V^{\left(k+1\right)}-Z^{\left(k\right)}+U_{p}^{\left(k\right)}$
    \State $\Pi^{(k+1)} \gets S_{NT\frac{\nu}{\eta}}( Z^{(k)} + U_v^{(k)})$\label{line:SVT}

    \State $Z_{\Pi}^{(k+1)} \gets \frac{1}{2} ( V^{(k+1)} - \sum_{j=1}^{p} X_j \theta_j^{(k+1)} + U_p^{(k)} + \Pi^{(k+1)} - U_v^{(k)})$

    \State $U_p^{(k+1)} \gets V^{(k+1)} - \sum_{j=1}^{p} X_j \theta_j^{(k+1)} - Z_{\Pi}^{(k+1)} + U_p^{(k)}$
    \State $U_v^{(k+1)} \gets Z_{\Pi}^{(k+1)} - \Pi^{(k+1)} + U_v^{(k)}$

    \State $k \gets k + 1$

\EndWhile

\State \textbf{Output}: Updated parameters $\theta^{(k+1)}$, $\Pi^{(k+1)}$.

\end{algorithmic}
\end{algorithm}

 As shown in Line \ref{line:SVT} of Algorithm \ref{alg:logit}, $\Pi$ is updated via singular value thresholding (SVT). Specifically, recall that the singular value decomposition
(SVD) of a matrix $A$ is 
\[
A=U_{A}\Sigma_{A}V_{A}^{\prime}\in\mathbb{R}^{N\times T},\quad\Sigma_{A}=\operatorname{diag}\left(\left\{ \sigma_{s}\right\} _{s\in\left[N\land T\right]}\right)
\]
For $\iota>0$, the soft-thresholding operator $S_{\iota}\left(\cdot\right)$
is defined as
\[
S_{\iota}\left(A\right)\coloneqq U_{A}S_{\iota}\left(\Sigma_{A}\right)V_{A}^{\prime},\quad S_{\iota}\left(\Sigma_{A}\right)=\operatorname{diag}\left(\left\{ \sigma_{s}-\iota\right\} _{+}\right)
\]
where $t_{+}$ is the positive part of $t$, namely, $t_{+}=\max\left(0,t\right)$. See Theorem 2.1 in \citet*{cai_singular_2008} for details. 

\paragraph{Random Coefficient Logit Panel}

Algorithm~\ref{alg:RC} outlines the first-stage estimation algorithm for the binary random coefficient logit panel model in Example~\ref{ref:ex4} using a majorization-minimization (MM) approach. For simplicity, we assume that all coefficients $\beta$ are random, but incorporating fixed coefficients, such as $\beta_{j,it} = \beta_j$ for some $j$, is straightforward. Our algorithm extends the work of \citet{train_discrete_2009} and \citet{james_mm_2017} by introducing a novel surrogate function to handle the penalty term. Specifically, to update $\Pi$ at step $k+1$, we use the surrogate function $Q\left(\Pi|\theta^{\left(k\right)},\Pi^{(k)}\right)$, defined as
$$
\begin{aligned}Q\left(\Pi\mid\theta^{(k)},\Pi^{(k)}\right)= & \mathcal{L}_{NT}\left(\theta^{(k)},\Pi^{(k)}\right)-\frac{1}{2NT}\sum_{i=1}^{N}\sum_{t=1}^{T}\left(\sum_{r=1}^{R}w_{itr}^{(k)}h\left(Y_{it},X_{it}^{\prime}\beta_{itr}^{\left(k\right)}+\pi_{it}^{(k)}\right)\right)^{2}\\
 & +\frac{1}{2NT}\sum_{i=1}^{N}\sum_{t=1}^{T}\left(\pi_{it}^{(k)}-\sum_{r=1}^{R}w_{itr}^{(k)}h\left(Y_{it},X_{it}^{\prime}\beta_{itr}^{\left(k\right)}+\pi_{it}^{(k)}\right)-\pi_{it}\right)^{2}+\nu\|\Pi\|_{*}
\end{aligned}
$$
where the closed-form solution is given by    
    $$ \Pi^{(k+1)} \gets S_{NT\nu} \left( \Pi^{(k)} - \left(\sum_{r=1}^R w^{(k)}_{i t r} h\left(Y_{it},X_{it}^{\prime}\beta_{itr}+\pi_{it}^{(k)}\right) \right)_{i,t} \right)$$
    with $h(y, v) = \Lambda(v)-y$ and $\Lambda(v) = \frac{1}{1 + e^{-v}}$. This is detailed in Line~\ref{line:SVT2} of Algorithm~\ref{alg:RC}.

\begin{algorithm}[h]
\caption{MM Algorithm: Binary Random Coefficient Logit Panel Model}
\label{alg:RC}
\begin{algorithmic}[1]
\State \textbf{Input}: Observed data $\left(Y_{it},X_{it}\right)$, and initialization for $\beta^{(0)}$, $\Sigma^{(0)}$, $\Pi^{(0)}$, parameters $\nu$.
\State \textbf{Initialize}: $k = 0$
\While{not converged}
    \State \textbf{Step 1: Simulate and Calculate Weights}
    \State Given $\left(\theta^{\left(k\right)},\Pi^{(k)}\right)$, for each \((i, t)\), draw \( R \) values of \( \beta \) from $N\left(\bar{\beta}^{\left(k\right)},\Sigma^{\left(k\right)}\right)$, labeled \( \beta_{itr} \)
    \State For each \( \beta_{itr} \), calculate \( p_{it}(\beta_{itr}, \pi_{it}^{(k)}) \) and the likelihood $L_{it}(\beta_{itr},\pi_{it}^{(k)})$ using Equations \eqref{eq:MM1} and \eqref{eq:MM2}.
    \State Using $L_{it}(\beta_{itr}\pi_{it}^{(k)})$, compute the weights $    w_{itr}^{(k)}$ using Equation \eqref{eq:MM3}.

    \State \textbf{Step 2: Update Parameters}
    \State 
    $\bar{\beta}^{(k+1)} = \frac{1}{NT} \sum_{i=1}^{N} \sum_{t=1}^{T} \left( \sum_{r=1}^{R} w_{itr}^{(k)} \beta_{itr} \right)$
    \State 
$    \Sigma^{(k+1)} = \frac{1}{NT} \sum_{i=1}^{N} \sum_{t=1}^{T} \left( \sum_{r=1}^{R} w_{itr}^{(k)} \beta_{itr} \beta_{itr}^{\prime} \right) - \bar{\beta}^{(k+1)}\bar{\beta}^{(k+1)^{\prime}}$
    \State 
    $    \Pi^{(k+1)} \gets S_{NT\nu} \left( \Pi^{(k)} - \left(\sum_{r=1}^R w^{(k)}_{i t r} h\left(Y_{it},X_{it}^{\prime}\beta_{itr}+\pi_{it}^{(k)}\right) \right)_{i,t} \right)$\label{line:SVT2}
    \State $k \gets k + 1$
\EndWhile

\State \textbf{Output}: Updated parameters $\beta^{(k+1)}$, $\Sigma^{(k+1)}$, and $\Pi^{(k+1)}$.
\end{algorithmic}
\end{algorithm}

\subsection{Selection of Tuning Parameters}\label{subsec:nu}
The selection of the tuning parameter $\nu$ is crucial for the
proper implementation of the estimator. The objective is to choose $\nu$ such that it dominates
the ``score" of our penalized estimator. This ensures
a desirable rate of convergence and permits the consistent estimation of the rank of $\Pi_{0}$, even in scenarios where it is not known a priori. Nonetheless,
 the computation of the ``score" is generally infeasible, as it depends on unknown true parameters $\left(\theta_{0},\Pi_{0}\right)$. This challenge becomes more complex in the context of time series data,
where the selection of $\nu$ must also account for the degree of
temporal dependence.

Several general approaches are commonly used to choose the optimal
tuning parameters. As is standard in the statistics and machine learning
literature, the most popular data-driven method is cross-validation.
Another common approach involves information criteria (IC).
Methods like grid search or randomized search are also employed, where
a range or a sample of tuning parameters is systematically explored
to identify the best value.

In this paper, we select the tuning parameter $\nu$ using a modified information criterion (IC), motivated by prior work on low-rank and grouped panel estimators \citep*[e.g.,][]{bai_determining_2002, su_identifying_2016, belloni_high_2023}. This criterion provides a practical way to balance model complexity and statistical fit. 

Specifically, for each candidate value of $\nu$, we compute the corresponding rank estimate $\hat{r}(\nu)$, as defined in Equation \eqref{ref:r},\footnote{The consistency of the rank estimator is established in Corollary~\ref{cor:main-r}.} and evaluate
\begin{equation}
\operatorname{IC}\left(\nu\right)=\mathcal{L}_{NT}\left(\hat{\theta}\left(\nu\right),\hat{\Pi}\left(\nu\right)\right)+\varrho_{NT}\cdot\hat{r}\left(\nu\right)\label{eqn:BIC}
\end{equation}
The first term on the right-hand side of \eqref{eqn:BIC} measures the overall fit to the data, and the second term introduces a penalty proportional to model complexity through the factor $\varrho_{NT}$.
In the simulations, we set $\varrho_{NT} = \frac{1}{2}\log(N \wedge T)\,\tfrac{N \vee T}{N T}$ for Design 1 and $\varrho_{NT} = \frac{1}{2}\tfrac{\log\log(\sqrt{NT})}{\sqrt{NT}}$ for Design 2. These choices serve as practical finite-sample rules and perform fairly well across the designs considered.

\subsection{Finite-Sample Performance}
\label{subsec:MC}
\paragraph{Data Generating Process}\label{subsec:MC_DGP}
To evaluate the finite-sample performance of the estimation procedure,
we consider two data generating processes (DGPs) that cover models
with convex and nonconvex objective functions. 

\textbf{Design 1:} The data are generated from the following
model
\[
y_{it}=\mathbb{I}\left\{ X_{it}^{\prime}\theta_{0}+\sum_{s=1}^{2}\lambda_{is}f_{ts}-\varepsilon_{it}\geq0\right\} 
\]
where $\mathbb{I}\left\{ \cdot\right\} $ denotes the indicator function, and the error term $\varepsilon_{it}$ follows a logistic
distribution. We set $p=3$ and $r=2$ and take the coefficients $\theta_{0}$
to satisfy $\theta_{0}=\left(1,1,1\right)^{\prime}$. 
The regressor vector
$X_{it}=(x_{it,1},x_{it,2},x_{it,3})^\prime$ consists of three variables, $\lambda_i=(\lambda_{i1},\lambda_{i2})^\prime$, and $f_t=(f_{t1},f_{t2})^\prime$. Specifically, we define $x_{it,k}=\tilde{x}_{it,k}+\rho\left(\lambda{}_{{ik}}^{2}+f_{tk}^{2}\right)$
for $k=1,2$ and $x_{it,3}=\tilde{x}_{it,3}$. The parameter $\rho=0.2$ governs the correlation between the covariates
and the fixed effects. 
Here, $\lambda_i$, $f_t$, and $\tilde{x}_{it}$ are mutually independent across both $i$ and $t$ and $\lambda_{i}\sim \mathcal{N}\left( \binom{1}{1},I_{2}\right)$ and $f_{t}\sim \mathcal{N}\left( \binom{0}{0},I_{2}\right)$. 

We consider two alternative specifications for the distribution of $\tilde{x}_{it}$:
\begin{itemize}
    \item DGP 1: The covariates are independently distributed as $\tilde{x}_{it}\sim \mathcal{N}\left(\begin{pmatrix}0\\
0\\0\end{pmatrix},4I_{3}\right)$ 
where $I_{k}$ denotes $k$ by $k$ identity matrix.
\item DGP 2: The components of $\tilde{x}_{it}$ follow a stationary AR(1) process. For $k=1,2,3$, $\tilde{x}_{i k, t}=\varrho \tilde{x}_{i k, t-1}+\sigma u_{i k, t}$, where $u_{i k, t} \sim \mathcal{N}(0,1)$ with $\varrho=0.2$, $\sigma=2$, and initial values drawn from $\mathcal{N}\sim\bigl(0,\tfrac{\sigma^{2}}{1-\varrho^{2}}\bigr)$.
\end{itemize}

\textbf{Design 2:} We extend Design 1 by allowing
random coefficients $\beta_{it}$. With a slight abuse of notation, 
\[
y_{it}=\mathbb{I}\left\{ X_{it}^{\prime}\beta_{0,it}+\sum_{s=1}^{2}\lambda_{is}f_{ts}-\varepsilon_{it}\geq0\right\} 
\]
The covariates $X_{it}$, factor loadings $\lambda_{is}$, and factors $f_{ts}$ follow the same distributions as in Design 1, and we adopt the same specifications of $\tilde{x}_{it}$. The coefficients $\beta_{0,it}\sim \mathcal{N}\left(\bar{\beta}_{0},\Sigma_{0}\right)$
with $\bar{\beta}_{0}=\left(1,1,1\right)^{\prime}$ and $\Sigma_{0}=0.3 I_{3}$.

\paragraph{Performance}\label{subsec:MC_results}
For each model design, we consider different combinations of sample sizes 
$(N,T) \in \{(100,100),\,(150,100),\,(200,100),\,(150,150),\,(200,150),\,(200,200)\}$ 
to examine how the first-step and second-step estimators behave as the sample size increases.
Each experiment is replicated $100$ times. 
The ADMM algorithm is used to estimate the models in Design 1, while the MM algorithm is applied to estimate the models in Design 2. Detailed descriptions of the implementation are provided in Section~\ref{subsec:MC_imp}.

To evaluate the performance of the proposed estimator, we report the root mean squared error (RMSE) for both steps, defined as $\operatorname{RMSE}(\hat{\theta})=\sqrt{S^{-1}\sum_{s=1}^{S}|\hat{\theta}^{(s)}-\theta_0|^2}/\|\theta_0\|$, and the average estimated rank, $\mathbb{E}[\hat{r}]=S^{-1}\sum_{s=1}^{S}\hat{r}_s$. We set $S=100$ Monte Carlo replications. These measures summarize the finite-sample accuracy of the estimators and the typical rank selected across simulations.

\begin{table}[htbp]
\caption{Simulation Results: $\hat{\theta}$ in Design 1}
\renewcommand{\arraystretch}{1.2} \begin{center}
\begin{tabular}{c c c c c c c}
\cmidrule[1pt](r{.1em}l){1-7}
 & \multicolumn{3}{c}{DGP 1} & \multicolumn{3}{c}{DGP 2} \\ 
\cmidrule[.6pt](r{.1em}l){2-4} \cmidrule[.6pt](r{.1em}l){5-7}
$(N,T)$ 
& \multicolumn{2}{c}{First step} & \multicolumn{1}{c}{Second step} 
& \multicolumn{2}{c}{First step} & \multicolumn{1}{c}{Second step} \\ 
\cmidrule[.6pt](r{.1em}l){2-3} \cmidrule[.6pt](r{.1em}l){4-4}
\cmidrule[.6pt](r{.1em}l){5-6} \cmidrule[.6pt](r{.1em}l){7-7}
 & RMSE & $\mathbb{E}[\hat{r}]$ & RMSE & RMSE & $\mathbb{E}[\hat{r}]$ & RMSE \\ 
\cmidrule[.6pt](r{.1em}l){1-7}
(100,100)  & 6.11 & 2.00 & 3.06 & 5.99 & 2.00 & 2.93 \\
(150,100)  & 4.95 & 2.00 & 2.41 & 5.61 & 2.00 & 2.88 \\
(200,100)  & 5.35 & 2.00 & 2.33 & 5.61 & 2.00 & 2.54 \\
(150,150)  & 4.35 & 2.00 & 2.11 & 4.56 & 2.00 & 2.24 \\
(200,150)  & 2.11 & 2.00 & 1.75 & 2.52 & 2.00 & 1.73 \\
(200,200)  & 1.88 & 2.00 & 1.57 & 2.03 & 2.00 & 1.65 \\
\bottomrule
\end{tabular}
\end{center}
\footnotesize{
Note: This table reports the RMSE for both the first-step and second-step estimators, as well as the average estimated rank from the first-step procedure. Results are presented for $\hat{\theta}$ under Design 1. RMSE values are expressed as percentages relative to the true parameter values (\%), and $\mathbb{E}[\hat{r}]$ denotes the mean estimated number of factors computed across 100 Monte Carlo replications.}
\label{tab:design1_0929}
\end{table}

\begin{table}[htbp]
\caption{Simulation Results: $\hat{\bar{\beta}}$ in Design 2}
\renewcommand{\arraystretch}{1.2} \begin{center}
\begin{tabular}{c c c c c c c}
\cmidrule[1pt](r{.1em}l){1-7}
 & \multicolumn{3}{c}{DGP 1} & \multicolumn{3}{c}{DGP 2} \\ 
\cmidrule[.6pt](r{.1em}l){2-4} \cmidrule[.6pt](r{.1em}l){5-7}
$(N,T)$ 
& \multicolumn{2}{c}{First step} & \multicolumn{1}{c}{Second step} 
& \multicolumn{2}{c}{First step} & \multicolumn{1}{c}{Second step} \\ 
\cmidrule[.6pt](r{.1em}l){2-3} \cmidrule[.6pt](r{.1em}l){4-4}
\cmidrule[.6pt](r{.1em}l){5-6} \cmidrule[.6pt](r{.1em}l){7-7}
 & RMSE & $\mathbb{E}[\hat{r}]$ & RMSE & RMSE & $\mathbb{E}[\hat{r}]$ & RMSE \\ 
\cmidrule[.6pt](r{.1em}l){1-7}
(100,100)  & 8.38 & 2.05 & 7.38 & 8.19 & 2.04 & 7.51 \\
(150,100)  & 7.68 & 2.00 & 6.50 & 7.68 & 2.00 & 6.71 \\
(200,100)  & 6.97 & 2.00 & 5.70 & 6.90 & 2.00 & 5.62 \\ 
(150,150)  & 6.08 & 2.00 & 4.53 & 6.33 & 2.00 & 4.90 \\
(200,150)  & 5.20 & 2.00 & 3.63 & 5.35 & 2.00 & 3.90 \\
(200,200)  & 3.02 & 2.00 & 2.76 & 2.74 & 2.00 & 2.13 \\
\bottomrule
\end{tabular}
\end{center}
\footnotesize{
Note: This table reports the RMSE for both the first-step and second-step estimators, as well as the average estimated rank from the first-step procedure. Results are presented for $\hat{\bar{\beta}}$ under Design 2. RMSE values are expressed as percentages relative to the true parameter values (\%), and $\mathbb{E}[\hat{r}]$ denotes the mean estimated number of factors computed across 100 Monte Carlo replications.}
\label{tab:design2_0929}
\end{table} 
Table~\ref{tab:design1_0929} summarizes the finite-sample performance of the proposed estimators. Several key observations can be made from the results. (i) At each sample size, the second-step estimators exhibit smaller RMSE compared to the first-step estimators; (ii) As the sample size increases, particularly when both $N$ and $T$ grow, the RMSE of the first-step estimators tends to decrease; (iii) Similarly, the RMSE of the second-step estimators generally decreases with increasing sample size; (iv) The rank estimator correctly estimates the rank as the sample size increases, with the average rank stabilizing at $\hat{r} = 2$ for larger samples. These observations confirm the consistency results for both the first-step and the second-step estimators. 
For example, the RMSE decreases from $3.06\%$ when $N=T=100$ to $1.57\%$ at $N=T=200$, consistent with the theoretical prediction that estimator performance improves with larger sample sizes.

Table~\ref{tab:design2_0929} summarizes the results for Design 2, which incorporates random coefficients. The findings parallel those of Design 1: the second-step estimator consistently improves upon the initial estimator, and both bias and RMSE decline with sample size. In this case, we focus on the accuracy of the estimated mean of the random coefficients, which is reliably recovered in larger samples.

In general, the simulations confirm the main theoretical insights. The two-step procedure delivers substantial efficiency gains, the estimators are consistent as both dimensions of the panel grow, and rank selection is reliable in sufficiently large samples. \section{Empirical Application}\label{sec:empirical}
In this section, we apply our methodology to revisit the joint financing behavior of U.S. nonfinancial corporations, focusing on how firms adjust debt issuance and equity repurchase decisions in response to relative valuations across capital markets. The objective is to illustrate the empirical applicability of our approach by extending the baseline model of \citet{ma_nonfinancial_2019} and to compare its benchmark fixed-effects logit with our proposed specification.

Empirical studies suggest that firms often act as cross-market arbitrageurs, substituting between debt and equity to exploit differences in financing costs. When debt is relatively inexpensive, firms tend to issue debt and repurchase equity, whereas undervalued equity prompts them to issue equity and retire debt. In a recent study, \citet{ma_nonfinancial_2019} documents that a substantial share of financing flows originate from nonfinancial firms that simultaneously issue in one market while repurchasing in another. About 45 percent of quarterly net equity repurchases (by value) occur in periods when firms are concurrently net issuers of debt, and roughly 50 percent of seasoned equity issuance takes place among firms that are net retiring debt. These coordinated patterns indicate that firms actively rebalance their capital structures to exploit valuation differentials across markets, with this behavior most pronounced among large, profitable, and financially unconstrained firms.

To quantify this relationship, the study estimates the probability that a firm simultaneously issues equity and retires debt using a panel logit model with firm fixed effects:
\begin{equation} 
\label{emp:eq1}
\operatorname{Pr}\left(y_{i t}=1 \mid X_{D, i t-1}, X_{E, i t-1}, Z_{i t}\right)=\Lambda\left(\lambda_i+\beta_D^{\prime} X_{D, i t-1}+\beta_E^{\prime} X_{E, i t-1}+\gamma^{\prime} Z_{i t}\right)
\end{equation}
where $\Lambda(\cdot)$ denotes the logistic cumulative distribution function.
The binary outcome variable $y_{it}$ equals one if the firm issues equity (negative net equity repurchases) and retires debt (negative net debt issuance) in the same quarter, and zero otherwise.
Net equity repurchases are measured as purchases minus sales of common and preferred stock (PRSTKC$-$SSTK), and net debt issuance is defined as long-term debt issuance minus long-term debt reduction (DLTIS$-$DLTR). 
The vectors $X_{D,it-1}$ and $X_{E,it-1}$ capture debt and equity market valuations measured at the end of the quarter $t-1$, while $Z_{it}$ includes additional firm-level controls that can affect financing behavior.

The model uses three firm-level indicators of market valuation. Two bonds-based measures, the credit spread and the term spread, capture relative valuations in the debt market ($X_{D,it-1}$). The firm-level credit spread is computed as the face-value-weighted average yield differential between the firm’s bonds and the nearest-maturity Treasury, while the term spread is the yield difference between the nearest-maturity Treasury and the three-month Treasury bill.

The valuation of the equity market ($X_{E,it-1}$) is proxied by the firm's specific value-price ratio (V / P), where $V$ denotes the intrinsic value of the equity estimated from the residual income model and $P$ represents the market price of equity \citep{dong_overvalued_2012}. Control variables include net income, cash holdings, capital expenditures, deviations from target leverage, asset growth, and firm size.\footnote{See \citet{ma_nonfinancial_2019} for detailed definitions and data construction.}

Before introducing our model, we first examine whether the data exhibit latent time-varying structure by estimating a two-way fixed-effects (TWFE) specification following \citet*{fernandez-val_individual_2016} and analyzing its residuals,
$$\widehat{u}_{TWFE,i t}=y_{i t}-\Lambda\left(\hat{\lambda}_i +\hat{f}_t+\hat{\beta}_{D}^{\prime} X_{D, i, t-1}+\hat{\beta}_{E}^{\prime} X_{E, i, t-1}+\hat{\gamma}^{\prime} Z_{i t}\right)$$
Figure~\ref{fig:resid_heatmap} shows systematic temporal and cross-sectional patterns: firms display persistent deviations across quarters (horizontal streaks), and many firms experience synchronized movements within the same periods (vertical color bands). These residual patterns point to time-varying unobserved heterogeneity.

\subsection{Data and Model}
We begin with a brief overview of the data and the model. Following \citet{ma_nonfinancial_2019}, this analysis integrates several primary data sources, combining firm-level information on bond prices, equity valuations, and accounting fundamentals with aggregate market indicators. Bond data are obtained primarily from the Trade Reporting and Compliance Engine (TRACE) and supplemented by Datastream and Mergent’s Fixed Income Securities Database (FISD). Equity returns and valuation measures come from CRSP and the Institutional Brokers’ Estimate System (IBES), while balance sheet and cash flow variables are drawn from Compustat.\footnote{Access to TRACE, Compustat, CRSP, and IBES is provided through Wharton Research Data Services (WRDS), while Datastream is accessed via the LESG portal.} The sample covers the period 2003-2024, producing a panel of $N = 212$ firms observed in $T = 88$ quarters.\footnote{Additional details on data sources, variable definitions, and sample construction are provided in Appendix~\ref{apx:emp-data}.}
Summary statistics for key variables appear in Table~\ref{tab:sumstats}.

We extend this framework by incorporating interactive fixed effects and random coefficients, allowing for unobserved common shocks and heterogeneous firm responses to market valuations. Formally, the latent index is given by
\begin{equation}
\label{emp:eq2}
\operatorname{Pr}\left(y_{i t}=1 \mid X_{D, i t-1}, X_{E, i t-1}, Z_{i t}\right)=\Lambda\left(\lambda_i^{\prime} f_t+\beta_{D, i t}^{\prime} X_{D, i t-1}+\beta_{E, i t}^{\prime} X_{E, i t-1}+\gamma^{\prime} Z_{i t}\right)
\end{equation}
where $f_t$ denotes latent factors that capture aggregate market shocks, $\lambda_i$ are the corresponding firm-specific factor loadings, and $\beta_{D,it}$ and $\beta_{E,it}$ represent random coefficients reflecting heterogeneity in sensitivities to debt and equity valuations. Specifically, we assume that
$\begin{pmatrix}
\beta_{D,it} \\
\beta_{E,it}
\end{pmatrix}
\sim
\mathcal{N}\!\left(
\begin{pmatrix}
\bar{\beta}_D \\
\bar{\beta}_E
\end{pmatrix},
\, \Sigma_{\beta}
\right)$.
We employ the same set of market valuation and control variables as in \eqref{emp:eq1}.

\begin{table}[h!]
\centering
\caption{Summary Statistics of Key Variables}
\label{tab:sumstats}
\small
\renewcommand{\arraystretch}{1.2} \begin{tabular}{lcccccc}
\toprule
\textbf{Variable} & \textbf{N} & \textbf{Mean} & \textbf{SD} & \textbf{P$_{10}$} & \textbf{Median} & \textbf{P$_{90}$} \\
\midrule
Credit spread         & 14{,}667 & \phantom{$-$}2.03 & 1.76 & \phantom{$-$}0.73 & \phantom{$-$}1.48 & \phantom{$-$}3.90 \\
Term spread           & 14{,}554 & \phantom{$-$}1.19 & 1.07 & $-0.28$ & \phantom{$-$}1.23 & \phantom{$-$}2.54 \\
V/P ratio             & 14{,}667 & \phantom{$-$}1.08 & 0.63 & \phantom{$-$}0.42 & \phantom{$-$}0.95 & \phantom{$-$}1.88 \\
Cash holdings         & 14{,}557 & \phantom{$-$}7.68 & 7.99 & \phantom{$-$}0.83 & \phantom{$-$}5.15 & \phantom{$-$}17.39 \\
Asset growth          & 14{,}622 & \phantom{$-$}6.34 & 14.45 & $-5.71$ & \phantom{$-$}4.57 & \phantom{$-$}19.97 \\
Size                  & 14{,}667 & \phantom{$-$}9.38 & 1.20 & \phantom{$-$}7.88 & \phantom{$-$}9.29 & \phantom{$-$}10.90 \\
Net income            & 14{,}639 & \phantom{$-$}1.52 & 2.06 & \phantom{$-$}0.05 & \phantom{$-$}1.47 & \phantom{$-$}3.52 \\
Capital expenditures  & 14{,}573 & \phantom{$-$}1.30 & 1.21 & \phantom{$-$}0.28 & \phantom{$-$}0.94 & \phantom{$-$}2.66 \\
\bottomrule
\end{tabular}
\vspace{0.3em}
\end{table}
 \subsection{Analysis}
We report our estimation results in Table~\ref{tab:logit_rc_comparison}. Model~A corresponds to the conventional panel logit specification with individual fixed effects, as defined in~\eqref{emp:eq1}, and serves as the benchmark. It is estimated using the conditional maximum likelihood approach following \citet{ma_nonfinancial_2019}. In contrast, Model~B adopts the specification in~\eqref{emp:eq2}, which extends the benchmark by allowing for random coefficients on lagged valuation measures and by incorporating interactive fixed effects. The rank of the interactive component is estimated using the procedure described in~\eqref{ref:r}; the first-step estimation yields an estimated rank of $\hat{r} = 2$. Accordingly, Table~\ref{tab:logit_rc_comparison} reports the second-step estimates, including the means and standard deviations of the random coefficients.

In general, the results reconfirm the importance of market valuations in shaping joint financing decisions of firms and are consistent with economic intuition: firms are more likely to simultaneously issue equity and retire debt when the cost of debt is high (e.g., elevated bond spreads or expected excess returns) and the cost of equity is low (e.g., a low V/P ratio or low expected excess equity returns). 

We find that the signs of the estimated effects are generally robust to the inclusion of latent factors, with variations in their magnitudes across specifications. For instance, the effect of lagged term spread is larger in our model than in the benchmark specification: a one-standard-deviation increase in term spread multiplies the odds of joint equity issuance and debt retirement by about 1.33 on average across firms, compared with roughly 1.25 in Model~A. We also uncover substantial heterogeneity in valuation sensitivities. In particular, the estimated standard deviation of the random coefficient on the lagged credit spread is 0.363 and statistically significant, indicating meaningful variation in firms' responses to credit market conditions.

We regard the comparison between our results and those from the benchmark model as an illustration that highlights the empirical value of allowing richer forms of unobserved heterogeneity. 
Firms adjust their financing decisions in response to evolving macrofinancial conditions and to unobserved, time-varying shocks—such as shifts in credit availability, monetary policy, or market liquidity—that jointly influence financing behavior and asset valuations. At the same time, firms differ in the extent to which they adjust their financing in response to observed market valuations.  By incorporating interactive fixed effects and random coefficients, the proposed approach flexibly captures both common dynamics and firm heterogeneity, demonstrating its applicability to complex empirical settings.
\begin{table}[ht!]
\begin{center}
\caption{Empirical Illustration: Comparison of Results}
\label{tab:logit_rc_comparison}
\small
\setlength{\tabcolsep}{10pt} \renewcommand{\arraystretch}{1.2} \begin{tabular}{lcccccc}
\toprule
 & \multicolumn{2}{c}{Model A} & \multicolumn{4}{c}{Model B} \\
\cmidrule(lr){2-3} \cmidrule(lr){4-7}
 & \multicolumn{2}{c}{$\beta$} & \multicolumn{2}{c}{Mean ($\bar{\beta}$)} & \multicolumn{2}{c}{SD ($\sigma_\beta$)} \\
\cmidrule(lr){2-3} \cmidrule(lr){4-5} \cmidrule(lr){6-7}
Variable & Coef. & S.E. & Coef. & S.E. & Coef. & S.E. \\
\midrule
L.Credit spread & \phantom{$-$}0.077 & 0.015 & \phantom{$-$}0.124 & 0.014 & \phantom{$-$}0.363 & 0.055 \\
L.Term spread   & \phantom{$-$}0.212 & 0.027 & \phantom{$-$}0.267 & 0.037 & \phantom{$-$}0.335 & 0.159 \\
L.V/P           & $-0.092$ & 0.046 & $-0.118$ & 0.056 & \phantom{$-$}0.316 & 0.341 \\
Net income      & $-0.039$ & 0.013 & $-0.044$ & 0.013 & \multicolumn{2}{c}{--} \\
L.Cash holding  & $-0.006$ & 0.005 & \phantom{$-$}0.023 & 0.003 & \multicolumn{2}{c}{--} \\
CAPX            & $-0.242$ & 0.037 & $-0.034$ & 0.020 & \multicolumn{2}{c}{--} \\
L.Size          & $-0.746$ & 0.057 & $-0.268$ & 0.008 & \multicolumn{2}{c}{--} \\
L.Asset growth  & \phantom{$-$}0.011 & 0.002 & \phantom{$-$}0.031 & 0.002 & \multicolumn{2}{c}{--} \\
\bottomrule
\end{tabular}
\end{center}
\end{table}  \section{Conclusion}
\label{sec:conc}
This paper explores nonlinear panel data models with interactive fixed effects and tackles nonconvex objective functions in large $N$ and $T$ settings. We propose a two-step estimation method using nuclear norm regularization and an iterative process. First, we obtain preliminary estimates of the slope coefficient, factors, and factor loadings using NNR. The second step refines these estimates through an iterative process. We establish the consistency of the preliminary estimator and derive the asymptotic distribution for the second-step estimator. The performance of the estimator is demonstrated through Monte Carlo simulations and an empirical application to a random coefficient binary logit model.

Several extensions remain open for future research. First, it would be valuable to explore whether post-NNR inference can be achieved within a finite number of iterations, extending the work of \citet{chernozhukov_inference_2023} and \citet{choi_inference_2024} to nonlinear panel settings. Second, while this paper accounts for certain heterogeneity in slope coefficients within the random coefficient logit panel model, allowing for even richer forms of coefficient heterogeneity, as discussed by \citet{bonhomme_fixed_2024}, presents an interesting avenue for further study. These topics are left for future exploration. \clearpage

\bibliography{mybib.bib}
\clearpage

\renewcommand{\appendixname}{Appendix} 
\appendixtitleon    \appendixtitletocon 

\begin{appendices}
\counterwithin{assumption}{section}
\setcounter{assumption}{0}
\renewcommand{\theassumption}{\thesection.\arabic{assumption}}
\counterwithin{lemma}{section}
\renewcommand{\thelemma}{\thesection.\arabic{lemma}}
\begin{center}
\textbf{\LARGE{}Appendix}
\par\end{center}

\addcontentsline{toc}{chapter}{Appendices}

\renewcommand{\thefootnote}{A\arabic{footnote}}
\renewcommand{\thetable}{A\arabic{table}}
\renewcommand{\thefigure}{A\arabic{figure}}

\setcounter{section}{0}\setcounter{footnote}{0} 
\setcounter{table}{0}
\setcounter{figure}{0}
\numberwithin{equation}{section}
\renewcommand{\theequation}{\thesection.\arabic{equation}} 

\renewcommand{\thesection}{\Alph{section}} \renewcommand{\thetheorem}{\thesection.\arabic{theorem}} \renewcommand{\thelemma}{\thesection.\arabic{lemma}} \renewcommand{\thecorollary}{\thesection.\arabic{corollary}} \renewcommand{\theproposition}{\thesection.\arabic{proposition}} \renewcommand{\thedefinition}{\thesection.\arabic{definition}} 

\setcounter{theorem}{0} \setcounter{lemma}{0} \setcounter{corollary}{0} \setcounter{proposition}{0} \setcounter{definition}{0} %
 \section{Implementation of the First-Step Estimator}\label{sec:alg}
This section describes the implementation of the first-step estimator. Conventional matrix-completion methods are based on linear formulations and are therefore ineffective for data with nonlinear structures. To address this limitation, we employ two optimization algorithms specifically tailored to our models. For the convex single-index panel model, we adopt a modified Alternating Direction Method of Multipliers (ADMM), which decomposes the overall optimization problem into smaller, more manageable subproblems. This decomposition enables efficient parallelization and ensures convergence in convex settings \citep{Boyd_ADMM_2011}. ADMM is particularly well suited to large-scale problems with separable objective functions.\footnote{Although ADMM is primarily designed for convex problems, it has been successfully applied to some nonconvex settings; see, for example, \citet*{Wang_ADMM_2019}.}

On the other hand, we develop a new method based on the Majorization-Minimization (MM) algorithm for the random coefficient panel model. This algorithm,
which can be seen as a generalization of the Expectation- Maximization (EM) algorithm, has been developed in the statistics literature \citep{Lange_MM_2016,Lange_MM_2004}. We utilize the MM algorithm to separate
random coefficients and interactive fixed effects. That is, we propose a new surrogate function to linearize the logit model so that we can find the closed-form solution to the IFEs. Sequential fixed effects estimation and majorization deliver estimates with significantly reduced computational and memory costs.

\subsection{Single-Index Panel Model}
We develop an Alternating Direction Method of Multipliers (ADMM) algorithm for estimating the first-stage parameters in the general single-index panel model. The binary logit model in Example~\ref{ref:ex3} serves as a leading illustration of this broader framework.

To reformulate the original problem, we also introduce slack variables. As a result, \eqref{ref:Estimation1}
is equivalent to 
\begin{align}
 & \min_{\theta\in\mathbb{R}^{p},\Pi\in\mathbb{R}^{N\times T}}\frac{1}{NT}\sum_{i=1}^{N}\sum_{t=1}^{T}\ell\left(Y_{it},V_{it}\right)+\nu\|\Pi\|_{*}\nonumber \\
 & \qquad\text{s.t. }V=\sum_{j=1}^{p}X_{j}\theta_{j}+Z_{\Pi},\quad Z_{\Pi}-\Pi=0\label{eq:est-alg}
\end{align}
To solve the minimization problem (\ref{eq:est-alg}), the proposed algorithm uses the following augmented Lagrangian 
\begin{align*}
\mathscr{L}\left(\theta,\Pi,V,Z_{\Pi},U_{p},U_{v}\right) & =\frac{1}{NT}\sum_{i=1}^{N}\sum_{t=1}^{T}\ell\left(Y_{it},V_{it}\right)+\nu\|\Pi\|_{*}\\
 & +\frac{\eta}{2NT}\left\Vert V-\sum_{j=1}^{p}X_{j}\theta_{j}-Z_{\Pi}+U_{p}\right\Vert _{F}^{2}+\frac{\eta}{2NT}\left\Vert Z_{\Pi}-\Pi+U_{v}\right\Vert _{F}^{2}
\end{align*}
Here, $U_{p}$ and $U_{v}$ are the scaled dual variables corresponding
to the linear constraints $V=\sum_{j=1}^{p}X_{j}\theta_{j}+Z_{\Pi}$
and $Z_{\Pi}-\Pi=0$, and $\eta>0$ is the penalty parameter for constraint
violations. Due to the separability of the parameters in $\mathscr{L}$,
the ADMM algorithm proceeds by iteratively minimizing the augmented
Lagrangian in blocks with respect to the original variables, in this
case $\left(V,\theta,\Pi\right)$ and $Z_{\Pi}$, and then updating
the dual variables $U_{p}$ and $U_{v}$.

The primal variables are updated by minimizing the augmented Lagrangian in blocks, and the scaled dual variables are then updated using the corresponding primal residuals:\begin{align*}
V^{(k+1)} & \gets \underset{V}{\arg\min}\sum_{i=1}^{N}\sum_{t=1}^{T}\ell\left(Y_{it},V_{it}\right)+\frac{\eta}{2}\left\Vert V-\sum_{j=1}^{p}X_{j}\theta_{j}^{\left(k\right)}-Z_{\Pi}^{\left(k\right)}+U_{p}^{\left(k\right)}\right\Vert _{F}^{2}\\
\theta^{(k+1)} & \gets \underset{\theta}{\arg\min}\frac{\eta}{2}\left\Vert V^{\left(k+1\right)}-\sum_{j=1}^{p}X_{j}\theta_{j}-Z_{\Pi}^{\left(k\right)}+U_{p}^{\left(k\right)}\right\Vert _{F}^{2}\\
\Pi^{(k+1)} & \gets \underset{\Pi}{\arg\min}\frac{\eta}{2}\left\Vert Z_{\Pi}^{\left(k\right)}-\Pi+U_{v}^{\left(k\right)}\right\Vert _{F}^{2}+NT\nu\|\Pi\|_{*}\\
Z_{\Pi}^{\left(k+1\right)} & \gets \underset{Z_{\Pi}}{\arg\min}\frac{\eta}{2}\left\Vert V^{\left(k+1\right)}-\sum_{j=1}^{p}X_{j}\theta_{j}^{\left(k+1\right)}-Z_{\Pi}+U_{p}^{\left(k\right)}\right\Vert _{F}^{2}+\frac{\eta}{2}\left\Vert Z_{\Pi}-\Pi^{\left(k+1\right)}+U_{v}^{\left(k\right)}\right\Vert _{F}^{2}\\
U_{p}^{\left(k+1\right)} & \gets\underset{U_{p}}{\arg\min}\frac{\eta}{2}\left\Vert V^{\left(k+1\right)}-\sum_{j=1}^{p}X_{j}\theta_{j}^{\left(k+1\right)}-Z_{\Pi}^{\left(k+1\right)}+U_{p}\right\Vert _{F}^{2}\\
U_{v}^{(k+1)} & \gets\underset{U_{v}}{\arg\min}\frac{\eta}{2}\left\Vert Z_{\Pi}^{\left(k+1\right)}-\Pi^{\left(k+1\right)}+U_{v}\right\Vert _{F}^{2}
\end{align*}

For single-index models, the loss function $\ell(y,v)$ is differentiable in $v$.
Let $h(y,v)=\nabla_{v}\ell(y,v)$ denote its derivative with respect to the index.
Then the first-step update for $V$ can be written as
\[
V^{(k+1)} \;\gets\;
V^{(k)} - \eta\left(V^{(k)} - X\theta^{(k)} - Z_{\Pi}^{(k)} + U_{p}^{(k)}\right)
     - h\left(Y,V^{(k)}\right),
\]
which corresponds to a gradient-descent or Fisher-scoring step depending on the form of $\ell$.

For instance, in the binary logit model considered in Example~\ref{ref:ex3},
the loss function is $\ell(y,v)=-yv+\log(1+\exp(v))$, whose derivative is
$h(y,v)=\Lambda(v)-y$ with $\Lambda(v)=(1+e^{-v})^{-1}$. Hence,
\[
h(Y_{it},V_{it}^{(k)})=\Lambda(V_{it}^{(k)})-Y_{it}
\]
recovering the update rule used in the binary logit specification of
Simulation~Design~1.

Additionally, following \citet*{cai_singular_2008} and \citet*{Yuan_ALM_2013}, we update $\Pi$ using singular value thresholding (SVT).
Let $A=U_{A}\Sigma_{A}V_{A}^{\prime}$ denote the singular value
decomposition (SVD) of matrix $A\in\mathbb{R}^{N\times T}$, where
$\Sigma_{A}=\operatorname{diag}\left(\left\{ \sigma_{s}\right\} _{s\in\left[N\land T\right]}\right)$.
For $\iota>0$, the soft-thresholding operator $S_{\iota}\left(\cdot\right)$
is defined as 
\[
S_{\iota}\left(A\right)\coloneqq U_{A}S_{\iota}\left(\Sigma_{A}\right)V_{A}^{\prime},\quad S_{\iota}\left(\Sigma_{A}\right)=\operatorname{diag}\left(\left\{ \sigma_{s}-\iota\right\} _{+}\right)
\]
where $t_{+}$ is the positive part of $t$, namely, $t_{+}=\max\left(0,t\right)$.
Further details can be found in Theorem 2.1 of \citet*{cai_singular_2008}.

Lastly, the remaining parameters, $\theta$ and slack variables, are
solved using standard least-squares estimation, which admits closed-form
solutions.

\subsection{Random Coefficient Logit Panel}
To expose the nature of the MM algorithm, we temporarily disregard
the penalty term. Suppose our goal is to minimize $\mathcal{L}_{NT}\left(\Psi\right):\mathbb{R}^{\dim\left(\Psi\right)}\mapsto\mathbb{R}$.
Let $\Psi^{\left(k\right)}$ denote the current iterate in the minimization
process. As suggested by the name, the majorization-minimization algorithm
proceeds in two steps. First, we construct a surrogate function $Q_{NT}\left(\Psi|\Psi^{\left(k\right)}\right)$.
Specifically, to update $\Psi$ at step $k+1$, we use a surrogate
function $Q_{NT}\left(\Psi|\Psi^{\left(k\right)}\right)$ that satisfies
\begin{enumerate}
    \item 
    $Q_{NT}(\Psi^{(k)}|\Psi^{(k)}) = \mathcal{L}_{NT}(\Psi^{(k)})$    
    \item  
$    Q_{NT}(\Psi|\Psi^{(k)}) \geq \mathcal{L}_{NT}(\Psi) \quad \text{for all } \Psi$
\end{enumerate}
Here, the function $Q_{NT}\left(\Psi|\Psi^{\left(k\right)}\right)$
is said to majorize $\mathcal{L}_{NT}\left(\Psi\right)$ at $\Psi^{\left(k\right)}$.
The current iterate $\Psi^{\left(k\right)}$ is treated as constant.
In the second step in the MM algorithm, we update $\Psi$ by choosing
$\Psi^{\left(k+1\right)}$ to minimize $Q_{NT}\left(\Psi|\Psi^{\left(k\right)}\right)$.
However, constructing a surrogate function that both majorizes $\mathcal{L}_{NT}(\Psi)$
at $\Psi^{(k)}$ and is easy to minimize is often a challenge in practice. This
step is analogous to the expectation (E) step in a well-structured
Expectation-Maximization (EM) algorithm.

For the random coefficient logit panel model, we define $\theta=\left(\bar{\beta},\Sigma\right)$
and $\Psi=\left(\theta,\Pi\right)$. The MM algorithm updates the
parameters $\theta=\left(\bar{\beta},\Sigma\right)$ and $\Pi$ separately
by using two surrogate functions $Q_{NT}\left(\theta|\Psi^{\left(k\right)}\right)$
and $Q_{NT}\left(\Pi|\Psi^{\left(k\right)}\right)$. The surrogate
function $Q_{NT}\left(\theta|\Psi^{\left(k\right)}\right)$ follows
from \citet{train_discrete_2009}, and we extend the work of \citet{train_discrete_2009}, \citet{james_mm_2017} and \citet*{chen_mm_2022} by introducing a new surrogate
function $Q_{NT}\left(\Pi|\Psi^{\left(k\right)}\right)$ which effectively
handles the penalty term.

In Example~\ref{ref:ex4}, the probability of $y_{it}=1$ for individual $i$ and
time $t$, given the random coefficient $\beta$ and fixed effect
$\pi_{it}$, is expressed as
\begin{equation}\label{eq:MM1}
p_{it}\left(\beta,\pi_{it}\right)=\frac{\exp\left(X_{it}^{\prime}\beta+\pi_{it}\right)}{1+\exp\left(X_{it}^{\prime}\beta+\pi_{it}\right)}
\end{equation}
The likelihood is given by 
\begin{equation}\label{eq:MM2}
L_{it}(\beta,\pi_{it})=p_{it}(\beta,\pi_{it})^{y_{it}}(1-p_{it}(\beta,\pi_{it}))^{1-y_{it}}
\end{equation}
Thus, the negative log-likelihood function is computed by integrating
the individual-specific and time-specific probabilities over all observed
data:
\[
\mathcal{L}_{NT}\left(\Psi\right)=-\sum_{i=1}^{N}\sum_{t=1}^{T}\log\left[\int L_{it}(\beta,\pi_{it})f\left(\beta\right)d\beta\right]
\]
The integral in the negative log-likelihood does not have a known
closed-form expression and is typically simulated numerically in estimation. 

Rather than trying to directly optimize $\mathcal{L}_{NT}\left(\Psi\right)$,
the strategy of the MM algorithm is to transfer optimization to simpler
surrogate functions that guarantees descent of the negative log-likelihood.
It refines the EM algorithm, as reviewed in \citet{train_discrete_2009}.
In mixed logit models, the EM surrogate function takes the form,

\[
Q_{NT}\left(\Psi|\Psi^{\left(k\right)}\right)=-\sum_{i=1}^{N}\sum_{t=1}^{T}\int\log\left[L_{it}(\beta,\pi_{it})f\left(\beta|\theta\right)g_{it}\left(\beta|\Psi^{\left(k\right)}\right)\right]d\beta
\]
In this expression, $g_{it}\left(\beta_{it}|\Psi^{\left(k\right)}\right)$
is an individual-and-time-specific probability density function of
the unobserved data, $\beta_{it}$, conditional on conditioned on
the observed data for individual at time $t$, and evaluated at $\Psi^{\left(k\right)}$.
This density is derived using Bayes' rule,
\[
g_{it}\left(\beta|\Psi^{\left(k\right)}\right)=\frac{L_{it}(\beta,\pi_{it}^{\left(k\right)})f\left(\beta|\theta^{\left(k\right)}\right)}{\int_{\beta^{\prime}}L_{it}(\beta^{\prime},\pi_{it}^{\left(k\right)})f\left(\beta^{\prime}|\theta^{\left(k\right)}\right)d\beta^{\prime}}
\]
Evaluating this integral requires simulation. For each individual
$i$ at time $t$, we draw $R$ samples from $ N\left(\bar{\beta}^{\left(k\right)},\Sigma^{\left(k\right)}\right)$,
label each as $\beta_{itr}$, and compute the weights using the
conditional likelihood function,
\begin{equation}\label{eq:MM3}
 w_{itr}^{(k)}=\frac{L_{it}(\beta_{itr},\pi_{it}^{(k)})}{\sum_{r^{\prime}=1}^{R}L_{it}(\beta_{itr^{\prime}},\pi_{it}^{(k)})}
\end{equation}
The superscript $k$ on the weights indicates that they are a function
of $\Psi^{\left(k\right)}$. As in maximum simulated likelihood,
$R$ must be large enough to eliminate the bias from simulating
the denominator of the weights \citet{Lee_SML_1995,train_discrete_2009}.

Using the approximation to $g_{it}\left(\beta|\Psi^{\left(k\right)}\right)$,
the EM surrogate function becomes
\begin{align*}
\tilde{Q}_{NT}\left(\Psi|\Psi^{\left(k\right)}\right) & =-\sum_{i=1}^{N}\sum_{t=1}^{T}\sum_{r=1}^{R}w_{itr}^{(k)}\log\left[L_{it}(\beta_{itr},\pi_{it})f\left(\beta_{itr}|\theta\right)\right]\\
 & =\underbrace{-\sum_{i=1}^{N}\sum_{t=1}^{T}\sum_{r=1}^{R}w_{itr}^{(k)}\log\left[L_{it}(\beta_{itr},\pi_{it})\right]}_{\tilde{Q}_{NT}\left(\Pi|\Psi^{\left(k\right)}\right)}\underbrace{-\sum_{i=1}^{N}\sum_{t=1}^{T}\sum_{r=1}^{R}w_{itr}^{(k)}\log\left[f\left(\beta_{itr}|\theta\right)\right]}_{Q_{NT}\left(\theta|\Psi^{\left(k\right)}\right)}
\end{align*}
Since the EM surrogate function is additively separable in the parameters
$\theta$ and $\Pi$ under the log operator, it simplifies the optimization
process by allowing the minimization to be performed over two independent
functions rather than a complex joint maximization. Specifically,
$Q_{NT}\left(\theta|\Psi^{\left(k\right)}\right)$ is the likelihood
function for a multivariate normal distribution with weighted observations
on $\beta$, where the solution for mean and covariance is obtained
through sample analogs (Train, 2009). As a result, $\theta^{\left(k+1\right)}=\left(\bar{\beta}^{\left(k+1\right)},\Sigma^{\left(k+1\right)}\right)$
admits a closed-form solution:
\begin{align*}
\bar{\beta}^{(k+1)} & =\frac{1}{NT}\sum_{i=1}^{N}\sum_{t=1}^{T}\left(\sum_{r=1}^{R}w_{itr}^{(k)}\beta_{itr}\right)\\
 \Sigma^{(k+1)} & =\frac{1}{NT}\sum_{i=1}^{N}\sum_{t=1}^{T}\left(\sum_{r=1}^{R}w_{itr}^{(k)}\beta_{itr}\beta_{itr}^{\prime}\right)-\bar{\beta}^{(k+1)}\bar{\beta}^{(k+1)^{\prime}}
\end{align*}
Note that $\tilde{Q}_{NT}\left(\Psi|\Psi^{\left(k\right)}\right)$
serves as the surrogate function for $\mathcal{L}_{NT}\left(\Psi\right)$
without the penalty term. Instead of optimizing over $\tilde{Q}_{NT}\left(\Pi|\Psi^{\left(k\right)}\right)+NT\nu\|\Pi\|_{*}$\footnote{Note that the original problem is reweighted by $NT$.},
this paper derives an alternative surrogate function, providing a
closed-form solution for $\Pi$. 

The main insight is to exploit a
natural feature of discrete-choice models and recognize that $\tilde{Q}_{NT}\left(\Pi|\Psi^{\left(k\right)}\right)$,
the weighted standard logit log-likelihood can be majorized with a
quadratic upper bound function. We define $Q_{N T}\left(\Pi \mid \Psi^{(k)}\right)$ as the surrogate function for the penalized objective
$\frac{1}{N T} \tilde{Q}_{N T}\left(\Pi \mid \Psi^{(k)}\right)+\nu\|\Pi\|_*$, which includes the scaled log-likelihood and nuclear-norm regularization. Specifically,
$$
\begin{aligned}
Q\left(\Pi\mid\theta^{(k)},\Pi^{(k)}\right)= & \mathcal{L}_{NT}\left(\theta^{(k)},\Pi^{(k)}\right)-\frac{1}{2NT}\sum_{i=1}^{N}\sum_{t=1}^{T}\left(\sum_{r=1}^{R}w_{itr}^{(k)}h\left(Y_{it},X_{it}^{\prime}\beta_{itr}^{\left(k\right)}+\pi_{it}^{(k)}\right)\right)^{2}\\
 & +\frac{1}{2NT}\sum_{i=1}^{N}\sum_{t=1}^{T}\left(\pi_{it}^{(k)}-\sum_{r=1}^{R}w_{itr}^{(k)}h\left(Y_{it},X_{it}^{\prime}\beta_{itr}^{\left(k\right)}+\pi_{it}^{(k)}\right)-\pi_{it}\right)^{2}+\nu\|\Pi\|_{*}
\end{aligned}
$$
where $h(y,v)=\Lambda(v)-y$ and $\Lambda(v)=\frac{1}{1+e^{-v}}$.
Here, the closed-form solution for Q$\left(\Pi|\Psi^{\left(k\right)}\right)$
is derived through SVT,
$$ \Pi^{(k+1)} \gets S_{NT\nu} \left( \Pi^{(k)} - \left(\sum_{r=1}^R w^{(k)}_{i t r} h\left(Y_{it},X_{it}^{\prime}\beta_{itr}+\pi_{it}^{(k)}\right) \right)_{i,t} \right)$$
Here, the threshold $N T \nu$ follows directly from the scaling of the quadratic term. \section{Proofs of Consistency and Convergence Rate Results}
As in \citet{berbee_convergence_1987}, we employ the blocking argument to deal with
$\beta$-mixing data. Similar arguments exist in different settings,
e.g. \citet*{chen_optimal_2015,semenova_inference_2023,belloni_high_2023,yu_rates_1994}.
Let block size be $c_{T}=\lceil2\mu^{-1}\log\left(NT\right)\rceil$
{} with $1\leq c_{T}\leq T/2$. For each $i\in\left[N\right]$, consider
a partition of $\left\{ W_{it}=\left(Y_{it},X_{it}\right)\right\} $
into $2d_{T}$ blocks of size $c_{T}$ each, where $d_{T}=\lfloor T/2c_{T}\rfloor$.
We denote random variables that correspond to block $p$ as $\left\{ W_{it}\right\} _{p}=\left\{ W_{i,\left(p-1\right)c_{T}+1},...W_{i,pc_{T}}\right\} $
for $p\in\left[2d_{T}\right]$. Then we obtain the existence of sequence
of random variables $\{W_{it}^{*}\}_{i\in[N],t\in[T]}$ such that 

\begin{itemize}
\item $\{W_{it}^{*}\}_{i\in[N],t\in[T]}$ is independent of $\{W_{it}\}_{i\in[N],t\in[T]}$; 
\item For a fixed $t$, $\{W_{it}^{*}\}_{i\in[N]}$ are independent; 
\item For a fixed $i$, $\{W_{it}^{*}\}_{p}$ and $\{W_{it}\}_{p}$ are
identically distributed for each block $p$; 
\item For a fixed $i$, the odd-numbered blocks $\{\{W_{it}^{*}\}_{1},\{W_{it}^{*}\}_{3},...,\{W_{it}^{*}\}_{2p-1}\}$
are independent, and the even-numbered blocks $\{\{W_{it}^{*}\}_{2},\{W_{it}^{*}\}_{4},...,\{W_{it}^{*}\}_{2p}\}$
are independent. 
\end{itemize}
Hence, denote $H_{l}$ as the odd-numbered block, $H_{l}^{\prime}$
as the even-number block, and $H_{R}$ as the remainder block of length
$T-2c_{T}d_{T}$. That is, for $l\in\left[d_{T}\right]$ 
\[
\begin{aligned}H_{l}= & \left\{ t:1+2(l-1)c_{T}\leq t\leq(2l-1)c_{T}\right\} \\
H_{l}^{\prime}= & \left\{ t:1+(2l-1)c_{T}\leq t\leq2lc_{T}\right\} \\
H_{R}= & \left\{ t:2c_{T}d_{T}+1\leq t\leq T\right\} 
\end{aligned}
\]
Throughout the appendix, let $\underline{c}$ and $\overline{c}$
denote generic positive constants that may vary across their occurrences.

\subsection{Proof of Results in Section~\ref{subsec:consistency}}
\begin{lemma}
\label{lem:nc}
We have $\left(\theta_{0},\Pi_{0}\right)\in\mathcal{B}$
and if $\mathcal{L}_{NT}\left(\theta_{0},\Pi_{0}\right)=\bar{\mathcal{L}}\left(\theta_{0},\Pi_{0}\right)+o_{p}\left(1\right)$,
then $\left(\hat{\theta},\hat{\Pi}\right)\in\mathcal{B}$ with probability
approaching one.
\end{lemma}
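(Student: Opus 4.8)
The plan is to treat the two assertions separately; both follow from elementary manipulations of the penalized criterion. The first assertion, $(\theta_0,\Pi_0)\in\mathcal{B}$, is immediate: the loss is bounded below (it is nonnegative in each of the examples of Section~\ref{sec:model} --- e.g.\ a squared loss, a check loss, or the negative log-likelihood of a density bounded by one), so $c_{\mathcal{L}}=\bar{\mathcal{L}}(\theta_0,\Pi_0)+1>0$, and with $\nu>0$ this gives $\|\Pi_0\|_*\le c_{\mathcal{L}}\nu^{-1}+\|\Pi_0\|_*$, which is exactly the defining inequality of $\mathcal{B}$ evaluated at $(\theta_0,\Pi_0)$.

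For the second assertion I would argue as follows. First, optimality of $(\hat\theta,\hat\Pi)$ in \eqref{ref:Estimation1}, compared against the feasible point $(\theta_0,\Pi_0)$, yields the basic inequality
\[
\mathcal{L}_{NT}(\hat\theta,\hat\Pi)+\nu\|\hat\Pi\|_*\;\le\;\mathcal{L}_{NT}(\theta_0,\Pi_0)+\nu\|\Pi_0\|_*.
\]
Dropping the nonnegative term $\mathcal{L}_{NT}(\hat\theta,\hat\Pi)$ on the left (again using boundedness below of the loss) gives $\nu\|\hat\Pi\|_*\le\mathcal{L}_{NT}(\theta_0,\Pi_0)+\nu\|\Pi_0\|_*$. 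Next I would invoke the hypothesis $\mathcal{L}_{NT}(\theta_0,\Pi_0)=\bar{\mathcal{L}}(\theta_0,\Pi_0)+o_p(1)$: on the event $A_{NT}=\{|\mathcal{L}_{NT}(\theta_0,\Pi_0)-\bar{\mathcal{L}}(\theta_0,\Pi_0)|\le 1\}$, whose probability tends to one, we have $\mathcal{L}_{NT}(\theta_0,\Pi_0)\le\bar{\mathcal{L}}(\theta_0,\Pi_0)+1=c_{\mathcal{L}}$. Combining the two displays on $A_{NT}$ gives $\nu\|\hat\Pi\|_*\le c_{\mathcal{L}}+\nu\|\Pi_0\|_*$, i.e.\ $\|\hat\Pi\|_*\le c_{\mathcal{L}}\nu^{-1}+\|\Pi_0\|_*$, so that $(\hat\theta,\hat\Pi)\in\mathcal{B}$ on $A_{NT}$; since $P(A_{NT})\to 1$ this establishes $(\hat\theta,\hat\Pi)\in\mathcal{B}$ with probability approaching one.

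There is no genuinely hard step here: the argument is the standard ``basic inequality plus one-sided deviation control'' template. The only points deserving care are (i) the lower bound on $\mathcal{L}_{NT}$ at the estimator --- if $\ell$ is only bounded below by some constant $-M$ rather than literally nonnegative, one absorbs $M$ into the definition of $c_{\mathcal{L}}$ and the argument is unchanged --- and (ii) checking that the minimizer $(\hat\theta,\hat\Pi)$ is well defined, which follows from continuity of the criterion (the nuclear-norm penalty is itself continuous) together with compactness of $\Theta$ and $\Phi$ under Assumption~\ref{assu:id-2}. This lemma then feeds into the consistency argument by permitting all subsequent analysis to be restricted to the smaller set $\mathcal{B}$.
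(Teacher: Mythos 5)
Your proof is correct and follows essentially the same route as the paper's: nonnegativity of the loss gives the first claim immediately, and the second follows from the basic inequality at the optimum combined with the $o_p(1)$ hypothesis on $\mathcal{L}_{NT}(\theta_0,\Pi_0)$. Your explicit handling of the high-probability event $A_{NT}$ is slightly more careful than the paper's one-line version, but the argument is the same.
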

\begin{proof}[Proof of Lemma \ref{lem:nc}]
Recall the definition of $\mathcal{B}$ is
\[
\mathcal{B}=\left\{ \left(\theta,\Pi\right)\in\Theta\times\Phi^{N\times T}:\left\Vert \Pi\right\Vert _{*}\leq\nu^{-1}\left(\bar{\mathcal{L}}\left(\theta_{0},\Pi_{0}\right)+1\right)+\left\Vert \Pi_{0}\right\Vert _{*}\right\}
\]
Since $\bar{\mathcal{L}}\left(\theta_{0},\Pi_{0}\right)$ is nonnegative,
the first statement is clear, i.e., $\left(\theta_{0},\Pi_{0}\right)\in\mathcal{B}$.
For the second statement, since $0\geq\mathcal{L}_{NT}\left(\hat{\theta},\hat{\Pi}\right)-\mathcal{L}_{NT}\left(\theta_{0},\Pi_{0}\right)+\nu\left(\|\hat{\Pi}\|_{*}-\|\Pi_{0}\|_{*}\right)$
and $\mathcal{L}_{NT}$ is nonnegative, we have $\nu\|\hat{\Pi}\|_{*}\leq\bar{\mathcal{L}}\left(\theta_{0},\Pi_{0}\right)+o_{p}\left(1\right)+\nu\|\Pi_{0}\|_{*}$.
Hence, $\left(\hat{\theta},\hat{\Pi}\right)\in\mathcal{B}$ with probability
approaching one.
\end{proof}
\begin{proof}[Proof of Lemma \ref{lem:consistency-2}]
Let $\eta>0$. By Assumption \ref{assu:id-2}, there is $\varepsilon>0$
such that for $\left(\theta,\Pi\right)\in\Theta\times\Phi^{N\times T}$
with $\left\Vert \theta-\theta_{0}\right\Vert ^{2}+\frac{1}{NT}\left\Vert \Pi-\Pi_{0}\right\Vert _{F}^{2}\geq\eta^{2}$,
we have $\bar{\mathcal{L}}\left(\theta,\Pi\right)-\bar{\mathcal{L}}\left(\theta_{0},\Pi_{0}\right)\geq\varepsilon$.
Thus, by the union bound, 
\begin{align*}
 & \mathrm{P}\left(\left\Vert \theta-\theta_{0}\right\Vert ^{2}+\frac{1}{NT}\left\Vert \Pi-\Pi_{0}\right\Vert _{F}^{2}\geq\eta^{2}\right)\\
 & \quad\leq\mathrm{P}\left(\left\{ \bar{\mathcal{L}}\left(\theta,\Pi\right)-\bar{\mathcal{L}}\left(\theta_{0},\Pi_{0}\right)\geq\varepsilon\right\} \cap\left\{ \left(\hat{\theta},\hat{\Pi}\right)\in\mathcal{B}\right\} \right)+\mathrm{P}\left(\left(\hat{\theta},\hat{\Pi}\right)\notin\mathcal{B}\right)
\end{align*}
Since $\left(\theta_{0},\Pi_{0}\right)\in\mathcal{B}$, we have $\mathcal{L}_{NT}\left(\theta_{0},\Pi_{0}\right)=\bar{\mathcal{L}}\left(\theta_{0},\Pi_{0}\right)+o_{p}\left(1\right)$
by Assumption \ref{assu:obj-2}, and $\mathrm{P}\left(\left(\hat{\theta},\hat{\Pi}\right)\notin\mathcal{B}\right)=o\left(1\right)$
by Proposition \ref{lem:nc}.

We have $\bar{\mathcal{L}}\left(\hat{\theta},\hat{\Pi}\right)\geq\bar{\mathcal{L}}\left(\theta_{0},\Pi_{0}\right)$
and $\mathcal{L}_{NT}\left(\hat{\theta},\hat{\Pi}\right)-\mathcal{L}_{NT}\left(\theta_{0},\Pi_{0}\right)\leq\nu\left(\|\Pi_{0}\|_{*}-\|\hat{\Pi}\|_{*}\right)$.
It follows that, on the event $\left\{ \left(\hat{\theta},\hat{\Pi}\right)\in\mathcal{B}\right\} $, 

\begin{align*}
0 & \leq\bar{\mathcal{L}}\left(\hat{\theta},\hat{\Pi}\right)-\bar{\mathcal{L}}\left(\theta_{0},\Pi_{0}\right)\\
 & \leq2\sup_{\left(\theta,\Pi\right)\in\mathcal{B}}\left|\mathcal{L}_{NT}\left(\theta,\Pi\right)-\bar{\mathcal{L}}\left(\theta,\Pi\right)\right|+\nu\left(\|\Pi_{0}\|_{*}-\|\hat{\Pi}\|_{*}\right)\\
 & \leq2\sup_{\left(\theta,\Pi\right)\in\mathcal{B}}\left|\mathcal{L}_{NT}\left(\theta,\Pi\right)-\bar{\mathcal{L}}\left(\theta,\Pi\right)\right|+\nu\|\Pi_{0}\|_{*}
\end{align*}
Therefore,
\begin{align*}
 & \mathrm{P}\left(\left\{ \bar{\mathcal{L}}\left(\theta,\Pi\right)-\bar{\mathcal{L}}\left(\theta_{0},\Pi_{0}\right)\geq\varepsilon\right\} \cap\left\{ \left(\hat{\theta},\hat{\Pi}\right)\in\mathcal{B}\right\} \right)\\
 & \quad\leq\mathrm{P}\left(\left\{ 2\sup_{\left(\theta,\Pi\right)\in\mathcal{B}}\left|\mathcal{L}_{NT}\left(\theta,\Pi\right)-\bar{\mathcal{L}}\left(\theta,\Pi\right)\right|+\nu\|\Pi_{0}\|_{*}\geq\varepsilon\right\} \cap\left\{ \left(\hat{\theta},\hat{\Pi}\right)\in\mathcal{B}\right\} \right)\\
 & \quad\leq\mathrm{P}\left(2\sup_{\left(\theta,\Pi\right)\in\mathcal{B}}\left|\mathcal{L}_{NT}\left(\theta,\Pi\right)-\bar{\mathcal{L}}\left(\theta,\Pi\right)\right|+\nu\|\Pi_{0}\|_{*}\geq\varepsilon\right)
\end{align*}

In spirit of Assumption \ref{assu:obj-2} and that $\nu\left\Vert \Pi_{0}\right\Vert _{*}=o\left(1\right)$,
the above term goes to 0. Thus, the desired result follow.
\end{proof}

 \subsection{Proof of Results in Section~\ref{subsec:rate}}

Recall the definition of norm $\rho\left(\cdot,\cdot\right)$ as $\rho\left(\delta,\Delta\right)=\left[\left\Vert \delta\right\Vert ^{2}+\frac{1}{NT}\left\Vert \Delta\right\Vert _{*}^{2}\right]^{1/2}$.
The empirical risk function over some bounded set $\mathcal{\overline{A}}$
is given by
\[
\epsilon(\gamma):=\underset{\substack{\left(\delta,\Delta\right)\in\mathcal{\mathcal{\overline{A}}}}
}{\sup}\left|\mathcal{\tilde{L}}_{NT}\left(\theta_{0}+\delta,\Pi_{0}+\Delta\right)-\mathcal{\tilde{L}}_{NT}\left(\theta_{0},\Pi_{0}\right)\right|
\]
where 
\[
\mathcal{\overline{A}}=\left\{ \left(\delta,\Delta\right)\in\mathbb{R}^{p}\times\mathbb{R}^{N\times T}:\left(\theta_{0}+\delta,\Pi_{0}+\Delta\right)\in\Theta\times\Phi^{N\times T}\right\} \cap\left\{ \rho^{2}\left(\delta,\Delta\right)\leq\gamma^{2}\right\} 
\]
The following auxiliary lemma is used to establish Proposition~\ref{prop:ep}.
\begin{lemma}
\label{lem:emp}Under Assumptions \ref{assu:data} and \ref{assu:Lip},
$\mathrm{P}\left(\epsilon(\gamma)\geq\kappa\right)=\zeta_{2,NT}$
for any $\kappa>0$, where \textup{$\zeta_{2,NT}$} is defined in
(\ref{eq:zeta2}). Furthermore, with probability at least $1-\zeta_{3,NT}$,
\[
\epsilon(\gamma)\leq\frac{\psi_{NT}\sqrt{c_{T}}}{\sqrt{N\land d_{T}}}\gamma
\]
where \textup{$\zeta_{3,NT}$ is defined in }(\ref{eq:zeta3}) and
is independent of $\gamma$. Moreover, \textup{$\zeta_{2,NT}$ and}
$\zeta_{3,NT}$ approach zero when $\left(N,T\right)\rightarrow\infty$.
\end{lemma}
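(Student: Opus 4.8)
The plan is to view $\epsilon(\gamma)$ as the supremum of a centered empirical process indexed by $\overline{\mathcal{A}}=\{(\delta,\Delta):\rho(\delta,\Delta)\le\gamma\}$ and to control it by combining three ingredients: (i) the Berbee blocking device already set up above, which reduces the $\beta$-mixing problem to one with independent blocks at the cost of a coupling error governed by the mixing coefficients; (ii) symmetrization together with the Ledoux--Talagrand contraction inequality, which exploits the Lipschitz bound of Assumption~\ref{assu:Lip} to strip the loss and leave a process linear in $(\delta,\Delta)$; and (iii) duality between the nuclear and spectral norms, which turns the $\Delta$-part of that linear process into the operator norm of a random matrix with independent rows. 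Once an in-expectation bound is in hand, a concentration inequality for sums of (truncated) independent block terms upgrades it to the stated high-probability statement, and the several error terms are collected into $\zeta_{2,NT}$ and $\zeta_{3,NT}$.

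First I would split the centered increment into the contributions of the $d_T$ odd blocks $\{H_l\}$, the $d_T$ even blocks $\{H_l'\}$, and the remainder block $H_R$ of length $T-2c_Td_T<2c_T$. By Berbee's coupling lemma there is the independent array $\{W^*_{it}\}$ described above with $\mathrm{P}(\{W_{it}\}\neq\{W^*_{it}\})\lesssim Nd_T\,e^{-\mu c_T}$; with $c_T=\lceil 2\mu^{-1}\log(NT)\rceil$ this is $O((NT)^{-1})$ and enters both $\zeta_{2,NT}$ and $\zeta_{3,NT}$. The remainder block, of relative length $O(T^{-1}\log(NT))$, contributes a lower-order term absorbed into the stated rate, so it suffices to treat one parity sum on the coupled array, which is a sum of $Nd_T$ independent block functionals. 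Symmetrizing with one Rademacher sign per $(i,\mathrm{block})$ pair and applying the contraction inequality to the maps $(\delta,\Delta)\mapsto \ell(W_{it};\theta_0+\delta,\pi_{0,it}+\Delta_{it})-\ell_{it}$, whose weighted Lipschitz constant is $L(W_{it})$ and which vanish at $(\delta,\Delta)=0$, leaves a linear process $\langle a,\delta\rangle+\langle B,\Delta\rangle$. Then $|\langle a,\delta\rangle|\le\|a\|\,\gamma$ with $\mathbb{E}\|a\|=O(\sqrt{c_T\,NT})$, which contributes a term of smaller order $\sqrt{c_T/(NT)}\,\gamma$; while by duality $|\langle B,\Delta\rangle|\le\|B\|\,\|\Delta\|_*\le\sqrt{NT}\,\|B\|\,\gamma$, using $\rho(\delta,\Delta)\le\gamma$.

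It remains to bound $\mathbb{E}\|B\|$, where $B=\sum_{i,p}\varepsilon_{ip}\,e_i\big(\sum_{t\in H_p}L(W_{it})e_t\big)^{\prime}$ is a sum of independent matrices, each supported on a single row and a block of $c_T$ columns. A matrix Khintchine / Rudelson--Vershynin bound gives $\mathbb{E}\|B\|\lesssim \sqrt{\log(N+T)}\,\sqrt{c_T\,(N\vee d_T)\,\mathbb{E}[L^2]}$, up to a lower-order term from the envelope $\max_{i,t}L(W_{it})$, which is controlled by truncation using $\mathbb{E}[L^4]<\infty$. Since $\log(N+T)\asymp c_T$ and $NT\asymp Nc_Td_T$, dividing by $NT$ yields $\mathbb{E}\,\epsilon(\gamma)\lesssim \psi_{NT}\sqrt{c_T}\,\gamma/\sqrt{N\wedge d_T}$, with the slowly diverging $\psi_{NT}$ absorbing the logarithmic corrections, the contraction constants, and a $\log(c_T+1)$ term from the discretization. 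Finally, because $\epsilon(\gamma)$ is a supremum of sums of truncated independent block terms, a bounded-differences / Talagrand inequality gives $\epsilon(\gamma)\le \mathbb{E}\,\epsilon(\gamma)+(\text{deviation})$ with probability at least $1-\zeta_{3,NT}$, where $\zeta_{3,NT}$ — and a fortiori $\zeta_{2,NT}$ for the crude tail $\mathrm{P}(\epsilon(\gamma)\ge\kappa)$ — collects the coupling error, the truncation tail, and the exponential deviation tail, each $o(1)$ as $(N,T)\to\infty$. The main obstacle is step (iii): obtaining the correct $\sqrt{c_T}/\sqrt{N\wedge d_T}$ scaling of $\mathbb{E}\|B\|$ \emph{uniformly} over $\overline{\mathcal{A}}$ while $L$ has only a finite fourth moment, which forces a careful truncation and a matrix (rather than scalar) concentration bound, together with the bookkeeping of how the block length $c_T$ enters the effective sample size $N\wedge d_T$.
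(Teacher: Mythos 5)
Your proposal follows the same architecture as the paper's proof: Berbee blocking to reduce to independent blocks with a coupling error $2pNd_T\beta(c_T)$, symmetrization, reduction via the Lipschitz bound to a linear process in $(\delta,\Delta)$ split into a $\delta$-part and a $\Delta$-part, and nuclear/spectral duality to convert the $\Delta$-part into the operator norm of a Rademacher random matrix whose expected norm scales like $\sqrt{c_T(N\vee d_T)}$ up to logs — exactly the mechanism by which the paper obtains the $\sqrt{c_T}/\sqrt{N\wedge d_T}$ rate. The differences are in the tools, not the strategy. Where you invoke the Ledoux--Talagrand contraction inequality, the paper passes from the symmetrized process to the two linear processes $B_1^0,B_2^0$ directly via Assumption~\ref{assu:Lip} and a union bound (your route is the more standard way to justify that step). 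Where you propose matrix Khintchine / Rudelson--Vershynin for $\mathbb{E}\|B\|$ with whole blocks kept intact, the paper instead slices each block by within-block position $m\in[c_T]$, takes a union bound over $m$ (which is where the $\log(c_T+1)$ absorbed by $\psi_{NT}$ originates), and for each $m$ bounds the spectral norm of an $N\times d_T$ matrix with independent entries using Chatterjee's expectation bound and Vu's concentration result, feeding both into a moment-generating-function tail bound rather than your expectation-plus-Talagrand route. Your arithmetic for the rate checks out ($\log(N+T)\asymp c_T$ and $NT\asymp Nc_Td_T$ give the same $\sqrt{c_T}/\sqrt{N\wedge d_T}$), and you correctly identify the one genuinely delicate point — that $L$ has only a finite fourth moment, so the envelope must be handled by truncation or, as the paper does, by conditioning on the uniform-moment event $\Omega_1$. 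Either toolkit yields the stated tail bounds defining $\zeta_{2,NT}$ and $\zeta_{3,NT}$.
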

\begin{proof}[Proof of Lemma \ref{lem:emp}] First, note that under Assumption~\ref{assu:Lip}, $L(W_{it})$ has a uniform fourth-moment bound for all $(i,t)$. Thus, define
$\Omega_1=\left\{\max _{i \in[N], t \in[T]} \mathbb{E}\left[L\left(W_{i t}\right)^4\right] \leq C_L^4\right\}
$, where $C_{L} < \infty$.
We want to bound the empirical risk function $\epsilon(\gamma)$.
To do so, we split the proof into five steps.

\textbf{Step 1:} To begin with, write $\tilde{\theta}=\theta_{0}+\delta$
and $\tilde{\pi}_{it}=\pi_{0,it}+\Delta_{it}$. For any $\kappa>0$,
we have 
\begin{equation}
\begin{aligned}\mathrm{P}(\epsilon(\gamma)\sqrt{NT}\geq\kappa)\leq & 2\mathrm{P}\left(\sup_{\left(\delta,\Delta\right)\in\mathcal{\overline{A}}}\frac{1}{\sqrt{Nd_{T}}}\left|\sum_{i=1}^{N}\sum_{l=1}^{d_{T}}\sum_{t\in H_{l}}\frac{Z_{it}^{*}(\delta,\Delta_{it})}{\sqrt{c_{T}}}\right|\geq\frac{\kappa}{3}\right)\\
+ & \mathrm{P}\left(\sup_{\left(\delta,\Delta\right)\in\mathcal{\overline{A}}}\frac{1}{\sqrt{Nd_{T}}}\left|\sum_{i=1}^{N}\sum_{t\in H_{R}}\frac{Z_{it}^{*}(\delta,\Delta_{it})}{\sqrt{c_{T}}}\right|\geq\frac{\kappa}{3}\right)+2pNd_{T}\beta\left(c_{T}\right)\\
\coloneqq & \;2A_{1}+A_{2}+2pNd_{T}\beta\left(c_{T}\right)
\end{aligned}
\label{lemma:eqn1}
\end{equation}
where we define $Z_{it}^{*}(\delta,\Delta_{it})$ as 
\[
Z_{it}^{*}(\delta,\Delta_{it})=\ell\left(W_{it}^{*};\tilde{\theta},\tilde{\pi}_{it}\right)-\ell\left(W_{it}^{*};\theta_{0},\pi_{0,it}\right)-\mathbb{E}\left[\ell\left(W_{it}^{*};\tilde{\theta},\tilde{\pi}_{it}\right)-\ell\left(W_{it}^{*};\theta_{0},\pi_{0,it}\right)\right]
\]
Here, we use Berbee's coupling lemma (see Lemma 2.1
in \citet{berbee_convergence_1987}. Note that the lemma is also valid for non-stationary
sequences. Next, we proceed to bound $A_{1}$ and $A_{2}$ in (\ref{lemma:eqn1}).

Before bounding $A_1$, note that $\|\Delta\|_F \leq\|\Delta\|_*$. Then, by Assumption~\ref{assu:Lip}, we have the following
bound on the variance of the process,
\begin{align*}
 & \sup _{(\delta, \Delta) \in \overline{\mathcal{A}}} \mathbb{E}\left[\frac{1}{N T} \sum_{i=1}^N \sum_{t=1}^T\left\{\ell\left(W_{i t}^* ; \tilde{\theta}, \tilde{\pi}_{i t}\right)-\ell\left(W_{i t}^*; \theta_0, \pi_{0, i t}\right)\right\}^2\right]
\\
\leq &
\sup _{(\delta, \Delta) \in \overline{\mathcal{A}}} \mathbb{E}\left[\frac{1}{N T} \sum_{i=1} \sum_{t=1}\left\{L\left(W_{i t}\right)\left[\left\|\theta_1-\theta_2\right\|+\left|\pi_1-\pi_2\right|\right]\right\}^2\right]
\leq \tilde{C}_{L}^{2}\gamma^{2}
\end{align*}
where $\tilde{C}_{L}=\sqrt{2}C_{L}$, and $\max_{i\in[N],t\in[T]}\mathbb{E}\left[L(W_{it})^{4}\right]\leq C_{L}^{4}$. 

For any fixed $\delta$ and $\Delta$ with $\left\Vert \delta\right\Vert ^{2}+\frac{1}{NT}\left\Vert \Delta\right\Vert _{*}^{2}\leq\gamma^{2}$,
we have the following bound on the variance of the process,
\[
\begin{aligned}\underset{\left(\delta,\Delta\right)\in\mathcal{\overline{A}}}{\sup}\operatorname{Var}\left(\sum_{i=1}^{N}\sum_{l=1}^{d_{T}}\sum_{t\in H_{l}}Z_{it}^{*}\left(\delta,\Delta_{it}\right)\right)\leq NT\tilde{C}_{L}^{2}\gamma^{2}\end{aligned}
\]

Let $\left\{ \varepsilon_{i,l}\right\} _{i\in[N],l\in\left[d_{T}\right]}$
be i.i.d Rademacher variables independent of the data. Given that
$\mathbb{E}\left[Z_{it}^{*}\left(\delta,\Delta_{it}\right)\right]=0$
by construction, we apply the symmetrization lemma as in Lemma 2.3.7
in \citet{van_der_vaart_weak_1996},
\[
\begin{aligned} & \mathrm{P}\left(\sup_{\left(\delta,\Delta\right)\in\mathcal{\overline{A}}}\frac{1}{\sqrt{Nd_{T}}}\left|\sum_{i=1}^{N}\sum_{l=1}^{d_{T}}\sum_{t\in H_{l}}\frac{Z_{it}^{*}(\delta,\Delta_{it})}{\sqrt{c_{T}}}\right|\geq\kappa\right)\\
 & \quad\leq\frac{\mathrm{P}\left(A^{0}(\gamma)\geq\frac{\kappa}{4}\mid\Omega_{1}\right)+P\left(\Omega_{1}^{c}\right)}{1-\frac{4}{NT\kappa^{2}}\sup_{\left(\delta,\Delta\right)\in\mathcal{\overline{A}}}\operatorname{Var}\left(\sum_{i=1}^{N}\sum_{l=1}^{d_{T}}\sum_{t\in H_{l}}Z_{it}^{*}(\delta,\Delta_{it})\right)}\\
 & \quad\leq\frac{\mathrm{P}\left(A^{0}(\gamma)\geq\frac{\kappa}{4}\mid\Omega_{1}\right)+P\left(\Omega_{1}^{c}\right)}{1-4\tilde{C}_{L}^{2}\gamma^{2}/\kappa^{2}},
\end{aligned}
\]

where
\[
A^{0}(\gamma)\coloneqq\sup_{\left(\delta,\Delta\right)\in\mathcal{\overline{A}}}\left|\frac{1}{\sqrt{Nd_{T}}}\sum_{i=1}^{N}\sum_{l=1}^{d_{T}}\varepsilon_{i,l}\left(\sum_{t\in H_{l}}\frac{\ell\left(W_{it}^{*};\tilde{\theta},\tilde{\pi}_{it}\right)-\ell\left(W_{it}^{*};\theta_{0},\pi_{0,it}\right)}{\sqrt{c_{T}}}\right)\right|
\]
Next, we define $B_{1}^{0}(\gamma)$ and $B_{2}^{0}(\gamma)$ as:
\[
B_{1}^{0}(\gamma)=\sup_{\left(\delta,\Delta\right)\in\mathcal{\overline{A}}}\left|\frac{1}{\sqrt{Nd_{T}}}\sum_{i=1}^{N}\sum_{l=1}^{d_{T}}\varepsilon_{i,l}\left(\sum_{t\in H_{l}}\frac{L\left(W_{it}^{*}\right)\left\Vert \delta \right\Vert}{\sqrt{c_{T}}}\right)\right|
\]
and 
\[
B_{2}^{0}(\gamma)=\sup_{\left(\delta,\Delta\right)\in\mathcal{\overline{A}}}\left|\frac{1}{\sqrt{Nd_{T}}}\sum_{i=1}^{N}\sum_{l=1}^{d_{T}}\varepsilon_{i,l}\left(\sum_{t\in H_{l}}\frac{L\left(W_{it}^{*}\right)\Delta_{it}}{\sqrt{c_{T}}}\right)\right|
\]
By Assumption~\ref{assu:Lip} and union bound we obtain that 
\begin{equation}
\mathrm{P}\left(A^{0}(\gamma)\geq\kappa\mid\Omega_{1}\right)\leq\mathrm{P}\left(B_{1}^{0}(\gamma)\geq\kappa\mid\Omega_{1}\right)+\mathrm{P}\left(B_{2}^{0}(\gamma)\geq\kappa\mid\Omega_{1}\right)\label{lemma:eqn3}
\end{equation}
It remains to bound two terms in the RHS of (\ref{lemma:eqn3}).

\textbf{Step 2:} We first consider the bound on $B_{1}^{0}(\gamma)$.
The argument is similar to the proof of Lemma D.3 in \citet{belloni_high-dimensional_2018}.
Note that 
\begin{align*}
B_{1}^{0}(\gamma) & \leq c_{T}\max_{m\in\left[c_{T}\right]}\sup_{\left(\delta,\Delta\right)\in\mathcal{\overline{A}}}\left|\frac{1}{\sqrt{Nd_{T}}}\sum_{i=1}^{N}\sum_{l=0}^{d_{T}-1}\varepsilon_{i,l}\frac{L\left(W_{i,2lc_{T}+m}^{*}\right)\left\Vert \delta \right\Vert}{\sqrt{c_{T}}}\right|\\
 & =\max_{m\in\left[c_{T}\right]}\sup_{\left(\delta,\Delta\right)\in\mathcal{\overline{A}}}\left|\frac{\sqrt{c_{T}}}{\sqrt{Nd_{T}}}\sum_{i=1}^{N}\sum_{l=0}^{d_{T}-1}\varepsilon_{i,l}L\left(W_{i,2lc_{T}+m}^{*}\right)\left\Vert \delta \right\Vert\right|
\end{align*}
Applying Markov's inequality and the union bound, we
get
\begin{align*}
 & \mathrm{P}\left( B_{1}^{0}(\gamma)\geq\kappa\right)\\
 & \quad\leq c_{T}\max_{m\in\left[c_{T}\right]}\mathrm{P}\left( \sup_{\left(\delta,\Delta\right)\in\mathcal{\overline{A}}}\left|\frac{\sqrt{c_{T}}}{\sqrt{Nd_{T}}}\sum_{i=1}^{N}\sum_{l=0}^{d_{T}-1}\varepsilon_{i,l}L\left(W_{i,2lc_{T}+m}^{*}\right)\left\Vert \delta \right\Vert\right|\geq\kappa\right)\\
 & \quad\leq c_{T}\max_{m\in\left[c_{T}\right]}\inf_{\tau>0}\exp\left(-\tau\kappa\right)\mathbb{E}\left[\exp\left(\tau\sup_{\left(\delta,\Delta\right)\in\mathcal{\overline{A}}}\left|\frac{\sqrt{c_{T}}}{\sqrt{Nd_{T}}}\sum_{i=1}^{N}\sum_{l=0}^{d_{T}-1}\varepsilon_{i,l}L\left(W_{i,2lc_{T}+m}^{*}\right)\left\Vert \delta \right\Vert\right|\right)\right]
\end{align*}
Now, fix $m\in\left[c_{T}\right]$. In this step, we will condition
throughout on $\left\{ W_{it}^{*}\right\} $ but omit explicit notation
for this conditioning to keep the notation lighter. We have
\begin{align*}
 & \mathbb{E}\left[\exp\left(\tau\sup_{\left(\delta,\Delta\right)\in\mathcal{\overline{A}}}\left|\frac{\sqrt{c_{T}}}{\sqrt{Nd_{T}}}\sum_{i=1}^{N}\sum_{l=0}^{d_{T}-1}\varepsilon_{i,l}L\left(W_{i,2lc_{T}+m}^{*}\right)\left\Vert \delta \right\Vert\right|\right)\right]\\
 & \quad\leq2\mathbb{E}\left[\exp\left(\frac{\tau\gamma\sqrt{c_{T}}}{\sqrt{Nd_{T}}}\sum_{i=1}^{N}\sum_{l=0}^{d_{T}-1}\varepsilon_{i,l}L\left(W_{i,2lc_{T}+m}^{*}\right)\right)\right]\\
 & \quad\leq2\exp\left(\tau^{2}c_{T}\gamma^{2}C_{L}^{2}\right)
\end{align*}
The first inequality follows from
the definition of $\mathcal{\overline{A}}$
and the last inequality follows from the law of iterated expectations
combined with both Hoeffding inequality and Chebyshev's inequality.
Hence, 
\begin{align*}
 & \mathrm{P}\left( B_{1}^{0}(\gamma)\geq\kappa\right)\\
 & \quad\leq2c_{T}\inf_{\tau>0}\exp\left(-\tau\kappa\right)\exp\left(\tau^{2}c_{T}\gamma^{2}C_{L}^{2}\right)\\
 & \quad\leq2 c_{T}\exp\left(-\frac{\kappa^{2}}{4c_{T}\gamma^{2}C_{L}^{2}}\right)
\end{align*}

\textbf{Step 3:} Here, we bound $B_{2}^{0}(\gamma)$. First, by the
definition of $\mathcal{\overline{A}}$, we have $\sup_{\left(\delta,\Delta\right)\in\mathcal{\overline{A}}}\|\Delta\|_{*}\leq\sqrt{NT}\gamma$
and we define $\zeta_{1,NT}=\sqrt{NT}$.

Notice that
\[
\begin{aligned}B_{2}^{0}(\gamma)\leq & c_{T}\max_{m\in\left[c_{T}\right]}\sup_{\left(\delta,\Delta\right)\in\mathcal{\overline{A}}}\left|\frac{1}{\sqrt{Nd_{T}}}\sum_{i=1}^{N}\sum_{l=0}^{d_{T}-1}\varepsilon_{i,l}\frac{L\left(W_{i,2lc_{T}+m}^{*}\right)\Delta_{i,2lc_{T}+m}}{\sqrt{c_{T}}}\right|\\
= & \max_{m\in\left[c_{T}\right]}\sup_{\left(\delta,\Delta\right)\in\mathcal{\overline{A}}}\left|\frac{\sqrt{c_{T}}}{\sqrt{Nd_{T}}}\sum_{i=1}^{N}\sum_{l=0}^{d_{T}-1}\varepsilon_{i,l}L\left(W_{i,2lc_{T}+m}^{*}\right)\Delta_{i,2lc_{T}+m}\right|
\end{aligned}
\]
We proceed to bound the moment generating function of $B_{2}^{0}(\gamma)$
and use that to obtain an upper bound on $B_{2}^{0}(\gamma)$. Now
fix $m\in\left[c_{T}\right]$ ,
\[
\begin{aligned} & \mathbb{E}\left[\exp\left(\tau\sup_{\left(\delta,\Delta\right)\in\mathcal{\overline{A}}}\left|\frac{\sqrt{c_{T}}}{\sqrt{Nd_{T}}}\sum_{i=1}^{N}\sum_{l=0}^{d_{T}-1}\varepsilon_{i,l}L\left(W_{i,2lc_{T}+m}^{*}\right)\Delta_{i,2lc_{T}+m}\right|\right)\right]\\
 & \quad\leq\mathbb{E}\left[\sup_{\left(\delta,\Delta\right)\in\mathcal{\overline{A}}}\exp\left(\tau\frac{\sqrt{c_{T}}}{\sqrt{Nd_{T}}}\sum_{i=1}^{N}\sum_{l=0}^{d_{T}-1}C_{L} \left\Vert \left(\varepsilon_{il}L\left(W_{i,2lc_{T}+m}^{*}\right)\right)_{i,l}\right\Vert  \|\Delta\|_{*}\right)\right]\\
 & \mathbb{\quad\leq E}\left[\sup_{\left(\delta,\Delta\right)\in\mathcal{\overline{A}}}\exp\left(\tau\frac{\sqrt{c_{T}}}{\sqrt{Nd_{T}}}C_{L}\mathbb{E}\left\Vert \left(\varepsilon_{il}L\left(W_{i,2lc_{T}+m}^{*}\right)\right)_{i,l}\right\Vert  \|\Delta\|_{*}\right)\right.\\
 & \quad\cdot\left.\sup_{\left(\delta,\Delta\right)\in\mathcal{\overline{A}}}\exp\left(\tau\frac{\sqrt{c_{T}}}{\sqrt{Nd_{T}}}C_{L}\left(\left\Vert \left(\varepsilon_{il}L\left(W_{i,2lc_{T}+m}^{*}\right)\right)_{i,l}\right\Vert -\mathbb{E}\left\Vert \left(\varepsilon_{il}L\left(W_{i,2lc_{T}+m}^{*}\right)\right)_{i,l}\right\Vert \right)\|\Delta\|_{*}\right)\right]\\
 & \quad\leq\exp\left(\overline{c}\tau\frac{\sqrt{c_{T}}}{\sqrt{Nd_{T}}}C_{L}\zeta_{1,NT}\left(\sqrt{N}\lor\sqrt{d_{T}}\right)\gamma\right)\exp\left(\overline{c}\tau^{2}\frac{c_{T}}{Nd_{T}}C_{L}^{2}\zeta_{1,NT}^{2}\gamma^{2}\right)
\end{aligned}
\]
for a positive constant $\overline{c}>0$. The first inequality holds
due to the duality between the spectral and nuclear norms, while the
second inequality follows from the triangle inequality. The two upper
bounds in the final inequality are derived from Theorem 3.4 in \citet{chatterjee_matrix_2015}
and Theorem 1.2 in \citet{vu_spectral_2007}, respectively.

Therefore, by Markov inequality, we have \[
\begin{aligned} & \mathrm{P}\left(B_{2}^{0}(\gamma)\geq\kappa\mid\Omega_{1}\right)\\
 & \quad\leq c_{T} \max_{m\in\left[c_{T}\right]}\mathrm{P}\left( \sup_{\left(\delta,\Delta\right)\in\mathcal{\overline{A}}}\left|\frac{\sqrt{c_{T}}}{\sqrt{Nd_{T}}}\sum_{i=1}^{N}\sum_{l=0}^{d_{T}-1}\varepsilon_{i,l}L\left(W_{i,2lc_{T}+m}^{*}\right)\Delta_{di,2lc_{T}+m}\right|\geq\kappa\right)\\
 & \quad\leq c_{T}\inf_{\tau>0}\exp\left(-\tau\kappa\right)\exp\left(\overline{c}\tau\frac{\sqrt{c_{T}}}{\sqrt{Nd_{T}}}C_{L}\zeta_{1,NT}\gamma\left(\sqrt{N}\lor\sqrt{d_{T}}\right)\right)\exp\left(\overline{c}\tau^{2}\frac{c_{T}}{Nd_{T}}C_{L}^{2}\zeta_{1,NT}^{2}\gamma^{2}\right)\\
 & \quad\leq\overline{c}c_{T}\exp\left(-\frac{\sqrt{N\land d_{T}}\kappa}{\sqrt{c_{T}}C_{L}\zeta_{1,NT}\gamma}\right)
\end{aligned}
\]
for a positive constant $\overline{c}>0$. Similarly, we repeat the
arguments above to bound $A_{2}$ in (\ref{lemma:eqn1}).

\textbf{Step 4: }To prove the first part of Lemma \ref{lem:emp},
note that for any positive $\gamma$ and $\kappa>0$, we have
\begin{align}
 & \mathrm{P}(\epsilon(\gamma)\geq\kappa)\nonumber \\
 &  \quad\leq3\frac{\overline{c}\exp\left(-\frac{NT\kappa^{2}}{c_{T}C_{L}^{2}\gamma^{2}}+\ln\left(pc_{T}\right)\right)+\overline{c}\exp\left(-\frac{\sqrt{N\land d_{T}}\kappa}{\sqrt{c_{T}}C_{L}\gamma}+\log\left(c_{T}\right)\right)}{1-\overline{c}\tilde{C}_{L}^{2}\gamma^{2}/\left(NT\kappa^{2}\right)}\nonumber \\
 & \quad+2pNd_{T}\beta\left(c_{T}\right)\coloneqq\zeta_{2,NT}\label{eq:zeta2}
\end{align}

\textbf{Step 5:} To prove the second part of Lemma \ref{lem:emp},
we set 
\[
\kappa=\frac{\psi_{NT}\sqrt{c_{T}}\zeta_{1,NT}}{\sqrt{N\land d_{T}}}\gamma
\]
From this, we obtain that

\begin{align}
 & \mathrm{P}(\epsilon(\gamma)\sqrt{NT}\geq\kappa)\nonumber \\
 &  \quad\leq3\frac{\overline{c}\exp\left(-\psi_{NT}^{2}c_{T}\left(N\lor d_{T}\right)+\log\left(pc_{T}\right)\right)+\overline{c}\exp\left(-\psi_{NT}+\log\left(c_{T}\right)\right)}{1-\overline{c}\left[\psi_{NT}^{2}c_{T}\left(N\lor T\right)\right]^{-1}}\nonumber \\
 & \quad+2pNd_{T}\beta\left(c_{T}\right)\coloneqq\zeta_{3,NT}\label{eq:zeta3}
\end{align}
For fixed $\kappa>0$, as $\left(N,T\right)\rightarrow\infty$, \eqref{eq:zeta3}
converges to zero by the definition of $\psi_{NT}$. Therefore, \[
\epsilon(\gamma)\leq\kappa/\sqrt{NT}=\frac{\psi_{NT}\sqrt{c_{T}}}{\sqrt{N\land d_{T}}}\gamma
\]
with probability at least $1-\zeta_{3,NT}$, where $\zeta_{3,NT}$
is independent of $\gamma$ and $\zeta_{3,NT}\rightarrow0$ as $\left(N,T\right)\rightarrow\infty$. 
\end{proof}

Before we proceed to prove Proposition~\ref{prop:ep}, we first bound
the empirical risk over some set $\mathcal{V}$. Define $\left|\mathcal{V}\right|$
as
\[
\left|\mathcal{V}\right|=\sup_{\left(\theta,\Pi\right)\in\mathcal{V}}\rho\left(\theta-\theta_{0},\Pi-\Pi_{0}\right)
\]

\begin{proof}[Proof of Proposition \ref{prop:ep}]
Define $\mathcal{C}=\left\{ \left(\theta,\Pi\right)\in\mathcal{V}:\rho\left(\theta-\theta_{0},\Pi-\Pi_{0}\right)\leq c_{\varepsilon,NT}\right\} $
and the collection of rings $\mathcal{C}_{k}=\left\{ \left(\theta,\Pi\right)\in\mathcal{V}:2^{k}\leq\rho\left(\theta-\theta_{0},\Pi-\Pi_{0}\right)\leq2^{k+1}\right\} ,k\in\mathbb{Z}$.
We denote

\[
 \begin{aligned} & A=\underset{\substack{\left(\theta,\Pi\right)\in\mathcal{C}}
}{\sup}\left|\mathcal{\tilde{L}}_{NT}(\theta,\Pi)-\mathcal{\tilde{L}}_{NT}(\theta_{0},\Pi_{0})\right|\\
 & A_{k}=\underset{\substack{\left(\theta,\Pi\right)\in\mathcal{C}_{k}}
}{\sup}\left|\mathcal{\tilde{L}}_{NT}(\theta,\Pi)-\mathcal{\tilde{L}}_{NT}(\theta_{0},\Pi_{0})\right|
\end{aligned}
\]
Define $u_{n}=\left\lfloor \frac{\log\left(c_{\varepsilon,NT}\right)}{\log(2)}\right\rfloor $
and $v_{n}=\left\lceil \frac{\log(\left|\mathcal{V}\right|)}{\log(2)}\right\rceil $,
assuming $c_{\varepsilon,NT}\leq\left|\mathcal{V}\right|$
(otherwise, only $A$ needs to be bounded). Since $\left\{ \rho\left(\delta,\Delta\right)\leq c_{\varepsilon,NT}\right\} \cup\left\{ \cup_{k=u_{n}}^{v_{n}}A_{k}\right\} $
covers the set $\mathcal{V}$ and because $\rho\left(\delta,\Delta\right)\geq2^{k}$
on $\mathcal{C}_{k}$, it holds that

\[
\frac{\left|\mathcal{\tilde{L}}_{NT}(\theta,\Pi)-\mathcal{\tilde{L}}_{NT}(\theta_{0},\Pi_{0})\right|}{\rho\left(\theta-\theta_{0},\Pi-\Pi_{0}\right)\vee c_{\varepsilon,NT}}\leq\left(A\left(c_{\varepsilon,NT}\right)^{-1}\right)\vee\left(\max_{u_{n}\leq k\leq v_{n}}A_{k}2^{-k}\right)
\]
We now examine individually each term on the right-hand side. By using
the fact that $\left\Vert \Delta\right\Vert \leq\left\Vert \Delta\right\Vert _{*}$
and setting $\zeta_{1,NT}=\sqrt{NT}$, we invoke Lemma~\ref{lem:emp}.
Hence, we have 
\[
\mathrm{P}\left(A\left(c_{\varepsilon,NT}\right)^{-1}\geq \psi_{NT}c_{\varepsilon,NT}\right)\leq\zeta_{3,NT}
\]
Similarly, it holds that $\mathrm{P}\left(A_{k}\geq \psi_{NT}c_{\varepsilon,NT}2^{k+1}\right)\leq\zeta_{3,NT}$.
The union bound then yields,
\[
\mathrm{P}\left(\max_{u_{n}\leq k\leq v_{n}}A_{k}2^{-k}\geq2\psi_{NT}c_{\varepsilon,NT}\right)\leq\left(v_{n}-u_{n}+1\right)\zeta_{3,NT}
\]
The union bound and the two inequalities together give that
\begin{align*}
 & \mathrm{P}\left(\sup_{\left(\theta,\Pi\right)\in\mathcal{V}}\frac{\left|\mathcal{\tilde{L}}_{NT}\left(\theta,\Pi\right)-\mathcal{\tilde{L}}_{NT}\left(\theta_{0},\Pi_{0}\right)\right|}{\rho\left(\theta-\theta_{0},\Pi-\Pi_{0}\right)\vee c_{\varepsilon,NT}}\geq \psi_{NT}c_{\varepsilon,NT}\right)\\
 & \quad\leq\left(v_{n}-u_{n}+2\right)\zeta_{2,NT}\\
 & \quad\leq\left(4+2\log_{2}\left(\sqrt{N\land T}\right)\right)\zeta_{2,NT}\\
 & \quad\leq3\frac{\overline{c}\exp\left(-\psi_{nt}^{2}\left(N\lor d_{T}\right)+\bar{c}\log\left(pc_{T}\right)\right)+\overline{c}\exp\left(-\psi_{nt}+\bar{c}\log\left(c_{T}\right)\right)}{1-\overline{c}\left[\psi_{nt}^{2}c_{T}\left(N\lor T\right)\right]^{-1}}\\
 & \quad+2pNT\beta\left(c_{T}\right)\\
 & \quad\coloneqq\zeta_{4,NT}
\end{align*}
for some constant $\overline{c}>0$. To derive the penultimate inequality,
we use the fact that $\left\lceil x\right\rceil \leq x+1$ and that
$\left|\mathcal{V}\right|\leq\left(\sqrt{p}+\sqrt{N\land T}\right)c_{l}^{\prime}$.
Furthermore, without loss of generality, we assume that the fixed
number of coefficients $p$ is less than $N$ and $T$. Hence, the last inequality follows. As a result, the desired outcome
follows, with $\zeta_{4,NT}$ tending to zero as $\left(N,T\right)\rightarrow\infty$.
\end{proof} \begin{lemma}
\label{lemma:matrix} 
Let $\mathcal{P}(\cdot)$ and $\mathcal{M}(\cdot)$ be the projection operators defined in \eqref{eq:PM-def}. 
For any $N\times T$ matrices $\Pi_{0}$ and $\Delta$, we have the following results\\
(i) $\left\Vert \Pi_{0}+\mathcal{M}(\Delta)\right\Vert _{*}=\left\Vert \Pi_{0}\right\Vert _{*}+\|\mathcal{M}(\Delta)\|_{*}$; \\
(ii) $\|\Delta\|_{F}^{2}=\|\mathcal{M}(\Delta)\|_{F}^{2}+\|\mathcal{P}(\Delta)\|_{F}^{2}$; \\
(iii) $\operatorname{rank}(\mathcal{P}(\Delta))\leq2 \operatorname{rank}(\Pi_{0})$; \\
(iv) $\|\Delta\|_{*}^{2}\leq\|\Delta\|_{F}^{2}\operatorname{rank}(\Delta)$; \\
(v) For any conformable matrices $\Delta_{1}$ and $\Delta_{2}$,
$\left|\operatorname{tr}\left(\Delta_{1}\Delta_{2}\right)\right|\leq\left\Vert \Delta_{1}\right\Vert \left\Vert \Delta_{2}\right\Vert _{*}$. 
\end{lemma}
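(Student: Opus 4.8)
The five identities are classical facts about the singular value decomposition and about the pair of projections $\mathcal{P},\mathcal{M}$ attached to $\Pi_0$; I would prove them one at a time, each essentially self-contained, and in the order listed.

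For (i) the plan is to exploit the decomposability of the nuclear norm. Complete $U_0$ to an orthonormal basis $[\,U_0\ \ U_0^{\perp}\,]$ of $\mathbb{R}^N$ and $V_0$ to an orthonormal basis $[\,V_0\ \ V_0^{\perp}\,]$ of $\mathbb{R}^T$, and write $\Pi_0=U_0 D_0 V_0'$ with $D_0$ the $r\times r$ diagonal block of nonzero singular values. Since $\mathcal{M}(\Delta)$ lies in the subspace of matrices whose column space is orthogonal to $\mathrm{col}(U_0)$ and whose row space is orthogonal to $\mathrm{col}(V_0)$, the matrix $[\,U_0\ \ U_0^{\perp}\,]'\bigl(\Pi_0+\mathcal{M}(\Delta)\bigr)[\,V_0\ \ V_0^{\perp}\,]$ is block diagonal with diagonal blocks $D_0$ and $U_0^{\perp\prime}\mathcal{M}(\Delta)V_0^{\perp}$. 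Unitary invariance of $\|\cdot\|_*$ together with its additivity across a block-diagonal structure then yields $\|\Pi_0+\mathcal{M}(\Delta)\|_*=\|D_0\|_*+\|\mathcal{M}(\Delta)\|_*=\|\Pi_0\|_*+\|\mathcal{M}(\Delta)\|_*$. An equivalent route is to produce a single dual matrix $B$ with $\|B\|\le1$ that simultaneously certifies the nuclear norm of $\Pi_0$ on its low-rank block and of $\mathcal{M}(\Delta)$ on the complementary block.

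For (ii) I would note that $\mathcal{P}$ is an orthogonal projection of $\mathbb{R}^{N\times T}$ under the Frobenius inner product $\langle A,B\rangle=\mathrm{tr}(A'B)$: left multiplication by $U_0U_0'$ and right multiplication by $V_0V_0'$ are commuting self-adjoint idempotents, so their composition $\mathcal{P}$ is a self-adjoint idempotent and $\mathcal{M}=\mathrm{Id}-\mathcal{P}$ is the complementary orthogonal projection; hence $\langle\mathcal{P}(\Delta),\mathcal{M}(\Delta)\rangle=0$ and $\|\Delta\|_F^2=\|\mathcal{P}(\Delta)\|_F^2+\|\mathcal{M}(\Delta)\|_F^2$ by Pythagoras. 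For (iii), $\mathcal{P}(\Delta)$ is a sum of matrices each of which is obtained by multiplying $\Delta$ on the left by $U_0U_0'$ or on the right by $V_0V_0'$, and each such term has rank at most $\mathrm{rank}(\Pi_0)=r$; subadditivity of rank then gives $\mathrm{rank}(\mathcal{P}(\Delta))\le 2r$. For (iv), let $R=\mathrm{rank}(\Delta)$ with singular values $\sigma_1\ge\cdots\ge\sigma_R>0$; Cauchy--Schwarz gives $\|\Delta\|_*=\sum_{s=1}^{R}\sigma_s\le\sqrt{R}\,\bigl(\sum_{s=1}^{R}\sigma_s^2\bigr)^{1/2}=\sqrt{R}\,\|\Delta\|_F$, and squaring is the claim.

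For (v) the plan is to reduce to the duality of the spectral and nuclear norms. For conformable $\Delta_1,\Delta_2$ one has $\mathrm{tr}(\Delta_1\Delta_2)=\langle\Delta_1',\Delta_2\rangle$, and von Neumann's trace inequality bounds $|\langle A,B\rangle|$ by $\sum_s\sigma_s(A)\sigma_s(B)\le\sigma_1(A)\sum_s\sigma_s(B)$; taking $A=\Delta_1'$ (so that $\|A\|=\|\Delta_1\|$) and $B=\Delta_2$ gives $|\mathrm{tr}(\Delta_1\Delta_2)|\le\|\Delta_1\|\,\|\Delta_2\|_*$. None of these steps is genuinely hard; the one requiring the most care is (i), where the block-diagonalization must correctly use the orthogonality of the row and column spaces of $\mathcal{M}(\Delta)$ to those of $\Pi_0$, and the only nontrivial imported fact is von Neumann's trace inequality invoked in (v).
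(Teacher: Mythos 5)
The paper offers no argument of its own for this lemma (it simply cites Lemma C.2 of Chernozhukov et al.), so your fully worked proof is necessarily more explicit than the source. Parts (ii), (iv), and (v) are correct as you argue them: $\mathcal{P}$ is a self-adjoint idempotent for the trace inner product, so Pythagoras gives (ii); Cauchy--Schwarz on the nonzero singular values gives (iv); and von Neumann's trace inequality followed by $\sum_s\sigma_s(A)\sigma_s(B)\le\sigma_1(A)\sum_s\sigma_s(B)$ gives (v). The conclusion of (iii) also holds under either reading of the definitions discussed below.

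The gap is in (i), and it is exactly at the step you flag as the delicate one. Your argument rests on the claim that $\mathcal{M}(\Delta)$ has column space orthogonal to $\operatorname{col}(U_0)$ and row space orthogonal to $\operatorname{col}(V_0)$. That is false for the operators as defined in \eqref{eq:PM-def}: there $\mathcal{M}(\Delta)=\Delta-U_0U_0'\Delta V_0V_0'$, which retains the cross blocks $U_0'\Delta V_0^{\perp}$ and $U_0^{\perp\prime}\Delta V_0$, so $\Pi_0+\mathcal{M}(\Delta)$ is not block-diagonal in the completed bases. Indeed, with that literal definition (i) fails: take $\Pi_0=e_1e_1'$ and $\Delta=e_1e_2'$ in the $2\times2$ case, so that $\mathcal{M}(\Delta)=\Delta$ and $\|\Pi_0+\mathcal{M}(\Delta)\|_*=\sqrt{2}$, whereas $\|\Pi_0\|_*+\|\mathcal{M}(\Delta)\|_*=2$. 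The identity (i), and your block-diagonalization proof of it, are valid for the projector $\mathcal{M}(\Delta)=(I-U_0U_0')\Delta(I-V_0V_0')$ with $\mathcal{P}=\mathrm{Id}-\mathcal{M}$, which is the convention in the cited reference and is evidently what is intended here; your justification of (iii), which describes $\mathcal{P}(\Delta)$ as a sum of two terms each of rank at most $r$, also presupposes that convention rather than the single product $U_0U_0'\Delta V_0V_0'$ written in the paper. So your proof is the right proof of the intended statement, but the orthogonality of $\mathcal{M}(\Delta)$ to the row and column spaces of $\Pi_0$ is precisely what must be derived from the definition rather than asserted, and under the definition as printed it does not hold.
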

\begin{proof}[Proof of Lemma \ref{lemma:matrix}]
The proof follows from that of Lemma C.2 in \citet{chernozhukov_inference_2019-1}.
\end{proof}
\begin{proof}[Proof of Theorem \ref{thm:main-3}]
Let
$$
\begin{aligned} & \mathcal{A}_1=\left\{\|\delta\|^2+\frac{1}{N T}\|\Delta\|_F^2>\gamma_{NT}^2 \vee c_{\varepsilon, n t}^{\prime}\right\} \\ & \mathcal{A}_2=\{(\hat{\theta}, \hat{\Pi}) \in \mathcal{V}\} \\ & \mathcal{A}_3=\left\{\left|\widetilde{\mathcal{L}}_{N T}(\hat{\theta}, \hat{\Pi})-\widetilde{\mathcal{L}}_{N T}\left(\theta_0, \Pi_0\right)\right| \leq \psi_{NT}c_{\varepsilon,NT} \rho\left(\hat{\theta}-\theta_0, \hat{\Pi}-\Pi_0\right)\right\}
\end{aligned}$$
with $\gamma_{NT}=\frac{2\left(c_{\nu}+1\right)\sqrt{r}}{c_{RSC}}\psi_{NT}c_{\varepsilon,NT}$. 
We have 
\[
 \begin{aligned}\mathrm{P}\left(\mathcal{A}_{1}\right) & =P(\mathcal{A}_{1}\cap\mathcal{A}_{2}\cap\mathcal{A}_{3})+P\left(\mathcal{A}_{1}\cap(\mathcal{A}_{2}\cap\mathcal{A}_{3})^{c}\right)\\
 & =P(\mathcal{A}_{1}\cap\mathcal{A}_{2}\cap\mathcal{A}_{3})+P\left(\mathcal{A}_{1}\cap\left(\mathcal{\mathcal{A}}_{2}^{c}\cup\mathcal{A}_{3}^{c}\right)\right)\\
 & \leq P(\mathcal{A}_{1}\cap\mathcal{A}_{2}\cap\mathcal{A}_{3})+P\left(\mathcal{A}_{1}\cap\mathcal{A}_{2}^{c}\right)+P\left(\mathcal{A}_{1}\cap\mathcal{A}_{3}^{c}\right)\\
 & :=P_{1}+P_{2}+P_{3}
\end{aligned}
\]
where the second inequality is due to the union bound. It holds $P_{2}\rightarrow0$
by Lemma \ref{lem:consistency-2}. Concerning $P_{3}$, when conditional
on $\mathcal{A}_{1}\cap\mathcal{A}_{3}^{c}$, it holds that $\rho\left(\hat{\theta}-\theta_{0},\hat{\Pi}-\Pi_{0}\right)\vee c_{\varepsilon,NT}=\rho\left(\hat{\theta}-\theta_{0},\hat{\Pi}-\Pi_{0}\right)$.
Therefore, over $\mathcal{A}_{1}\cap\mathcal{A}_{3}^{c}$, we find
that
\[
\left|\mathcal{\tilde{L}}_{NT}\left(\theta,\Pi\right)-\mathcal{\tilde{L}}_{NT}\left(\theta_{0},\Pi_{0}\right)\right|\geq \psi_{NT}c_{\varepsilon,NT}\left(\rho\left(\theta-\theta_{0},\Pi-\Pi_{0}\right)\vee c_{\varepsilon,NT}\right)
\]
which holds with probability approaching zero by Lemma \ref{lem:emp}.
Hence, it suffices to show that $P_{1}=0$. Note that 
\begin{align*}
0 & \geq\mathcal{L}_{NT}\left(\hat{\theta},\hat{\Pi}\right)-\mathcal{L}_{NT}\left(\theta_{0},\Pi_{0}\right)+\nu\left(\|\hat{\Pi}\|_{*}-\|\Pi_{0}\|_{*}\right)\\
 & \geq\mathcal{E}(\hat{\theta},\hat{\Pi})-\mathcal{\tilde{L}}_{NT}\left(\theta_{0},\Pi_{0}\right)+ \mathcal{\tilde{L}}_{NT}\left(\hat{\theta},\hat{\Pi}\right)+\nu\left(\|\mathcal{M}\left(\Delta\right)\|_{*}-\|\mathcal{P}\left(\Delta\right)\|_{*}\right)
\end{align*}
where the first inequality holds by the definition of $\mathcal{L}_{NT}\left(\hat{\theta},\hat{\Pi}\right)$
and the second inequality holds by Lemma~\ref{lemma:matrix}. Specifically,
\begin{equation}\label{ref:aux-nu}
\begin{aligned}\|\hat{\Pi}\|_{*}-\|\Pi_{0}\|_{*}= & \|\Pi_{0}+\mathcal{M}\left(\Delta\right)+\mathcal{P}\left(\Delta\right)\|_{*}-\|\Pi_{0}\|_{*}\\
\geq & \|\Pi_{0}+\mathcal{M}\left(\Delta\right)\|_{*}-\|\mathcal{P}\left(\Delta\right)\|_{*}-\|\Pi_{0}\|_{*}\\ 
= & \|\mathcal{M}\left(\Delta\right)\|_{*}-\|\mathcal{P}\left(\Delta\right)\|_{*}
\end{aligned} 
\end{equation}
Next, the definition of $\mathcal{A}_3$ implies that the empirical
risk function is bounded above. Since $\mathcal{E}(\hat{\theta},\hat{\Pi})\geq0$,
and $\nu=c_{\nu}\frac{\psi_{NT}c_{\varepsilon,NT}}{\sqrt{NT}}$, we have 
\begin{align*}
0 & \leq \psi_{NT}c_{\varepsilon,NT}\left(\left\Vert \delta\right\Vert +\frac{1}{\sqrt{NT}}\left\Vert \Delta\right\Vert _{*}\right)+c_{\nu}\frac{\psi_{NT}c_{\varepsilon,NT}}{\sqrt{NT}}\left(\|\mathcal{P}\left(\Delta\right)\|_{*}-\|\mathcal{M}\left(\Delta\right)\|_{*}\right)\\
 & \leq\sqrt{NT}\left\Vert \delta\right\Vert +\left(c_{\nu}+1\right)\|\mathcal{P}\left(\Delta\right)\|_{*}-\left(c_{\nu}-1\right)\|\mathcal{M}\left(\Delta\right)\|_{*}\\
 & \leq\sqrt{NT}\left\Vert \delta\right\Vert +2c_{\nu}\|\mathcal{P}\left(\Delta\right)\|_{*}-\left(c_{\nu}-1\right)\|\Delta\|_{*}
\end{align*}
Hence, within $\mathcal{A}_2\cap\mathcal{A}_3$, we have $\left(\delta,\Delta\right)\in\mathcal{A}$,
where $\mathcal{A}$ is defined in (\ref{ref:cone}). Therefore, by
Assumptions~\ref{assu:rsc}, the excess risk
function is bounded below by
\[
\begin{aligned} & \inf_{\substack{\left(\theta,\Pi\right)\in\mathcal{A}_{2}\mathcal{\cap}\mathcal{A}_{3}\\
\left\Vert \delta\right\Vert ^{2}+\frac{1}{NT}\left\Vert \Delta\right\Vert _{F}^{2}=\gamma_{NT}^{2}
}
}\mathcal{E}\left(\theta,\Pi\right)\\
 & \quad\geq\inf_{\substack{\left(\delta,\Delta\right)\in\mathcal{A}_{2}\cap\mathcal{A}_{3}\\
\left\Vert \delta\right\Vert ^{2}+\frac{1}{NT}\left\Vert \Delta\right\Vert _{F}^{2}=\gamma_{NT}^{2}
}
}c_{RSC}\left[\left\Vert \delta\right\Vert ^{2}+\frac{1}{NT}\|\Delta\|_{F}^{2}\right]\\
 & \quad=c_{RSC}\gamma_{NT}^{2}
\end{aligned}
\]
Summing these up, over the event $\mathcal{A}_{2}\cap\mathcal{A}_{3}$,
we have
\begin{align*}
c_{RSC}\gamma_{NT}^{2} & \leq \psi_{NT}c_{\varepsilon,NT}\left(\left\Vert \delta\right\Vert +\frac{1}{\sqrt{NT}}\left\Vert \Delta\right\Vert _{*}\right)+c_{\nu}\frac{\psi_{NT}c_{\varepsilon,NT}}{\sqrt{NT}}\left(\|\mathcal{P}\left(\Delta\right)\|_{*}-\|\mathcal{M}\left(\Delta\right)\|_{*}\right)\\
 & \leq \psi_{NT}c_{\varepsilon,NT}\left\Vert \delta\right\Vert +\frac{\psi_{NT}c_{\varepsilon,NT}}{\sqrt{NT}}\left\{ \left(c_{\nu}+1\right)\|\mathcal{P}\left(\Delta\right)\|_{*}-\left(c_{\nu}-1\right)\|\mathcal{M}\left(\Delta\right)\|_{*}\right\} \\
 & \leq \psi_{NT}c_{\varepsilon,NT}\left\Vert \delta\right\Vert +\frac{\psi_{NT}c_{\varepsilon,NT}}{\sqrt{NT}}\left(c_{\nu}+1\right)\|\mathcal{P}\left(\Delta\right)\|_{*}\\
 & \leq \psi_{NT}c_{\varepsilon,NT}\left(\left\Vert \delta\right\Vert +\frac{c_{\nu}+1}{\sqrt{NT}}\sqrt{2r}\left\Vert \Delta\right\Vert _{F}\right)\\
 & \leq\left(c_{\nu}+1\right)\sqrt{2r}\psi_{NT}c_{\varepsilon,NT}\left(\left\Vert \delta\right\Vert +\frac{1}{\sqrt{NT}}\left\Vert \Delta\right\Vert _{F}\right)\\
 & \leq\left(c_{\nu}+1\right)\sqrt{4r}\psi_{NT}c_{\varepsilon,NT}\gamma_{NT}
\end{align*}
The first inequality follows from \eqref{ref:aux-nu}. The third inequality holds because $\left\Vert \mathcal{M}(\Delta)\right\Vert_*$ is positive and $c_\nu \geq 2$. The fourth inequality is derived from Lemma~\ref{lemma:matrix}(iii) and (iv). Consequently, $\gamma_{NT} \leq \frac{(c_{\nu} + 1)\sqrt{4r}}{c_{RSC}}\psi_{NT}c_{\varepsilon,NT}$, leading to $P_1 = 0$ w.p.a.1.
\end{proof}
Corollary~\ref{cor:main-r} presents two key results. First, it establishes the consistency of our rank estimator $\hat{r}$. Second, it outlines the convergence rates of $\hat{\Lambda}$ and $\hat{F}$, which form the foundation for the iterative localized estimation in the second step. The proof for the second part is analogous to that of Lemma 2 in \citet*{chen_quantile_2021}, but we provide it here for the sake of completeness.
\begin{proof}[Proof of Corollary \ref{cor:main-r}]
Denote $\gamma_{NT}=\psi_{NT}\sqrt{\frac{rc_{T}}{N\land d_{T}}}$
and thus from Theorem \ref{thm:main-3}, we have $\frac{1}{\sqrt{NT}}\left\Vert \widehat{\Pi}-\Pi_{0}\right\Vert _{F}=O_{p}\left(\gamma_{NT}\right)$. Let $\sigma_{1} > \cdots > \sigma_{r} > 0$ denote the nonzero singular values of $\Pi_{0}$. Note that Assumption~\ref{assu:factor} implies that there exists a constant $c > 0$ such that, with probability approaching one, for all $s \leq r$:
$c^{-1} \sqrt{NT} \leq \sigma_{s}\leq c \sqrt{NT}$, and $\sigma_{s-1}^{2} - \sigma_{s}^{2}\geq c \sqrt{NT}$.

\textbf{Part (i):} To show that our rank estimator $\hat{r}$ is consistent,
we first use Weyl's inequality to obtain the error bound of singular
values. That is, w.p.a.1,
\[
\max_{s\in\{1,\ldots,N\wedge T\}}\left\{ \left|\sigma_{s}\left(\hat{\Pi}\right)-\sigma_{s}\right|\right\} \leq\left\Vert \hat{\Pi}-\Pi_{0}\right\Vert \leq\left\Vert \hat{\Pi}-\Pi_{0}\right\Vert _{F}
\]
Based on the results of Theorem \ref{thm:main-3} and the assumption
that $\sigma_{r+1}\left(\Pi_{0}\right)=\cdots=\sigma_{N\wedge T}\left(\Pi_{0}\right)=0$
(i.e., the true rank is $r$),
\[
\max\left\{ \max_{s\leq r}\left|\sigma_{s}\left(\hat{\Pi}\right)-\sigma_{s}\left(\Pi_{0}\right)\right|,\max_{r+1\leq s\leq N\wedge T}\left|\sigma_{s}\left(\hat{\Pi}\right)\right|\right\} \lesssim \sqrt{NT}\gamma_{NT}
\]
By Assumption \ref{assu:factor}, we have $\sigma_{r}\asymp\sqrt{NT}c_{r}$.
Thus,
\begin{align*}
\min_{s\leq r}\sigma_{s}\left(\hat{\Pi}\right) & \gtrsim\left(c_{r}-\gamma_{NT}\right)\sqrt{NT}=\left(c_{r}+o_{p}\left(1\right)\right)\sqrt{NT}\\
\max_{r+1\leq s\leq N\wedge T}\sigma_{s}\left(\hat{\Pi}\right) & \lesssim\sqrt{NT}\gamma_{NT}=\sqrt{NT}o_{P}\left(\gamma_{NT}^{1/2}\right) = o_p(\sqrt{NT}\big(\nu\|\hat\Pi\|\big)^{1/2})
\end{align*}
Combining the two yields \(P(\hat r=r)\to1\).
This proves the consistency of $\hat{r}$.

\textbf{Part (ii):}
To simplify notation, we omit the hat notation for the estimator and
instead define $\frac{1}{\sqrt{NT}}\left\Vert \Pi-\Pi_{0}\right\Vert _{F}=\gamma_{NT}$,
where $\Pi=\Lambda F^{\prime}$, with $\Lambda$ and $F$ normalized
such that $F^{\prime}F/T=\mathbb{I}_{r}$ and $\Lambda^{\prime}\Lambda/N$
is diagonal with non-increasing diagonal elements. Under this normalization,
$\Lambda$ and $F$ are uniquely identified.

First, let $U\in\mathbb{R}^{r\times r}$ be a diagonal matrix whose
diagonal elements are either $1$ or $-1$. Since $F^{\prime}F/T=F_{0}^{\prime}F_{0}/T=\mathbb{I}_{r}$
and $\left\Vert \Lambda_{0}\right\Vert_F /\sqrt{N}\leq\bar{c}$ by Assumption~\ref{assu:factor},
we have
\[
\begin{aligned}\left\Vert \Lambda-\Lambda_{0}U\right\Vert _{F}/\sqrt{N} & =\left\Vert \left(\Lambda-\Lambda_{0}U\right)F^{\prime}\right\Vert _{F}/\sqrt{NT}=\left\Vert \Lambda F^{\prime}-\Lambda_{0}F_{0}^{\prime}+\Lambda_{0}F_{0}^{\prime}-\Lambda_{0}UF^{\prime}\right\Vert _{F}/\sqrt{NT}\\
 & \leq\left\Vert \Lambda F^{\prime}-\Lambda_{0}F_{0}^{\prime}\right\Vert _{F}/\sqrt{NT}+\left\Vert \Lambda_{0}\right\Vert _{F}/\sqrt{N}\cdot\left\Vert F-F_{0}U\right\Vert _{F}/\sqrt{T}\\
 & \leq\gamma_{NT}+\bar{c}\left\Vert F-F_{0}U\right\Vert /\sqrt{T}.
\end{aligned}
\]
Thus, 
\[
 \left\Vert \Lambda-\Lambda_{0}U\right\Vert _{F}/\sqrt{N}+\left\Vert F-F_{0}U\right\Vert _{F}/\sqrt{T}\leq\gamma_{NT}+\left(1+\bar{c}\right)\left\Vert F-F_{0}U\right\Vert _{F}/\sqrt{T}
\]
Second, 
\[
\begin{aligned}\left\Vert F-F_{0}U\right\Vert _{F}/\sqrt{T} & =\left\Vert F_{0}U-F\left(F^{\prime}F_{0}U/T\right)+F\left(F^{\prime}F_{0}U/T\right)-F\right\Vert _{F}/\sqrt{T}\\
 & \leq\left\Vert F_{0}U-F\left(F^{\prime}F_{0}U/T\right)\right\Vert _{F}/\sqrt{T}+\left\Vert F\left(F^{\prime}F_{0}U/T\right)-F\right\Vert _{F}/\sqrt{T}\\
 & =\left\Vert M_{F}F_{0}\right\Vert _{F}/\sqrt{T}+\left\Vert F^{\prime}F_{0}/T-\mathrm{U}\right\Vert _{F}
\end{aligned}
\]
where $P_{A}=A\left(A^{\prime}A\right)^{-1}A^{\prime}$ and $M_{A}=\mathbb{I}-P_{A}$.

Third, we have 
\begin{align}
 \frac{1}{\sqrt{NT}}\left\Vert \left(\Lambda F^{\prime}-\Lambda_{0}F_{0}^{\prime}\right)M_{F}\right\Vert _{F} & \leq\sqrt{\operatorname{rank}\left[\left(\Lambda F^{\prime}-\Lambda_{0}F_{0}^{\prime}\right)M_{F}\right]}\cdot\left\Vert M_{F}\right\Vert \cdot\left\Vert \Lambda F^{\prime}-\Lambda_{0}F_{0}^{\prime}\right\Vert /\sqrt{NT}\nonumber \\
 &  \lesssim\left\Vert \Lambda F^{\prime}-\Lambda_{0}F_{0}^{\prime}\right\Vert _{F}/\sqrt{NT}=\gamma_{NT}\label{eq:factor1}
\end{align}
and since 
\begin{align}
\left\Vert \left(\Lambda F^{\prime}-\Lambda_{0}F_{0}^{\prime}\right)M_{F}\right\Vert _{F}/\sqrt{NT} & =\left\Vert \Lambda_{0}F_{0}^{\prime}M_{F}\right\Vert _{F}/\sqrt{NT}\nonumber \\
 & =\sqrt{\operatorname{Tr}\left[\left(\Lambda_{0}^{\prime}\Lambda_{0}/N\right)\cdot\left(F_{0}^{\prime}M_{F}F_{0}/T\right)\right]}\nonumber \\
 & \geq\sqrt{\sigma_{Nr}}\sqrt{\operatorname{Tr}\left(F_{0}^{\prime}M_{F}F_{0}/T\right)}\nonumber \\
 &  =\sqrt{\sigma_{Nr}}\left\Vert M_{F}F_{0}\right\Vert _{F}/\sqrt{T}\label{eq:factor2}
\end{align}
It follows from \eqref{eq:factor1} and \eqref{eq:factor2} that 
\begin{equation}
 \left\Vert M_{F}F_{0}\right\Vert _{F}/\sqrt{T}\lesssim\sqrt{\frac{1}{\sigma_{Nr}}}\gamma_{NT}\label{eq:factor6}
\end{equation}

Similarly, it can be shown that 
\begin{equation}
 \left\Vert M_{F_{0}}F\right\Vert _{F}/\sqrt{T}\lesssim\sqrt{\frac{1}{\mu_{\min}\left(\Lambda^{\prime}\Lambda/N\right)}}\gamma_{NT}\label{eq:factor9}
\end{equation}

Fourth, we have 
\[
 \frac{1}{\sqrt{NT}}\left\Vert \left(\Lambda F^{\prime}-\Lambda_{0}F_{0}^{\prime}\right)P_{F}\right\Vert _{F}\leq\frac{1}{\sqrt{NT}}\left\Vert \Lambda F^{\prime}-\Lambda_{0}F_{0}^{\prime}\right\Vert _{F}\cdot\left\Vert P_{F}\right\Vert _{F}=\sqrt{r}\gamma_{NT}
\]
so
\begin{equation}
 \begin{aligned} & \frac{1}{\sqrt{NT}}\left\Vert \left(\Lambda F^{\prime}-\Lambda_{0}F_{0}^{\prime}\right)P_{F}\right\Vert _{F}\\
 & \quad=\frac{1}{\sqrt{NT}}\left\Vert \Lambda F^{\prime}-\Lambda_{0}\left(F_{0}^{\prime}F/T\right)F^{\prime}\right\Vert _{F}\\
 & \quad=\frac{1}{\sqrt{N}}\left\Vert \Lambda-\Lambda_{0}\left(F_{0}^{\prime}F/T\right)\right\Vert _{F}\leq\sqrt{r}\gamma_{NT}
\end{aligned}
\label{eq:factor7}
\end{equation}
Likewise, it can be shown that 
\begin{equation}
 \frac{1}{\sqrt{N}}\left\Vert \Lambda_{0}-\Lambda\left(F^{\prime}F_{0}/T\right)\right\Vert _{F}\leq\sqrt{r}\gamma_{NT}\label{eq:factor8}
\end{equation}

Fifth, define $R_{T}=F^{\prime}F_{0}/T$. Note that $FR_{T}=FF^{\prime}F_{0}/T=P_{F}F_{0}$,
thus
\begin{equation}
\begin{aligned}\mathbb{I}_{r} & =F_{0}^{\prime}F_{0}/T=R_{T}^{\prime}\left(F^{\prime}F/T\right)R_{T}+F_{0}^{\prime}F_{0}/T-R_{T}^{\prime}\left(F^{\prime}F/T\right)R_{T}\\
 & =R_{T}^{\prime}R_{T}+F_{0}^{\prime}F_{0}/T-F_{0}^{\prime}FR_{T}/T+F_{0}^{\prime}FR_{T}/T-R_{T}^{\prime}\left(F^{\prime}F/T\right)R_{T}\\
 & =R_{T}^{\prime}R_{T}+F_{0}^{\prime}\left(F_{0}-FR_{T}\right)/T=R_{T}^{\prime}R_{T}+F_{0}^{\prime}M_{F}F_{0}/T
\end{aligned}
\label{eq:factor3}
\end{equation}
because 
\[
F_{0}^{\prime}FR_{T}/T-R_{T}^{\prime}\left(F^{\prime}F/T\right)R_{T}=\left(F_{0}-FR_{T}\right)^{\prime}FR_{T}/T=F_{0}^{\prime}M_{F}FR_{T}/T=0
\]
In addition, 
\begin{equation}
\begin{aligned} & \Lambda_{0}^{\prime}\Lambda_{0}/N\\
 & \quad=R_{T}^{\prime}\left(\Lambda^{\prime}\Lambda/N\right)R_{T}+\left(\Lambda_{0}^{\prime}\Lambda_{0}/N-R_{T}^{\prime}\left(\Lambda^{\prime}\Lambda/N\right)R_{T}\right)\\
 & \quad=R_{T}^{\prime}\left(\Lambda^{\prime}\Lambda/N\right)R_{T}+\Lambda_{0}^{\prime}\left(\Lambda_{0}-\Lambda R_{T}\right)/N+\left(\Lambda_{0}-\Lambda R_{T}\right)^{\prime}\Lambda R_{T}/N
\end{aligned}
\label{eq:factor4}
\end{equation}
Similar to the proof of \eqref{eq:factor3}, we have 
\begin{equation}
 \mathbb{I}_{r}=R_{T}R_{T}^{\prime}+F^{\prime}\left(F-F_{0}R_{T}^{\prime}\right)/T=R_{T}R_{T}^{\prime}+F^{\prime}M_{F_{0}}F/T\label{eq:factor5}
\end{equation}
From \eqref{eq:factor4}, it holds that 
\[
 \begin{aligned}\Lambda_{0}^{\prime}\Lambda_{0}/N= & R_{T}^{\prime}\left(\Lambda^{\prime}\Lambda/N\right)\left(R_{T}^{\prime}\right)^{-1}R_{T}^{\prime}R_{T}+\Lambda_{0}^{\prime}\left(\Lambda_{0}-\Lambda R_{T}\right)/N+\left(\Lambda_{0}-\Lambda R_{T}\right)^{\prime}\Lambda R_{T}/N\\
= & R_{T}^{\prime}\left(\Lambda^{\prime}\Lambda/N\right)\left(R_{T}^{\prime}\right)^{-1}+R_{T}^{\prime}\left(\Lambda^{\prime}\Lambda/N\right)\left(R_{T}^{\prime}\right)^{-1}\left(R_{T}^{\prime}R_{T}-\mathbb{I}_{r}\right)\\
 & +\Lambda_{0}^{\prime}\left(\Lambda_{0}-\Lambda R_{T}\right)/N+\left(\Lambda_{0}-\Lambda R_{T}\right)^{\prime}\Lambda R_{T}/N
\end{aligned}
\]
and then it follows from the above equation and \eqref{eq:factor3}
that 
\[
\left(\Lambda_{0}^{\prime}\Lambda_{0}/N+D_{NT}\right)R_{T}^{\prime}=R_{T}^{\prime}\left(\Lambda^{\prime}\Lambda/N\right)
\]
where 
\[
D_{NT}=R_{T}^{\prime}\left(\Lambda^{\prime}\Lambda/N\right)\left(R_{T}^{\prime}\right)^{-1}F_{0}^{\prime}M_{F}F_{0}/T-\Lambda_{0}^{\prime}\left(\Lambda_{0}-\Lambda R_{T}\right)/N-\left(\Lambda_{0}-\Lambda R_{T}\right)^{\prime}\Lambda R_{T}/N
\]
From \eqref{eq:factor6}, \eqref{eq:factor7} and \eqref{eq:factor8},
we have $ \left\Vert D_{NT}\right\Vert _{F}\lesssim\gamma_{NT}$.
Hence, by the Bauer-Fike theorem, it holds that 
\begin{equation}
 \left|\sigma_{\min}\left[\Lambda^{\prime}\Lambda/N\right]-\sigma_{\min}\left[\Lambda_{0}^{\prime}\Lambda_{0}/N\right]\right|\leq\left\Vert D_{NT}\right\Vert \leq\left\Vert D_{NT}\right\Vert _{F}\lesssim\gamma_{NT}\label{eq:factor10}
\end{equation}
Moreover, by Assumption~\ref{assu:factor} and the perturbation theory for eigenvectors,
\[
 \left\Vert R_{T}^{\prime}V_{T}\mathrm{~S}-\mathbb{I}_{r}\right\Vert _{F}=\left\Vert R_{T}^{\prime}V_{T}-\mathrm{S}\right\Vert _{F}\lesssim \gamma_{NT}
\]
where $V_{T}=\operatorname{diag}\left(\left(R_{T,1}R_{T,1}^{\prime}\right)^{-1/2},\ldots,\left(R_{T,r}R_{T,r}^{\prime}\right)^{-1/2}\right)$,
and $R_{T,j}^{\prime}$ is the $j$th column of $R_{T}^{\prime}$.

Furthermore, \eqref{eq:factor9} and \eqref{eq:factor10} imply that
$\sigma_{\min}\left[\Lambda^{\prime}\Lambda/N\right]$ is bounded
below by a positive constant, and that $ \left\Vert M_{F_{0}}F\right\Vert _{F}/\sqrt{T}\lesssim\gamma_{NT}$.
From \eqref{eq:factor5}, we then have
\[
\left\Vert V_{T}-\mathbb{I}_{r}\right\Vert _{F}\lesssim\left\Vert R_{T}R_{T}^{\prime}-\mathbb{I}_{r}\right\Vert _{F}=\left\Vert M_{F_{0}}F\right\Vert _{F}^{2}/T\lesssim\gamma_{NT}^{2}
\]

Note that the triangular inequality implies that
\[
 \left\Vert R_{T}^{\prime}-\mathrm{S}\right\Vert _{F}\leq\left\Vert R_{T}^{\prime}V_{T}-\mathrm{S}\right\Vert _{F}+\left\Vert R_{T}^{\prime}V_{T}-R_{T}\right\Vert _{F}\leq\left\Vert R_{T}^{\prime}V_{T}-\mathrm{S}\right\Vert _{F}+\left\Vert R_{T}\right\Vert _{F}\cdot\left\Vert V_{T}-\mathbb{I}_{r}\right\Vert _{F}
\]
Thus, 
\[
 \left\Vert F^{\prime}F_{0}/T-\mathrm{S}\right\Vert _{F}=\left\Vert R_{T}-\mathrm{S}\right\Vert _{F}\lesssim\gamma_{NT}
\]
Finally, setting $U=S$, we then have $ \left\Vert F-F_{0}\mathrm{S}\right\Vert _{F}/\sqrt{T}\lesssim\gamma_{NT}$
and similarly $\left\Vert \Lambda-\Lambda_{0}\mathrm{S}\right\Vert /\sqrt{N}\lesssim\gamma_{NT}$.
\end{proof}
 \subsection{Discussion of Assumptions in Single-Index Models}\label{apx:rsc}
In this section, we specialize to the case of single-index models and introduce two auxiliary assumptions that together lead to the Restricted Strong Convexity condition in Assumption~\ref{assu:rsc}. We further provide primitive, low-level conditions that verify one of these assumptions.

We first recall that the single-index specification assumes that the conditional distribution of $Y_{it}$ given $X_{it}$ depends only on the scalar index $X_{it}'\theta + \pi_{it}$. The common parameter $\theta$ captures the marginal effect of observed covariates, while $\pi_{it}$ accounts for latent, time-varying or individual-specific unobserved heterogeneity. Formally, the loss function can be written as
\[
\ell(W_{it};\theta,\pi_{it}) = \ell\bigl(Y_{it},\, X_{it}'\theta + \pi_{it}\bigr),
\]  
where $W_{it}=(Y_{it},X_{it})$. In this setting, the population model~\eqref{ref:Model} continues to hold, and estimation proceeds via the regularized sample criterion in~\eqref{ref:Estimation1}.

\paragraph{Restricted Strong Convexity}
We now introduce two auxiliary assumptions that together imply the high-level Restricted Strong Convexity (RSC) condition in Assumption~\ref{assu:rsc}. These conditions are formulated to separate local strong convexity of the objective function from separability of $\Delta$ from the covariate $X$.

\begin{assumption}\label{apx-assum:hl-2} 
There exist constants $c_{l}$ and $c^{\prime}_{l}$ such that
for all $\left(N,T\right)$, the inequality $\left\Vert \delta\right\Vert ^{2}+\frac{1}{NT}\left\Vert \Delta\right\Vert _{F}^{2}\leq c^{\prime}_{l}$, where $\delta$ is a p-dimensional vector and $\Delta$ is an $N \times T$ matrix, implies 
\[
 \mathcal{E}\left(\theta_{0}+\delta,\Pi_{0}+\Delta\right)\geq c_{l}\mathbb{E}\left[\frac{1}{NT}\left\Vert \sum_{j=1}^{p}X_{j}\delta_{j}+\Delta\right\Vert _{F}^{2}\right]
\]
\end{assumption} 

Assumption \ref{apx-assum:hl-2} is a local strong convexity condition, similar to those commonly used in the literature on nonconvex low-dimensional M-estimators. Specifically, the condition is imposed within an $\ell_2$-ball of fixed radius around the true parameters. In the low-dimensional setting, i.e., when $\Delta=0$, Assumption \ref{apx-assum:hl-2}
simplifies to the requirement that $\mathcal{E}\left(\theta_{0}+\delta,0\right)\geq\underline{c}\left\Vert \delta\right\Vert ^{2}$ for some constant $\underline{c}$, under the assumption that $\min_{i,t}\mu_{\min}\left(\mathbb{E}\left[X_{it}X_{it}^{\prime}\right]\right)\geq \bar{c}$. 
When fixed effects are present and the loss function is strongly convex and smooth (i.e., the second derivative exists and is bounded away from zero), Assumption~\ref{apx-assum:hl-2} is easily satisfied. For more general loss functions, sufficient conditions for Assumption~\ref{apx-assum:hl-2} are provided in Proposition~\ref{prop:convex} below.

\begin{assumption} \label{apx-assu:rsc} For all $\left(\delta,\Delta\right)\in\mathcal{A}$, there exists a universal constant $c_{RSC}>0$ such that 
\[
\mathbb{E}\left[\left\Vert \sum_{j=1}^{_{p}}X_{j}\delta_{j}+\Delta\right\Vert _{F}^{2}\right]\geq NT\cdot c^{\prime}_{RSC}\left\Vert \delta_{}\right\Vert ^{2}+c^{\prime}_{RSC}\left\Vert \Delta\right\Vert _{F}^{2}
\]
\end{assumption} 
Intuitively, Assumption \ref{apx-assu:rsc} requires
that $\Delta$ cannot be fully explained by $X$, and thus is separable.
Its verification is challenging without imposing further
conditions on parameter space. See also discussions in \citet*{moon_nuclear_2019}
and \citet*{miao_high-dimensional_2022} following their Assumption
1 and Assumption 2, respectively. 

\begin{proposition}\label{prop:A1A2to6}
Suppose Assumptions~\ref{apx-assum:hl-2} and \ref{apx-assu:rsc} hold, then Assumption~\ref{assu:rsc} holds.
\end{proposition}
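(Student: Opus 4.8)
The plan is to obtain the Restricted Strong Convexity bound in Assumption~\ref{assu:rsc} by simply chaining the two auxiliary conditions. Assumption~\ref{apx-assum:hl-2} lower-bounds the excess risk by the expected squared Frobenius norm of the fitted index $\sum_j X_j\delta_j+\Delta$, and Assumption~\ref{apx-assu:rsc} in turn lower-bounds that quantity, on the cone $\mathcal{A}$, by a positive multiple of $\|\delta\|^2+\tfrac1{NT}\|\Delta\|_F^2$. Composing these two inequalities yields the claim with constant $c_{RSC}=c_l c'_{RSC}$.

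Concretely, I would proceed as follows. First, fix $(\delta,\Delta)\in\mathcal{A}$ lying in the local ball $\|\delta\|^2+\tfrac1{NT}\|\Delta\|_F^2\le c'_l$ (I address below why it suffices to treat this case). Second, apply Assumption~\ref{apx-assum:hl-2} to get
\[
\mathcal{E}(\theta_0+\delta,\Pi_0+\Delta)\ \ge\ c_l\,\mathbb{E}\!\left[\tfrac1{NT}\Big\|\textstyle\sum_{j=1}^p X_j\delta_j+\Delta\Big\|_F^2\right].
\]
Third, since $(\delta,\Delta)\in\mathcal{A}$, invoke Assumption~\ref{apx-assu:rsc} to bound the right-hand side below by $\tfrac1{NT}\big(NT\,c'_{RSC}\|\delta\|^2+c'_{RSC}\|\Delta\|_F^2\big)=c'_{RSC}\big(\|\delta\|^2+\tfrac1{NT}\|\Delta\|_F^2\big)$. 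Fourth, combine the two displays and set $c_{RSC}:=c_l c'_{RSC}>0$, which is a valid universal constant because $c_l$ and $c'_{RSC}$ are; this is precisely the inequality asserted in Assumption~\ref{assu:rsc}. Everything here is a two-line composition of the hypotheses, so no delicate estimate is involved.

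The one point that requires care — and the only conceptual step — is reconciling the radius restriction baked into Assumption~\ref{apx-assum:hl-2} (that lower bound is guaranteed only on the $\ell_2$-ball of radius $\sqrt{c'_l}$) with the fact that $\mathcal{A}$ is an unbounded cone. I would handle this by observing that Restricted Strong Convexity is only ever invoked within a vanishing neighborhood of $(\theta_0,\Pi_0)$ in what follows: in the proof of Theorem~\ref{thm:main-3} the relevant deviations satisfy $\|\delta\|^2+\tfrac1{NT}\|\Delta\|_F^2=\gamma_{NT}^2\to 0$ and hence eventually fall inside the ball, so it is enough to establish Assumption~\ref{assu:rsc} for $(\delta,\Delta)\in\mathcal{A}$ with $\|\delta\|^2+\tfrac1{NT}\|\Delta\|_F^2\le c'_l$; equivalently, one may read $c'_l$ as the localization radius used throughout. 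If a genuinely global statement on the cone were wanted, one would supplement the ball version with the positive homogeneity of $\mathcal{A}$ and a star-shapedness/rescaling argument for $\mathcal{E}$, but this refinement is not needed for any downstream result and I would not pursue it.
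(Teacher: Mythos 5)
Your proof is correct and matches the paper's, which simply notes that the result "follows directly from the definitions" of Assumptions~\ref{apx-assum:hl-2} and \ref{apx-assu:rsc} — i.e., the same two-line composition with $c_{RSC}=c_l c'_{RSC}$. Your additional remark about reconciling the local radius $c'_l$ with the unbounded cone $\mathcal{A}$ is a legitimate point of care that the paper glosses over, and your resolution (RSC is only ever invoked at deviations of size $\gamma_{NT}\to 0$ in the proof of Theorem~\ref{thm:main-3}) is the right one.
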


\begin{proof}
The result follows directly from the definitions of Assumptions~\ref{apx-assum:hl-2} and \ref{apx-assu:rsc}.
\end{proof}

\paragraph{Low-Level Sufficient Conditions.}\label{sec:suff}
This subsection provides primitive sufficient conditions for Assumption~\ref{apx-assum:hl-2} in single-index settings, taking the high-level identification condition in Assumption~\ref{assu:id-2}(ii) as maintained. These conditions verify the local curvature requirement needed for the restricted strong convexity argument.

With a slight abuse of notation, we define the index function as $Z_{it} = X^{\prime}_{it}\theta + \pi_{it} \in \mathcal{Z}$, where $\mathcal{Z} = \{z \in \mathbb{R} \mid z = x^{\prime}\theta + \pi \text{ s.t. } (x, \theta, \pi) \in \mathcal{X} \times \Theta \times \Phi\}$. 
Define 
\[
m(z, x) = \mathbb{E}[\ell(Y_{it}, z) \mid X_{it} = x], 
\quad (z, x) \in \mathcal{Z} \times \mathcal{X},
\]
and let $Z_{0,it} = X_{it}'\theta_{0} + \pi_{0,it}$ denote the index function evaluated at the true parameter values.

We define the expected score function as $s(z,x) = \frac{\partial}{\partial z} m(z, x)$ and the expected Hessian as $H(z,x) = \frac{\partial^2}{\partial z \partial z^{\prime}} m(z, x)$. We denote $s_{it} = s(Z_{it}, X_{it})$ and $s_{0, it} = s(Z_{0, it}, X_{it})$. Similarly, we denote $H_{it} = H(Z_{it}, X_{it})$ and $H_{0, it} = H(Z_{0, it}, X_{it})$.

\begin{proposition}\label{prop:convex}
Suppose Assumption~\ref{assu:id-2}(ii) holds, along with the following conditions:\\
(i) There exists a constant $\bar{c}>0$ such that $\max_{i,t}\mu_{\max}\left(\mathbb{E}\left[X_{it}X_{it}^{\prime}\right]\right)\leq\bar{c}$;\\
(ii) The partial derivatives of $m\left(z,x\right)$ with respect to $z$ up to the third order are uniformly bounded in absolute value over $\left(z,x\right)\in\mathcal{Z}\times\mathcal{X}$; \\
(iii) 
It holds that $\sum_{i=1}^{N}\sum_{t=1}^{T}\mathbb{E}\left[s_{0,it}\left(Z_{it}-Z_{0,it}\right)\right]\geq0$;\\
(iv) The expected Hessian $H_{0,it}$ is positive definite and, in fact, $$\inf_{i\in\left[N\right],t\in\left[T\right]}\inf_{X_{it}\in\mathcal{X}}\mu_{min}\left(H_{0,it}\right)\geq c_{min}$$
for some constant $c_{min}>0$.\\
(v) The following holds,
$$0<q:=\inf_{(\delta,\Delta)\in\mathcal{V}_{1},\delta\neq0}\frac{\left(\mathbb{E}\left(\frac{1}{NT}\sum_{i=1}^{N}\sum_{t=1}^{T}\left(X_{it}^{\prime}\delta+\Delta_{it}\right)^{2}\right)\right)^{3/2}}{\mathbb{E}\left(\frac{1}{NT}\sum_{i=1}^{N}\sum_{t=1}^{T}\left|X_{it}^{\prime}\delta+\Delta_{it}\right|^{3}\right)}$$
where $\mathcal{V}_{1}=\left\{ \left(\delta,\Delta\right)\in\mathbb{R}^{p}\times\mathbb{R}^{N\times T}:\left\Vert \delta\right\Vert ^{2}+\frac{1}{NT}\left\Vert \Delta\right\Vert _{F}^{2}\leq\bar{c}_{l}\right\}$.\\
Then Assumption~\ref{apx-assum:hl-2} holds.
\end{proposition}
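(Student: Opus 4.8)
The plan is to expand the excess risk $\mathcal{E}(\theta_0+\delta,\Pi_0+\Delta)$ to third order in the index direction and show that, under conditions (i)--(v), the quadratic term dominates the cubic remainder on the local ball $\mathcal{V}_1$. First, I would write $Z_{it}=X_{it}'(\theta_0+\delta)+\pi_{0,it}+\Delta_{it}=Z_{0,it}+a_{it}$ where $a_{it}:=X_{it}'\delta+\Delta_{it}$, and note that $\mathcal{E}(\theta_0+\delta,\Pi_0+\Delta)=\tfrac{1}{NT}\sum_{i,t}\mathbb{E}[m(Z_{it},X_{it})-m(Z_{0,it},X_{it})]$. A Taylor expansion of $z\mapsto m(z,x)$ around $Z_{0,it}$ gives
\[
m(Z_{it},X_{it})-m(Z_{0,it},X_{it})=s_{0,it}\,a_{it}+\tfrac12 H_{0,it}\,a_{it}^2+\tfrac16\,\partial_z^3 m(\tilde z_{it},X_{it})\,a_{it}^3,
\]
for some intermediate $\tilde z_{it}$. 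Taking expectations and summing, the linear term contributes $\tfrac{1}{NT}\sum_{i,t}\mathbb{E}[s_{0,it}a_{it}]$, which is nonnegative by condition (iii) (this is exactly the first-order condition at the population optimum, relaxed to a one-sided inequality to accommodate the constrained/regularized formulation); thus it can be dropped from the lower bound.

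Second, for the quadratic term I would use condition (iv): $\mathbb{E}[H_{0,it}a_{it}^2]\ge c_{\min}\,\mathbb{E}[a_{it}^2]$ uniformly, so that $\tfrac{1}{2NT}\sum_{i,t}\mathbb{E}[H_{0,it}a_{it}^2]\ge \tfrac{c_{\min}}{2}\cdot\tfrac{1}{NT}\sum_{i,t}\mathbb{E}[a_{it}^2]=\tfrac{c_{\min}}{2}\,\mathbb{E}\big[\tfrac{1}{NT}\Vert\sum_j X_j\delta_j+\Delta\Vert_F^2\big]$. For the cubic remainder, condition (ii) bounds $|\partial_z^3 m|\le M_3$ uniformly, so in absolute value the remainder is at most $\tfrac{M_3}{6NT}\sum_{i,t}\mathbb{E}[|a_{it}|^3]$. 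Condition (v) is precisely what controls this: it states that $\mathbb{E}\big[\tfrac{1}{NT}\sum_{i,t}|a_{it}|^3\big]\le q^{-1}\big(\mathbb{E}\big[\tfrac{1}{NT}\sum_{i,t}a_{it}^2\big]\big)^{3/2}$ on $\mathcal{V}_1$ whenever $\delta\neq0$; combined with the fact that on the local ball $\Vert\delta\Vert^2+\tfrac{1}{NT}\Vert\Delta\Vert_F^2\le c_l'$ one has, by condition (i) and a Cauchy--Schwarz bound, $\mathbb{E}\big[\tfrac{1}{NT}\sum_{i,t}a_{it}^2\big]\lesssim c_l'$, the cubic term is bounded by a constant multiple of $\sqrt{c_l'}\cdot\mathbb{E}\big[\tfrac{1}{NT}\sum_{i,t}a_{it}^2\big]$. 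Choosing $c_l'$ small enough that $\tfrac{M_3}{6q}\sqrt{\bar{c}\,c_l'}\le \tfrac{c_{\min}}{4}$ makes the cubic term absorb at most half of the quadratic term, yielding
\[
\mathcal{E}(\theta_0+\delta,\Pi_0+\Delta)\ \ge\ \frac{c_{\min}}{4}\,\mathbb{E}\!\left[\frac{1}{NT}\Big\Vert\sum_{j=1}^p X_j\delta_j+\Delta\Big\Vert_F^2\right],
\]
which is Assumption~\ref{apx-assum:hl-2} with $c_l=c_{\min}/4$. The degenerate case $\delta=0$ is handled separately and is easier: then $a_{it}=\Delta_{it}$, and the same Taylor argument with the Hessian lower bound applied entrywise (or a direct convexity-type bound) gives the claim without needing (v).

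The main obstacle I expect is making the cubic-remainder control uniform in $(N,T)$ and in the direction $(\delta,\Delta)$ simultaneously: the ratio in condition (v) is assumed bounded below by $q>0$ uniformly over $\mathcal{V}_1$, but one must be careful that the normalization (dividing the third moment by the $3/2$ power of the second moment) is exactly the scale-invariant quantity that does not degenerate as $N,T\to\infty$ even though individual summands may have heterogeneous distributions. The other subtlety is the radius bookkeeping: the ball in Assumption~\ref{apx-assum:hl-2} is stated in terms of $\Vert\delta\Vert^2+\tfrac{1}{NT}\Vert\Delta\Vert_F^2$, while the natural quantity in the expansion is $\mathbb{E}[\tfrac{1}{NT}\sum a_{it}^2]$; bridging the two requires the upper eigenvalue bound in condition (i) together with the elementary inequality $\mathbb{E}[a_{it}^2]\le 2\,\mathbb{E}[(X_{it}'\delta)^2]+2\Delta_{it}^2$, after which one shrinks $c_l'$ (equivalently $\bar c_l$) to whatever threshold the constants $M_3,q,\bar c,c_{\min}$ dictate. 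Everything else is a routine Taylor estimate.
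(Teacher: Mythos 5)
Your proposal is correct and follows essentially the same route as the paper's proof: a third-order Taylor expansion of $m(\cdot,x)$ around $Z_{0,it}$, dropping the linear term via condition (iii), lower-bounding the quadratic term by $c_{\min}/2$ times the mean-squared index perturbation via (iv), controlling the cubic remainder through the uniform third-derivative bound (ii) together with the moment-ratio condition (v), and then shrinking the radius $c_l'$ so the cubic term is absorbed by the quadratic one. Your treatment is if anything slightly more careful than the paper's on two points it glosses over — using the one-sided inequality in (iii) rather than exact first-order stationarity, and flagging the $\delta=0$ case excluded from the infimum in (v) — so no substantive gap remains.
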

Condition (i) requires that the eigenvalues of $\mathbb{E}[X_{it}X_{it}']$ are bounded from above, 
a standard assumption in high-dimensional models. 
Condition (iv) ensures the positive definiteness of the expected Hessian matrix at the true parameter values, 
and Condition (v) imposes a nonlinearity condition as in \citet*{belloni_high_2023} and \citet*{belloni_1-penalized_2011}. 
\begin{proof}[Proof of Proposition \ref{prop:convex}]
Using a third-order Taylor expansion of $m(Z_{it},X_{it})$ around $Z_{0,it}=X_{it}'\theta_0+\pi_{0,it}$, we obtain
\[
\begin{aligned}
\mathcal{E}(Z)
&= \mathbb{E}\left[\frac{1}{NT}\sum_{i,t} 
\bigl(m(Z_{it},X_{it}) - m(Z_{0,it},X_{it})\bigr)\right] \\
&= \frac{1}{NT}\sum_{i,t}\mathbb{E}\Big[
s(Z_{0,it},X_{it})(Z_{it}-Z_{0,it})
+ \tfrac{1}{2}H(Z_{0,it},X_{it})(Z_{it}-Z_{0,it})^{2}
+ r_{Z,it}\Big],
\end{aligned}
\]
where the remainder term satisfies
\[
|r_{Z,it}|
\le \frac{1}{6}\sup_{z\in\mathcal{Z}}\left|
\frac{\partial^{3}}{\partial z^{3}}m(z,X_{it})\right|
|Z_{it}-Z_{0,it}|^{3}
\le \frac{c_M}{6}|Z_{it}-Z_{0,it}|^{3}
\]
for some constant $c_M>0$ by Condition~(ii). 

Since $\mathbb{E}[s(Z_{0,it},X_{it})]=0$ and the smallest eigenvalue of the expected Hessian satisfies 
$\mu_{\min}(H_{0,it})\ge c_{\min}>0$ by Condition~(iv), it follows that
\[
\mathcal{E}(Z)
\ge \frac{1}{NT}\sum_{i,t}\mathbb{E}\left[
\frac{c_{\min}}{2}(Z_{it}-Z_{0,it})^{2}
- \frac{c_{M}}{6}|Z_{it}-Z_{0,it}|^{3}\right].
\]
By the definition of $\mathcal{V}_{1}$ and the boundedness of $X_{it}$, 
we can choose a constant $c_l' \le \bar{c}_l$ such that
\[
\mathbb{E}\left[\frac{1}{NT}\sum_{i,t}(Z_{it}-Z_{0,it})^{2}\right]
\le \left(\frac{3 q c_{\min}}{c_{M}}\right)^{2}.
\]
Substituting this into the previous inequality yields
\[
\mathcal{E}(Z)
\ge \frac{c_{\min}}{6}
\mathbb{E}\left[\frac{1}{NT}\sum_{i,t}(Z_{it}-Z_{0,it})^{2}\right],
\]
which verifies Assumption~\ref{apx-assum:hl-2} with $c_l'$ as defined above and $c_l = c_{\min}/6$.
\end{proof}

 \section{Proofs of Inference Results}
\subsection{Notation and Choice of Norms\label{subsec:inf-norm}}

In this section, I adopt the notations introduced by \citet*{fernandez-val_individual_2016} and \citet*{chen_nonlinear_2021}. I extend their asymptotic
results for ``joint estimators'' to the framework of my iterative
procedure. We collect all these effects in the $r\left(N+T\right)$-vector
$\phi_{NT}=\left(\lambda_{1},\ldots,\lambda_{N},f_{1},\ldots,f_{T}\right)^{\prime}$.
The model parameter $\theta$ is the coefficient of interest, and
the vector $\phi_{NT}$ is treated as a nuisance parameter. The true
values of the parameters, denoted by $\theta_{0}$ and $\phi_{0}=\left(\lambda_{01}^{\prime},\ldots,\lambda_{0N}^{\prime},f_{01}^{\prime},\ldots,f_{0T}^{\prime}\right)^{\prime}$,
are the solution to the population conditional maximum likelihood.

As discussed in Remark 1, the factors and factor loadings are not
separately identified without suitable normalization. In the first-step
estimation we adopt a standard normalization commonly used in linear
and nonlinear factor models; however, for deriving the asymptotic
distribution we employ an alternative normalization, following \citet*{chen_nonlinear_2021}. Specifically,
we impose the restriction that $\sum_{i=1}^{N}\lambda_{0i}\lambda_{0i}^{\prime}=\sum_{t=1}^{T}f_{0t}f_{0t}^{\prime}$,
and define the restricted parameter set as
\[
 \Phi:=\left\{ \phi\in\mathbb{R}^{d_{\phi}}:\sum_{i=1}^{N}\lambda_{0i}\lambda_{i}^{\prime}=\sum_{t=1}^{T}f_{t}f_{0t}^{\prime}\right\} 
\]

Imposing $\hat{\phi}\in\Phi$ is not feasible in practice because
the true parameters appear in the definition of $\Phi$. Nonetheless,
all asymptotic results concern the estimator $\hat{\theta}$, which
is invariant to the choice of normalization for $\hat{\phi}$. Thus,
treating $\hat{\phi}\in\Phi$ as if it were imposed is a technical
device used solely for the proofs. With slight abuse of notation,
we redefine $L_{NT}\left(\theta,\Lambda,F\right)$ for $\mathcal{L}_{NT}\left(\theta,\Lambda,F\right)$
in the main context, and introduce the following normalized criterion
function
\[
\mathcal{L}_{NT}\left(\theta,\Lambda,F\right)=L_{NT}\left(\theta,\Lambda,F\right)-\frac{b}{2NT} \left\Vert \Lambda_{0}^{\prime}\Lambda-F^{\prime}F_{0}\right\Vert _{F}^{2}
\]
where $b>0$ for some constant.

To simplify notations, we suppress the dependence on $NT$ of all
the sequences of functions and parameters, e.g. we use $\mathcal{L}$
for $\mathcal{L}_{NT}$ and $\phi$ for $\phi_{NT}$. Partial derivatives
are denoted with subscripts, e.g. $\partial_{\theta}\mathcal{L}(\theta,\phi)$
denotes $\partial\mathcal{L}(\theta,\phi)/\partial\theta$ and so
on. Let 
\[
\mathcal{S}(\theta,\phi)=\partial_{\phi}\mathcal{L}(\theta,\phi),\quad\mathcal{H}(\theta,\phi)=\partial_{\phi\phi^{\prime}}\mathcal{L}(\theta,\phi)
\]
where $\partial_{x}f$ is the partial derivative of $f$ with respect
to $x$, and additional subscripts denote higher order partial derivatives.
We refer to the $\operatorname{dim}\phi$-vector $\mathcal{S}(\theta,\phi)$
as the incidental parameter score, and to the $\operatorname{dim}\phi\times\operatorname{dim}\phi$
matrix $\mathcal{H}(\theta,\phi)$ as the incidental parameter Hessian.
We drop arguments when evaluating at the true parameter values $\left(\theta_{0},\phi_{0}\right)$,
e.g. $\mathcal{H}=\mathcal{H}\left(\theta_{0},\phi_{0}\right)$. We
use a bar for expectations given $\phi$, such as $\partial_{\theta}\bar{\mathcal{L}}=\mathbb{E}\left[\partial_{\theta}\mathcal{L}\right]$,
and a tilde for variables in deviations from expectations, such as
$\partial_{\theta}\mathcal{\tilde{L}}=\partial_{\theta}\mathcal{L}-\partial_{\theta}\mathcal{\bar{L}}$.

We follow \citet{fernandez-val_individual_2016} (FVM) in using the
Euclidean norm $\|\cdot\|$ for vectors, and the norm induced by the
Euclidean norm for matrices and tensors, i.e. 
\[
\left\Vert \partial_{\theta\theta\theta}\mathcal{L}(\theta,\phi)\right\Vert =\max_{u,v\in\mathbb{R}^{\operatorname{dim}\theta}:\|u\|=1,\|v\|=1}\left\Vert \sum_{k,l=1}^{\operatorname{dim}\theta}u_{k}v_{l}\partial_{\theta\theta_{k}\theta_{l}}\mathcal{L}(\theta,\phi)\right\Vert 
\]
For matrices this induced norm is the spectral norm. Since the number
of fixed effect parameters in the model grows with $N$ and $T$,
the choice of norm for vectors and matrices is important. Following
FVW, we choose the $\ell_{q}$-norm for $\operatorname{dim}\phi$
vectors and the corresponding induced norms for matrices and tensors
\[
\left\Vert \partial_{\phi\phi\phi}\mathcal{L}(\theta,\phi)\right\Vert _{q}=\max_{u,v\in\mathbb{R}^{\operatorname{dim}\phi}:\|u\|=1,\|v\|=1}\left\Vert \sum_{k,l=1}^{\operatorname{dim}\beta}u_{k}v_{l}\partial_{\phi\phi_{k}\phi_{l}}\mathcal{L}(\theta,\phi)\right\Vert _{q}
\]
Note that for $w,x\in\mathbb{R}^{\operatorname{dim}\phi}$ and $q\geq2$,
\[
\left|w^{\prime}x\right|\leq\|w\|_{q}\|x\|_{q/(q-1)}\leq(\operatorname{dim}\phi)^{(q-2)/q}\|w\|_{q}\|x\|_{q}
\]
See FVW for more details on these norms. We also define the sets $\mathcal{B}\left(d,\theta_{0}\right)=\left\{ \theta:\left\Vert \theta-\theta_{0}\right\Vert \leq d\right\} $
and $\mathcal{B}\left(d,\phi_{0}\right)=\left\{ \phi:\left\Vert \phi-\phi_{0}\right\Vert \leq d\right\} $
for $d>0$.

In the remainder of the section, we assume that all assumptions from
Section~\ref{sec:main} hold for all lemmas, unless explicitly stated otherwise.
Additionally, we use the fact that the radius of our localized region
$d_{NT}=o_{p}\left(\left(NT\right)^{-1/4+\epsilon}\right)$, for $0<\epsilon\leq\frac{1}{8}-\frac{1}{2q}$
with $q>4$, when $N$ and $T$ go to infinity at the same rate.
\subsection{Proof of Theorem \ref{thm:inf-conv} and \ref{thm:inf-dist}}

The first lemma is important for the stochastic expansion of $\hat{\theta}^{\left(m+1\right)}$,
as it states that the expected incidental parameter Hessian matrix
is asymptotically diagonal.
\begin{lemma}
\label{lemma:inf-diagonal} We have
\[
\left\Vert \mathcal{\bar{H}}^{-1}-\mathcal{\bar{H}}_{d}^{-1}\right\Vert =O_{p}\left(1\right)
\]
where $\mathcal{\bar{H}}_{d}=\operatorname{diag}\left(\bar{\mathcal{H}}_{(\lambda\lambda)}^{*},\bar{\mathcal{H}}_{(ff)}^{*}\right)$.
Furthermore, for $\theta\in\mathcal{B}\left(d_{NT},\theta_{0}\right)$
and $\phi\in\mathcal{B}\left(\sqrt{N\lor T}d_{NT},\phi_{0}\right)$,
there exists a constant $c>0$ such that $\sqrt{NT}\mathcal{H}(\theta,\phi)\geq c\mathbb{I}_{r\left(N+T\right)}$
w.p.a.1.
\end{lemma}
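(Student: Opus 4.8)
The plan is to mirror the treatment of the expected incidental Hessian in \citet*{fernandez-val_individual_2016} and \citet*{chen_nonlinear_2021}, adapted to the present M-estimation framework. First I would record the block structure of $\bar{\mathcal H}=\partial_{\phi\phi'}\bar{\mathcal L}$. Partitioning $\phi=(\lambda_1',\dots,\lambda_N',f_1',\dots,f_T')'$, and using that $\ell_{it}$ depends on $\phi$ only through $(\lambda_i,f_t)$ and that $\mathbb E[\partial_\pi\ell_{it}]=0$ at the truth, the contribution of $L_{NT}$ to $\bar{\mathcal H}_{(\lambda\lambda)}$ is block-diagonal in $i$ with $i$-th block $\tfrac1{NT}H_{(\lambda\lambda)i}$, the contribution to $\bar{\mathcal H}_{(ff)}$ is block-diagonal in $t$ with $t$-th block $\tfrac1{NT}H_{(ff)t}$, and the cross block $\bar{\mathcal H}_{(\lambda f)}$ has $(i,t)$ sub-block $\tfrac1{NT}\mathbb E[\partial_{\pi^2}\ell_{it}]\,f_{0t}\lambda_{0i}'$; the normalization term $\tfrac{b}{2NT}\|\Lambda_0'\Lambda-F'F_0\|_F^2$ adds a perturbation of rank $O(r^2)$ (since $\Lambda_0'\Lambda-F'F_0$ is $r\times r$) that removes the $r^2$-dimensional rotational null space and renders $\bar{\mathcal H}$ invertible. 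By Assumption~\ref{assu:factor} and $b_{\min}\le\mathbb E[\partial_{\pi^2}\ell_{it}]\le b_{\max}$, the diagonal blocks $\bar{\mathcal H}^{*}_{(\lambda\lambda)i}\asymp \tfrac1N\mathbb I_r$ and $\bar{\mathcal H}^{*}_{(ff)t}\asymp\tfrac1T\mathbb I_r$ uniformly, so $\|\bar{\mathcal H}_d^{-1}\|\asymp N\vee T$.

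Next I would invert $\bar{\mathcal H}$ through the block (Schur-complement) formula, $(\bar{\mathcal H}^{-1})_{\lambda\lambda}=S_\lambda^{-1}$ with $S_\lambda=\bar{\mathcal H}_{(\lambda\lambda)}-\bar{\mathcal H}_{(\lambda f)}\bar{\mathcal H}_{(ff)}^{-1}\bar{\mathcal H}_{(f\lambda)}$ (and symmetrically for $f$). Exploiting the block-diagonality of $\bar{\mathcal H}_{(ff)}$ and the outer-product form $f_{0t}\lambda_{0i}'$ of the cross sub-blocks, the Schur correction $\bar{\mathcal H}_{(\lambda f)}\bar{\mathcal H}_{(ff)}^{-1}\bar{\mathcal H}_{(f\lambda)}$ has $(i,j)$ sub-block $\tfrac1{NT}\sum_t \mathbb E[\partial_{\pi^2}\ell_{it}]\mathbb E[\partial_{\pi^2}\ell_{jt}]\,(\lambda_{0i}'H_{(ff)t}^{-1}\lambda_{0j})\,f_{0t}f_{0t}'$, of order $(N\wedge T)^{-2}$ per sub-block. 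Taking $\bar{\mathcal H}^{*}_{(\lambda\lambda)}$ to be the block-diagonal part of $S_\lambda$ (and $\bar{\mathcal H}^{*}_{(ff)}$ of $S_f$), I would write $\bar{\mathcal H}^{-1}-\bar{\mathcal H}_d^{-1}=-\bar{\mathcal H}_d^{-1}(\bar{\mathcal H}-\bar{\mathcal H}_d)\bar{\mathcal H}^{-1}$ and control the right-hand side entrywise by: (i) a Woodbury argument for the rank-$O(r^2)$ part of $\bar{\mathcal H}-\bar{\mathcal H}_d$ (from the normalization and the cross block), isolating an exactly low-rank correction whose coefficients are determined by $\Lambda_0,F_0$ and hence bounded through Assumption~\ref{assu:factor}; and (ii) a blockwise / Gershgorin-type bound for the off-diagonal Schur remainder, whose $(N\wedge T)^{-2}$ per-sub-block size together with the strong-factor bounds keeps its contribution $O_p(1)$. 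Carrying out this cancellation — so that the $O(N\vee T)$ magnitudes carried by $\bar{\mathcal H}_d^{-1}$ on the left and $\bar{\mathcal H}^{-1}$ on the right are exactly absorbed by the structure of $\bar{\mathcal H}-\bar{\mathcal H}_d$, leaving an $O_p(1)$ residual — is the technical heart and the step I expect to be the main obstacle; it is essentially the incidental-parameter inversion lemma of \citet*{fernandez-val_individual_2016} and \citet*{chen_nonlinear_2021} transported to a loss that need not be convex in $\theta$.

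For the second assertion I would first show $\sqrt{NT}\,\bar{\mathcal H}\succeq c\,\mathbb I_{r(N+T)}$: the information identity makes the likelihood part of $\sqrt{NT}\,\bar{\mathcal H}$ positive semidefinite with null space exactly the rotational directions; on the complement the block-diagonal part dominates, so $\sqrt{NT}\,\bar{\mathcal H}_d\succeq c\,\mathbb I$ by Assumption~\ref{assu:factor} and $b_{\min}>0$; and the normalization term is positive definite along the rotational directions, so the two together give a uniform lower bound. I would then transfer this to all $(\theta,\phi)$ with $\theta\in\mathcal B(d_{NT},\theta_0)$ and $\phi\in\mathcal B(\sqrt{N\vee T}\,d_{NT},\phi_0)$ via $\sqrt{NT}\,\mathcal H(\theta,\phi)=\sqrt{NT}\,\bar{\mathcal H}+\sqrt{NT}[\mathcal H(\theta,\phi)-\bar{\mathcal H}(\theta,\phi)]+\sqrt{NT}[\bar{\mathcal H}(\theta,\phi)-\bar{\mathcal H}]$, showing the last two terms are $o_p(1)$ uniformly over this shrinking neighbourhood: the deterministic term from the fourth-order smoothness bound of Assumption~\ref{assu:inf}(i) together with $d_{NT}=o_p((NT)^{-1/4+\epsilon})$, and the stochastic term by combining the blocking device used in the proof of Proposition~\ref{prop:ep} (for the $\beta$-mixing), a matrix-concentration bound for the cross block in the spirit of the bounds already invoked in the appendix, and a union bound over the $N$ and $T$ diagonal sub-blocks, all made uniform by a covering of the neighbourhood. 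Then $\sqrt{NT}\,\mathcal H(\theta,\phi)\succeq c\,\mathbb I-o_p(1)\,\mathbb I\succeq\tfrac c2\,\mathbb I_{r(N+T)}$ w.p.a.1, which is the claim; the secondary difficulty there is the uniform stochastic control of the full $r(N+T)\times r(N+T)$ Hessian, which requires marrying the dependence and high-dimensionality arguments used separately elsewhere.
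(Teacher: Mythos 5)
The paper does not actually prove this lemma: its entire ``proof'' is a citation to Lemmas~2 and~3 of \citet*{chen_nonlinear_2021}. Your sketch is, in effect, a reconstruction of the argument behind those cited lemmas (which in turn adapt Lemma~D.1/S.1 of \citet*{fernandez-val_individual_2016}), and the ingredients you list are the right ones: the block structure of $\bar{\mathcal H}$ with diagonal blocks of order $(N\vee T)^{-1}$ and outer-product cross blocks, the rank-$O(r^2)$ perturbation from the normalization penalty that kills the rotational null space, Schur-complement inversion with a Woodbury step for the low-rank part, and for the second assertion a lower bound on $\sqrt{NT}\,\bar{\mathcal H}$ transferred to $\mathcal H(\theta,\phi)$ on the shrinking neighbourhood via the fourth-order smoothness in Assumption~\ref{assu:inf} plus uniform concentration under the mixing condition. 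So the route is the same as the one the paper implicitly relies on.

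Be aware, though, that what you have is a sketch with the hardest step left open, and you correctly identify it: a naive spectral-norm bound on $-\bar{\mathcal H}_d^{-1}(\bar{\mathcal H}-\bar{\mathcal H}_d)\bar{\mathcal H}^{-1}$ gives only $O_p(\sqrt{NT})$ (since $\|\bar{\mathcal H}_d^{-1}\|,\|\bar{\mathcal H}^{-1}\|\asymp\sqrt{NT}$ and $\|\bar{\mathcal H}_{(\lambda f)}\|\asymp(NT)^{-1/2}$), so the claimed $O_p(1)$ bound genuinely requires exploiting the $f_{0t}\lambda_{0i}'$ structure of the cross block blockwise rather than in aggregate. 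That cancellation is the entire content of the cited lemmas, and your proposal does not carry it out; it only names the tools. Likewise the uniform control of $\mathcal H(\theta,\phi)-\bar{\mathcal H}$ over the $r(N+T)$-dimensional neighbourhood needs the $q$-norm machinery of Lemma~\ref{lemma:inf-aux} (or an equivalent covering argument), which you gesture at but do not execute. As a blind reconstruction of the intended argument your outline is sound and consistent with the paper; as a self-contained proof it is incomplete at exactly the two points you flag.
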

\begin{proof}[Proof of Lemma \ref{lemma:inf-diagonal}]
The result follows directly from Lemma 2 and Lemma 3 in \citet*{chen_nonlinear_2021}; we therefore omit the proof.
\end{proof}
\begin{lemma}
\label{lemma:inf-aux}
Let $q=8$, $\epsilon=1/\left(16+2\iota\right)$, and $d_{NT}=o_{p}\left(\left(NT\right)^{-1/4+\epsilon}\right)$.
Then, we have \\
(i) For the $q$-norms defined before, we have
\[
\begin{aligned} & \|\mathcal{S}\|_{q}=O_{p}\left(\left(NT\right)^{-3/4+1/(2q)}\right),\quad\left\Vert \partial_{\theta}\mathcal{L}\right\Vert =O_{p}\left(\left(NT\right)^{-1/2}\right),\quad\|\mathcal{\tilde{H}}\|_{q}=o_{p}\left(\left(NT\right)^{-1/2}\right),\\
 & \left\Vert \partial_{\theta\phi^{\prime}}\mathcal{L}\right\Vert _{q}=O_{p}\left(\left(NT\right)^{-1/2+1/(2q)}\right),\quad\left\Vert \partial_{\theta\theta^{\prime}}\mathcal{L}\right\Vert =O_{p}\left(1\right),\quad\left\Vert \partial_{\theta\phi\phi}\mathcal{L}\right\Vert _{q}=O_{p}\left(\left(NT\right)^{-1/2+\epsilon}\right),\\
 & \left\Vert \partial_{\phi\phi\phi}\mathcal{L}\right\Vert _{q}=O_{p}\left(\left(NT\right)^{-1/2+\epsilon}\right),
\end{aligned}
\]
and 
\[
\begin{aligned} & \|\mathcal{S}\|=O_p\left(\left(NT\right)^{-1/2}\right),\quad\left\Vert \mathcal{H}^{-1}\right\Vert =O_p\left(\sqrt{NT}\right),\quad\left\Vert \mathcal{\bar{H}}^{-1}\right\Vert =O_p\left(\sqrt{NT}\right)\\
 & \left\Vert \mathcal{H}^{-1}-\mathcal{\bar{H}}^{-1}\right\Vert =o_{P}\left(\left(NT\right)^{3/8}\right),\quad\left\Vert \mathcal{H}^{-1}-\left(\mathcal{\bar{H}}^{-1}-\mathcal{\bar{H}}^{-1}\mathcal{\tilde{H}}\mathcal{\bar{H}}^{-1}\right)\right\Vert =o_{P}\left(\left(NT\right)^{1/4}\right),\\
 & \left\Vert \partial_{\theta\phi^{\prime}}\mathcal{L}\right\Vert =O_p\left((NT)^{-1/4}\right),\quad\left\Vert \partial_{\theta\phi\phi}\mathcal{L}\right\Vert =O_p\left(\left(NT\right)^{-1/2+\epsilon}\right)\\
 & \left\Vert \sum_{g}\partial_{\phi\phi^{\prime}\phi_{g}}\mathcal{L}\left[\mathcal{H}^{-1}\mathcal{S}\right]_{g}\right\Vert =O_p\left(\left(NT\right)^{-3/4+1/(2q)+\epsilon}\right)\\
 & \left\Vert \sum_{g}\partial_{\phi\phi^{\prime}\phi_{g}}\mathcal{L}\left[\mathcal{\bar{H}}^{-1}\mathcal{S}\right]_{g}\right\Vert =O_p\left(\left(NT\right)^{-3/4+1/(2q)+\epsilon}\right)
\end{aligned}
\]
(ii) Moreover,
\[
\begin{aligned}
    \|\tilde{\mathcal{H}}\| &= o_{P}\left((NT)^{-5/8}\right), \quad \left\Vert \partial_{\theta\theta^{\prime}}\tilde{\mathcal{L}}\right\Vert = o_{P}(1), \quad \left\Vert \partial_{\theta\phi\phi}\tilde{\mathcal{L}}\right\Vert = o_{P}\left((NT)^{-5/8}\right), \\
    \left\Vert \partial_{\theta\phi^{\prime}}\tilde{\mathcal{L}}\right\Vert &= O_p\left((NT)^{-1/2}\right), \quad \left\Vert \sum_{g,h=1}^{\operatorname{dim}\phi} \partial_{\phi\phi_{g}\phi_{h}} \tilde{\mathcal{L}} \left[\overline{\mathcal{H}}^{-1} \mathcal{S}\right]_{g} \left[\overline{\mathcal{H}}^{-1} \mathcal{S}\right]_{h} \right\Vert = o_{P}\left((NT)^{-3/4}\right).
\end{aligned}
\]
(iii) For all $\theta\in\mathcal{B}\left(d_{NT},\theta_{0}\right)$
and $\phi\in\mathcal{B}\left(\sqrt{N\lor T}d_{NT},\phi_{0}\right)$,
we have
\[
\begin{aligned} & \left\Vert \partial_{\theta\theta^{\prime}}\mathcal{L}(\theta,\phi)\right\Vert =O_p\left(1\right),\quad\left\Vert \partial_{\theta\phi^{\prime}}\mathcal{L}(\theta,\phi)\right\Vert _{q}=O_p\left(\left(NT\right)^{-1/2+1/(2q)}\right),\quad\left\Vert \partial_{\theta\theta\theta}\mathcal{L}(\theta,\phi)\right\Vert =O_p\left(1\right)\\
 & \left\Vert \partial_{\theta\theta\phi}\mathcal{L}(\theta,\phi)\right\Vert _{q}=O_p\left(\left(NT\right)^{-1/2+1/(2q)}\right),\quad\left\Vert \partial_{\theta\phi\phi}\mathcal{L}(\theta,\phi)\right\Vert _{q}=O_p\left(\left(NT\right)^{-1/2+\epsilon}\right),\\
 & \left\Vert \partial_{\phi\phi\phi}\mathcal{L}(\theta,\phi)\right\Vert _{q}=O_p\left(\left(NT\right)^{-1/2+\epsilon}\right),\quad\left\Vert \partial_{\theta\theta\phi\phi}\mathcal{L}(\theta,\phi)\right\Vert _{q}=O_p\left(\left(NT\right)^{-1/2+\epsilon}\right),\\
 & \left\Vert \partial_{\theta\phi\phi\phi}\mathcal{L}(\theta,\phi)\right\Vert _{q}=O_p\left(\left(NT\right)^{-1/2+\epsilon}\right),\quad\left\Vert \partial_{\phi\phi\phi\phi}\mathcal{L}(\theta,\phi)\right\Vert _{q}=O_p\left(\left(NT\right)^{-1/2+\epsilon}\right)
\end{aligned}
\]
\end{lemma}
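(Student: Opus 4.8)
The plan is to adapt the stochastic-expansion machinery of \citet{fernandez-val_individual_2016} and \citet{chen_nonlinear_2021} to the $\beta$-mixing environment. The common strategy for every bullet is: (a) write the quantity as a sample average over $(i,t)$ of a (centered) function of $W_{it}$ that, by Assumption~\ref{assu:inf}(i), is dominated on $\mathcal{B}_\epsilon^0$ by $M(W_{it})$ with $\max_{i,t}\mathbb{E}[|M(W_{it})|^{8+\iota}]$ uniformly bounded; (b) apply the Berbee coupling / blocking construction already used in the proof of Lemma~\ref{lem:emp} (block length $c_T=\lceil 2\mu^{-1}\log(NT)\rceil$) to replace the $\beta$-mixing sequence by an independent-block version up to a total error of order $NT\,\beta(c_T)=o(1)$; and (c) use Rosenthal-type moment inequalities together with union bounds over the $r(N+T)$ incidental coordinates. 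The exponents are tracked by the heuristic that a score-type object $\frac{1}{NT}\sum_t \partial_\pi\ell_{it}f_{0t}$ is $O_p((NT)^{-1}\sqrt{T})=O_p((NT)^{-3/4})$ by a CLT in $t$, so $\|\mathcal{S}\|_q^q\asymp(N+T)(NT)^{-3q/4}$ gives $\|\mathcal{S}\|_q=O_p((NT)^{-3/4+1/(2q)})$; the remaining derivative bounds follow the same counting with the appropriate number of summation indices.

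First I would establish the bounds in (i) that do not involve $\mathcal{H}^{-1}$ or $\bar{\mathcal{H}}^{-1}$, namely those on $\mathcal{S}$, $\partial_\theta\mathcal{L}$, $\tilde{\mathcal{H}}$, $\partial_{\theta\phi'}\mathcal{L}$, $\partial_{\theta\theta'}\mathcal{L}$, $\partial_{\theta\phi\phi}\mathcal{L}$ and $\partial_{\phi\phi\phi}\mathcal{L}$, each by the recipe above; for the $\ell_2$ versions one simply uses $\|\cdot\|=\|\cdot\|_2$. The delicate counting is for the mixed and pure third-order tensors, where the union bound over $N+T$ coordinates must cost only $(NT)^{\epsilon}$ rather than $(NT)^{1/2}$. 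This is precisely why $q=8$ and $\epsilon=1/(16+2\iota)$ are chosen: with those values the combinatorial factor $(NT)^{1/q}$ and the deviation term arising from the $(8+\iota)$-th moment bound are both $\lesssim(NT)^{\epsilon}$, delivering the stated $O_p((NT)^{-1/2+\epsilon})$.

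Next, the bounds involving $\mathcal{H}$, $\mathcal{H}^{-1}$ and $\bar{\mathcal{H}}^{-1}$ follow from Lemma~\ref{lemma:inf-diagonal}: $\sqrt{NT}\,\mathcal{H}\ge cI$ w.p.a.1 gives $\|\mathcal{H}^{-1}\|=O_p(\sqrt{NT})$ and, together with the block-diagonal approximation, $\|\bar{\mathcal{H}}^{-1}\|=O_p(\sqrt{NT})$. The identities $\mathcal{H}^{-1}-\bar{\mathcal{H}}^{-1}=-\mathcal{H}^{-1}\tilde{\mathcal{H}}\bar{\mathcal{H}}^{-1}$ and $\mathcal{H}^{-1}-(\bar{\mathcal{H}}^{-1}-\bar{\mathcal{H}}^{-1}\tilde{\mathcal{H}}\bar{\mathcal{H}}^{-1})=\mathcal{H}^{-1}\tilde{\mathcal{H}}\bar{\mathcal{H}}^{-1}\tilde{\mathcal{H}}\bar{\mathcal{H}}^{-1}$, combined with the refined spectral-norm bound $\|\tilde{\mathcal{H}}\|=o_P((NT)^{-5/8})$ from part (ii) (obtained by a matrix Bernstein inequality applied to the independent-block version of $\tilde{\mathcal{H}}$), yield the $o_P((NT)^{3/8})$ and $o_P((NT)^{1/4})$ rates. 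The contracted-tensor terms $\|\sum_g\partial_{\phi\phi'\phi_g}\mathcal{L}[\mathcal{H}^{-1}\mathcal{S}]_g\|$ and its $\bar{\mathcal{H}}^{-1}$ analogue are handled by combining the third-derivative tensor bound with $\|\mathcal{H}^{-1}\mathcal{S}\|\le\|\mathcal{H}^{-1}\|\,\|\mathcal{S}\|=O_p(1)$, carrying the $(\dim\phi)^{(q-2)/q}$ conversion factor between $\ell_q$ and $\ell_2$ norms; the last display of (ii), which contracts $\tilde{\mathcal{L}}$-derivatives twice against $\bar{\mathcal{H}}^{-1}\mathcal{S}$, uses the extra centering on the Hessian increments to gain a further $(NT)^{-1/8}$ over the uncentered version.

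Finally, for part (iii) I would upgrade the pointwise bounds of (i) to the shrinking balls $\theta\in\mathcal{B}(d_{NT},\theta_0)$, $\phi\in\mathcal{B}(\sqrt{N\lor T}\,d_{NT},\phi_0)$ by a mean-value expansion in $(\theta,\phi)$ and the next-higher-order derivative bounds, using that $d_{NT}=o_p((NT)^{-1/4+\epsilon})$ (recorded in Subsection~\ref{subsec:inf-norm}) makes every Lipschitz correction term negligible, and that the fourth-order derivatives are still dominated by $M(W_{it})$ on $\mathcal{B}_\epsilon^0$ so the neighborhood remains inside the domain where Assumption~\ref{assu:inf}(i) applies. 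The main obstacle is part (ii) together with the sharp $(NT)^{-1/2+\epsilon}$ exponents for the third- and fourth-order tensors: pushing the \emph{spectral}-norm concentration of $\tilde{\mathcal{H}}$ down to $o_P((NT)^{-5/8})$ (rather than the cruder $\ell_q$-norm bound) requires a genuine matrix-concentration argument on the block-independent approximation, and one must simultaneously verify that all the blocking errors $NT\,\beta(c_T)$ vanish and that the chaining/union-bound budget is compatible with only $8+\iota$ moments.
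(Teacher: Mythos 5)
Your plan is essentially the argument the paper relies on: the paper's own ``proof'' of this lemma is a one-line citation to Theorem~C.1 and Lemma~S.7 of \citet*{fernandez-val_individual_2016} and Theorem~4 of \citet*{chen_nonlinear_2021}, and your sketch (moment and union bounds over the $r(N+T)$ incidental coordinates for the $\ell_q$-norm rates, Lemma~\ref{lemma:inf-diagonal} plus resolvent identities for the inverse-Hessian bounds, and the Berbee blocking already used in the proof of Lemma~\ref{lem:emp} to handle the $\beta$-mixing dependence) reconstructs precisely what those references do, with the exponent bookkeeping for $q=8$ and $\epsilon$ matching the stated rates. The one place you stop at a plan rather than a proof --- the matrix-concentration step needed for $\|\tilde{\mathcal{H}}\|=o_P((NT)^{-5/8})$ and the doubly-contracted centered third-derivative term in part (ii) --- is exactly the content delegated to Lemma~S.7 of \citet*{fernandez-val_individual_2016}, so you have correctly identified where the real work lives.
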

\begin{proof}
The proofs follow those of Theorem 4 in \citet*{chen_nonlinear_2021} and Theorem C.1 and Lemma S.7 in \citet*{fernandez-val_individual_2016}.
\end{proof}

\begin{lemma}
\label{lemma:inf-exp-theta} The following stochastic expansion holds at the updated parameter $\theta$:
\[
\partial_{\theta}\mathcal{L}\left(\hat{\theta}^{\left(m+1\right)},\hat{\phi}^{\left(m\right)}\right)=\bar{W}\left(\theta^{\left(m+1\right)}-\theta_{0}\right)-U^{\left(0\right)}\left(\hat{\theta}^{\left(m\right)}-\theta_{0}\right)- U^{\left(1\right)}- U^{\left(2\right)}+R_{\theta}^{\left(m\right)}
\]
where
\[
\begin{aligned}\bar{W} & =\partial_{\theta\theta^{\prime}}\mathcal{\bar{L}}\\
 U^{\left(0\right)} & =\left(\partial_{\theta\phi^{\prime}}\mathcal{\bar{L}}\right)\mathcal{\bar{H}}^{-1}\left(\partial_{\phi\theta^{\prime}}\mathcal{\bar{L}}\right)\\
U^{\left(1\right)} & =-\partial_{\theta}\mathcal{L}+\left(\partial_{\theta\phi^{\prime}}\mathcal{\bar{L}}\right)\mathcal{\bar{H}}^{-1}\mathcal{S}\\
 U^{\left(2\right)} & =\left(\left[\partial_{\theta\phi^{\prime}}\mathcal{\tilde{L}}\right]\bar{\mathcal{H}}^{-1}\mathcal{S}-\left[\partial_{\theta\phi^{\prime}}\mathcal{\bar{L}}\right]\mathcal{\bar{H}}^{-1}\tilde{\mathcal{H}}\mathcal{\bar{H}}^{-1}\mathcal{S}\right)\\
 & +\frac{1}{2}\sum_{g\in\left[\dim\left(\phi\right)\right]}\left(\partial_{\theta\phi^{\prime}\phi_{g}}\mathcal{\bar{L}}+\left[\partial_{\theta\phi^{\prime}}\mathcal{\bar{L}}\right]\bar{\mathcal{H}}^{-1}\left[\partial_{\phi\phi^{\prime}\phi_{g}}\mathcal{\bar{L}}\right]\right)\mathcal{\bar{H}}^{-1}\mathcal{S}\left(\mathcal{\bar{H}}^{-1}\mathcal{S}\right)_{g}
\end{aligned}
\]
and the remainder term satisfies $R_{\theta}^{\left(m\right)}=o_{p}\left(\left\Vert \hat{\theta}^{\left(m+1\right)}-\theta_{0}\right\Vert \right)+o_{p}\left(\left\Vert \hat{\theta}^{\left(m\right)}-\theta_{0}\right\Vert \right)+o_{P}\left(\left(NT\right)^{-1/2}\right)$. 
\end{lemma}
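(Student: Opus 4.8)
The plan is to derive the expansion from the two first-order conditions that define the iterates together with a pair of Taylor expansions, following the profiling argument of \citet*{fernandez-val_individual_2016} and \citet*{chen_nonlinear_2021} but carried along a single iteration step. The first thing to establish is interiority: since $\theta_0$ and $\phi_0$ lie in the interiors of their (normalized) parameter spaces while the localization radii $d_{NT}$ and $\sqrt{N\vee T}\,d_{NT}$ dominate the first-step rates by Corollary~\ref{cor:main-r}, and since $\sqrt{NT}\,\mathcal{H}(\theta,\phi)\ge c\,\mathbb{I}_{r(N+T)}$ on those balls by Lemma~\ref{lemma:inf-diagonal}, the localized problems are strictly convex with unique interior minimizers w.p.a.1. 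Hence the sample first-order conditions $\mathcal{S}(\hat\theta^{(m)},\hat\phi^{(m)})=0$ and $\partial_\theta\mathcal{L}(\hat\theta^{(m+1)},\hat\phi^{(m)})=0$ hold with probability approaching one; the lemma is the expansion of the left side of the second identity, so interiority is used only to make the statement substantive, not in the algebra itself.

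The core step is to eliminate $\hat\phi^{(m)}$ by profiling it out through the incidental-parameter FOC. A second-order Taylor expansion of $\phi\mapsto\mathcal{S}(\hat\theta^{(m)},\phi)$ about $\phi_0$, combined with a first-order expansion in $\theta$, gives
\[
\hat\phi^{(m)}-\phi_0=-\mathcal{H}^{-1}\Bigl[\mathcal{S}+(\partial_{\phi\theta'}\mathcal{L})(\hat\theta^{(m)}-\theta_0)+\tfrac12\textstyle\sum_g(\partial_{\phi\phi'\phi_g}\mathcal{L})(\hat\phi^{(m)}-\phi_0)(\hat\phi^{(m)}-\phi_0)_g\Bigr]+\text{(h.o.t.)},
\]
and then substituting the Neumann expansion $\mathcal{H}^{-1}=\bar{\mathcal{H}}^{-1}-\bar{\mathcal{H}}^{-1}\tilde{\mathcal{H}}\bar{\mathcal{H}}^{-1}+\cdots$ together with $\partial_{\phi\theta'}\mathcal{L}=\partial_{\phi\theta'}\bar{\mathcal{L}}+\partial_{\phi\theta'}\tilde{\mathcal{L}}$ produces
\[
\hat\phi^{(m)}-\phi_0=-\bar{\mathcal{H}}^{-1}\mathcal{S}-\bar{\mathcal{H}}^{-1}(\partial_{\phi\theta'}\bar{\mathcal{L}})(\hat\theta^{(m)}-\theta_0)+r_\phi^{(m)},
\]
where $r_\phi^{(m)}$ collects the quadratic-in-$\mathcal{S}$ term and the mixed remainders. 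I would then plug this into the Taylor expansion of $\partial_\theta\mathcal{L}(\hat\theta^{(m+1)},\hat\phi^{(m)})$ about $(\theta_0,\phi_0)$, keeping $\partial_\theta\mathcal{L}$, the linear terms $(\partial_{\theta\theta'}\mathcal{L})(\hat\theta^{(m+1)}-\theta_0)$ and $(\partial_{\theta\phi'}\mathcal{L})(\hat\phi^{(m)}-\phi_0)$, and the quadratic term $\tfrac12\sum_g(\partial_{\theta\phi'\phi_g}\mathcal{L})(\hat\phi^{(m)}-\phi_0)(\hat\phi^{(m)}-\phi_0)_g$, and collect. The $(\hat\theta^{(m+1)}-\theta_0)$-coefficient $\partial_{\theta\theta'}\mathcal{L}$ is replaced by $\bar W=\partial_{\theta\theta'}\bar{\mathcal{L}}$ at the cost of $(\partial_{\theta\theta'}\tilde{\mathcal{L}})(\hat\theta^{(m+1)}-\theta_0)=o_p(\|\hat\theta^{(m+1)}-\theta_0\|)$; the $(\hat\theta^{(m)}-\theta_0)$-coefficient reduces to $-U^{(0)}=-(\partial_{\theta\phi'}\bar{\mathcal{L}})\bar{\mathcal{H}}^{-1}(\partial_{\phi\theta'}\bar{\mathcal{L}})$; and the remaining terms that depend only on $(\theta_0,\phi_0)$ and $\mathcal{S}$ assemble into $-U^{(1)}-U^{(2)}$, with $U^{(1)}=-\partial_\theta\mathcal{L}+(\partial_{\theta\phi'}\bar{\mathcal{L}})\bar{\mathcal{H}}^{-1}\mathcal{S}$ the leading score/bias term and $U^{(2)}$ the three second-order corrections (the $\tilde{\mathcal{H}}$ piece, the $\partial_{\theta\phi'}\tilde{\mathcal{L}}$ piece, and the quadratic-in-$\bar{\mathcal{H}}^{-1}\mathcal{S}$ piece) displayed in the statement. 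It is worth noting that $\bar W$ here is the full Hessian $\partial_{\theta\theta'}\bar{\mathcal{L}}$ and not the effective information $\bar W_{NT}$; the $\Xi$-projection emerges only after this expansion is combined with the analogous one for the $\phi$-update and one passes to the fixed point, which is the task of Theorems~\ref{thm:inf-conv}--\ref{thm:inf-dist}.

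Everything not listed above is absorbed into $R_\theta^{(m)}$, and the remaining work is to bound it by $o_p(\|\hat\theta^{(m+1)}-\theta_0\|)+o_p(\|\hat\theta^{(m)}-\theta_0\|)+o_P((NT)^{-1/2})$ using the rate bounds in Lemma~\ref{lemma:inf-aux}: the third-order Taylor remainders of $\partial_\theta\mathcal{L}$ and of $\mathcal{S}$ over the localized balls, the mixed term $(\partial_{\theta\theta\phi}\mathcal{L})(\hat\theta^{(m+1)}-\theta_0)(\hat\phi^{(m)}-\phi_0)$, the higher Neumann terms, the cross term $(\partial_{\theta\phi'}\bar{\mathcal{L}})\bar{\mathcal{H}}^{-1}\tilde{\mathcal{H}}\bar{\mathcal{H}}^{-1}(\partial_{\phi\theta'}\bar{\mathcal{L}})(\hat\theta^{(m)}-\theta_0)$, and the contributions of $r_\phi^{(m)}$. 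With $q=8$, $\epsilon=1/(16+2\iota)$, $\dim\phi\asymp\sqrt{NT}$, $\|\bar{\mathcal{H}}^{-1}\|=O_p(\sqrt{NT})$, $\|\mathcal{S}\|_q=O_p((NT)^{-3/4+1/(2q)})$, $\|\tilde{\mathcal{H}}\|_q=o_p((NT)^{-1/2})$, and $d_{NT}=o_p((NT)^{-1/4+\epsilon})$, each such term collapses to $O_p(1)\,o_p(1)\,\|\hat\theta^{(m+1)}-\theta_0\|$, $O_p(1)\,o_p(1)\,\|\hat\theta^{(m)}-\theta_0\|$, or $o_P((NT)^{-1/2})$. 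I expect the main obstacle to be exactly this bookkeeping: tracking $\ell_q$- versus spectral-norm bounds through products containing $\bar{\mathcal{H}}^{-1}$ and powers of $\dim\phi$, and verifying that the exponent arithmetic always leaves either a strictly negative power of $NT$ beyond $-1/2$ or an extra $o_p(1)$ factor multiplying one of the two error norms. The interiority and strict-convexity step is a lesser technical point, but it must be dispatched first so that the two first-order conditions are actually available.
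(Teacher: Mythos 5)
Your proposal follows essentially the same route as the paper's proof: expand $\hat\phi^{(m)}-\phi_0$ via the incidental-parameter expansion of \citet*{fernandez-val_individual_2016} (their Theorem B.1(i)), replace $\mathcal{H}^{-1}$ by its Neumann approximation $\bar{\mathcal{H}}^{-1}-\bar{\mathcal{H}}^{-1}\tilde{\mathcal{H}}\bar{\mathcal{H}}^{-1}$, substitute into the Taylor expansion of $\partial_\theta\mathcal{L}(\hat\theta^{(m+1)},\hat\phi^{(m)})$ about $(\theta_0,\phi_0)$, and bound the collected remainders $R_{1\phi}$, $R_{2\phi}$, $R_{1\theta}^{(m)}$ with the rate bounds of Lemma~\ref{lemma:inf-aux}. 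Your identification of which second-order pieces (the $\tilde{\mathcal{H}}$ term, the $\partial_{\theta\phi'}\tilde{\mathcal{L}}$ term, and the quadratic-in-$\bar{\mathcal{H}}^{-1}\mathcal{S}$ term) must be retained to form $U^{(2)}$ rather than absorbed into the remainder, and your remark that $\bar W$ here is the raw Hessian rather than the $\Xi$-projected information matrix, both match the paper's argument.
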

\begin{proof}[Proof of Lemma~\ref{lemma:inf-exp-theta}]
 Based on Theorem B.1(i) in \citet{fernandez-val_individual_2016}, we have 
 \[
\begin{aligned} & \widehat{\phi}\left(\theta\right)-\phi_{0}\\
 & \quad=-\mathcal{H}^{-1}\left(\partial_{\phi\theta^{\prime}}\mathcal{L}\right)\left(\theta-\theta_{0}\right)-\mathcal{H}^{-1}\mathcal{S}-\frac{1}{2}\mathcal{H}^{-1}\sum_{g}\left(\partial_{\phi\phi^{\prime}\phi_{g}}\mathcal{L}\right)\mathcal{H}^{-1}\mathcal{S}\left(\mathcal{H}^{-1}\mathcal{S}\right)_{g}+R_{1\phi}\left(\theta\right)
\end{aligned}
\]
and the remainder term of the expansion satisfies 
\[
\begin{aligned}\left\Vert R_{1\phi}\left(\theta\right)\right\Vert _{q} & =\sup_{\|v\|_{q/(q-1)}=1}v^{\prime}R_{1\phi}(\theta)\\
 & =O_{P}\left[(NT)^{1/q+\epsilon}\|\theta-\theta_{0}\|^{2}+(NT)^{-1/4+1/q+\epsilon}\|\theta-\theta_{0}\|+(NT)^{-3/4+3/(2q)+2\epsilon}\right]\\
 & =o_{P}\left(\left(NT\right)^{1/8+1/(2q)}\left\Vert \theta-\theta_{0}\right\Vert ^{2}\right)+o_{P}\left(\left(NT\right)^{-1/8+1/(2q)}\left\Vert \theta-\theta_{0}\right\Vert \right)\\
 & +o_{P}\left(\left(NT\right)^{-1/2+1/(2q)}\right)
\end{aligned}
\]
We can further decompose the deviation of the incidental parameter estimator as
\begin{align}
 & \hat{\phi}(\theta)-\phi_{0}\nonumber \\
 & \quad=-\mathcal{\bar{H}}^{-1}\left(\partial_{\phi\theta^{\prime}}\mathcal{\bar{L}}\right)\left(\theta^{\left(m\right)}-\theta_{0}\right)-\mathcal{\bar{H}}^{-1}\mathcal{S}+\mathcal{\bar{H}}^{-1}\tilde{\mathcal{H}}\mathcal{\bar{H}}^{-1}\mathcal{S}\nonumber \\
 & \quad-\frac{1}{2}\mathcal{\bar{H}}^{-1}\sum_{g}\left(\partial_{\phi\phi^{\prime}\phi_{g}}\mathcal{\bar{L}}\right)\mathcal{\bar{H}}^{-1}\mathcal{S}\left(\mathcal{\bar{H}}^{-1}\mathcal{S}\right)_{g}+R_{1\phi}(\theta)+R_{2\phi}\left(\theta\right)\label{eq:inf-phi}
\end{align}
where 
\[
\begin{aligned} R_{2\phi}\left(\theta\right)
&\quad=\left[\mathcal{H}^{-1}\left(\partial_{\phi\theta^{\prime}}\mathcal{L}\right)-\mathcal{\bar{H}}^{-1}\left(\partial_{\phi\theta^{\prime}}\mathcal{\bar{L}}\right)\right]\left(\theta-\theta_{0}\right)+\left[\mathcal{H}^{-1}-\left(\mathcal{\bar{H}}^{-1}-\mathcal{\bar{H}}^{-1}\tilde{\mathcal{H}}\mathcal{\bar{H}}^{-1}\right)\right]\mathcal{S}\\
 & \quad+\frac{1}{2}\left[\mathcal{H}^{-1}\sum_{g}\left(\partial_{\phi\phi^{\prime}\phi_{g}}\mathcal{L}\right)\mathcal{H}^{-1}\mathcal{S}\left(\mathcal{H}^{-1}\mathcal{S}\right)_{g}-\mathcal{\bar{H}}^{-1}\sum_{g}\left(\partial_{\phi\phi^{\prime}\phi_{g}}\mathcal{\bar{L}}\right)\mathcal{\bar{H}}^{-1}\mathcal{S}\left(\mathcal{\bar{H}}^{-1}\mathcal{S}\right)_{g}\right]\\
\\\end{aligned}
\]
Again, by Lemma~\ref{lemma:inf-aux}, we have 
\[
\begin{aligned} & \|R_{\phi}\left(\theta\right)\|\\
 & \quad\leq\left\Vert \mathcal{H}^{-1}-\mathcal{\bar{H}}^{-1}\right\Vert \left\Vert \partial_{\phi^{\prime}\theta}\mathcal{L}\right\Vert \|\theta-\theta_{0}\|+\left\Vert \mathcal{\bar{H}}^{-1}\right\Vert \left\Vert \partial_{\phi^{\prime}\theta}\mathcal{\tilde{L}}\right\Vert \|\theta-\theta_{0}\|\\
 & \quad+\left\Vert \mathcal{H}^{-1}-\left(\mathcal{\bar{H}}^{-1}-\mathcal{\bar{H}}^{-1}\tilde{\mathcal{H}}\mathcal{\bar{H}}^{-1}\right)\right\Vert \|\mathcal{S}\|\\
 & \quad+ \frac{1}{2}\left(\left\Vert \mathcal{H}^{-1}\right\Vert +\left\Vert \mathcal{\bar{H}}^{-1}\right\Vert \right)\left\Vert \mathcal{H}^{-1}-\mathcal{\bar{H}}^{-1}\right\Vert \|\mathcal{S}\|\left\Vert \sum_{g}\partial_{\phi\phi^{\prime}\phi_{g}}\mathcal{L}\left[\mathcal{H}^{-1}\mathcal{S}\right]_{g}\right\Vert \\
 & \quad+\frac{1}{2}\left\Vert \mathcal{H}^{-1}-\mathcal{\bar{H}}^{-1}\right\Vert \left\Vert \mathcal{\bar{H}}^{-1}\right\Vert \|\mathcal{S}\|\left\Vert \sum_{g}\partial_{\phi\phi^{\prime}\phi_{g}}\mathcal{L}\left[\mathcal{\bar{H}}^{-1}\mathcal{S}\right]_{g}\right\Vert \\
 & \quad+\frac{1}{2}\left\Vert \mathcal{\bar{H}}^{-1}\right\Vert \|\mathcal{S}\|\left\Vert \sum_{g,h}\partial_{\phi\phi_{g}\phi_{h}}\mathcal{\tilde{L}}\left[\mathcal{\bar{H}}^{-1}\mathcal{S}\right]_{g}\mathcal{\tilde{L}}\left[\mathcal{\bar{H}}^{-1}\mathcal{S}\right]_{h}\right\Vert \\
 & \quad=o_{p}\left(\left(NT\right)^{1/8}\left\Vert \theta-\theta_{0}\right\Vert \right)+o_{p}\left(\left(NT\right)^{-1/4}\right)
\end{aligned}
\]
uniformly over $\theta\in\mathcal{B}\left(d_{NT},\theta_{0}\right)$.
Note that by Lemma~\ref{lemma:inf-aux}, we have 
\[
\begin{aligned}\left\Vert \widehat{\phi}\left(\theta\right)-\phi_{0}\right\Vert _{q} & \leq\left\Vert \mathcal{H}^{-1}\right\Vert _{q}\|\mathcal{S}\|_{q}+\left\Vert \mathcal{H}^{-1}\right\Vert _{q}\left\Vert \partial_{\phi\theta^{\prime}}\mathcal{L}\right\Vert _{q}\left\Vert \theta-\theta_{0}\right\Vert _{q}\\
 & +\frac{1}{2}\left\Vert \mathcal{H}^{-1}\right\Vert _{q}^{3}\left\Vert \partial_{\phi\phi\phi}\mathcal{L}\right\Vert _{q}\|\mathcal{S}\|_{q}^{2}+\left\Vert R_{1\phi}\left(\theta\right)\right\Vert _{q}\\
 & =O_{P}\left(\left(NT\right)^{-1/4+1/\left(2q\right)}\right)+O_{p}\left(\left(NT\right)^{1/\left(2q\right)}\left\Vert \theta-\theta_{0}\right\Vert \right)\\
 & =O_{p}\left(\left(NT\right)^{1/\left(2q\right)}\left\Vert \theta-\theta_{0}\right\Vert \right)
\end{aligned}
\]
Next, We expand $\partial_{\theta}\mathcal{L}\left(\hat{\theta}^{\left(m+1\right)},\hat{\phi}^{\left(m\right)}\right)$
around $\left(\theta_{0},\phi_{0}\right)$ and get 
\[
\begin{aligned}\partial_{\theta}\mathcal{L}\left(\hat{\theta}^{\left(m+1\right)},\hat{\phi}^{\left(m\right)}\right)= & \partial_{\theta}\mathcal{L}+\left(\partial_{\theta\theta^{\prime}}\mathcal{L}\right)\left(\hat{\theta}^{\left(m+1\right)}-\theta_{0}\right)+\left(\partial_{\theta\phi^{\prime}}\mathcal{L}\right)\left(\hat{\phi}^{\left(m\right)}-\phi_{0}\right)\\
+ & \frac{1}{2}\sum_{g\in\left[\dim\left(\phi\right)\right]}\left[\partial_{\theta\phi^{\prime}\phi_{g}}\mathcal{L}\right]\mathcal{H}^{-1}\mathcal{S}\left(\mathcal{H}^{-1}\mathcal{S}\right)_{g}+R_{1\theta}^{\left(m\right)}\end{aligned}
\]
and the remainder term satisfies 
\[
\begin{aligned}\left\Vert R_{1\theta}^{\left(m\right)}\right\Vert  & =\sup_{\|v\|=1}v^{\prime}R_{1\theta}^{\left(m\right)}\\
 & \leq\frac{1}{2}\left\Vert \sum_{g\in\left[\dim\left(\phi\right)\right]}\left[\partial_{\theta\phi^{\prime}\phi_{g}}\mathcal{L}\right]\left(\hat{\phi}^{\left(m\right)}-\phi_{0}\right)\left(\hat{\phi}^{\left(m\right)}-\phi_{0}\right)_{g}-\sum_{g\in\left[\dim\left(\phi\right)\right]}\left[\partial_{\theta\phi^{\prime}\phi_{g}}\mathcal{L}\right]\mathcal{H}^{-1}\mathcal{S}\left(\mathcal{H}^{-1}\mathcal{S}\right)_{g}\right\Vert _{q}\\
 & +\frac{1}{2}\left\Vert \partial_{\theta\theta\theta}\mathcal{L}\left(\theta^{*},\phi_{0}\right)\right\Vert \left\Vert \hat{\theta}^{\left(m+1\right)}-\theta_{0}\right\Vert ^{2}+\left(NT\right)^{1/2-1/q}\left\Vert \partial_{\theta\theta\phi}\mathcal{L}\left(\theta_{0},\phi^{*}\right)\right\Vert _{q}\\
 & \cdot\left\Vert \hat{\phi}^{\left(m\right)}-\phi_{0}\right\Vert _{q}\left\Vert \hat{\theta}^{\left(m+1\right)}-\theta_{0}\right\Vert +\frac{1}{6}\left(NT\right)^{1/2-1/q}\left\Vert \partial_{\theta\phi\phi\phi}\mathcal{L}\left(\theta^{*},\phi_{0}\right)\right\Vert _{q}\left\Vert \hat{\phi}^{\left(m\right)}-\phi_{0}\right\Vert _{q}^{3}\\
 & =o_{p}\left(\left\Vert \hat{\theta}^{\left(m+1\right)}-\theta_{0}\right\Vert \right)+o_{p}\left(\left\Vert \hat{\theta}^{\left(m\right)}-\theta_{0}\right\Vert \right)+O_{p}\left(\left\Vert \hat{\theta}^{\left(m+1\right)}-\theta_{0}\right\Vert ^{2}\right)+o_{P}\left(\left(NT\right)^{-1/2}\right)\\
 & =o_{p}\left(\left\Vert \hat{\theta}^{\left(m+1\right)}-\theta_{0}\right\Vert \right)+o_{p}\left(\left\Vert \hat{\theta}^{\left(m\right)}-\theta_{0}\right\Vert \right)+o_{P}\left(\left(NT\right)^{-1/2}\right)
\end{aligned}
\]
where $\theta^{*}$ lies between $\hat{\theta}^{\left(m+1\right)}$
and $\theta_{0}$ and $\phi^{*}$ lies between $\hat{\phi}^{\left(m\right)}$
and $\phi_{0}$.

We can further decompose 
\[
\begin{aligned} & \partial_{\theta}\mathcal{L}(\hat{\theta}^{\left(m+1\right)},\hat{\phi}^{\left(m\right)})\\
 & \quad=\partial_{\theta}\mathcal{L}+\left(\partial_{\theta\theta^{\prime}}\mathcal{\bar{L}}\right)\left(\hat{\theta}^{\left(m+1\right)}-\theta_{0}\right)+\left(\partial_{\theta\phi^{\prime}}\mathcal{\bar{L}}\right)\left(\hat{\phi}^{\left(m\right)}-\phi_{0}\right)\\
 & \quad+\frac{1}{2}\sum_{g}\left[\partial_{\theta\phi^{\prime}\phi_{g}}\mathcal{\bar{L}}\right]\mathcal{H}^{-1}\mathcal{S}\left(\mathcal{H}^{-1}\mathcal{S}\right)_{g}+R_{\theta}^{\left(m\right)}\\
 & \quad=\bar{W}\left(\hat{\theta}^{\left(m+1\right)}-\theta_{0}\right)- U^{\left(0\right)}\left(\hat{\theta}^{\left(m\right)}-\theta_{0}\right)- U^{\left(1\right)}- U^{\left(2\right)}+R_{\theta}^{\left(m\right)}
\end{aligned}
\]
where the last equality uses (\ref{eq:inf-phi}), and $R_{\theta}^{\left(m\right)}$ takes the form 
\[
\begin{aligned}
R_{\theta}^{\left(m\right)} & =R_{1\theta}^{\left(m\right)}+\left(\partial_{\theta\theta^{\prime}}\mathcal{\tilde{L}}\right)\left(\hat{\theta}^{\left(m+1\right)}-\theta_{0}\right)+\left(\partial_{\theta\phi^{\prime}}\mathcal{\tilde{L}}\right)\mathcal{\bar{H}}^{-1}\left(\partial_{\phi\theta^{\prime}}\mathcal{\bar{L}}\right)\left(\theta^{\left(m\right)}-\theta_{0}\right)-\left(\partial_{\theta\phi^{\prime}}\mathcal{\tilde{L}}\right) \mathcal{\bar{H}}^{-1}\tilde{\mathcal{H}}\mathcal{\bar{H}}^{-1}\mathcal{S}\\
 & +\frac{1}{2}\left[\sum_{g}\left(\partial_{\theta\phi^{\prime}\phi_{g}}\mathcal{L}\right)\mathcal{H}^{-1}\mathcal{S}\left(\mathcal{H}^{-1}\mathcal{S}\right)_{g}-\sum_{g}\left(\partial_{\theta\phi^{\prime}\phi_{g}}\mathcal{\bar{L}}\right)\mathcal{\bar{H}}^{-1}\mathcal{S}\left(\mathcal{\bar{H}}^{-1}\mathcal{S}\right)_{g}\right]\\
 & +\frac{1}{2}\left(\partial_{\theta\phi^{\prime}}\mathcal{\tilde{L}}\right)\mathcal{\bar{H}}^{-1}\sum_{g}\left(\partial_{\phi\phi^{\prime}\phi_{g}}\mathcal{\bar{L}}\right)\mathcal{\bar{H}}^{-1}\mathcal{S}\left(\mathcal{\bar{H}}^{-1}\mathcal{S}\right)_{g}+\left(\partial_{\phi\theta^{\prime}}\mathcal{\bar{L}}\right)\left(R_{1\phi}\left(\theta^{\left(m\right)}\right)+R_{2\phi}\left(\theta^{\left(m\right)}\right)\right)
\end{aligned}
\] 
Furthermore, from Lemma~\ref{lemma:inf-aux}, we have 
\[
\begin{aligned}\left\Vert R_{\theta}^{\left(m\right)}\right\Vert  & \leq\left\Vert R_{1\theta}^{\left(m\right)}\right\Vert +\left\Vert \partial_{\theta\theta^{\prime}}\mathcal{\tilde{L}}\right\Vert \left\Vert \hat{\theta}^{\left(m+1\right)}-\theta_{0}\right\Vert +\left\Vert \partial_{\theta\phi^{\prime}}\mathcal{\tilde{L}}\right\Vert \left\Vert \mathcal{\bar{H}}^{-1}\right\Vert \left\Vert \partial_{\phi\theta^{\prime}}\mathcal{\bar{L}}\right\Vert \left\Vert \hat{\theta}^{\left(m\right)}-\theta_{0}\right\Vert \\
 & + \left\Vert \mathcal{\bar{H}}^{-1}\right\Vert ^{2}\left\Vert \partial_{\theta\phi^{\prime}}\mathcal{\tilde{L}}\right\Vert \left\Vert \mathcal{\tilde{H}}\right\Vert \left\Vert \mathcal{S}\right\Vert \\
 & +\frac{1}{2}\left\Vert \partial_{\theta\phi\phi}\mathcal{L}\right\Vert \left(\left\Vert \mathcal{H}^{-1}\right\Vert +\left\Vert \mathcal{\bar{H}}^{-1}\right\Vert \right)\left\Vert \mathcal{H}^{-1}-\mathcal{\bar{H}}^{-1}\right\Vert \left\Vert \mathcal{S}\right\Vert ^{2}\\
 & +\frac{1}{2}\left\Vert \mathcal{\bar{H}}^{-1}\right\Vert ^{2}\left\Vert \partial_{\theta\phi\phi}\mathcal{\tilde{L}}\right\Vert \left\Vert \mathcal{S}\right\Vert ^{2}+\left(NT\right)^{1/2-1/q}\left\Vert \partial_{\phi\theta^{\prime}}\mathcal{\bar{L}}\right\Vert _{q}\left\Vert R_{1\phi}\left(\theta^{\left(m\right)}\right)\right\Vert _{q}+\left\Vert \partial_{\phi\theta^{\prime}}\mathcal{\bar{L}}\right\Vert \left\Vert R_{1\theta}^{\left(m\right)}\right\Vert \\
 & =\left\Vert R_{1\theta}^{\left(m\right)}\right\Vert +o_{P}\left(\left\Vert \hat{\theta}^{\left(m+1\right)}-\theta_{0}\right\Vert \right)+O_{P}\left(\left(NT\right)^{-1/4}\left\Vert \hat{\theta}^{\left(m\right)}-\theta_{0}\right\Vert \right)\\
 & +o_{P}\left(\left(NT\right)^{-1/8+\epsilon}\left\Vert \hat{\theta}^{\left(m\right)}-\theta_{0}\right\Vert \right)+o_{P}\left(\left(NT\right)^{-1/2}\right)\\
 & =o_{p}\left(\left\Vert \hat{\theta}^{\left(m+1\right)}-\theta_{0}\right\Vert \right)+o_{p}\left(\left\Vert \hat{\theta}^{\left(m\right)}-\theta_{0}\right\Vert \right)+o_{P}\left(\left(NT\right)^{-1/2}\right)
\end{aligned}
\]
The desirable results then follow. 
\end{proof}

Before proceeding to the next lemma, we introduce several quantities. For any $N\times T$ matrix $A$ we define the $N\times T$ matrix
$\mathbb{P}A$ as follows 
\[
\left(\mathbb{P}A\right){}_{it}=\lambda_{i}^{*\prime}f_{0t}+\lambda_{0i}^{\prime}f_{t}^{*},\quad \left(\lambda^{*},f^{*}\right)\in\underset{\lambda_{i},f_{t}}{\arg\min}\sum_{i,t}\mathbb{E}\left[\partial_{\pi^{2}}\ell_{it}\right]\left(A_{it}-\lambda'_{i}f_{0t}-\lambda_{0i}^{\prime}f_{t}\right)^{2}
\]
Note that $\mathbb{PP}=\mathbb{P}$. We also define the linear operator
$\mathbb{\tilde{P}}$ as 
\begin{equation}
\label{aux:Atilde}
\mathbb{\tilde{P}}A=\mathbb{P}\tilde{A},\quad\tilde{A}_{it}=\frac{A_{it}}{\mathbb{E}\left[\partial_{\pi^{2}}\ell_{it}\right]}
\end{equation}

The next lemma gives a convenient algebraic result, which is used
extensively in deriving the asymptotic expansion for the estimated
common parameter. This is a straightforward extension of Lemma S.8
in FVW.
\begin{lemma}
\label{lemma:inf-algebraic} Let $A,B$ and $C$ be $N\times T$ matrices,
and let the expected incidental parameter Hessian $\mathcal{\bar{H}}$
be invertible. Define the $r\left(N+T\right)$ vectors $\mathcal{A}$
and $\mathcal{B}$ and the $r\left(N+T\right)\times r\left(N+T\right)$
matrix $\mathcal{C}$ as follows\footnote{Here, $\left(\sum_{t}A_{it}f_{0t}\right)_{i\in\left[N\right]}$ is
a $rN$-dimensional vector with its $i$-th block to be a $r$-dimensional
vector, $\sum_{t}A_{it}f_{0t}$ .}
\[
\mathcal{A}=\frac{1}{NT}\left(\begin{array}{c}
\left(\sum_{t}A_{it}f_{0t}\right)_{i\in\left[N\right]}\\
\left(\sum_{i}A_{it}\lambda_{0i}\right)_{t\in\left[T\right]}
\end{array}\right),\quad\mathcal{B}=\frac{1}{NT}\left(\begin{array}{c}
\left(\sum_{t}B_{it}f_{0t}\right)_{i\in\left[N\right]}\\
\left(\sum_{i}B_{it}\lambda_{0i}\right)_{t\in\left[T\right]}
\end{array}\right)
\]
and 
$$\mathcal{C}=\frac{1}{NT}\left(\begin{array}{cc}
\operatorname{diag}\left(\left(\sum_{t}C_{it}f_{0t}f_{0t}^{\prime}\right)_{i\in\left[N\right]}\right) & \left(C_{it}\lambda_{0i}f_{0t}^{\prime}\right)_{i\in\left[N\right],t\in\left[T\right]}\\
\left(C_{it}\lambda_{0i}f_{0t}^{\prime}\right)_{i\in\left[N\right],t\in\left[T\right]}^{\prime} & \operatorname{diag}\left(\left(\sum_{i}C_{it}\lambda_{0i}\lambda_{0i}^{\prime}\right)_{t\in\left[T\right]}\right)
\end{array}\right)$$
Then \\
(i) $\mathcal{A}^{\prime}\mathcal{\bar{H}}^{-1}\mathcal{B}=\frac{1}{NT}\sum_{i=1}^{N}\sum_{t=1}^{T}\left(\mathbb{\tilde{P}}A\right)_{it}B_{it}=\frac{1}{NT}\sum_{i=1}^{N}\sum_{t=1}^{T}\left(\mathbb{\tilde{P}}B\right)_{it}A_{it}$; \\ 
(ii) $\mathcal{A}^{\prime}\mathcal{\bar{H}}^{-1}\mathcal{B}=\frac{1}{NT}\sum_{i=1}^{N}\sum_{t=1}^{T}\mathbb{E}\left[\partial_{\pi^{2}}\ell_{it}\right]\left(\mathbb{\tilde{P}}A\right)_{it}\left(\mathbb{\tilde{P}}B\right)_{it}$; \\
(iii) $\mathcal{A}^{\prime}\mathcal{\bar{H}}^{-1}\mathcal{C}^{-1}\mathcal{\bar{H}}^{-1}\mathcal{B}=\frac{1}{NT}\sum_{i=1}^{N}\sum_{t=1}^{T}\left(\mathbb{\tilde{P}}A\right)_{it}C_{it}\left(\mathbb{\tilde{P}}B\right)_{it}$. 
\end{lemma}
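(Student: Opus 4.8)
The plan is to recognize $\bar{\mathcal H}$ as the Gram matrix of the linear embedding $\mathbb{L}\colon\psi=(\lambda_1,\dots,\lambda_N,f_1,\dots,f_T)\mapsto(\lambda_i'f_{0t}+\lambda_{0i}'f_t)_{i,t}$, taken in the inner product on $N\times T$ matrices weighted by $w_{it}:=\mathbb{E}[\partial_{\pi^2}\ell_{it}]$. Once this is in place, multiplying a ``moment vector'' of the type of $\mathcal A$, $\mathcal B$ by $\bar{\mathcal H}^{-1}$ amounts to solving the $w$-weighted least-squares problem that defines the operator $\mathbb{P}$, and parts (i)--(iii) reduce to self-adjointness and idempotency of the resulting projection.

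First I would pin down the structure of $\bar{\mathcal H}$. Differentiating $\mathcal L_{NT}=\tfrac1{NT}\sum_{i,t}\ell(W_{it};\theta,\lambda_i'f_t)$ twice in $\phi$ gives $(\lambda_i,\lambda_i)$ block $\tfrac1{NT}\sum_t\partial_{\pi^2}\ell_{it}f_tf_t'$, $(f_t,f_t)$ block $\tfrac1{NT}\sum_i\partial_{\pi^2}\ell_{it}\lambda_i\lambda_i'$, and $(\lambda_i,f_t)$ block $\tfrac1{NT}(\partial_{\pi^2}\ell_{it}f_t\lambda_i'+\partial_\pi\ell_{it}\mathbb{I}_r)$. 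Evaluating at $(\theta_0,\phi_0)$ and taking expectations, the term $\partial_\pi\ell_{it}\mathbb{I}_r$ vanishes because $\pi_{0,it}$ is interior and minimizes the population loss, so $\mathbb{E}[\partial_\pi\ell_{it}]=0$; by Assumption~\ref{assu:inf}(ii) $w_{it}$ is positive and bounded, so $\tilde A_{it}=A_{it}/w_{it}$ is well defined. Hence $\bar{\mathcal H}$ is exactly a matrix of the form $\mathcal C$ in the statement with $C_{it}=w_{it}$; writing $\mathbb{L}^{*}_{C}$ for the $C$-weighted adjoint of $\mathbb{L}$ (and $\mathbb{L}^{*}$ the plain one), this reads $NT\bar{\mathcal H}=\mathbb{L}^{*}_{w}\mathbb{L}$, the generic matrix $\mathcal C$ equals $\tfrac1{NT}\mathbb{L}^{*}_{C}\mathbb{L}$, and the vectors satisfy $\mathcal A=\tfrac1{NT}\mathbb{L}^{*}A=\tfrac1{NT}\mathbb{L}^{*}_{w}\tilde A$ and $\mathcal B=\tfrac1{NT}\mathbb{L}^{*}_{w}\tilde B$.

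Next comes the projection dictionary. From the above, $\psi_B:=\bar{\mathcal H}^{-1}\mathcal B$ solves $\mathbb{L}^{*}_{w}(\mathbb{L}\psi_B-\tilde B)=0$, which is precisely the normal equation of the $w$-weighted least-squares projection of $\tilde B$ onto the range of $\mathbb{L}$ that defines $\mathbb{P}$ (and thus $\tilde{\mathbb{P}}$, via \eqref{aux:Atilde}); therefore $\mathbb{L}\psi_B=\mathbb{P}\tilde B=\tilde{\mathbb{P}}B$, and likewise $\mathbb{L}(\bar{\mathcal H}^{-1}\mathcal A)=\tilde{\mathbb{P}}A$. ($\mathbb{L}$ has an $r^2$-dimensional kernel reflecting the rotational indeterminacy of $(\Lambda,F)$, so $\psi_B$ itself is not unique, but $\mathbb{L}\psi_B$ is; the normalization term in $\mathcal L_{NT}$ --- which underlies the invertibility of $\bar{\mathcal H}$ assumed here --- is chosen so that $\bar{\mathcal H}^{-1}$ restricted to the moment vectors $\mathcal A,\mathcal B$ agrees with the relevant inverse of $\tfrac1{NT}\mathbb{L}^{*}_{w}\mathbb{L}$, so no ambiguity reaches the conclusion.) Part (i) is then immediate: $\mathcal A'\bar{\mathcal H}^{-1}\mathcal B=(\mathbb{L}^{*}A)'\psi_B/(NT)=\tfrac1{NT}\sum_{i,t}A_{it}(\tilde{\mathbb{P}}B)_{it}$, and interchanging $A$ and $B$ by symmetry of $\bar{\mathcal H}^{-1}$ gives the second expression. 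Part (ii) uses that $\mathbb{P}$ is self-adjoint and idempotent in the $w$-inner product: $\tfrac1{NT}\sum_{i,t}A_{it}(\tilde{\mathbb{P}}B)_{it}=\tfrac1{NT}\sum_{i,t}w_{it}\tilde A_{it}(\mathbb{P}\tilde B)_{it}=\tfrac1{NT}\sum_{i,t}\mathbb{E}[\partial_{\pi^2}\ell_{it}](\tilde{\mathbb{P}}A)_{it}(\tilde{\mathbb{P}}B)_{it}$. Part (iii) is the same computation carried through the weight $C$: with $\psi_A=\bar{\mathcal H}^{-1}\mathcal A$, $\psi_B=\bar{\mathcal H}^{-1}\mathcal B$ and $\mathcal C=\tfrac1{NT}\mathbb{L}^{*}_{C}\mathbb{L}$, one obtains $\mathcal A'\bar{\mathcal H}^{-1}\mathcal C\,\bar{\mathcal H}^{-1}\mathcal B=\psi_A'\mathcal C\psi_B=\tfrac1{NT}\sum_{i,t}C_{it}(\mathbb{L}\psi_A)_{it}(\mathbb{L}\psi_B)_{it}=\tfrac1{NT}\sum_{i,t}(\tilde{\mathbb{P}}A)_{it}C_{it}(\tilde{\mathbb{P}}B)_{it}$.

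The whole argument is the exact analogue of Lemma~S.8 in \citet*{fernandez-val_individual_2016}. The one step that requires genuine care is the first --- the block-by-block Hessian computation and the observation that the off-diagonal $\mathbb{I}_r$ term has zero expectation --- together with the bookkeeping that keeps the rotational kernel of $\mathbb{L}$ and the choice of normalization for $\phi$ from introducing ambiguity into $\bar{\mathcal H}^{-1}$; once the Gram-matrix/projection picture is set up, (i)--(iii) follow directly.
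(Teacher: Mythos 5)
Your proof is correct and follows essentially the same route as the paper's: both identify $\bar{\mathcal H}^{-1}\mathcal A$ (resp.\ $\bar{\mathcal H}^{-1}\mathcal B$) with a solution of the normal equations of the weighted least-squares problem defining $\mathbb P$, so that $\mathbb L(\bar{\mathcal H}^{-1}\mathcal A)=\tilde{\mathbb P}A$, and then read off (i)--(iii) as quadratic forms (the paper writes this out in coordinates for $r=1$, you in operator form, both modeled on Lemma S.8 of Fern\'andez-Val and Weidner). If anything, you are more explicit than the paper about the points it leaves implicit --- that the cross block $\partial_\pi\ell_{it}\mathbb I_r$ has zero expectation and that the rotational kernel of $\mathbb L$ and the normalization term do not affect $\mathbb L\psi_B$ --- and you correctly treat the $\mathcal C^{-1}$ in part (iii) as the typo it is (it must be $\mathcal C$ for (ii) to be the special case $\mathcal C=\bar{\mathcal H}$).
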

\begin{proof}
To simplify the notations, we consider the case $r=1$, but
the proof can be easily generalized to the case $r>1$. Let $\lambda_{i}^{*}f_{0t}+\lambda_{0i}f_{t}^{*}=\left(\mathbb{P}\tilde{A}\right){}_{it}=\left(\mathbb{\tilde{P}}A\right){}_{it}$,
with $\tilde{A}_{it}$ defined in \eqref{aux:Atilde}. The first order condition of
the minimization problem in the definition of $\left(\mathbb{P}\tilde{A}\right){}_{it}$
can be written as $\mathcal{\bar{H}}\left(\begin{array}{c}
\lambda^{*}\\
f^{*}
\end{array}\right)=\mathcal{A}$. One solution to this is $\left(\begin{array}{c}
\lambda^{*}\\
f^{*}
\end{array}\right)=\mathcal{\bar{H}}^{-1}\mathcal{A}$. Therefore, $\mathcal{A}^{\prime}\mathcal{\bar{H}}^{-1}\mathcal{B}=\left(\begin{array}{c}
\lambda^{*}\\
f^{*}
\end{array}\right)\mathcal{B}=\frac{1}{NT}\sum_{i=1}^{N}\sum_{t=1}^{T}\left(\mathbb{\tilde{P}}A\right){}_{it}B_{it}$. This is the first equality of the statement (i), and the second
equality in statement (i) follows by symmetry. Statement (ii) is a
special case of statement (iii) with $\mathcal{C}=\mathcal{\bar{H}}$.
Let $\lambda_{i}f_{0t}+\lambda_{0i}f_{t}=\left(\mathbb{P}\tilde{B}\right){}_{it}=\left(\mathbb{\tilde{P}}B\right){}_{it}$
with $\tilde{B}_{it}=\frac{B_{it}}{\mathbb{E}\left[\partial_{\pi^{2}}\ell_{it}\right]}$.
Analogous to the above, choose $\left(\begin{array}{c}
\lambda\\
f
\end{array}\right)=\mathcal{\bar{H}}^{-1}\mathcal{B}$ as one solution to the minimization problem. Then $\mathcal{A}^{\prime}\mathcal{\bar{H}}^{-1}\mathcal{C}^{-1}\mathcal{\bar{H}}^{-1}\mathcal{B}=\frac{1}{NT}\sum_{i=1}^{N}\sum_{t=1}^{T}\left(\mathbb{\tilde{P}}A\right)_{it}C_{it}\left(\mathbb{\tilde{P}}B\right)_{it}$.
\end{proof} \begin{lemma}
\label{lemma:inf-U} The approximate Hessian and the terms of the
score defined in Lemma~\ref{lemma:inf-exp-theta} can be written
as 
\[
\begin{aligned}\bar{W} & =\frac{1}{NT}\sum_{i=1}^{N}\sum_{t=1}^{T}\mathbb{E}\left[\partial_{\theta\theta}\ell_{it}\right]\\
U^{\left(0\right)} & =\frac{1}{NT}\sum_{i=1}^{N}\sum_{t=1}^{T}\mathbb{E}\left[\partial_{\pi^{2}}\ell_{it}\Xi_{it}\Xi_{it}^{\prime}\right]\\
U^{\left(1\right)} & =\frac{1}{NT}\sum_{i=1}^{N}\sum_{t=1}^{T}-D_{\theta}\ell_{it}\\
U^{\left(2\right)} & =-\frac{1}{NT}\sum_{i=1}^{N}\sum_{t=1}^{T}\Lambda_{it}\left(D_{\theta\pi}\ell_{it}-\mathbb{E}\left[D_{\theta\pi}\ell_{it}\right]\right)+\frac{1}{2NT}\sum_{i=1}^{N}\sum_{t=1}^{T}\Lambda_{it}^{2}\mathbb{E}\left[D_{\theta\pi^{2}}\ell_{it}\right]
\end{aligned}
\]
where $D_{\theta\pi^{q}}\ell_{it}=\partial_{\theta\pi^{q}}\ell_{it}-\Xi_{it}\partial_{\pi^{q+1}}\ell_{it}$
with $q=0,1,2$. \end{lemma}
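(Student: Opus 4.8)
The plan is to substitute, into the definitions of $\bar W,U^{(0)},U^{(1)},U^{(2)}$ supplied by Lemma~\ref{lemma:inf-exp-theta}, the explicit block forms of the incidental‑parameter objects and then collapse the resulting quadratic and cubic forms by repeated application of Lemma~\ref{lemma:inf-algebraic}. The first step is bookkeeping: because $\theta$ and $\phi$ are separate coordinates and $\pi_{it}=\lambda_i'f_t$, every $\phi$‑derivative of $\mathcal L$ evaluated at $(\theta_0,\phi_0)$ is, in the sense of the vectors/matrices $\mathcal A,\mathcal B,\mathcal C$ of Lemma~\ref{lemma:inf-algebraic}, built from a $\partial_\pi$‑derivative of $\ell_{it}$ with weights $\lambda_{0i},f_{0t}$. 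Writing $S:=(\partial_\pi\ell_{it})_{i,t}$ and $A^{(k)}:=(\mathbb E[\partial_{\theta_k\pi}\ell_{it}])_{i,t}$, we have that $\mathcal S$ is built from $S$; the $k$‑th column of $\partial_{\theta\phi'}\bar{\mathcal L}$ is built from $A^{(k)}$; $\bar{\mathcal H}=\partial_{\phi\phi'}\bar{\mathcal L}$ is the $\mathcal C$‑type matrix built from $(\mathbb E[\partial_{\pi^2}\ell_{it}])_{i,t}$ — the cross terms proportional to $\mathbb E[\partial_\pi\ell_{it}]$ vanish at the truth, and the normalization penalty in $\mathcal L_{NT}$ perturbs $\bar{\mathcal H}$ only in directions immaterial for $\theta$, as in \citet{fernandez-val_individual_2016} and \citet*{chen_nonlinear_2021} — and $\partial_{\theta\phi'\phi_g}\bar{\mathcal L}$, $\partial_{\phi\phi'\phi_g}\bar{\mathcal L}$ are built from $(\mathbb E[\partial_{\theta_k\pi^2}\ell_{it}])_{i,t}$ and $(\mathbb E[\partial_{\pi^3}\ell_{it}])_{i,t}$. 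With these identifications one reads off $\Xi_{it,k}=(\mathbb{\tilde{P}}A^{(k)})_{it}$ directly from the projection problem defining $\Xi$, and $\Lambda_{it}=-(\mathbb{\tilde{P}}S)_{it}$, the sign being pinned down by the leading term $-\mathcal H^{-1}\mathcal S$ of the incidental‑parameter expansion used in Lemma~\ref{lemma:inf-exp-theta}.

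The three low‑order terms then fall out immediately. Differentiating $\bar{\mathcal L}=\frac1{NT}\sum_{i,t}\mathbb E[\ell_{it}]$ twice in $\theta$ (holding $\pi$ fixed) gives $\bar W=\frac1{NT}\sum_{i,t}\mathbb E[\partial_{\theta\theta}\ell_{it}]$. For $U^{(0)}=(\partial_{\theta\phi'}\bar{\mathcal L})\bar{\mathcal H}^{-1}(\partial_{\phi\theta'}\bar{\mathcal L})$, Lemma~\ref{lemma:inf-algebraic}(ii) applied entrywise with $A=A^{(k)},B=A^{(l)}$ gives $U^{(0)}_{kl}=\frac1{NT}\sum_{i,t}\mathbb E[\partial_{\pi^2}\ell_{it}]\,\Xi_{it,k}\Xi_{it,l}$, i.e.\ $U^{(0)}=\frac1{NT}\sum_{i,t}\mathbb E[\partial_{\pi^2}\ell_{it}\,\Xi_{it}\Xi_{it}']$ since $\Xi_{it}$ is deterministic. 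For $U^{(1)}=-\partial_\theta\mathcal L+(\partial_{\theta\phi'}\bar{\mathcal L})\bar{\mathcal H}^{-1}\mathcal S$, the first term is $-\frac1{NT}\sum_{i,t}\partial_\theta\ell_{it}$, and Lemma~\ref{lemma:inf-algebraic}(i) with $A=A^{(k)},B=S$ turns the second into $\frac1{NT}\sum_{i,t}\Xi_{it,k}\partial_\pi\ell_{it}$; adding, $U^{(1)}=-\frac1{NT}\sum_{i,t}\big(\partial_\theta\ell_{it}-\Xi_{it}\partial_\pi\ell_{it}\big)=-\frac1{NT}\sum_{i,t}D_\theta\ell_{it}$.

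The bulk of the work is $U^{(2)}$, treated along its two brackets. In the first, Lemma~\ref{lemma:inf-algebraic}(i) identifies $[\partial_{\theta\phi'}\tilde{\mathcal L}]\bar{\mathcal H}^{-1}\mathcal S$ with $\frac1{NT}\sum_{i,t}(\mathbb{\tilde{P}}S)_{it}\big(\partial_{\theta\pi}\ell_{it}-\mathbb E[\partial_{\theta\pi}\ell_{it}]\big)=-\frac1{NT}\sum_{i,t}\Lambda_{it}\big(\partial_{\theta\pi}\ell_{it}-\mathbb E[\partial_{\theta\pi}\ell_{it}]\big)$, while Lemma~\ref{lemma:inf-algebraic}(iii), with $\mathcal C$ built from $(\partial_{\pi^2}\ell_{it}-\mathbb E[\partial_{\pi^2}\ell_{it}])_{i,t}$, identifies $[\partial_{\theta\phi'}\bar{\mathcal L}]\bar{\mathcal H}^{-1}\tilde{\mathcal H}\bar{\mathcal H}^{-1}\mathcal S$ with $-\frac1{NT}\sum_{i,t}\Xi_{it}\Lambda_{it}\big(\partial_{\pi^2}\ell_{it}-\mathbb E[\partial_{\pi^2}\ell_{it}]\big)$; subtracting and using $D_{\theta\pi}\ell_{it}=\partial_{\theta\pi}\ell_{it}-\Xi_{it}\partial_{\pi^2}\ell_{it}$ (with expectations passing through the deterministic $\Xi_{it}$) collapses the first bracket to $-\frac1{NT}\sum_{i,t}\Lambda_{it}\big(D_{\theta\pi}\ell_{it}-\mathbb E[D_{\theta\pi}\ell_{it}]\big)$. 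For the second bracket, I would write out the block structure of the third‑order tensors $\partial_{\theta\phi'\phi_g}\bar{\mathcal L}$ and $\partial_{\phi\phi'\phi_g}\bar{\mathcal L}$, then contract each with the two copies of $\bar{\mathcal H}^{-1}\mathcal S$ exactly as in the proof of Lemma~\ref{lemma:inf-algebraic} (each contraction contributing one copy of $\Lambda_{it}$, so the two sign flips cancel); this yields $\frac1{2NT}\sum_{i,t}\Lambda_{it}^2\big(\mathbb E[\partial_{\theta\pi^2}\ell_{it}]-\Xi_{it}\mathbb E[\partial_{\pi^3}\ell_{it}]\big)=\frac1{2NT}\sum_{i,t}\Lambda_{it}^2\mathbb E[D_{\theta\pi^2}\ell_{it}]$, where the $-\Xi_{it}\mathbb E[\partial_{\pi^3}\ell_{it}]$ piece is supplied precisely by the $[\partial_{\theta\phi'}\bar{\mathcal L}]\bar{\mathcal H}^{-1}[\partial_{\phi\phi'\phi_g}\bar{\mathcal L}]$ term. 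Summing the two brackets gives the claimed form of $U^{(2)}$.

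I expect the second bracket of $U^{(2)}$ to be the main obstacle: one must expand the two third‑order $\phi$‑derivative tensors in block form, keep track of which contraction slot is saturated by which copy of $\bar{\mathcal H}^{-1}\mathcal S$, and verify that the $\Xi$‑terms recombine into $D_{\theta\pi^2}$ with the exact factor $\tfrac12$ — this is where a genuine extension of the FVW algebra (rather than a citation) is needed. A secondary, purely technical point is to confirm that the normalization penalty and the off‑expectation cross terms entering $\tilde{\mathcal H}$ leave the $\theta$‑block of all these identities intact, which I would dispatch by the argument of \citet*{chen_nonlinear_2021}.
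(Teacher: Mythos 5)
Your proposal is correct and follows essentially the same route as the paper: all four identities are obtained by recognizing the $\phi$-derivatives of $\mathcal{L}$ as $\mathcal{A}/\mathcal{B}/\mathcal{C}$-type objects and collapsing the quadratic and cubic forms via Lemma~\ref{lemma:inf-algebraic}, with $\Xi_{it}=(\mathbb{\tilde P}A^{(k)})_{it}$ and $\Lambda_{it}=-(\mathbb{\tilde P}S)_{it}$, and with the $-\Xi_{it}\mathbb{E}[\partial_{\pi^{3}}\ell_{it}]$ contribution in the second bracket supplied by the $[\partial_{\theta\phi^{\prime}}\bar{\mathcal{L}}]\bar{\mathcal{H}}^{-1}[\partial_{\phi\phi^{\prime}\phi_{g}}\bar{\mathcal{L}}]$ term exactly as in the paper. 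The only difference is expository: you spell out the block bookkeeping that the paper compresses into direct applications of Lemma~\ref{lemma:inf-algebraic}(i)--(iii).
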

\begin{proof}[Proof of Lemma \ref{lemma:inf-U}]
 First, it is by definition that 
\[
\bar{W}=\frac{1}{NT}\sum_{i=1}^{N}\sum_{t=1}^{T}\mathbb{E}\left[\partial_{\theta\theta}\ell_{it}\right]
\]
The terms $U^{\left(0\right)}$ is easy by Lemma \ref{lemma:inf-algebraic}(ii).
Since $\mathbb{E}\left[\partial_{\theta}\mathcal{L}\right]=0$ and
$\mathbb{E}\left[\mathcal{S}\right]=0$, we have $\mathbb{E}\left[U^{\left(1\right)}\right]=0$.
Also, from Lemma \ref{lemma:inf-exp-theta} and Lemma \ref{lemma:inf-algebraic}(i),
we have 
\[
U^{\left(1\right)}=\frac{1}{NT}\sum_{i=1}^{N}\sum_{t=1}^{T}\left(-\partial_{\theta}\ell_{it}+\Xi_{it}\partial_{\pi}\ell_{it}\right)=\frac{1}{NT}\sum_{i=1}^{N}\sum_{t=1}^{T}-D_{\theta}\ell_{it}
\]
It remains to show the decomposition of $U^{\left(2\right)}$. Define
$\Lambda_{d,it}$ to be 
\begin{equation}
\Lambda_{it}\coloneqq -\frac{1}{NT}\sum_{j=1}^{N}\sum_{\tau=1}^{T}\left(f_{0t}^{\prime}\mathcal{\bar{H}}_{(\lambda\lambda)ij}^{-1}f_{0\tau}+f_{0t}^{\prime}\mathcal{\bar{H}}_{(\lambda f)i\tau}^{-1}\lambda_{0j}+\lambda_{0i}^{\prime}\mathcal{\bar{H}}_{(f\lambda)tj}^{-1}f_{0\tau}+\lambda_{0i}^{\prime}\mathcal{\bar{H}}_{(ff)t\tau}^{-1}\lambda_{0j}\right)\partial_{\pi}\ell_{j\tau}\label{eqn:inf-def-Lambda}
\end{equation}
In addition, with Lemma \ref{lemma:inf-algebraic}(iii), 
\begin{align}
U^{\left(2\right)} &  =\left[\partial_{\theta\phi^{\prime}}\mathcal{\tilde{L}}\right]\bar{H}^{-1}\mathcal{S}-\left[\partial_{\theta\phi^{\prime}}\mathcal{\bar{L}}\right]\mathcal{\bar{H}}^{-1}\tilde{\mathcal{H}}\mathcal{\bar{H}}^{-1}\mathcal{S}\nonumber \\
 & +\frac{1}{2}\sum_{g=1}\left(\partial_{\theta\phi^{\prime}\phi_{g}}\mathcal{\bar{L}}+\left[\partial_{\theta\phi^{\prime}}\mathcal{\bar{L}}\right]\bar{\mathcal{H}}^{-1}\left[\partial_{\phi\phi^{\prime}\phi_{g}}\mathcal{\bar{L}}\right]\right)\mathcal{\bar{H}}^{-1}\mathcal{S}\left(\mathcal{\bar{H}}^{-1}\mathcal{S}_{g}\right)\nonumber \\
 & = -\frac{1}{NT}\sum_{i=1}^{N}\sum_{t=1}^{T}\Lambda_{it}\left(\partial_{\theta\pi}\tilde{\ell}_{it}+\Xi_{it}\partial_{\pi^{2}}\tilde{\ell}_{it}\right)\nonumber \\
 &  +\frac{1}{2NT}\sum_{i=1}^{N}\sum_{t=1}^{T}\Lambda_{it}^{2}\left(\mathbb{E}\left[\partial_{\theta\pi^{2}}\ell_{it}\right]+\left(\partial_{\theta\phi^{\prime}}\mathcal{\bar{L}}\right)\bar{\mathcal{H}}^{-1}\mathbb{E}\left[\partial_{\phi}\partial_{\pi^{2}}\ell_{it}\right]\right)\nonumber \\
 & =\underbrace{-\frac{1}{NT}\sum_{i=1}^{N}\sum_{t=1}^{T}\Lambda_{it}\left(D_{\theta\pi}\ell_{it}-\mathbb{E}\left[D_{\theta\pi}\ell_{it}\right]\right)}_{U^{\left(2a\right)}}\underbrace{+\frac{1}{2NT}\sum_{i=1}^{N}\sum_{t=1}^{T}\Lambda_{it}^{2}\mathbb{E}\left[D_{\theta\pi^{2}}\ell_{it}\right]}_{U^{\left(2b\right)}}\label{eqn:inf-U-decomp}
\end{align}
where the penultimate equality uses Lemma \ref{lemma:inf-algebraic}(i).
For each $i$ and $t$, $\partial_{\phi}\partial_{\pi^{2}}\ell_{it}$
is a dim-$\phi$ vector, which can be written as $\partial_{\phi}\partial_{\pi^{2}}\ell_{it}=\left(\begin{array}{c}
A1_{rT}\\
A^{\prime}1_{rN}
\end{array}\right)$ for an $r\left(N\times T\right)$ matrix $A$ with elements $A_{j\tau}=\partial_{\pi^{3}}\ell_{j\tau}$
if $j=i$ and $\tau=t$, and $A_{j\tau}=0$ otherwise. Thus, Lemma
\ref{lemma:inf-algebraic}(i) gives $\left(\partial_{\theta\phi^{\prime}}\mathcal{\bar{L}}\right)\bar{\mathcal{H}}^{-1}\partial_{\phi}\partial_{\pi^{2}}\ell_{it}= -\sum_{j,\tau}\Xi_{j\tau}\delta_{(i=j)}\delta_{(t=\tau)}\partial_{\pi^{3}}\ell_{it}=-\Xi_{it}\partial_{\pi^{3}}\ell_{it}$. 
\end{proof}
\begin{lemma}
\label{lem:inf-bias}The terms of the score defined in Lemma~\ref{lemma:inf-exp-theta}
have the following limiting behaviors, $U^{\left(0\right)}=O_{p}\left(1/\sqrt{NT}\right)$,
$U^{(1)}=O_{p}\left(1/\sqrt{NT}\right)$, and \textup{$U^{\left(2\right)}\stackrel{p}{\longrightarrow}\bar{B}_{\infty}/T+\bar{D}_{\infty}/N$},
where \textup{$\bar{B}_{\infty}$ and $\bar{D}_{\infty}$ are defined
in Theorem} \ref{thm:inf-dist}.
\end{lemma}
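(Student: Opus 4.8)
The plan is to derive all three assertions directly from the closed-form representations of $U^{(0)},U^{(1)},U^{(2)}$ established in Lemma~\ref{lemma:inf-U}, combined with the asymptotic block-diagonality of $\bar{\mathcal{H}}^{-1}$ from Lemma~\ref{lemma:inf-diagonal}, the orders collected in Lemma~\ref{lemma:inf-aux}, and the moment and dependence conditions in Assumptions~\ref{assu:data}, \ref{assu:factor}, and~\ref{assu:inf}. The first two claims are short. By Lemma~\ref{lemma:inf-U}, $U^{(0)}=\tfrac{1}{NT}\sum_{i,t}\mathbb{E}[\partial_{\pi^2}\ell_{it}\Xi_{it}\Xi_{it}^{\prime}]$ is a deterministic $p\times p$ matrix whose magnitude is read off the formula: $\partial_{\pi^2}\ell_{it}$ is controlled by the envelope of Assumption~\ref{assu:inf}(i)--(ii), and each $\Xi_{it}$ is a fixed weighted-least-squares projection of the bounded ratio $\mathbb{E}[\partial_{\theta_k\pi}\ell_{it}]/\mathbb{E}[\partial_{\pi^2}\ell_{it}]$ onto the span of the incidental parameters, which is well-conditioned by the strong-factor Assumption~\ref{assu:factor}. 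For $U^{(1)}=-\tfrac{1}{NT}\sum_{i,t}D_{\theta}\ell_{it}$ with $D_{\theta}\ell_{it}=\partial_{\theta}\ell_{it}-\Xi_{it}\partial_{\pi}\ell_{it}$, the expected-score identities $\mathbb{E}[\partial_{\theta}\ell_{it}]=\mathbb{E}[\partial_{\pi}\ell_{it}]=0$ give $\mathbb{E}[D_{\theta}\ell_{it}]=0$, so $U^{(1)}$ is a centered average; independence across $i$ together with exponential $\beta$-mixing and the $(8+\iota)$-moment bound on the envelope $M(W_{it})$ make the within-$i$ autocovariances summable, so $\operatorname{Var}(U^{(1)})=O\big((NT)^{-1}\big)$ and hence $U^{(1)}=O_{p}((NT)^{-1/2})$ by Chebyshev.

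The substantive claim is $U^{(2)}\stackrel{p}{\longrightarrow}\bar{B}_{\infty}/T+\bar{D}_{\infty}/N$. I would write $U^{(2)}=U^{(2a)}+U^{(2b)}$ as in Lemma~\ref{lemma:inf-U}, insert the definition~\eqref{eqn:inf-def-Lambda} of $\Lambda_{it}$, and use Lemma~\ref{lemma:inf-diagonal} to replace $\bar{\mathcal{H}}^{-1}$ by its block-diagonal surrogate $\bar{\mathcal{H}}_{d}^{-1}$, whose $(\lambda\lambda)$- and $(ff)$-blocks are, to leading order, $NT\,H_{(\lambda\lambda)i}^{-1}$ and $NT\,H_{(ff)t}^{-1}$ with $H_{(\lambda\lambda)i},H_{(ff)t}$ as in Theorem~\ref{thm:inf-dist}; propagated through the (bounded-derivative, summably-dependent) averages, the resulting error is $o_{p}(N^{-1}\vee T^{-1})$ because $\|\bar{\mathcal{H}}^{-1}-\bar{\mathcal{H}}_{d}^{-1}\|=O_{p}(1)$ while $\|\bar{\mathcal{H}}^{-1}\|=O_{p}(\sqrt{NT})$ and the cross $(\lambda f)$-block is of smaller order. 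After this reduction $\Lambda_{it}$ equals, to leading order, $\Lambda_{it}^{(\lambda)}+\Lambda_{it}^{(f)}$ with $\Lambda_{it}^{(\lambda)}=-\sum_{\tau}f_{0t}^{\prime}H_{(\lambda\lambda)i}^{-1}f_{0\tau}\,\partial_{\pi}\ell_{i\tau}$ and $\Lambda_{it}^{(f)}=-\sum_{j}\lambda_{0i}^{\prime}H_{(ff)t}^{-1}\lambda_{0j}\,\partial_{\pi}\ell_{jt}$. I would then compute $\mathbb{E}[U^{(2a)}]=-\tfrac{1}{NT}\sum_{i,t}\mathbb{E}[\Lambda_{it}D_{\theta\pi}\ell_{it}]$ (using $\mathbb{E}[\Lambda_{it}]=0$) and $\mathbb{E}[U^{(2b)}]=\tfrac{1}{2NT}\sum_{i,t}\mathbb{E}[\Lambda_{it}^{2}]\,\mathbb{E}[D_{\theta\pi^{2}}\ell_{it}]$ term by term: the $\Lambda^{(\lambda)}$-pieces carry $H_{(\lambda\lambda)i}^{-1}=O(1/T)$ and, after using the score identities (which annihilate $\mathbb{E}[\partial_{\pi}\ell_{it}D_{\theta\pi}\ell_{i\tau}]$ for $\tau$ preceding $t$ and $\mathbb{E}[\partial_{\pi}\ell_{i\tau}\partial_{\pi}\ell_{it}]$ for $\tau\neq t$), the information-matrix equality $\mathbb{E}[(\partial_{\pi}\ell_{it})^{2}]=\mathbb{E}[\partial_{\pi^{2}}\ell_{it}]$, and the symmetry of $f_{0t}^{\prime}H_{(\lambda\lambda)i}^{-1}f_{0\tau}$, reassemble exactly into the forward-looking form defining $\bar{B}_{\infty}/T$; the $\Lambda^{(f)}$-pieces carry $H_{(ff)t}^{-1}=O(1/N)$, and independence across $i$ forces the individual index to collapse, leaving $\bar{D}_{\infty}/N$; the $\Lambda^{(\lambda)}$-versus-$\Lambda^{(f)}$ cross terms are $O(1/(NT))=o(N^{-1}\vee T^{-1})$ by independence across $i$ and are discarded. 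Finally I would show $U^{(2a)}-\mathbb{E}[U^{(2a)}]=o_{p}(N^{-1}\vee T^{-1})$ and $U^{(2b)}-\mathbb{E}[U^{(2b)}]=o_{p}(N^{-1}\vee T^{-1})$ via second-moment bounds exploiting independence across $i$, exponential mixing, and the $(8+\iota)$-moments of $M(W_{it})$, whence $U^{(2)}=U^{(2a)}+U^{(2b)}\stackrel{p}{\longrightarrow}\bar{B}_{\infty}/T+\bar{D}_{\infty}/N$.

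I expect the analysis of $U^{(2)}$ to be the main obstacle, and within it two steps are the most delicate. First, carrying the block-diagonal approximation of $\bar{\mathcal{H}}^{-1}$ through the quadratic-in-$\Lambda$ term $U^{(2b)}$ while retaining the exact coefficients $f_{0t}^{\prime}H_{(\lambda\lambda)i}^{-1}f_{0t}$ and $\lambda_{0i}^{\prime}H_{(ff)t}^{-1}\lambda_{0i}$: one must verify that the off-diagonal $\bar{\mathcal{H}}^{-1}$-corrections, once squared and averaged against the scores, fall strictly below the $N^{-1}\vee T^{-1}$ level, which is tight when $N\asymp T$. Second, the concentration of $U^{(2a)}$: it is a weighted average over the $(i,t)$ cells of products $\Lambda_{it}\big(D_{\theta\pi}\ell_{it}-\mathbb{E}[D_{\theta\pi}\ell_{it}]\big)$ whose factors $\{\partial_{\pi}\ell_{i\tau}\}_{\tau}$ and $\{\partial_{\pi}\ell_{jt}\}_{j}$ couple cells along entire rows and columns, so bounding $\operatorname{Var}(U^{(2a)})$ requires a careful accounting of which index patterns survive, combining independence across $i$ with a double application of the $\beta$-mixing inequality and fourth-moment control. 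This variance computation is the step I expect to be the most laborious; everything else reduces to the expected-score identities, the information-matrix equality, and the magnitudes $H_{(\lambda\lambda)i}^{-1}=O(1/T)$ and $H_{(ff)t}^{-1}=O(1/N)$.
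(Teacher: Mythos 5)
Your proposal matches the paper's proof in all essentials: both dispose of $U^{(1)}$ by centering plus a variance/CLT argument, and both handle $U^{(2)}$ by splitting $\Lambda_{it}$ according to the blocks of $\bar{\mathcal{H}}^{-1}$, invoking Lemma~\ref{lemma:inf-diagonal} to pass to the diagonal surrogate with blocks $H_{(\lambda\lambda)i}^{-1}=O(1/T)$ and $H_{(ff)t}^{-1}=O(1/N)$, killing the cross terms by independence across $i$ and mean-zero scores (the paper does this explicitly for the four pieces of $U^{(2a)}$ and the ten pieces of $U^{(2b)}$, which is just an unpacked version of your reduction), and extracting $\bar{B}_{\infty}/T+\bar{D}_{\infty}/N$ from the surviving diagonal terms via the Bartlett identities and a WLLN/second-moment concentration step. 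The one point of friction is $U^{(0)}$: your argument (correctly) yields only $U^{(0)}=O_p(1)$ from the representation $\frac{1}{NT}\sum_{i,t}\mathbb{E}[\partial_{\pi^2}\ell_{it}\Xi_{it}\Xi_{it}^{\prime}]$, which is what the role of $U^{(0)}$ as the non-vanishing contraction coefficient in Theorem~\ref{thm:inf-conv} requires, whereas the rate $O_p(1/\sqrt{NT})$ stated in the lemma is asserted in the paper without justification and appears to be a misstatement rather than something either argument establishes.
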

\begin{proof}[Proof of Lemma \ref{lem:inf-bias}]
It is clear that $U^{(0)}=O_{p}\left(\frac{1}{\sqrt{NT}}\right)$,
and by CLT, $\sqrt{NT}U^{(1)}\stackrel{d}{\longrightarrow}N\left(0,\bar{\Sigma}_{\infty}\right)$,
where 
\[
\bar{\Sigma}_{\infty}=\mathbb{\lim}_{N,T\rightarrow\infty} \frac{1}{NT}\sum_{i=1}^{N}\sum_{t=1}^{T}\mathbb{E}\left[\left(D_{\theta}\ell_{it}\right)\left(D_{\theta}\ell_{it}\right)^{\prime}\right]
\]
As for $U^{\left(2\right)}$, the main focus is to show that the bias
formulas by taking account the specific structure of the incidental
parameters Hessian. Decompose 
\[
\Lambda_{it}=\Lambda_{it}^{(1)}+\Lambda_{it}^{(2)}+\Lambda_{it}^{(3)}+\Lambda_{it}^{(4)}
\]
with 
\[
\begin{aligned}\Lambda_{it}^{(1)} & =-\frac{1}{NT}\sum_{j=1}^{N}f_{0t}^{\prime}\mathcal{\bar{H}}_{(\lambda\lambda)ij}^{-1}\sum_{\tau=1}^{T}\partial_{\pi}\ell_{j\tau}f_{0\tau}\\
\Lambda_{it}^{(2)} &  =-\frac{1}{NT}\sum_{\tau=1}^{T}f_{0t}^{\prime}\mathcal{\bar{H}}_{(\lambda f)i\tau}^{-1}\sum_{j=1}^{N}\partial_{\pi}\ell_{j\tau}\lambda_{0j}\\
\Lambda_{it}^{(3)} &  =-\frac{1}{NT}\sum_{j=1}^{N}\lambda_{0i}^{\prime}\mathcal{\bar{H}}_{(f\lambda)tj}^{-1}\sum_{\tau=1}^{T}\partial_{\pi}\ell_{j\tau}f_{0\tau}\\
\Lambda_{it}^{(4)} &  =-\frac{1}{NT}\sum_{\tau=1}^{T}\lambda_{0i}^{\prime}\mathcal{\bar{H}}_{(ff)t\tau}^{-1}\sum_{j=1}^{N}\partial_{\pi}\ell_{j\tau}\lambda_{0j}
\end{aligned}
\]
We look at two decomposition terms $U^{(2a)}$ and $U^{(2b)}$ as
specified in (\ref{eqn:inf-U-decomp}). Let 
\[
U^{(2a)}=U^{(2a,1)}+U^{(2a,2)}+U^{(2a,3)}+U^{(2a,4)}
\]
where 
\[
\begin{aligned} & U^{(2a,1)}=\frac{1}{\left(NT\right){}^{2}}\sum_{i,j,t}f_{0t}^{\prime}\mathcal{\bar{H}}_{(\lambda\lambda)ij}^{-1}\left(\sum_{\tau}\partial_{\pi}\ell_{j\tau}f_{0\tau}\right)\left(D_{\theta\pi}\ell_{it}-\mathbb{E}\left[D_{\theta\pi}\ell_{it}\right]\right)\\
 & U^{(2a,2)}=\frac{1}{\left(NT\right){}^{2}}\sum_{i,\tau,t}f_{0t}^{\prime}\mathcal{\bar{H}}_{(\lambda f)i\tau}^{-1}\left(\sum_{j}\partial_{\pi}\ell_{j\tau}\lambda_{0j}\right)\left(D_{\theta\pi}\ell_{it}-\mathbb{E}\left[D_{\theta\pi}\ell_{it}\right]\right)\\
 & U^{(2a,3)}=\frac{1}{\left(NT\right){}^{2}}\sum_{t,j,i}\lambda_{0i}^{\prime}\mathcal{\bar{H}}_{(f\lambda)tj}^{-1}\left(\sum_{\tau}\partial_{\pi}\ell_{j\tau}f_{0\tau}\right)\left(D_{\theta\pi}\ell_{it}-\mathbb{E}\left[D_{\theta\pi}\ell_{it}\right]\right)\\
 & U^{(2a,4)}=\frac{1}{\left(NT\right){}^{2}}\sum_{t,\tau,i}\lambda_{0i}^{\prime}\mathcal{\bar{H}}_{(ff)t\tau}^{-1}\left(\sum_{j}\partial_{\pi}\ell_{j\tau}\lambda_{0j}\right)\left(D_{\theta\pi}\ell_{it}-\mathbb{E}\left[D_{\theta\pi}\ell_{it}\right]\right)
\end{aligned}
\]

Use $\left\Vert \mathcal{\bar{H}}^{-1}-\mathcal{\bar{H}}_{d}^{-1}\right\Vert =O_{p}\left(1\right)$
by Lemma \ref{lemma:inf-diagonal}, and apply the Cauchy-Schwarz inequality
to the sum over $t$ in $U^{(2a,3)}$, and since both $\lambda_{0i}^{\prime}\mathcal{\bar{H}}_{(f\lambda)}^{-1}\partial_{\pi}\ell_{j\tau}f_{0\tau}$
and $\left(D_{\theta\pi}\ell_{it}-\mathbb{E}\left[D_{\theta\pi}\ell_{it}\right]\right)$
are mean zero, independent across $i$, 
\[
\begin{aligned}\left(U^{(2a,3)}\right)^{2} & \leq\frac{1}{(NT)^{4}}\left[\sum_{t}\left(\sum_{j,\tau}\lambda_{0i}^{\prime}\mathcal{\bar{H}}_{(f\lambda)tj}^{-1}\partial_{\pi}\ell_{j\tau}f_{0\tau}\right)^{2}\right]\left[\sum_{t}\left(\sum_{i}\left(D_{\theta\pi}\ell_{it}-\mathbb{E}\left[D_{\theta\pi}\ell_{it}\right]\right)\right)^{2}\right]\\
 & =\frac{1}{(NT)^{4}}\left[\sum_{t}O_{p}(NT)\right]\left[\sum_{t}O_{p}(N)\right]=O_{p}\left(1/\left(N^{2}T\right)\right)=o_{p}\left(\frac{1}{NT}\right)
\end{aligned}
\]
Thus, $U^{(2a,3)}=o_{p}(1/\sqrt{NT})$. Analogously, $U^{(2a,2)}=o_{p}(1/\sqrt{NT})$.

According to Lemma~\ref{lemma:inf-diagonal}, $\bar{\mathcal{H}}_{(\lambda\lambda)}^{-1}=\operatorname{diag}\left[\left(\frac{1}{NT}\sum_{t=1}^{T}\mathbb{E}\left[\partial_{\pi^{2}}\ell_{it}\right]f_{0t}f_{0t}^{\prime}\right)^{-1}\right]+O_{p}(1)$.
Analogously to the proof of $U^{(2a,3)}$, the $O_{p}(1)$ part of
$\bar{\mathcal{H}}_{(\lambda\lambda)}^{-1}$ has an asymptotically
negligible contribution to $U^{(2a,1)}$. 
\[
\begin{aligned} & U^{(2a,1)}\\
 & \quad=\frac{1}{(NT)^{2}}\sum_{i,j}f_{0t}^{\prime}\bar{\mathcal{H}}_{(\lambda\lambda)ij}^{-1}\left(\sum_{\tau}\partial_{\pi}\ell_{j\tau}f_{0\tau}\right)\sum_{t}\left(D_{\theta\pi}\ell_{it}-\mathbb{E}\left[D_{\theta\pi}\ell_{it}\right]\right)\\
 & \quad=\frac{1}{NT}\sum_{i}\left\{ f_{0t}^{\prime}\left(\frac{1}{NT}\sum_{t=1}^{T}\mathbb{E}\left[\partial_{\pi^{2}}\ell_{it}\right]f_{0t}f_{0t}^{\prime}\right)^{-1}\right.\cdot\\
 & \quad\quad\left.\left(\sum_{\tau}\partial_{\pi}\ell_{i\tau}f_{0\tau}\right)\sum_{t}\left(D_{\theta\pi}\ell_{it}-\mathbb{E}\left[D_{\theta\pi}\ell_{it}\right]\right)\right\} +o_{p}\left(\frac{1}{\sqrt{NT}}\right)\\
 & \quad=\frac{1}{T}\cdot\underbrace{\frac{1}{N}\sum_{i=1}^{N}\sum_{t=1}^{T}\sum_{\tau=t}^{T}f_{0t}^{\prime}\left(\sum_{t=1}^{T}\mathbb{E}\left[\partial_{\pi^{2}}\ell_{it}\right]f_{0t}f_{0t}^{\prime}\right)^{-1}f_{0\tau}\mathbb{E}\left[\partial_{\pi}\ell_{it}D_{\theta\pi}\ell_{i\tau}\right]}_{\eqqcolon\bar{B}^{(1)}}+o_{p}(\frac{1}{\sqrt{NT}})
\end{aligned}
\]

Note, previous assumptions guarantee that $\mathbb{E}\left[\left(U_{i}^{(2a,1)}\right)^{2}\right]=O_{p}\left(1/\left(NT\right)\right)$
uniformly over $i$. For the numerator, both $\partial_{\pi}\ell_{i\tau}f_{0\tau}$
and $\left(D_{\theta\pi}\ell_{it}-\mathbb{E}\left[D_{\theta\pi}\ell_{it}\right]\right)$
are mean zero weakly correlated processes, hence the sum over which
is of order $\sqrt{T}$ each. The denominator of of $U_{i}^{(1a,1)}$
is of order $T$ as it sums over $T$. The last equality applies the
WLLN over $i$, $\frac{1}{N}\sum_{i}U_{i}^{(2a,1)}=\frac{1}{N}\mathbb{E}U_{i}^{(2a,1)}+o_{P}(1)$,
and by using $\mathbb{E}\left[\partial_{\pi}\ell_{it}D_{\theta\pi}\ell_{i\tau}\right]=0$
for $t>\tau$. Analogously, 
\[
U^{(2a,4)}=\frac{1}{N}\cdot\underbrace{\frac{1}{T}\sum_{t=1}^{T}\sum_{i=1}^{N}\lambda_{0i}^{\prime}\left(\sum_{i=1}^{N}\mathbb{E}\left[\partial_{\pi^{2}}\ell_{it}\right]\lambda_{0i}\lambda_{0i}^{\prime}\right)^{-1}\lambda_{0i}\mathbb{E}\left[\partial_{\pi}\ell_{it}D_{\theta\pi}\ell_{it}\right]}_{\eqqcolon\bar{D}^{(1)}}+o_{p}\left(\frac{1}{\sqrt{NT}}\right)
\]
The definition of $\Lambda_{it}$ induces the following decomposition
of $U^{\left(2b\right)}$, 
\[
U^{(2b)}=\sum_{s,l=1}^{4}U^{(2b,s,l)},\quad U^{(2b,s,l)}=\frac{1}{2NT}\sum_{i,t}\Lambda_{it}^{(s)}\Lambda_{it}^{(l)}\mathbb{E}\left[D_{\theta\pi^{2}}\ell_{it}\right]
\]
Due to the symmetry $U^{(2b,s,l)}=U^{(2b,l,s)}$, this decomposition
has 10 distinct terms. Starting with $ U^{(2b,1,3)}$,
we have 

\[
\begin{aligned}U^{(2b,1,3)} & =\frac{1}{NT}\sum_{i=1}^{N}U_{i}^{(2b,1,3)}\\
 U_{i}^{(2b,1,3)} & =\frac{1}{2T}\sum_{t=1}^{T}\mathbb{E}\left[D_{\theta\pi^{2}}\ell_{it}\right]f_{0t}^{\prime}U_{it}^{(2b,1,3)}\lambda_{0i}\\
 U_{it}^{(2b,1,3)} & =\frac{1}{N^{2}}\sum_{j_{1},j_{2}=1}^{N}\left\{ \bar{\mathcal{H}}_{(\lambda\lambda)ij_{1}}^{-1}\left(\frac{1}{\sqrt{T}}\sum_{\tau=1}^{T}\partial_{\pi}\ell_{j_{1}\tau}f_{0\tau}\right)\left(\frac{1}{\sqrt{T}}\sum_{\tau=1}^{T}\partial_{\pi}\ell_{j_{2}\tau}f_{0\tau}^{\prime}\right)\bar{\mathcal{H}}_{(f\lambda)tj_{2}}^{-1}\right\} 
\end{aligned}
\]
Given that $\mathbb{E}\left[\sum_{t}\partial_{\pi}\ell_{it}f_{0t}\right]=0$
and $\mathbb{E}\left[\sum_{t}\partial_{\pi}\ell_{it}f_{0t}\sum_{\tau}\partial_{\pi}\ell_{j\tau}f_{0\tau}\right]=0$
for $i\neq j$, along with the properties of the inverse expected
Hessian from Lemma~\ref{lemma:inf-diagonal}, we have $\mathbb{E}\left[U_{i}^{(2b,1,3)}\right]=O_{p}(1/N)$
uniformly over $i$, $\mathbb{E}\left[\left(U_{i}^{(2b,1,3)}\right)^{2}\right]=O_{p}(1)$
uniformly over $i$, and $\mathbb{E}\left[U_{i}^{(2b,1,3)}U_{j}^{(2b,1,3)}\right]=O_{p}(1/N)$
uniformly over $ i\neq j$. This implies that $\mathbb{E}\left[\sqrt{NT}U^{(2b,1,3)}\right]=O_{p}(1/N)$,
and $\mathbb{E}\left[NT\left(U^{(2b,1,3)}-\mathbb{E}\left[U^{(2b,1,3)}\right]\right)^{2}\right]=O_{p}(1/\sqrt{N})$.
Therefore $ U^{(2b,1,3)}=o_{p}\left(1/\sqrt{NT}\right)$.
By similar arguments, we have $U^{(2b,s,l)}=o_{p}\left(1/\sqrt{NT}\right)$
for all combinations of $s,l=1,2,3,4$, except for $s=l=1$ and $s=l=4$.
For $s=l=1$, 
\[
\begin{aligned}U^{(2b,1,1)} & =\frac{1}{NT}\sum_{i=1}^{N}U_{i}^{(2b,1,1)}\\
 U_{i}^{(2b,1,1)} & = \frac{1}{2T}\sum_{t=1}^{T}\mathbb{E}\left[D_{\theta\pi^{2}}\ell_{it}\right]f_{0t}^{\prime}U_{it}^{(2b,1,1)}f_{0t}\\
U_{it}^{(2b,1,1)} & =\frac{1}{N^{2}}\sum_{j_{1},j_{2}=1}^{N}\left\{ \bar{\mathcal{H}}_{(\lambda\lambda)ij_{1}}^{-1}\left(\frac{1}{\sqrt{T}}\sum_{\tau=1}^{T}\partial_{\pi}\ell_{j_{1}\tau}f_{0\tau}\right)\left(\frac{1}{\sqrt{T}}\sum_{\tau=1}^{T}\partial_{\pi}\ell_{j_{2}\tau}f_{0\tau}^{\prime}\right)\bar{\mathcal{H}}_{(\lambda\lambda)ij_{2}}^{-1}\right\} 
\end{aligned}
\]
Analogous to the result for $U^{(2b,1,3)}$, we have $\mathbb{E}\left[NT\left(U^{(2b,1,1)}-\mathbb{E}U^{(2b,1,1)}\right)^{2}\right]=O_{p}(1/\sqrt{N})$.
Thus, 
\[
\begin{aligned}U^{(2b,1,1)} & =\mathbb{E}\left[U^{(2b,1,1)}\right]+o_{p}\left(\frac{1}{\sqrt{NT}}\right)\\
 & =-\frac{1}{2NT}\sum_{i=1}^{N}\sum_{t=1}^{T}\mathbb{E}\left[D_{\theta\pi^{2}}\ell_{it}\right]f_{0t}^{\prime}\left(\sum_{t=1}^{T}\mathbb{E}\left[\partial_{\pi^{2}}\ell_{it}\right]f_{0t}f_{0t}^{\prime}\right)^{-1}\mathbb{E}\left[\left(\partial_{\pi}\ell_{it}f_{0t}\right)\left(\partial_{\pi}\ell_{it}f_{0t}\right)^{\prime}\right]\\
 & \quad\cdot\left(\sum_{t=1}^{T}\mathbb{E}\left[\partial_{\pi^{2}}\ell_{it}\right]f_{0t}f_{0t}^{\prime}\right)^{-1}f_{0t}+o_{p}\left(\frac{1}{\sqrt{NT}}\right)\\
 & =\frac{1}{T}\cdot\underbrace{\frac{1}{2N}\sum_{i=1}^{N}\sum_{t=1}^{T}f_{0t}^{\prime}\left(\sum_{t=1}^{T}\mathbb{E}\left[\partial_{\pi^{2}}\ell_{it}\right]f_{0t}f_{0t}^{\prime}\right)^{-1}f_{0t}\mathbb{E}\left[D_{\theta\pi^{2}}\ell_{it}\right]}_{\eqqcolon\bar{B}^{(2)}}+o_{p}\left(\frac{1}{\sqrt{NT}}\right)
\end{aligned}
\]
Similarly, 
\[
\begin{aligned}U^{(2b,4,4)} & =\mathbb{E}U^{(2b,4,4)}+o_{p}\left(\frac{1}{\sqrt{NT}}\right)\\
 & =\frac{1}{N}\cdot\underbrace{\frac{1}{2T}\sum_{t=1}^{T}\sum_{i=1}^{N}\lambda_{0i}^{\prime}\left(\sum_{i=1}^{N}\mathbb{E}\left[\partial_{\pi^{2}}\ell_{it}\right]\lambda_{0i}\lambda_{0i}^{\prime}\right)^{-1}\lambda_{0i}\mathbb{E}\left[D_{\theta\pi^{2}}\ell_{it}\right]}_{\eqqcolon\bar{D}^{(2)}}+o_{p}\left(\frac{1}{\sqrt{NT}}\right)
\end{aligned}
\]
Summing up the above terms, we have $U^{(2a)}=\frac{1}{T}\bar{B}^{(1)}+\frac{1}{N}\bar{D}^{(1)}+o_{p}\left(\frac{1}{\sqrt{NT}}\right)$
and $U^{(2b)}=\frac{1}{T}\bar{B}^{(2)}+\frac{1}{N}\bar{D}^{(2)}+o_{p}\left(\frac{1}{\sqrt{NT}}\right)$.
Since $\bar{B}_{\infty}=\lim_{N,T\rightarrow\infty}\left(\bar{B}^{(1)}+\bar{B}^{(2)}\right)$
and $\bar{D}_{\infty}=\lim_{N,T\rightarrow\infty}\left(\bar{D}^{(1)}+\bar{D}^{(2)}\right)$,
then $U^{(2)}=\frac{1}{T}\bar{B}_{\infty}+\frac{1}{N}\bar{D}_{\infty}+o_{p}(1)$.
We have shown that 
\begin{equation}
U^{(2)}\stackrel{p}{\longrightarrow}\frac{1}{T}\bar{B}_{\infty}+\frac{1}{N}\bar{D}_{\infty}\label{eqn:U2}
\end{equation}
\end{proof}
Now, we can proceed to prove our main theorems. 
\begin{proof}[Proof of Theorem~\ref{thm:inf-conv}(i)]
Now we have 
\[
\begin{aligned} & \theta^{\left(m+1\right)}-\theta_{0}\\
 & \quad=\bar{W}^{-1}U^{\left(0\right)}\left(\theta^{\left(m\right)}-\theta_{0}\right)+\bar{W}^{-1}U^{\left(1\right)}+\bar{W}^{-1}U^{\left(2\right)}\\
 & \quad+o_{p}\left(\left\Vert \theta^{\left(m+1\right)}-\theta_{0}\right\Vert \right)+o_{p}\left(\left\Vert \theta^{\left(m\right)}-\theta_{0}\right\Vert \right)+o_{P}\left(\left(NT\right)^{-1/2}\right)
\end{aligned}
\]
Let $\bar{C}^{\left(0\right)}=\bar{W}^{-1}\left(\partial_{\theta\phi^{\prime}}\mathcal{\bar{L}}\right)\mathcal{\bar{H}}^{-1}\left(\partial_{\phi\theta^{\prime}}\mathcal{\bar{L}}\right)$.
We want to show that $\sigma_{\max}\left(\bar{C}^{\left(0\right)}\right)\in\left(0,1\right]$.
Note that by the Bartlett identities, $\mathbb{E}\left[\partial_{\theta}\mathcal{L}\partial_{\theta^{\prime}}\mathcal{L}\right]=\partial_{\theta\theta^{\prime}}\mathcal{\bar{L}}$,
$\mathbb{E}\left[\partial_{\theta}\mathcal{L}\mathcal{S}^{\prime}\right]= \partial_{\theta\phi^{\prime}}\mathcal{\bar{L}}$,
and $\mathbb{E}\left[\mathcal{SS}^{\prime}\right]=\bar{\mathcal{H}}$,
then 
\[
\begin{aligned} & \left(\partial_{\theta}\mathcal{L}-\left(\partial_{\theta\phi^{\prime}}\mathcal{\bar{L}}\right)\mathcal{\bar{H}}^{-1}\mathcal{S}\right)\left(\partial_{\theta}\mathcal{L}-\left(\partial_{\theta\phi^{\prime}}\mathcal{\bar{L}}\right)\mathcal{\bar{H}}^{-1}\mathcal{S}\right)'\\
 & \quad=\partial_{\theta}\mathcal{L}\partial_{\theta'}\mathcal{L}-2\partial_{\theta}\mathcal{L}\mathcal{S}\mathcal{\bar{H}}^{-1}\left(\partial_{\theta\phi^{\prime}}\mathcal{\bar{L}}\right)'+\left(\partial_{\theta\phi^{\prime}}\mathcal{\bar{L}}\right)\mathcal{\bar{H}}^{-1}\mathcal{S}\mathcal{S}'\mathcal{\bar{H}}^{-1}\left(\partial_{\theta\phi^{\prime}}\mathcal{\bar{L}}\right)'\\
 & \quad\stackrel{p}{\longrightarrow}\partial_{\theta\theta^{\prime}}\mathcal{\bar{L}}-\left(\partial_{\theta\phi^{\prime}}\mathcal{\bar{L}}\right)\mathcal{\bar{H}}^{-1}\left(\partial_{\phi\theta^{\prime}}\mathcal{\bar{L}}\right)
\end{aligned}
\]
Therefore, $\bar{W}\left(\mathbb{I}_{p}-\bar{C}^{\left(0\right)}\right)=\partial_{\theta\theta^{\prime}}\mathcal{\bar{L}}-\left(\partial_{\theta\phi^{\prime}}\mathcal{\bar{L}}\right)\mathcal{\bar{H}}^{-1}\left(\partial_{\phi\theta^{\prime}}\mathcal{\bar{L}}\right)>0$.
By Fact 8.14.20 in Bernstein (2005, p.329), 
\begin{align*}
\sigma_{\min}\left(\mathbb{I}_{p}-\bar{C}^{\left(0\right)}\right) & \geq\sigma_{\max}\left(\bar{W}^{-1}\right)\sigma_{\min}\left(\bar{W}\left(\mathbb{I}_{p}-\bar{C}^{\left(0\right)}\right)\right)\\
 & =\sigma_{\max}^{-1}\left(\bar{W}\right)\sigma_{\min}\left(\bar{W}\left(\mathbb{I}_{p}-\bar{C}^{\left(0\right)}\right)\right)\coloneqq\bar{\rho}\in\left(0,1\right]
\end{align*}
where we use that $\bar{W}-\bar{W}\left(\mathbb{I}_{p}-\bar{C}^{\left(0\right)}\right)=\bar{W}\bar{C}^{\left(0\right)}\geq0$.
This implies that $\|\bar{C}^{\left(0\right)}\|=\sigma_{\max}\left(\bar{C}^{\left(0\right)}\right)=1-\sigma_{\min}\left(\mathbb{I}_{p}-\bar{C}^{\left(0\right)}\right)\le 1-\bar{\rho}\in\left[0,1\right)$
w.p.a.1. 
\end{proof}
\begin{proof}[Proof of Theorem~\ref{thm:inf-conv}(ii)]
 By Theorem~\ref{thm:inf-conv}(i) and the fact that $\bar{W}^{-1} U^{\left(1\right)}+\bar{W}^{-1} U^{\left(2\right)}=O_{p}\left(\left(NT\right)^{-1/2}\right)$,
we have $\left\Vert \hat{\theta}^{\left(m+1\right)}-\theta_{0}\right\Vert =O_{p}\left(\left\Vert \hat{\theta}^{\left(m\right)}-\theta_{0}\right\Vert \right)$.
Continuous backward substitutions then give \begin{align}
\hat{\theta}^{\left(m+1\right)}-\theta_{0} & =\left[\bar{W}^{-1} U^{\left(0\right)}+o_{p}\left(1\right)\right]^{m+1}\left(\hat{\theta}^{\left(0\right)}-\theta_{0}\right)\label{eqn:inf-m1}\\
 & +\sum_{s=0}^{m}\left[\bar{W}^{-1} U^{\left(0\right)}\right]^{s}\left\{ \bar{W}^{-1} U^{\left(1\right)}+\bar{W}^{-1} U^{\left(2\right)}+o_{P}\left(\left(NT\right)^{-1/2}\right)\right\} \label{eqn:inf-m2}
\end{align}
Thus, when $m+1\geq-\left(\frac{1}{2}\log\left(NT\right)+\log\left(\gamma_{NT}\right)\right)/\log\left(\|\bar{C}^{\left(0\right)}\|\right)$,
we can readily have 
\[
\left[\bar{W}^{-1}U^{\left(0\right)}+o_{p}\left(1\right)\right]^{m+1}\left(\hat{\theta}^{\left(0\right)}-\theta_{0}\right)=O_{p}\left(\left(NT\right)^{-1/2}\right)
\]
by Theorem~\ref{thm:main-3}. This, in conjunction with the fact that
$\bar{W}^{-1} U^{\left(1\right)}+\bar{W}^{-1} U^{\left(2\right)}=O_{p}\left(\left(NT\right)^{-1/2}\right)$,
implies $\theta^{\left(m+1\right)}-\theta_{0}=O_{p}\left(\left(NT\right)^{-1/2}\right)$. 
\end{proof}
\begin{proof}[Proof of Theorem~\ref{thm:inf-dist}]
For any $m\geq-\frac{1}{2}\log(NT)/\log\left(\|\bar{C}^{\left(0\right)}\|\right)-1$,
the cumulative effect (\ref{eqn:inf-m1}) can be controlled as follows
\[
\left\Vert \text{(\ref{eqn:inf-m1})}\right\Vert \leq\sigma_{\max}^{m+1}\left(\bar{C}^{(0)}\right)O_{p}\left(\gamma_{NT}\right)=O_{p}\left(\left(NT\right)^{-1/2}\gamma_{NT}\right)
\]
Then 
\[
\begin{aligned}\sqrt{NT}\left(\hat{\theta}^{\left(m+1\right)}-\theta_{0}\right) & =\sum_{s=0}^{m}\left[\bar{W}^{-1}U^{\left(0\right)}\right]^{s}\left\{ \bar{W}^{-1}U^{\left(1\right)}+\bar{W}^{-1}U^{\left(2\right)}\right\} +o_{p}\left(1\right)\\
 & =\left(\mathbb{I}-\bar{W}^{-1}U^{\left(0\right)}\right)^{-1}\left(\mathbb{I}-\left[\bar{W}^{-1}U^{\left(0\right)}\right]^{m+1}\right)\bar{W}^{-1}\sqrt{NT}\left(U^{\left(1\right)}+U^{\left(2\right)}\right)+o_{p}\left(1\right)\\
 & =\left(\mathbb{I}-\bar{W}^{-1}U^{\left(0\right)}\right)^{-1}\bar{W}^{-1}\sqrt{NT}\left(U^{\left(1\right)}+U^{\left(2\right)}\right)+o_{p}\left(1\right)\\
 & =\left(\bar{W}-U^{\left(0\right)}\right)^{-1}\sqrt{NT}\left( U^{\left(1\right)}+ U^{\left(2\right)}\right)+o_{p}\left(1\right)
\end{aligned}
\]
where the second equality follows from the summation formula for geometric
sequences and the penultimate equality holds since for $m\geq-\frac{1}{2}\log(NT)/\log\left(\|\bar{C}^{\left(0\right)}\|\right)-1$,
\[
\left[\bar{W}^{-1} U^{\left(0\right)}\right]^{m+1}\bar{W}^{-1}\sqrt{NT}\left(U^{\left(1\right)}+ U^{\left(2\right)}\right)=O_{p}\left(\bar{W}^{-1}\left( U^{\left(1\right)}+ U^{\left(2\right)}\right)\right)=O_{p}\left(\left(NT\right)^{-1/2}\right)
\]
Thus, by the definition of $\bar{W}$, $U^{\left(0\right)}$, and
$\Xi_{it}$, we have 
\[
\bar{W}-U^{\left(0\right)}=\frac{1}{NT}\sum_{i=1}^{N}\sum_{t=1}^{T}\mathbb{E}\left[\partial_{\theta\theta}\ell_{it}-\partial_{\pi^{2}}\ell_{it}\Xi_{it}\Xi_{it}^{\prime}\right]
\]
By CLT and Bartlett identities, 
\[
\sqrt{NT}U^{\left(1\right)}=\frac{1}{\sqrt{NT}}\sum_{i=1}^{N}\sum_{t=1}^{T}-D_{\theta}\ell_{it}\stackrel{d}{\longrightarrow}\mathcal{N}\left(0,\bar{W}_{\infty}\right)
\]
We also have $U^{(2)}\stackrel{p}{\longrightarrow}\frac{1}{T}\bar{B}_{\infty}+\frac{1}{N}\bar{D}_{\infty}$
from (\ref{eqn:U2}). The desired result then follows. 
\end{proof}

\begin{proof}[Proof of Theorem~\ref{thm:inf-ABC}]
Under the assumptions in Theorem~\ref{thm:inf-dist},   $\hat{B}\stackrel{p}{\longrightarrow}\bar{B}_{\infty}$,  $\hat{D}\stackrel{p}{\longrightarrow}\bar{D}_{\infty}$, and $\hat{W}\stackrel{p}{\longrightarrow}\bar{W}_{\infty}$. The results follow directly from applying the same argument as in the proof of Lemma S.1 and Theorem 4.3 in the Supplementary Material of \citet*{fernandez-val_individual_2016}. In particular, it relies on repeated applications of the weak law of large numbers and Slutsky’s theorem.

Given the definition of $\hat{\theta}_{A B C}$, we can write
$$
\begin{aligned} \sqrt{N T}\left(\hat{\theta}_{A B C}-\theta_0\right)= & \sqrt{N T}\left(\hat{\theta}^{(m+1)}-\frac{1}{T} \hat{W}^{-1} \hat{B}-\frac{1}{N} \hat{W}^{-1} \hat{D}-\theta_0\right) \\ = & \underbrace{\sqrt{N T}\left(\hat{\theta}^{(m+1)}-\theta_0-\frac{1}{T} \bar{W}_{\infty}^{-1} \bar{B}_{\infty}-\frac{1}{N} \bar{W}_{\infty}^{-1} \bar{D}_{\infty}\right)}_{(a)} \\ & -\underbrace{\sqrt{N T}\left[\frac{1}{T}\left(\hat{W}^{-1} \hat{B}-\bar{W}_{\infty}^{-1} \bar{B}_{\infty}\right)+\frac{1}{N}\left(\hat{W}^{-1} \hat{D}-\bar{W}_{\infty}^{-1} \bar{D}_{\infty}\right)\right]}_{(b)}
\end{aligned}
$$
By Theorem~\ref{thm:inf-dist}, we have $(a)\xrightarrow{d} \mathcal{N}\left(0, \bar{W}_{\infty}^{-1}\right)$. Since we have established the consistency of $\hat{B}$, $\hat{D}$, and $\hat{W}$,  Slutsky’s theorem implies that $\hat{W}^{-1} \hat{B} \xrightarrow{p} \bar{W}_{\infty}^{-1} \bar{B}_{\infty}$ and $\hat{W}^{-1} \hat{D} \xrightarrow{p} \bar{W}_{\infty}^{-1} \bar{D}_{\infty}$. Consequently, term $(b)$ is $o_p(1)$. This completes the proof.

\end{proof} \section{Additional Details on Empirical Application}
\label{apx:empirical}
\subsection{Data Sources and Sample Construction}
\label{apx:emp-data}
Our empirical analysis is based on firm and market data following the framework of \citet{ma_nonfinancial_2019}. The firm-level dataset combines information from Compustat, CRSP, IBES, TRACE, Datastream, and Mergent’s Fixed Income Securities Database (FISD). Flow variables such as net equity repurchases, net debt issuance, and capital expenditures are measured at the quarterly frequency and normalized by lagged total assets, while stock variables such as cash holdings and leverage are normalized by contemporaneous assets. Outliers are winsorized at the 1\% level in all firm-level analyses. The sample excludes financial firms (SIC 6000–6999) and firms with missing bond or equity identifiers. See Table~\ref{tab:var_def} for detailed variable definitions.

Firm-level bond variables are constructed from TRACE and FISD. Specifically, the credit spread is the face-value-weighted difference between each bond’s yield and the yield on the nearest-maturity Treasury, excluding convertible, asset-backed, and foreign-currency bonds, as well as bonds with less than one year to maturity. When yields are unavailable after November 2008, they are imputed from TRACE prices and FISD coupon information. The term spread is constructed analogously, using Treasury yields of different maturities. On the equity side, we construct the value-to-price (V/P) ratio following \citet*{dong_overvalued_2012}, combining Compustat book equity, CRSP prices, and IBES earnings forecasts.

Quarterly net equity repurchases are defined as share repurchases minus equity issuance (PRSTKC–SSTK), and net debt issuance is defined as long-term debt issued minus retired (DLTIS–DLTR), each scaled by lagged assets. Additional controls include net income, cash holdings, capital expenditures, and deviations from target leverage, all from Compustat and defined as in \citet*{ma_nonfinancial_2019}. The aggregate bond yields and returns come from the Barclays Capital / Lehman Brothers indices compiled by \citet*{greenwood_issuer_2013}, updated with Bank of America Merrill Lynch data. Treasury yields are sourced from FRED, and stock market data is sourced from Robert Shiller's historical dataset.

To ensure comparability in the estimation of firm-specific effects, we focus on the period 2003-2024, during which TRACE coverage becomes comprehensive. The earlier years used in \citet*{ma_nonfinancial_2019}, particularly the pre-2000 period, are highly unbalanced––especially for bond-level variables––and exhibit systematic patterns of missingness related to limited TRACE reporting and incomplete data integration across sources. We define an observation as missing if at least one of the key or control variables is unavailable for a given firm–quarter in the panel. Consequently, we restrict the sample to the period 2003Q1–2024Q4 and construct a high-coverage panel comprising 212 nonfinancial firms observed over 88 quarters in total. Firms are retained in the sample as long as they have at least 44 quarters of observed data over the sampled period.

Figure \ref{fig:missing_pattern} visualizes the distribution of missing observations across firms and quarters in the final sample. Large blocks of missing observations near the sample edges primarily correspond to firms entering the panel mid-period or exiting early, while the few short gaps observed in the middle of the sample likely reflect random reporting noise rather than systematic non-coverage. Accordingly, we proceed under the assumption that the data are missing at random (MAR) in our empirical application.\footnote{Missing observations are common in empirical corporate‐finance and asset‐pricing panels. The objective of this paper is to achieve consistent estimation in high-coverage panels rather than to model systematic missingness. For broader treatments of structured or non‐MAR missingness in financial panels, see \citet*{freyberger_missing_2025}, \citet*{cahan_factor-based_2023}, and \citet*{bryzgalova_missing_2022}. Extending the proposed framework to accommodate such forms of missingness is a promising direction for future research.}

Our updated estimation algorithm explicitly handles these limited gaps under the assumption that missing observations are missing at random (MAR). The first-step estimator employs the Soft-Impute algorithm of \citet*{mazumder_spectral_2010}, while the second-step estimator inherently accommodates MAR data, following \citet*{bai_panel_2009} and \citet*{chen_nonlinear_2021}.

\begin{figure}[h] \centering \includegraphics[width=0.8\textwidth]{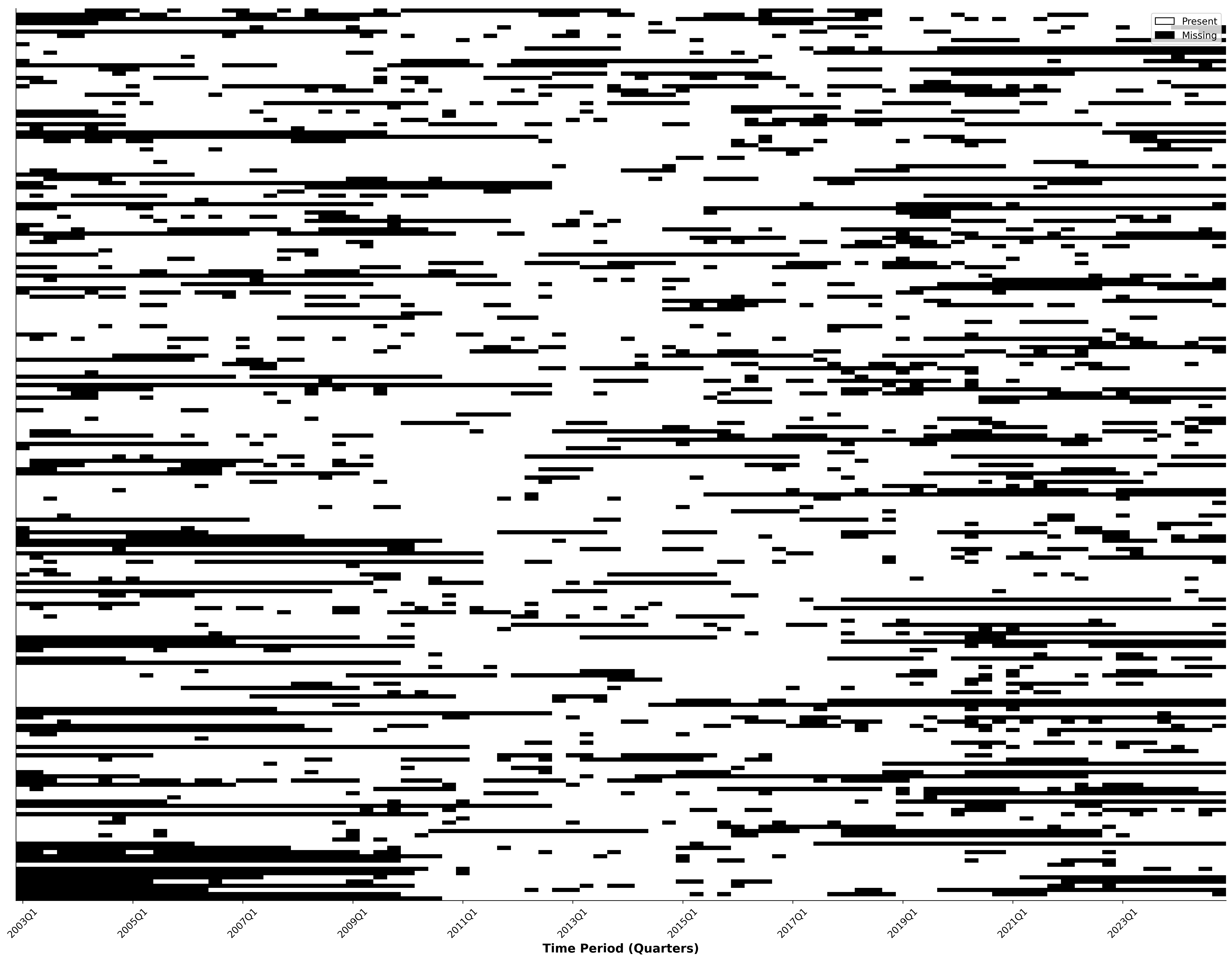} \caption{\textbf{Data availability across firms and quarters.} Each cell represents the availability of firm‐level observations in the final estimation sample (212 firms, 2003Q1–2024Q4). White cells indicate available observations, and black cells denote missing quarters.} 
\label{fig:missing_pattern} 
\end{figure}

\begin{table}[htbp]
\begin{center}
\renewcommand{\arraystretch}{1.4}
\caption{Firm-Level Variable Definitions}
\label{tab:var_def}
\begin{tabular}{lp{8.2cm}p{2.8cm}}
\toprule
Variable & Description & Main Source \\
\midrule
Credit spread
& Face-value-weighted difference between a firm’s corporate bond yield and the yield on the nearest-maturity Treasury
& TRACE, FISD, FRED \\

Term spread
& Face-value-weighted difference between the yield on the nearest-maturity Treasury and the 3-month Treasury yield
& TRACE, FISD, FRED \\

V/P ratio
& Value-to-price ratio constructed following \citet*{dong_overvalued_2012}, using forecasted ROE and book equity
& Compustat, CRSP, IBES \\

Cash holdings
& Cash and short-term investments
& Compustat \\

Asset growth
& Quarterly growth rate of total assets
& Compustat \\

Size
& Natural logarithm of book value of total assets
& Compustat \\

Net income
& Net income
& Compustat \\

Capital expenditures
& Capital expenditures (CAPX)
& Compustat \\
\bottomrule
\end{tabular}
\end{center}
\footnotesize{
Note: This table reports definitions of firm-level variables used in the empirical analysis. For firm financials, flow variables are measured quarterly and normalized by lagged total assets, while stock variables are normalized by contemporaneous total assets. Datastream is used as a supplementary data source where applicable. Detailed construction follows the Internet Appendix of \citet*{ma_nonfinancial_2019}. }
\end{table}
 
\subsection{Additional Results}
\begin{figure}[H]
    \centering
    \includegraphics[width=0.8\textwidth]{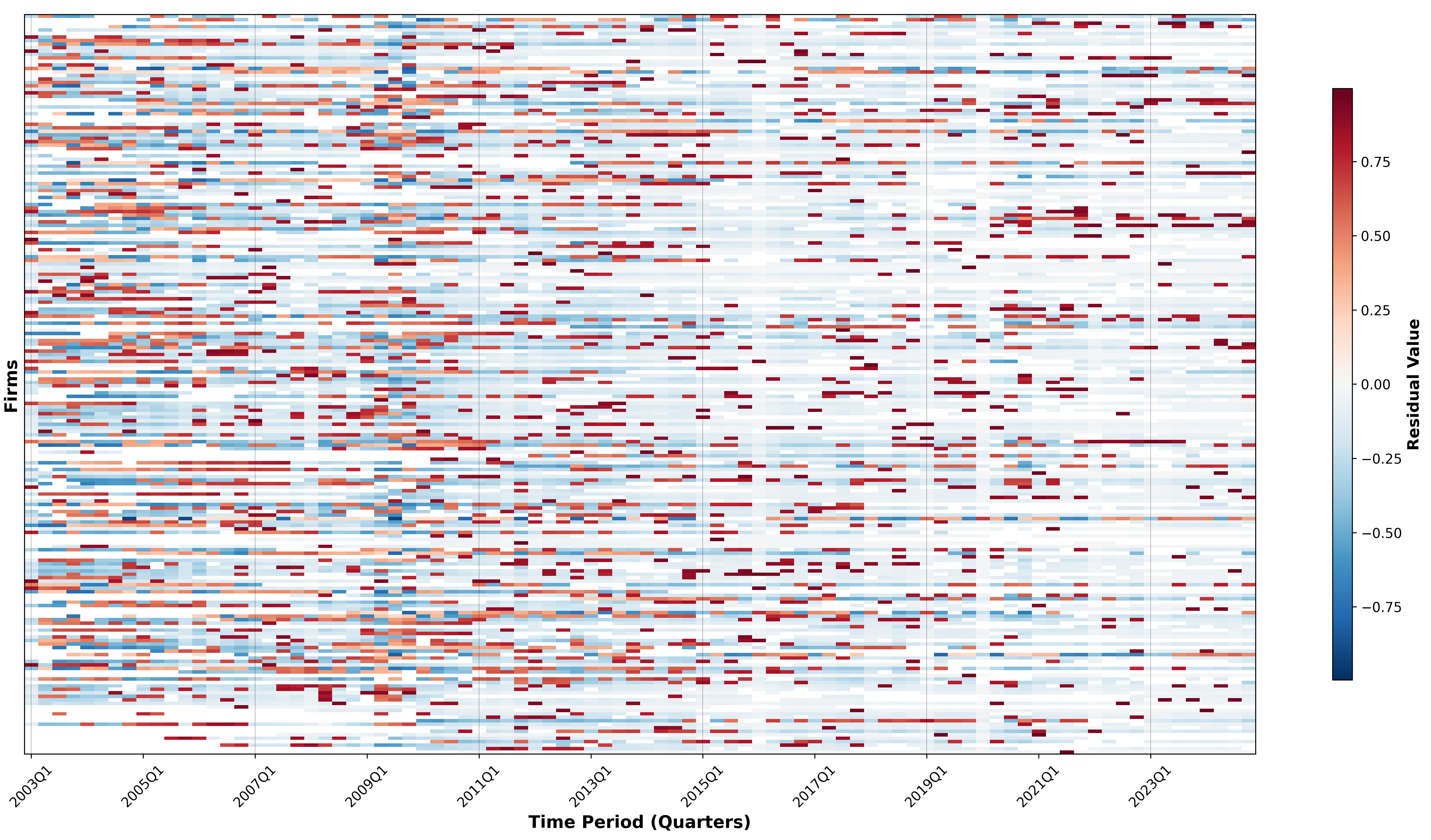}
    \caption{\textbf{Residual heatmap from TWFE estimation.}
    The figure plots firm–quarter residuals from the TWFE estimator following \citet*{fernandez-val_individual_2016}. }
    \label{fig:resid_heatmap}
\end{figure}

\end{appendices}

\end{document}